\newcolumntype{L}[1]{>{\raggedright\arraybackslash}p{#1}}
\newcolumntype{C}[1]{>{\centering\arraybackslash}m{#1}}
\newcolumntype{R}[1]{>{\raggedleft\arraybackslash}p{#1}}
\renewcommand{\epsilon}{\varepsilon}
\newtheorem{theorem}{Theorem}[section]
\newtheorem{claim}[theorem]{Claim}
\newtheorem*{claim*}{Claim}
\newtheorem{condition}[theorem]{Condition}
\newtheorem{property}[theorem]{Property}
\newtheorem{fact}[theorem]{Fact}
\newtheorem{lemma}[theorem]{Lemma}
\newtheorem{proposition}[theorem]{Proposition}
\newtheorem{corollary}[theorem]{Corollary}
\theoremstyle{definition}
\newtheorem{definition}[theorem]{Definition}
\newtheorem{remark}[theorem]{Remark}
\newtheorem*{remark*}{Remark}
\newcommand{\unpin}[1]{{#1}^{\+O}}
\def\\+Ex{\mathop{\mathbf{\+E}}\nolimits}
\renewcommand{\Pr}[2][]{ \ifthenelse{\isempty{#1}}
  {\mathop{\mathbf{Pr}}\left[#2\right]} {\mathop{\mathbf{Pr}}_{#1}\left[#2\right]} }
  \newcommand{\dtv}{d_{\rm TV}}
\newcommand{\cp}{\textnormal{cp}}
\newcommand{\coup}{\textnormal{coup}}
\newcommand{\trun}{\textnormal{trun}}
\newcommand{\lp}{\textnormal{lp}}
\newcommand{\good}{\textnormal{good}}
\newcommand{\poly}{\textnormal{poly}}
\newcommand{\Couple}{\mathsf{Couple}}
\newcommand{\abs}[1]{\left\vert#1\right\vert}
\newcommand{\defeq}{\triangleq}
\newcommand{\supp}{\mathrm{supp}}
\newcommand{\vbl}{\textnormal{vbl}}
\newcommand{\vio}[1]{{\textnormal{\texttt{False}}(#1)}}
\def\*#1{\mathfrak{#1}} 
\def\+#1{\mathcal{#1}} 
\def\-#1{\mathrm{#1}} 
\def\=#1{\mathbb{#1}} 
\newcommand{\E}{\mathcal{E}}
\newcommand{\eps}{\varepsilon}
\def\prob#1#2#3{\goodbreak\begin{list}{}{\labelwidth\z@ \itemindent-\leftmargin
                        \itemsep\z@  \topsep6\p@\@plus6\p@
                        \let\makelabel\descriptionlabel}
                \item[\it Name]#1
               \item[\it Instance]                #2
                \item[\it Output]#3
                \end{list}}
\newcommand{\vgood}{\ensuremath{V_\textnormal{good}}}
\newcommand{\vbad}{\ensuremath{V_\textnormal{bad}}}
\newcommand{\egood}{\ensuremath{\+E_\textnormal{good}}}
\newcommand{\ebad}{\ensuremath{\+E_\textnormal{bad}}}
\newcommand{\zc}[1]{\textcolor{red}{ZC: #1}}
\newcommand{\al}[1]{\textcolor{red}{AL: #1}}
\title{Counting random $k$-SAT near the satisfiability threshold}
\author{Zongchen Chen, Aditya Lonkar, Chunyang Wang, Kuan Yang, Yitong Yin}
\address[Zongchen Chen, Aditya Lonkar]{School of Computer Science, Georgia Institute of Technology, Atlanta, GA, USA. \textnormal{Email: \texttt{chenzongchen@gatech.edu}, \texttt{alonkar9@gatech.edu}.}}
\address[Chunyang Wang, Yitong Yin]{State Key Laboratory for Novel Software Technology, New Cornerstone Science Laboratory, Nanjing University, 163 Xianlin Avenue, Nanjing, Jiangsu Province, 210023, China. \textnormal{Email: \texttt{wcysai@smail.nju.edu.cn}, \texttt{yinyt@nju.edu.cn}.}}
\address[Kuan Yang]{John Hopcroft Center for Computer Science, Shanghai Jiao Tong University, Shanghai, China. \textnormal{Email: \texttt{kuan.yang@sjtu.edu.cn}.} 
The research of K.\ Yang is supported by the NSFC grant No.\ 62102253.}
\begin{document}

\begin{abstract}
We present efficient counting and sampling algorithms for random $k$-SAT when the clause density satisfies
$\alpha \le \frac{2^k}{\poly(k)}$.
In particular, the exponential term $2^k$ matches the satisfiability threshold $\Theta(2^k)$ for the existence of a solution and the (conjectured) algorithmic threshold $2^k (\ln k) / k$ for efficiently finding a solution. 
Previously, the best-known counting and sampling algorithms required far more restricted densities $\alpha\lesssim 2^{k/3}$~\cite{HWY23}. 
Notably, our result goes beyond the lower bound $d\gtrsim 2^{k/2}$ for worst-case $k$-SAT with bounded-degree $d$~\cite{BGGGS19}, 
showing that for counting and sampling, the average-case random $k$-SAT model is computationally much easier than the worst-case model.

At the heart of our approach is a new refined analysis of the recent novel coupling procedure by ~\cite{WY24}, utilizing the  structural properties of random constraint satisfaction problems (CSPs). 
Crucially, our analysis avoids reliance on the $2$-tree structure used in prior works, which cannot extend beyond the worst-case threshold $2^{k/2}$.
Instead, we employ a witness tree similar to that used in the analysis of the Moser-Tardos algorithm~\cite{moser2010constructive} for the Lov\'{a}sz Local lemma, which may be of independent interest.
Our new analysis provides a universal framework for efficient counting and sampling for random atomic CSPs, including, for example, random hypergraph colorings. At the same time, it immediately implies as corollaries several structural and probabilistic properties of random CSPs that have been widely studied but rarely justified, including replica symmetry and non-reconstruction.

\end{abstract}	

\maketitle

\setcounter{tocdepth}{1}
\tableofcontents


\section{Introduction}

Random constraint satisfaction problems (CSPs) have attracted lots of attention in both computer science and statistical physics in recent years. Typically, a random CSP consists of a set of $m = \alpha n$ constraints imposed on $n$ variables with finite domains, where the constraints are randomly generated according to a specific rule and $\alpha > 0$ is a constant representing the density of the instance.
The primary goal is to find a feasible solution satisfying all constraints or more generally, optimize a random objective function specified by the constraints. 
Usually, with sufficiently small $\alpha$, the problem is easy and can be solved in polynomial time, while as $\alpha$ grows, the problem becomes computationally hard at a certain point. 
It is, thus, important to understand the computational complexity of random CSPs with respect to the density $\alpha$. 

Perhaps the most notable and important example of random CSPs in computer science is the random $k$-SAT problem. 
Let $v_1,\dots,v_n$ be $n$ Boolean variables taking values in $\{\mathsf{True},\mathsf{False}\}$.
We construct a $k$-SAT formula $\Phi = \Phi(k, n, m = \lfloor\alpha n \rfloor)$ by selecting $m = \lfloor \alpha n \rfloor$ clauses independently where each clause has size $k$ and is obtained by selecting each literal independently and uniformly at random from $\{v_1,\dots, v_n, \lnot v_1, \dots, \lnot v_n\}$.
The fundamental problem for random $k$-SAT is to understand for what value of $\alpha$, a solution (i.e., satisfying assignment of variables) exists and can be found efficiently by an algorithm.
In recent years, significant progress has been made toward these problems.
For random $k$-SAT, numerical experiments and heuristic arguments~\cite{mezard2002analytic, mezard2005clustering} support the \emph{Satisfiability Conjecture}, which posits the existence of a threshold $\alpha_{\mathrm{sat}} = \alpha_{\mathrm{sat}}(k) = 2^k \ln 2 - (1 + \ln 2)/2 + o_k(1)$ that can be described explicitly, such that for a random $k$-SAT instance $\Phi(k,n,m=\lfloor\alpha n \rfloor)$,
\begin{align}\label{eq:k-SAT-sat}
\lim\limits_{n\to\infty} \Pr{\Phi(k,n,m)\text{ is satisfiable}}=
\begin{cases}
1, & \alpha > \alpha_{\mathrm{sat}};\\
0, & \alpha < \alpha_{\mathrm{sat}}.
\end{cases}
\end{align}
Building on a long line of works~\cite{kirousis1998approximate, friedgut1999sharp, achlioptas2002asymptotic, achlioptas2003threshold, coja2014asymptotic, ding2022satisfiability}, Ding, Sly, and Sun~\cite{ding2022satisfiability} finally established \eqref{eq:k-SAT-sat} for sufficiently large $k$, confirming the prediction arising from statistical physics. For other canonical random CSPs, such as random $k$-uniform hypergraph $q$-colorings, the satisfiability threshold for the existence of a coloring remains unclear, with the current best upper and lower bounds differing by an additive $\ln 2+o_q(1)$ factor~\cite{hatami2008sharp, ayre2019hypergraph}.

The next question is to design an efficient algorithm to find a solution at densities below the satisfiability threshold where a solution exists with high probability. 
It is tempting to hope that whenever a solution exists (with high probability) one can also find it in polynomial time.
However, the best-known algorithm for random $k$-SAT, developed by Coja-Oghlan~\cite{coja2010better}, works for densities up to $(1 - o_k(1))2^k (\ln k) / k$, leaving a gap of $\Theta(\log k/ k)$ to the satisfiability threshold. 
Meanwhile, with the hope of approaching the satisfiability threshold $\alpha_{\mathrm{sat}}$, statistical physicists have developed sophisticated and novel message-passing algorithms (the \emph{cavity method}) such as belief propagation, survey propagation, and their advanced versions. 
While numerical experiments suggest that these algorithms perform pretty well for small values of $k$, it has been rigorously proved that in general they fail beyond the density $2^k (\ln k) / k$ \cite{Hetterich16,CO17}.

It has been noted that the density $2^k (\ln k) / k$ marks a phase transition in the geometry of the solution space; see \Cref{figure:transition} for an illustration.
When the density $\alpha$ is below $2^k (\ln k) / k$, the solution space consists of a single giant component that contains almost all solutions and any two solutions from the component can be connected by a path of solutions such that adjacent pairs have Hamming distance $o(n)$. 
Meanwhile, once the density $\alpha$ goes slightly beyond $2^k (\ln k) / k$, the set of solutions is divided into exponentially many small clusters such that each cluster is well-connected, but any two distinct clusters are $\Omega(n)$ away in Hamming distance.
In this regime, a random solution contains many frozen variables with high probability and there exist long-range correlations between variables~\cite{achlioptas2008algorithmic}.
For this reason, the threshold $\alpha_{\mathrm{clust}} \approx 2^k (\ln k) / k$ is called the \emph{clustering threshold}.

Such long-range correlations and the emergence of frozen variables in the clustering phase can be characterized by the \emph{overlap gap property}, which is the barrier for a large family of popular algorithms, including local search algorithms and message-passing algorithms like belief and survey propagation.
Thus, the algorithmic threshold for the searching problem is conjectured to coincide with the clustering threshold.
This has been partially established by the recent work \cite{BH22} of Bresler and Huang who proved that the class of \emph{low degree polynomial algorithms} fail at density $4.91 (2^k (\ln k) / k)$, by establishing a stronger version of the overlap gap property. 
We refer to the survey of Gamarnik \cite{Gamarnik21} for more discussions on the overlap gap property and its implication on the intractability of random CSPs.

In this paper, we go beyond searching a solution for random $k$-SAT and consider the even harder problems of \emph{sampling} a uniformly random solution and \emph{counting} the number of solutions. 
Sampling and counting are crucial subroutines for many other statistical and computational tasks such as inference, testing, or prediction.
There have been several recent works on this topic~\cite{montanari2007counting,GGGY21,HWY23,chen2024fast}. For random $k$-SAT, the current best algorithm is given by He, Wu, and Yang~\cite{HWY23}, who presented an efficient algorithm for almost uniform sampling random $k$-SAT solutions up to densities $\alpha\lesssim 2^{k/3}$. Yet, there remains a huge exponential gap between this bound and the clustering threshold $\alpha_{\mathrm{clust}} \approx 2^k (\ln k) / k$ for searching algorithms. 

Meanwhile, message-passing algorithms such as belief and survey propagation from the cavity method are naturally inference algorithms for estimating marginal probabilities and are believed to work all the way up to the clustering threshold.
Given the close relationship between inference and counting/sampling, this seems to suggest that counting and sampling may also be tractable up to the clustering threshold, which is the conjectured threshold for searching algorithms. 
Therefore, it is natural to ask the following question:

\begin{center}
    \emph{For random $k$-SAT, are counting and sampling tractable up to the algorithmic threshold for searching?}
\end{center}

\subsection{Counting and sampling for random $k$-SAT}

We establish the tractability of approximate counting and almost uniform sampling for random $k$-SAT up to densities $\alpha \le 2^k / \poly(k)$. 
In particular, the exponential term $2^k$ matches the satisfiability and algorithmic threshold for random $k$-SAT and extends beyond the lower bound for worst-case $k$-SAT instances (see \Cref{rmk:bounded-degree-CSP} for more discussion).

The random $k$-SAT model is formally defined as follows.

 \begin{definition}[random $k$-SAT formulas]\label{definition:k-SAT-model}
    For $k\geq 3$, define $\Phi(k,n,m)$ as the law of the $k$-SAT formula chosen uniformly at random from all $k$-SAT formulas with $n$ variables and $m$ constraints. 
    
    Specifically, the random $k$-SAT formula $\Phi=(V,C)\sim\Phi(k,n,m)$ is generated as follows:
    \begin{itemize}
        \item The variable set is defined as $V=\{v_1,v_2,\dots,v_n\}$.
        \item The constraint set is defined as $C=\{c_1,c_2,\dots,c_m\}$, where each constraint $c_i$ consists of exactly $k$ literals $\ell_{i,1},\ell_{i,2},\dots,\ell_{i,k}$, with each literal $\ell_{i,j}$ chosen uniformly at random from all $2n$ literals $\{v_1,v_2,\dots,v_n,\neg v_1,\neg v_2,\dots,\neg v_n\}$.
\end{itemize}
\end{definition}

We present  results for efficient (approximate) counting and (almost uniform) sampling of random $k$-SAT up to densities $\alpha \le {2^{k}}/{\poly(k)}$, improving upon the previous best regime $\alpha\lesssim 2^{k/3}$~\cite{HWY23}. 

\begin{theorem}[counting and sampling random $k$-SAT solutions]\label{theorem:k-SAT-main}
    There exists a universal constant $c \geq 1$ such that the following holds 
    with high probability over the choice of the random $k$-SAT formula $\Phi=(V,C)\sim\Phi(k,n,\lfloor\alpha n \rfloor)$ where $0 < \alpha \le 2^k / k^c$. 
    
    For any $\eps>0$, there exist the following algorithms, both with running time $\left(n/\eps\right)^{\poly(k,\alpha)}$:
\begin{itemize}
    \item (Counting) A deterministic algorithm that outputs $\hat{Z}$ which is an $\eps$-approximation of the number of solutions $Z(\Phi)$ of $\Phi$, i.e., $(1-\eps) Z(\Phi) \le \hat{Z} \le (1+\eps) Z(\Phi)$; 
    \item (Sampling) An algorithm that outputs a random assignment $X\in \{\mathsf{True},\mathsf{False}\}^V$ that is $\eps$-close in total variation distance to $\mu_{\Phi}$, the uniform distribution over all solutions of $\Phi$.
\end{itemize}
\end{theorem}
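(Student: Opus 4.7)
The plan is to combine the marginal-sampling-via-coupling framework of \cite{WY24} with a new, Moser--Tardos-style analysis that is specifically tailored to the average-case instance. I would first invoke the standard self-reduction from counting/sampling to computing variable marginals: fix any ordering $v_1,\dots,v_n$ of the variables and any reference solution $x$ (which can be produced in this regime by an efficient searching algorithm), and write
\[
Z(\Phi) \;=\; \prod_{i=1}^{n} \frac{1}{\Pr{X_{v_i} = x_i \mid X_{v_1}=x_1,\dots,X_{v_{i-1}}=x_{i-1}}},
\]
with an analogous chain rule used to draw an almost-uniform sample. This reduces the theorem to estimating, on each pinned sub-instance $\Phi_i$ of $\Phi$, the conditional marginal of a single variable up to error $\eps/\poly(n)$. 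A preliminary step is to verify that pinning a subset of variables preserves all structural properties (bounded local degrees, absence of short cycles, etc.) of the original random formula at density essentially $\alpha$, a fact that follows by a routine union bound.

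The second step is to estimate each such marginal using the coupling procedure of \cite{WY24}: for a target variable $v$, we maintain two partial assignments differing only at $v$ and propagate the discrepancy along clauses violated by exactly one of them in a BFS-like fashion; the procedure succeeds if it halts before encountering a clause that would force the two runs to disagree globally. The key quantity governing both correctness and efficiency is the failure probability of this procedure. In prior work \cite{HWY23}, failure was attributed to the existence of a connected $2$-tree of bad clauses in the dependency hypergraph of $\Phi$, and a first-moment bound over such $2$-trees gives the barrier $\alpha \lesssim 2^{k/2}$, matching the worst-case bounded-degree threshold and therefore not improvable without using randomness of the instance.

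The crux of the proof, and the main obstacle, is to replace the $2$-tree bound with a Moser--Tardos-style witness-tree argument. Whenever the coupling procedure fails at $v$, I would extract the sequence of bad clauses that caused the failure and organize them into a rooted labelled tree $\tau$ rooted at $v$, using the classical Moser--Tardos encoding so that distinct failing executions produce distinct trees. I would then prove two estimates for random $k$-SAT: (i) on a high-probability structural event for $\Phi$, the number of candidate witness trees of size $s$ rooted at $v$ is at most $(O(\alpha k))^s$; and (ii) for each such tree, the probability that it is realised as a witness is at most $2^{-ks}$, one factor of $2^{-k}$ per clause coming from the probability that a uniformly random clause is violated by any fixed assignment. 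The main technical subtlety is that, unlike in the classical Moser--Tardos setting where all randomness lies in resampling, here part of the randomness is baked into $\Phi$ itself, so the usual independence cannot be used verbatim; handling this requires conditioning on a carefully chosen niceness event for $\Phi$ and checking that conditional on this event the fresh randomness in each clause along the witness tree is still close enough to uniform for the per-clause factor $2^{-k}$ to survive. Summing over $s$ gives failure probability at most $\sum_{s \ge 1}(O(\alpha k / 2^k))^s$, which is $o(1/\poly(n))$ once $\alpha \le 2^k/k^c$ for a sufficiently large constant $c$, concluding \Cref{theorem:k-SAT-main}.
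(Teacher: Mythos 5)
Your proposal diverges from the paper in ways that leave genuine gaps. First, the self-reduction: you reduce to conditional single-variable marginals by pinning $v_1,\dots,v_{i-1}$ to a reference solution and claim that "pinning a subset of variables preserves all structural properties \dots by a routine union bound." This step would fail. Pinning variables shortens clauses and degrades the local-lemma-type conditions (this is precisely the barrier that forced the freezing/marking paradigm in \cite{Moi19,GGGY21,HWY23}), and the pinned values come from a solution that depends on $\Phi$, so you cannot union-bound over instance randomness to argue the pinned sub-instance looks like a fresh random formula at density $\alpha$. The paper avoids this entirely by a \emph{constraint-wise} self-reduction ($Z(\Phi_m)=q^n\prod_i Z(\Phi_i)/Z(\Phi_{i-1})$, adding one clause at a time), so every intermediate instance is a subformula of $\Phi$ and inherits \Cref{condition:main-condition} verbatim; correspondingly, the coupling of \Cref{thm:W1} compares $\mu_\Phi$ with $\mu_{\Phi\setminus c_0}$ rather than two pinnings of a variable.

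Second, your witness-tree accounting uses the wrong source of randomness. You take the per-clause factor $2^{-k}$ from "the probability that a uniformly random clause is violated by any fixed assignment," i.e.\ instance randomness — but the algorithm runs on a fixed (nice) formula, and the failure probability must be bounded in a quenched sense. In the paper the factor comes instead from the Gibbs randomness of two \emph{independent} satisfying assignments $\*X\sim\mu_{\+C\setminus\{c_0\}}$, $\*Y\sim\mu_{\+C}$, bounded via \Cref{HSS} after pinning bad vertices (\Cref{lemma:assignment-probability-bound}), which yields $q^{-2|\vbl(T)\cap \vgood|}$; the exponent $2$ is essential because fixing the execution also requires enumerating the witness assignment $\varsigma\in[q]^{\vbl(T)}$, costing a factor $q^{|\vbl(T)|}$, and the loss on bad vertices is controlled by edge expansion and the bounded-bad-fraction property. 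Your bound of $(O(\alpha k))^s\cdot 2^{-ks}$ omits the assignment enumeration and would not close as stated. Finally, even granting the coupling analysis, you have no algorithm: the coupling (and your marginal estimator) is an idealized process requiring samples from the unknown distribution, and the theorem demands a \emph{deterministic} counter. The paper converts the coupling into algorithms via the LP on the $M$-truncated coupling tree with truncation constraints (\Cref{sec:lp}); some such derandomized estimation device is indispensable and entirely missing from your plan.
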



One may hope to estimate the number of solutions $Z(\Phi)$ by simply computing and outputting $\mathbb{E}[Z(\Phi)]$; this however does not provide an effective approximation algorithm. The random $k$-SAT model does not enjoy the \emph{superconcentration property} \cite{bapst2017planting,chatterjee2024number} where $Z(\Phi)$ concentrates around $\mathbb{E}[Z(\Phi)]$ with constant or tiny $\omega(1)$ factors with high probability, and furthermore, the standard trick for boosting $\poly(n)$ approximation ratios to FPTAS \cite{SJ89} does not apply on such random instances.

We remark that while we consider random $k$-SAT in \Cref{theorem:k-SAT-main}, we can easily obtain similar results for random \emph{regular} $k$-SAT of variable-degree $d=k\alpha$ whose analysis would be simpler due to the absence of high-degree vertices.
For the random regular hypergraph independent set problem (equivalently, random regular \emph{monotone} $k$-SAT), it was shown that the Glauber dynamics for sampling is rapidly mixing at density $\alpha = O(2^k/k^2)$ \cite{HSZ19}.

Our counting and sampling algorithms apply to general random atomic CSPs.
Here, we present results for random hypergraph colorings. 
A hypergraph $H=(V,\+E)$ is  $k$-uniform if $|e|=k$ for all $e\in \+E$. 
We adopt the following definition for the random generation of $k$-uniform hypergraphs.

\begin{definition}[Erd\H{o}s-R\'{e}nyi hypergraph]\label{definition:random-hypergraph}
   For $k\geq 2$, define $H(k,n,m)$ as the uniform distribution over
   all $k$-uniform hypergraphs with $n$ vertices and $m$ distinct hyperedges. 
\end{definition}

For a hypergraph $H=(V,\+E)$, a proper hypergraph $q$-coloring $X\in [q]^V$ assigns one of the $q$ colors to each $v\in V$, ensuring no hyperedge is monochromatic. 
We present results for efficiently counting and sampling proper $q$-colorings of random $k$-uniform hypergraphs up to densities $\alpha \le q^{k}/\poly(k,q)$.

\begin{theorem}[counting and sampling random $k$-uniform hypergraph $q$-colorings]\label{theorem:q-coloring-main}
    There exists a universal constant $c \geq 1$ such that the following holds 
    with high probability over the choice of the random hypergraph $H=(V,\+E)\sim H(k,n,\lfloor \alpha n\rfloor)$ where $\alpha \le q^k / (q k)^c$.
    
    For any $\eps>0$, there exist the following algorithms, both with running time $\left(n/\eps\right)^{\poly(\log q, k,\alpha)}$:
\begin{itemize}
    \item (Counting) A deterministic algorithm that outputs $\hat{Z}$ which is an $\eps$-approximation of the number of proper $q$-colorings $Z(H,q)$ of $H$, i.e., $(1-\eps) Z(H,q) \le \hat{Z} \le (1+\eps) Z(H,q)$;
    \item (Sampling) An algorithm that outputs a random assignment $X\in [q]^V$ that is $\eps$-close in total variation distance to $\mu_{H}$, the uniform distribution over all proper $q$-colorings of $H$.
\end{itemize}
\end{theorem}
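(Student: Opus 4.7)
The plan is to establish \Cref{theorem:q-coloring-main} by instantiating, for the atomic CSP of hypergraph $q$-coloring, the same framework used to prove \Cref{theorem:k-SAT-main}. A proper coloring of a $k$-uniform hyperedge forbids only the $q$ monochromatic configurations out of $q^k$, so the local violation probability is $q^{1-k}$; the density hypothesis $\alpha \le q^k/(qk)^c$ therefore plays the same role as $\alpha \le 2^k/k^c$ does for $k$-SAT. I would first reduce approximate counting to marginal estimation via self-reducibility on the vertex set: fixing any vertex ordering $v_1,\dots,v_n$ and any reference coloring $c_1,\dots,c_n$,
\begin{equation*}
Z(H,q) \;=\; q^n\prod_{i=1}^n \mu_{H}\bigl(X_{v_i}=c_i \,\big|\, X_{v_1}=c_1,\dots,X_{v_{i-1}}=c_{i-1}\bigr),
\end{equation*}
so estimating each conditional marginal within multiplicative error $\eps/n$ yields the counting FPTAS. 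Sampling reduces to the same marginal oracle by drawing $X_{v_i}$ sequentially from the estimated conditionals.

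Next, I would use the coupling subroutine of \cite{WY24} to estimate each conditional marginal. The procedure couples two local stochastic processes started from assignments differing only at a single vertex $v$, and reads off the marginal from whether the disagreement at $v$ eventually disappears. Its correctness at the claimed densities reduces to showing that with high probability over $H$, and uniformly over the pinned vertices used in the self-reducibility step, the disagreement percolation fails to reach distance $\Omega(\log n)$ in the hypergraph-incidence graph.

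The heart of the proof, and the main obstacle, is the witness-tree analysis that replaces the 2-tree structure of prior work such as \cite{HWY23}. Whenever the coupling fails, the trajectory of propagated disagreements can be encoded as a rooted tree $\tau$ of hyperedges in the spirit of Moser--Tardos. For any such $\tau$ of size $s$, I would bound, in the product measure over the random hypergraph together with the internal randomness of the coupling, the probability that $\tau$ is a valid witness, obtaining a factor $(q^{1-k})^{\Omega(s)}$ from the atomic-constraint nature of each violation event. Enumerating candidate rooted trees contributes at most $(\alpha k)^{O(s)}$ and uses only the incidence structure of the random hypergraph, crucially not any pairwise independence of constraints, which is what restricts the classical 2-tree argument to worst-case density $q^{k/2}$. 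A union bound then gives total failure probability $\bigl((\alpha k)^{O(1)} q^{1-k}\bigr)^{\Omega(s)}$, which is $o(n^{-1})$ for $s=\Theta(\log n)$ once $c$ in $\alpha \le q^k/(qk)^c$ is chosen large enough. Packaging everything gives the stated running time $(n/\eps)^{\poly(\log q,k,\alpha)}$ for both the deterministic counting algorithm and the approximate sampler.
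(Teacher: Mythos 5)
Your overall architecture diverges from the paper's in two places where the divergence is not cosmetic but breaks the argument. First, the self-reduction: you reduce counting to estimating conditional marginals $\mu_H(X_{v_i}=c_i\mid X_{v_1}=c_1,\dots,X_{v_{i-1}}=c_{i-1})$ under an arbitrary prefix of pinned vertices. At densities $\alpha \approx q^k/\poly(qk)$ this is exactly the step the paper is engineered to avoid: once $\Omega(n)$ vertices are pinned, hyperedges become pinned constraints of much smaller width, the local-lemma-type marginal bounds (the $q^{-(1-\eps_1)k}$ factors behind \Cref{lemma:assignment-probability-bound}) and the structural guarantees of \Cref{condition:main-condition} no longer hold for the conditioned instance, and nothing in your witness-tree analysis applies to it. The paper instead uses a \emph{constraint-wise} telescoping $Z(\Phi_m)=q^n\prod_i Z(\Phi_i)/Z(\Phi_{i-1})$ (adding hyperedges one at a time), so that every instance appearing in the reduction is itself a sub-hypergraph of the nice random hypergraph and still satisfies \Cref{condition:main-condition}; the only pinnings ever used are of the small set of bad vertices and of the $O(\log n)$ witness-tree variables generated inside the coupling. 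Relatedly, the coupling you invoke is misdescribed: in \cite{WY24} and here it couples $\mu_{\+C}$ with $\mu_{\+C\setminus\{c_0\}}$ (removal of one constraint), not two assignments differing at one vertex, and the single-vertex variant is used in this paper only for the structural corollaries (replica symmetry, non-reconstruction), not for the algorithm.

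Second, and more fundamentally, you have no algorithmic mechanism for actually computing the marginals: the recursive coupling is an \emph{idealized} process whose steps require exact samples from $\mu^{\sigma}_{\+E}$, i.e., from the very distributions one is trying to access, so ``reading off the marginal from whether the disagreement disappears'' is not implementable. The paper's Section 5 is precisely the missing component: a Moitra-style linear program on the $M$-truncated coupling tree, with truncation constraints calibrated by \Cref{lemma:assignment-probability-bound}, whose feasibility brackets the ratio $|\Omega^{\+C\setminus\{c_0\}}|/|\Omega^{\+C}|$ (\Cref{lemma:linear-program-error-bound}) and whose feasible solution drives the dynamic sampler. Finally, your quantitative bound $\bigl((\alpha k)^{O(1)}q^{1-k}\bigr)^{\Omega(s)}$ is too lossy as stated to reach $\alpha \le q^k/(qk)^c$: one needs the per-tree-vertex factor to be essentially $\poly(q,k)\,\alpha\, q^{-(1-o(1))k}$ with exponent exactly one on $\alpha$, and in the paper this is achieved only by combining the bounded-neighbourhood-growth enumeration with edge expansion and the good/bad decomposition, and by using \emph{both} independent samples $\*X,\*Y$ to get a $q^{-2|\vbl(T)\cap\vgood|}$ factor that beats the $q^{|\vbl(T)|}$ enumeration of witness assignments — an accounting your sketch does not perform. (A minor further omission: the transfer of the whp structural properties from the $k$-SAT incidence model to the uniform model $H(k,n,m)$, handled by \Cref{lemma:technical-lemma-random-models}, and satisfiability of the instance whp.)
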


An analog of \Cref{theorem:q-coloring-main} holds for random \emph{regular} hypergraph colorings as well.

\begin{remark}[Comparison with worst-case bounded-degree CSPs]
\label{rmk:bounded-degree-CSP}
It is helpful to review the literature on counting and sampling for worst-case bounded-degree CSPs. A random $k$-SAT formula with density $\alpha$ has an average degree of $k\alpha$, making it comparable to a $k$-SAT formula with maximum degree $d=k\alpha$. For bounded-degree CSPs, while a solution exists and can be efficiently found when  $d \lesssim 2^{k}$~\cite{LocalLemma,moser2010constructive} via the celebrated (algorithmic) Lov\'{a}sz local lemma, the problem of sampling and counting solutions becomes intractable when $d \gtrsim 2^{k/2}$~\cite{BGGGS19}. A similar separation occurs for $k$-uniform hypergraph $q$-colorings: while the existence/searching problem can be solved when  $d\lesssim q^{k}$, the sampling/counting problem becomes intractable at $d \gtrsim q^{k/2}$~\cite{galanis2021inapproximability}. This demonstrates that the sampling and counting problems are computationally much harder than searching for worst-case bounded-degree CSPs.
In contrast, our findings demonstrate that for random CSPs, the sampling and counting problems are not significantly computationally harder than their satisfiability and algorithmic counterparts.
\end{remark}

\begin{remark}
    As mentioned, our main results \Cref{theorem:k-SAT-main,theorem:q-coloring-main} in fact apply to a broad family of random CSPs (see \Cref{theorem:CSP-main} for a formal statement).
    One may wonder if the counting/sampling threshold would always be close to or even match the searching threshold for general random CSPs.
    The answer is probably no in general. A very recent paper \cite{AG24} by El Alaoui and Gamarnik shows that for the symmetric binary perceptron model, another important example of random CSPs, sampling a solution is intractable at \emph{any} constant density (for two common classes of algorithms) while there are known algorithms for finding solutions at sufficiently low density.
    
    We note that the random CSPs considered in this paper are \emph{sparse}, in the sense that both the size of constraints and the (average) degree of variables are constant.
    Meanwhile, the symmetric binary perceptron model is \emph{dense} as the constraint size and variable degrees are both linear.
    This leads to a huge difference in the geometry of the solution space: For the symmetric binary perceptron model at any constant density, most solutions are isolated with linear distance from each other.
    Therefore, it is still possible that the counting/sampling threshold is equivalent to the searching threshold, or at least somewhere near it, for general sparse random CSPs as considered in this paper. 
\end{remark}



\subsection{Constraint-wise coupling and replica symmetry} %

Given the extensive focus in the literature, we primarily present our results for random $k$-SAT in this subsection; however, the results and proofs are equally applicable to random hypergraph colorings or other random atomic CSPs.

The central tool for our algorithmic results is a constraint-wise coupling developed in \cite{WY24}.
Let $\Phi=(V,C)\sim\Phi(k,n,\lfloor\alpha n \rfloor)$ be a $k$-SAT instance at density $\alpha > 0$, and let $c_0 \in C$ be an arbitrary clause. We define $\mu_\Phi$ as the uniform distribution over all solutions to $\Phi$, and $\mu_{\Phi \setminus c_0}$ as the uniform distribution over solutions to $\Phi \setminus c_0 := (V,C\setminus \{c_0\})$ with the clause $c_0$ removed from $\Phi$.
We establish the following result concerning the $1$-Wasserstein distance between the two distributions $\mu_\Phi$ and $\mu_{\Phi \setminus c_0}$.

\begin{theorem}\label{thm:W1}
    There exists a universal constant $c \geq 1$ such that the following holds 
    with high probability over the choice of the random $k$-SAT formula $\Phi=(V,C)\sim\Phi(k,n,\lfloor\alpha n \rfloor)$ where $0 < \alpha \le 2^k / k^c$. 
    
    For any $c_0 \in C$, it holds that
    \begin{align*}
        W_1\left( \mu_\Phi, \mu_{\Phi \setminus c_0} \right) = O(\log n),
    \end{align*}
    where $W_1(\cdot,\cdot)$ denotes the $1$-Wasserstein distance with respect to the Hamming metric.
    That is, there exists a coupling $(X,Y)$ of $\mu_\Phi$ and $\mu_{\Phi \setminus c_0}$ such that $\mathbb{E}[d_{\-{Ham}}(X,Y)] = O(\log n)$.
\end{theorem}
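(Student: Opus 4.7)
The plan is to construct the required coupling $(X,Y)$ by running the constraint-wise coupling procedure of \cite{WY24}, initialized from the removed clause $c_0$. That procedure samples $X\sim\mu_\Phi$ and $Y\sim\mu_{\Phi\setminus c_0}$ by revealing the two configurations simultaneously and, at each step, coupling them to agree as often as possible on each processed variable; the only unavoidable disagreement originates from variables in $c_0$, and any further disagreement propagates only through constraints that share an already-discrepant variable.

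To control this propagation, I would record a \emph{witness tree} $T$ rooted at $c_0$ in the spirit of Moser--Tardos \cite{moser2010constructive}: whenever the coupling first introduces a disagreement on a variable of some clause $c$, attach $c$ as a child of the constraint through which the discrepancy reached it. Each node of $T$ then witnesses a ``bad event'' for the coupling on its corresponding clause, and the total Hamming distance satisfies $d_{\mathrm{Ham}}(X,Y)\le k\cdot |T|$; hence it suffices to show $\mathbb{E}[|T|]=O(\log n)$.

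The core quantitative step is to bound the probability that a given labelled witness tree $T$ of size $s$ is realized along the coupling. The bad event at each internal node is that an otherwise uniform partial assignment violates its $k$-clause, which contributes a factor at most $2^{-k}$ by the random choice of literals; the branching at each node is governed by the local variable-degrees of $\Phi$, whose mean is $k\alpha$. After summing over tree topologies (a Catalan-type factor bounded by $4^s$) and over labelled children in the clause-intersection hypergraph of $\Phi$ (a factor $(k^2\alpha)^s$), the expected number of size-$s$ witness trees rooted at $c_0$ is bounded by $\bigl(C\, k^2 \alpha \cdot 2^{-k}\bigr)^s$ for some universal constant $C$. Under $\alpha\le 2^k/k^c$ with $c$ sufficiently large, this is at most $2^{-s}$, so $\Pr{|T|\ge s}\le 2^{-s}$, and the $O(\log n)$ Wasserstein bound follows after truncating the analysis on an atypical $n^{-\omega(1)}$-probability event where $\Phi$ exhibits anomalous local structure.

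The main obstacle is that random $k$-SAT has a heavy-tailed degree sequence, so a few variables have degree far exceeding $k\alpha$ and the naive branching estimate breaks in their neighbourhoods. The resolution is to combine two ingredients: first, high-probability structural properties of $\Phi(k,n,m)$ that isolate a ``bad'' set of variables of negligible density whose union of neighbourhoods is small; second, a separate worst-case recovery inside that bad set, treating it as a black box and bounding its contribution deterministically. It is precisely this finer bookkeeping, made possible by recording the genuine causal chain of discrepancies in the witness tree (rather than any two-step neighbour structure used in the $2$-tree), that lets the analysis pass through the worst-case barrier $\alpha\sim 2^{k/2}$ and reach the satisfiability-threshold scaling $\alpha\sim 2^k/\poly(k)$.
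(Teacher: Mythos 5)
Your skeleton matches the paper's proof: run the constraint-wise coupling of \cite{WY24} started from the removed clause $c_0$, record a Moser--Tardos-style witness tree rooted at $c_0$, bound $d_{\mathrm{Ham}}(X,Y)\le k|T|$, and get an exponential tail for $|T|$ by summing a per-tree probability against a tree-enumeration factor of order $(k^2\alpha)^{|T|}$, with the heavy-tailed degrees handled by high-probability structural properties of $\Phi(k,n,m)$. However, the central quantitative step as you state it has a genuine gap. You attribute the per-node factor $2^{-k}$ to ``the random choice of literals,'' i.e.\ to the randomness of the formula. But the witness trees are enumerated inside the clause-intersection structure of the \emph{realized} formula, so the literals of each node's clause are already fixed by that enumeration; you cannot reuse the formula's randomness to charge $2^{-k}$ per node. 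In the paper the formula's randomness is spent once, to establish the structural condition (degree bounds, expansion, neighbourhood growth, bad-vertex bounds), and all subsequent probability is over the coupling's randomness for a fixed nice instance. There, the violation events at tree nodes must be measured under conditional Gibbs distributions (equivalently, under two pre-drawn samples $\*X\sim\mu_{\Phi\setminus c_0}$, $\*Y\sim\mu_{\Phi}$ via deferred decisions), whose marginals are not a priori $2^{-k}$; the paper obtains the needed bound (Lemma \ref{lemma:assignment-probability-bound}) by pinning the bad vertices and applying the local-lemma marginal comparison (Theorem \ref{HSS}) on the good vertices, and it is exactly the expansion and bounded-bad-fraction properties that guarantee $|\vbl(T)\cap \vgood|\approx k|T|$ so the bound survives.

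A second, related omission: each discrepancy step of the coupling also draws a \emph{non-violating} partial assignment on the clause's variables (the witness assignment), and to union-bound over the coupling's trajectories one must enumerate these, costing a factor $q^{|\vbl(T)|}=2^{k|T|}$. A single $2^{-k}$ per node would be wiped out by this enumeration. The paper's fix is structural: the witness tree together with the witness assignment determines the restrictions of \emph{both} independent samples $\*X,\*Y$ on $\vbl(T)$ (Lemma \ref{lemma:conditional-deterministic}), so independence yields a factor of roughly $q^{-2|\vbl(T)\cap\vgood|}$ per tree, which beats the $q^{|\vbl(T)|}$ enumeration and leaves a net $\approx k^2\alpha\, 2^{-k+O(1)}$ per node. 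Your proposal, as written, neither supplies the Gibbs-marginal bound nor accounts for this enumeration; the sentence about a ``separate worst-case recovery inside the bad set, bounding its contribution deterministically'' is too vague to substitute for the pinning-plus-LLL argument and the good/bad bookkeeping that make the count close. So the approach is the right one, but the key estimate needs the two ingredients above to be made correct.
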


\Cref{thm:W1} is at the heart of our counting and sampling algorithms from \Cref{theorem:k-SAT-main}.
Intuitively, it states that given a solution $Y$ sampled from $\mu_{\Phi \setminus c_0}$ (which may violate $c_0$), we can flip the value of at most $O(\log n)$ variables with high probability to obtain a truth assignment $X$ in such a way that $X$ is a solution to $\Phi$ and is distributed uniformly at random as $\mu_\Phi$. 
We establish \Cref{thm:W1} by constructing a recursive coupling procedure and showing it terminates with high probability within $O(\log n)$ iterations. 
We then apply an LP-based algorithm introduced by Moitra \cite{Moi19} which provides fast counting and sampling algorithms; this is also the strategy in \cite{WY24}. 

The coupling result in \Cref{thm:W1} can be understood as a way to describe the \emph{decay of correlation} phenomenon. 
In particular, it is similar to various other notions of correlation decay such as disagreement percolation for showing Gibbs uniqueness \cite{BS94}, recursive coupling for proving strong spatial mixing \cite{GMP05}, and especially coupling independence for showing spectral independence \cite{CZ23,chen2024fast}.
For random CSPs like $k$-SAT, the Wasserstein distance 
between the original formula and the formula obtained after the removal of a single clause is reminiscent of the cavity method.
However, establishing \Cref{thm:W1} for random $k$-SAT is quite different from those previous works on distinct models. Firstly, the $k$-SAT instance is based on hypergraphs while previous rigorous approaches for correlation decay mostly consider models defined on graphs.
Secondly, the hypergraph associated with a random $k$-SAT formula contains a large maximum degree, making the analysis significantly more challenging.
Finally, \Cref{thm:W1} considers the removal of a constraint, while previous approaches often consider flipping the value of a variable.

Before discussing our proof approach for \Cref{thm:W1} in \Cref{subsec:overview}, we first mention a few direct yet important implications of \Cref{thm:W1} for random $k$-SAT at the considered densities.

Many empirical studies in statistical physics use heuristics to predict the solution space geometry of random $k$-SAT, with the (predicted) phase transitions illustrated in \Cref{figure:transition}. Extensive research from both computer science and statistical physics has been dedicated to understanding these phases and their corresponding thresholds~\cite{mezard2005clustering,daude2005pairs,achilioptas2006solution,coja2012decimation,lenka2016statistical,budzynski2020asymptotics}. Notably, the satisfiability threshold $\alpha_{\mathrm{sat}}$ has been precisely determined by~\cite{ding2022satisfiability} for sufficiently large $k$. Other important thresholds, such as the clustering threshold $\alpha_{\mathrm{clust}}$ (also referred to as the dynamic phase transition threshold in ~\cite{florent2007gibbs}), where the solution space of a $k$-SAT fragments into an exponential number of clusters, and the condensation threshold $\alpha_{\mathrm{cond}}$, beyond which the solution space is dominated by a few large clusters, have been extensively studied but are not yet fully characterized.

\begin{figure}[t]
    \centering
    \includegraphics[scale=0.15]{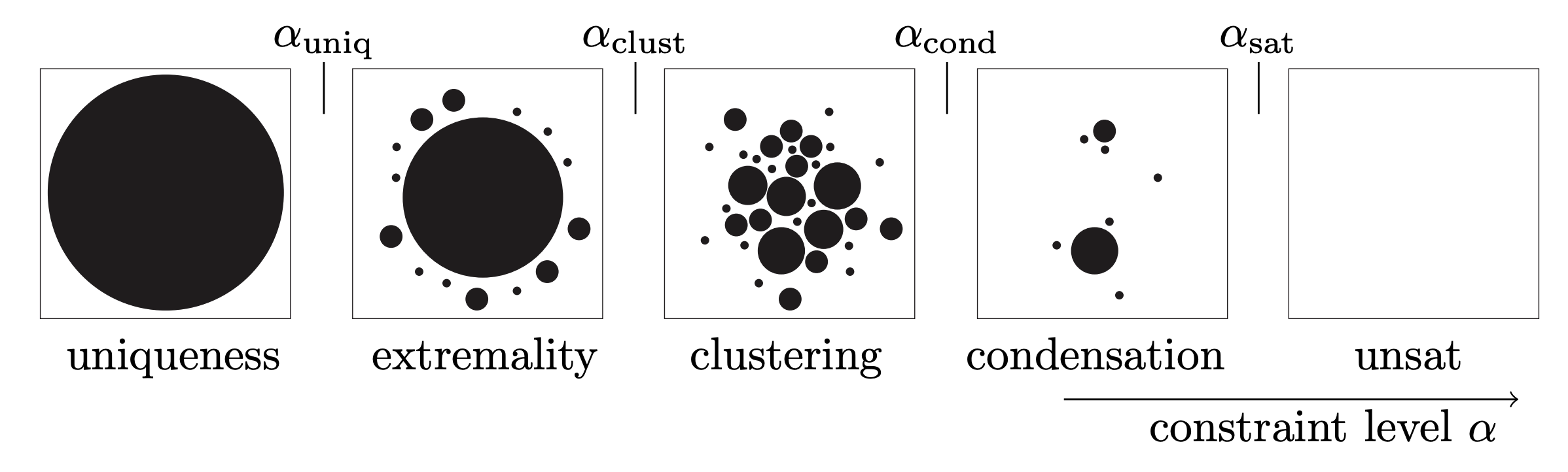}
    \caption{\small The heuristic diagram in ~\cite{ding2022satisfiability} depicts phase transitions in the geometry of the solution space of a random $k$-SAT instance as the density $\alpha$ increases from left to right. 
    }
    \label{figure:transition}
    \end{figure}


One of the key tools for statistical physicists in understanding and describing the solution space geometry of random $k$-SAT are the notions of \emph{replica symmetry} and \emph{replica symmetry breaking}~\cite{montanari2008cluster}, first introduced by Parisi~\cite{parisi1979infinite}. Informally, ``replica symmetry'' refers to the idea that two uniformly chosen variables are nearly independent, while ``replica symmetry breaking'' corresponds to the existence of extensive long-range correlations. Following~\cite{florent2007gibbs}, we present the formal definition below.

\begin{definition}[replica symmetry]\label{definition:replica-symmetry}
    Given $k,\alpha$, we say that a random $k$-SAT model with density $\alpha$ is \emph{replica symmetric} if, 
    for $\Phi=(V,C)\sim\Phi(k,n,\lfloor\alpha n \rfloor)$, a uniform satisfying assignment $\sigma\sim \mu_{\Phi}$, and two variables $v_1,v_2\in V$ chosen uniformly at random, the following holds:
    \[
    \lim\limits_{n\to \infty} \abs{\Pr{\sigma(v_1)=\sigma(v_2)=\mathsf{True}}-\Pr{\sigma(v_1)=\mathsf{True}}\Pr{\sigma(v_2)=\mathsf{True}}}=0.
    \]
\end{definition}

\Cref{definition:replica-symmetry} essentially states that the events $\sigma(v_1) = \mathsf{True}$ and $\sigma(v_2) = \mathsf{True}$ are asymptotically independent for large $n$, therefore indicating the absence of long-range correlations in a relatively weak sense, as the typical distance between $v_1$ and $v_2$ is $\Omega(\log n)$. 
Replica symmetry is conjectured to hold up to the condensation threshold $\alpha_{\mathrm{cond}}$, which has been verified on a few other models; see e.g. \cite{CKPZ17,CEJKK18} and the references therein.
For random \emph{soft-constraint} $k$-SAT, it has been shown that replica symmetry is sufficient for the success of the Belief Propagation algorithm~\cite{coja2022belief}.

While the replica symmetry condition has been extensively discussed in the literature, this property is rarely proved rigorously.
In this work, we show replica symmetry holds under the considered densities, which follows immediately from \Cref{thm:W1}. 

\begin{theorem}[replica symmetry of random $k$-SAT]\label{theorem:replica-symmetry}
    Under the condition of \Cref{theorem:k-SAT-main}, the random $k$-SAT model with density $\alpha$ is replica symmetric.
\end{theorem}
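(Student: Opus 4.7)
The plan is to derive the replica symmetry statement directly from the Wasserstein coupling of \Cref{thm:W1}. By the sign-flip symmetry of the random $k$-SAT model---the transformation that flips both the sign of every occurrence of a given variable $v$ throughout $\Phi$ and the value $\sigma(v)$ is a measure-preserving involution of the joint law of $(\Phi,\sigma\sim\mu_\Phi)$---we have $\Pr{\sigma(v_j)=\mathsf{True}}=\tfrac12$ exactly for $j\in\{1,2\}$. Unwinding the uniform averages over $v_1,v_2\in V$, the target quantity rewrites as
\begin{align*}
\Pr{\sigma(v_1)=\sigma(v_2)=\mathsf{True}}-\Pr{\sigma(v_1)=\mathsf{True}}\Pr{\sigma(v_2)=\mathsf{True}} \;=\; \mathbb{E}_\Phi\!\left[\Var{\sigma\sim\mu_\Phi}{\bar X_\sigma}\right]\;+\;\Var{\Phi}{\mathbb{E}_{\sigma\sim\mu_\Phi}[\bar X_\sigma]},
\end{align*}
where $\bar X_\sigma:=n^{-1}|\{v\in V:\sigma(v)=\mathsf{True}\}|$. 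Since both variance terms are non-negative, the task reduces to showing each vanishes as $n\to\infty$.

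The outer variance is handled by an Efron--Stein / bounded-differences argument: since $|\{v:\sigma(v)=\mathsf{True}\}|$ is $1$-Lipschitz in Hamming distance, \Cref{thm:W1} implies that resampling any single clause of $\Phi$ alters $\mathbb{E}_{\mu_\Phi}[\bar X_\sigma]$ by at most $O(\log n / n)$ with high probability, so $\Var{\Phi}{\mathbb{E}_\sigma[\bar X_\sigma]} = O(m\cdot(\log n/n)^2) = O(\alpha\log^2 n/n)=o(1)$ over the $m=\lfloor\alpha n\rfloor$ i.i.d.\ clauses. For the inner variance, I would expand $\Var{\sigma}{\bar X_\sigma}$ as a double sum of two-point covariances $\mathrm{Cov}_\sigma(\mathbbm{1}[\sigma(v_1)=\mathsf{True}],\mathbbm{1}[\sigma(v_2)=\mathsf{True}])$ and bound each via iterated coupling: removing the set $S_{v_1}$ of clauses incident to $v_1$ makes $v_1$ unconstrained under $\mu_{\Phi\setminus S_{v_1}}$, rendering the covariance exactly zero there. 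The coupling $(X,Y)$ between $\mu_\Phi$ and $\mu_{\Phi\setminus S_{v_1}}$ obtained by iterating \Cref{thm:W1} over the clauses of $S_{v_1}$ satisfies $\mathbb{E}[d_{\mathrm{Ham}}(X,Y)]=O(|S_{v_1}|\log n)$, and a standard triangle-inequality bound gives $|\mathrm{Cov}_\sigma|=O(\Pr{X(v_1)\neq Y(v_1)}+\Pr{X(v_2)\neq Y(v_2)})$. Summing over $(v_1,v_2)$, averaging over $\Phi$, and using $|S_{v_1}|=O_{k,\alpha}(1)$ with high probability (by concentration of degrees in the random formula) yield the desired $o(1)$ bound.

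The main obstacle lies in the iterated application of \Cref{thm:W1} to subformulas of the form $\Phi\setminus S$: the theorem is stated only for the original random $\Phi$, whereas the argument above requires the $O(\log n)$ Wasserstein bound to persist after an $O_{k,\alpha}(1)$-long sequence of clause deletions. I expect this to be routine because the high-probability structural conditions underlying the witness-tree analysis of \cite{WY24} (and its refinement in this paper) are monotone under clause removal, but it must be verified by inspecting the proof rather than invoking the statement as a black box. A secondary subtlety is that the single-variable disagreement $\Pr{X(v_1)\neq Y(v_1)}$ for the pinned coordinate $v_1$ is not directly small from the $W_1$-bound alone; averaging it over $v_1,\Phi$ relies on the sign-flip symmetry, which forces $\mathbb{E}_\Phi[\Pr[\mu_\Phi]{\sigma(v_1)=\mathsf{True}}]=\tfrac12$, combined with a concentration argument analogous to the one for the outer variance.
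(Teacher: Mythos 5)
Your reduction of replica symmetry to the vanishing of the total variance of $\bar X_\sigma$, and your Efron--Stein treatment of the outer (disorder) term via two applications of \Cref{thm:W1} per clause resampling, are sound; the worry you flag about iterating the theorem on subformulas is also not the real issue, since \Cref{condition:main-condition} is preserved under clause deletion (the paper itself relies on exactly this in its constraint-wise self-reduction, so \Cref{theorem:correlation-decay} applies directly to $\Phi\setminus S$). The genuine gap is in the inner-variance step. By deleting the whole star $S_{v_1}$ you compare $\mu_\Phi$ with a measure in which $v_1$ is free, and your covariance bound $O\bigl(\Pr{X(v_1)\neq Y(v_1)}+\Pr{X(v_2)\neq Y(v_2)}\bigr)$ contains the disagreement probability at the freed coordinate $v_1$ itself. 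That quantity does not tend to $0$ as $n\to\infty$: any coupling of $\mu_\Phi$ and $\mu_{\Phi\setminus S_{v_1}}$ has $\Pr{X(v_1)\neq Y(v_1)}\geq\bigl|\mu_\Phi(\sigma(v_1)=\mathsf{True})-\tfrac12\bigr|$, and the quenched marginal bias of a variable is a local, $n$-independent quantity of order $\Theta_k(1)$ (e.g.\ a degree-one variable occurring positively has bias $\approx 2^{-k-1}$), so its average over $v_1$ and $\Phi$ stays bounded away from $0$ for fixed $k,\alpha$. Your proposed rescue via sign-flip symmetry cannot help here: symmetry kills signed quantities like $\mathbb{E}_\Phi[\tfrac12-\mu_\Phi(\sigma(v_1)=\mathsf{True})]$, but $\Pr{X(v_1)\neq Y(v_1)}$ enters your estimate as a nonnegative error term inside an absolute value, where no cancellation is available. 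Hence the bound you would obtain for $\mathbb{E}_\Phi[\mathrm{Var}_{\sigma}(\bar X_\sigma)]$ is $\Omega_k(1)$, not $o_n(1)$, and the theorem (a limit in $n$ for fixed $k$) does not follow.

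The paper avoids this by never freeing $v_1$: it re-runs the recursive coupling of \Cref{Alg:couple} with $\+E=\+F=\+C$ and the two initial partial assignments pinning $v$ to two different values, obtaining the decay bound \eqref{eq:modified-correlation-decay}. Writing the quenched covariance as $\mu_\Phi(\sigma(v_1)=\mathsf{True})\bigl[\mu_\Phi(\sigma(v_2)=\mathsf{True}\mid\sigma(v_1)=\mathsf{True})-\mu_\Phi(\sigma(v_2)=\mathsf{True})\bigr]$, only the disagreement at $v_2$ between the two conditional measures enters, and averaging over $v_2$ this is $\mathbb{E}[d_{\mathrm{Ham}}(X,Y)]/n=O(\mathrm{poly}(k,\alpha)\log n/n)$. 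If you want to keep your decomposition, replace the clause-removal comparison in the inner-variance step by this pinning coupling (which requires re-instantiating the witness-tree analysis with the initial discrepancy at a variable, as the paper does, not invoking \Cref{thm:W1} as a black box); your outer-variance argument can then stand as is, and in fact makes explicit a disorder-fluctuation term that the paper's own proof treats only implicitly.
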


Another important property that arises from the study of the random $k$-SAT by statistical physicists is the \emph{(non-)reconstruction} property~\cite{mezard2006reconstruction}, which informally requires one being able to estimate the value of one variable given ``far away'' observations~\cite{gerschenfeld2007reconstruction,montanari2011reconstruction}.

For an SAT instance $\Phi=(V,C)$, we use $\vbl(c)$ to denote the set of variables involved in $c$ for each clause $c\in C$. We let $H_{\Phi}=(V,\+E)$ denote the hypergraph where $V$ is the set of variables in $\Phi$ and $\+E=\{\vbl(c)\mid c\in C\}$. For a hypergraph $H$, we use $d_H(\cdot,\cdot)$ to denote the graph-theoretic distance in $H$. Finally, for a hypergraph $H=(V,\+E)$, a vertex $v\in V$, and any $r\geq 1$, we let $\bar{B}_H(v,r)\defeq \{u\in V\mid d_H(u,v)\geq r\}$ to denote the set of vertices in $V$ with distance to $v$ at least $r$. We then follow ~\cite{montanari2011reconstruction} to give the following formal definition of the non-reconstruction property.

\begin{definition}[non-reconstruction]
    Given $k,\alpha$, we say that a random $k$-SAT model with density $\alpha$ is \emph{non-reconstructible} if, 
    for $\Phi=(V,C)\sim\Phi(k,n,\lfloor\alpha n \rfloor)$ with the uniform distribution over satisfying assignments $\mu=\mu_{\Phi}$ and the induced hypergraph $H=H_{\Phi}$,  the following holds for any $v\in V$:
    \[
    \lim\limits_{r\to \infty}\limsup\limits_{n\to \infty}\mathbb{E}\left[\dtv\left(\mu_{\{v\} \cup \bar{B}_H(v,r)},\mu_v \otimes \mu_{\bar{B}_H(v,r)}\right)\right]=0.
    \]
    We say that the model is \emph{reconstructible} otherwise. Here, $\mu_{v}$, $\mu_{\bar{B}_H(v,r)}$, $\mu_{\{v\} \cup \bar{B}_H(v,r)}$ denote the marginal distributions induced from $\mu$ on the subset of variables $\{v\},\bar{B}_H(v,r)$, $\{v\} \cup \bar{B}_H(v,r)$, respectively.
\end{definition}

The non-reconstruction property, which also reflects the absence of long-range correlations, is a stronger condition than replica symmetry~\cite{coja2022belief}. In fact, non-reconstructibility is a necessary condition for the rapid mixing of Glauber dynamics or other local Markov chains~\cite{berger2005glauber}. Furthermore, non-rigorous statistical mechanics calculations~\cite{mezard2002analytic} and approximations up to the second order~\cite{montanari2011reconstruction} strongly suggest that the threshold for non-reconstruction aligns with the clustering threshold $\alpha_{\mathrm{clust}}$. 
However, there is no rigorous proof of this as far as we know.
In this work, we demonstrate that non-reconstruction holds under the considered densities, which is again a direct corollary of \Cref{thm:W1}.

\begin{theorem}[non-reconstruction of random $k$-SAT]\label{theorem:non-reconstruction}
    Under the condition of \Cref{theorem:k-SAT-main}, the random $k$-SAT model with density $\alpha$ is non-reconstructible.
\end{theorem}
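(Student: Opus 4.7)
The plan is to deduce non-reconstruction from \Cref{thm:W1} via a ``decoupling by clause removal'' strategy. Fix $v \in V$ and a radius $r \geq 1$, and abbreviate $B_{r-1} := B_H(v, r-1)$ and $\bar{B}_r := \bar{B}_H(v, r)$. Call a clause \emph{bridging} if its variable set meets both $B_{r-1}$ and $\bar{B}_r$; let $S \subseteq C$ denote the set of bridging clauses, and set $\Phi' := \Phi \setminus S$. In the dependency hypergraph of $\Phi'$ the vertex $v$ lies in a connected component disjoint from $\bar{B}_r$, so $\mu_{\Phi'}^{\{v\} \cup \bar{B}_r} = \mu_{\Phi'}^{v} \otimes \mu_{\Phi'}^{\bar{B}_r}$. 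Moreover, since random $k$-SAT at density $\alpha \le 2^k/\poly(k)$ is locally tree-like with bounded branching with high probability over $\Phi$, both $|B_{r-1}|$ and $|S|$ are bounded by some function $s(r,k,\alpha)$ independent of $n$.

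Iterating \Cref{thm:W1} on the clauses of $S$ one at a time, and composing couplings via the gluing lemma, produces a coupling $(X, Y)$ of $\mu_\Phi$ and $\mu_{\Phi'}$. Beyond the resulting $W_1$ bound of $O_r(\log n)$, the constraint-wise coupling of \cite{WY24} (as refined here by the witness-tree analysis) yields a crucial \emph{structural} property: the per-step disagreement set is a connected cluster attached to the removed clause, with exponentially decaying tail in its diameter. Consequently, the overall disagreement set $D := \{u : X_u \neq Y_u\}$ is contained in the union of at most $|S|$ local clusters, each emanating from a bridging clause at hypergraph distance at least $r-1$ from $v$.

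Applying the triangle inequality and the product factorization of $\mu_{\Phi'}$ on $\{v\} \cup \bar{B}_r$ yields
\begin{align*}
\dtv\bigl(\mu_\Phi^{\{v\}\cup\bar{B}_r}, \mu_v \otimes \mu_{\bar{B}_r}\bigr)
&\le \dtv\bigl(\mu_\Phi^{\{v\}\cup\bar{B}_r}, \mu_{\Phi'}^{\{v\}\cup\bar{B}_r}\bigr) \\
&\quad + \dtv\bigl(\mu_\Phi^{v}, \mu_{\Phi'}^{v}\bigr) + \dtv\bigl(\mu_\Phi^{\bar{B}_r}, \mu_{\Phi'}^{\bar{B}_r}\bigr),
\end{align*}
and each summand on the right is bounded by the probability, under the coupling, that $D$ intersects the corresponding vertex set. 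The tail bound on each local cluster yields an estimate of the form $|S| \cdot e^{-\Omega(r)}$. Taking $\limsup_{n\to\infty}$ (under which $|S| \leq s(r,k,\alpha)$ almost surely) and then $\lim_{r\to\infty}$ drives the bounds to zero, establishing non-reconstruction.

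\textbf{Main obstacle.} The central difficulty is not the $W_1$ bound of \Cref{thm:W1} itself but extracting the structural locality of the coupling: each per-step disagreement cluster must be concentrated near its removed clause with exponentially decaying tail on its spatial extent, so that the fixed vertex $v$ at distance $\ge r-1$ from every bridging clause is typically unaffected. This refinement is precisely what the witness-tree machinery underlying \Cref{thm:W1} --- inspired by Moser--Tardos-style analyses of the Lov\'asz local lemma --- is designed to provide.
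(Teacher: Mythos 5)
There is a genuine gap, and it is fatal to the decomposition you chose. Your triangle inequality requires bounding $\dtv\bigl(\mu_\Phi^{\{v\}\cup\bar{B}_H(v,r)},\mu_{\Phi'}^{\{v\}\cup\bar{B}_H(v,r)}\bigr)$ and $\dtv\bigl(\mu_\Phi^{\bar{B}_H(v,r)},\mu_{\Phi'}^{\bar{B}_H(v,r)}\bigr)$, but every clause you delete to form $\Phi'$ is a \emph{bridging} clause and therefore by definition contains variables of $\bar{B}_H(v,r)$. The disagreement clusters produced by the clause-removal coupling are attached to the removed clauses, i.e.\ they sit directly on (indeed inside) $\bar{B}_H(v,r)$, so there is no spatial separation and no $e^{-\Omega(r)}$ decay for these two terms: already a single removal changes the assignment of the removed clause's own variables with probability of order $2^{-k}$, and you remove $|S|$ clauses, where $|S|$ grows with $r$ (roughly like $(k^2\alpha)^{r}$). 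So these terms are of order $|S|\cdot 2^{-k}$, which increases with $r$ rather than vanishing. The distance-$r$ buffer you invoke only helps the middle term $\dtv(\mu_\Phi^v,\mu_{\Phi'}^v)$, and even there the union bound over the $|S|\approx(k^2\alpha)^{r}$ sources is hopeless at the densities considered: with $\alpha\le 2^k/\poly(k)$ the ball grows by a factor $\approx 2^k$ per unit distance while the witness-tree analysis (\Cref{theorem:correlation-decay}) only yields a per-step decay polynomial in $k$ (the stated bound is just $2^{-M}$, and it moreover requires $M\ge\log n$, another point your fixed-$r$, $n\to\infty$ limit would have to address). So the estimate ``$|S|\cdot e^{-\Omega(r)}\to 0$'' does not hold for any of the three terms as written.

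The paper avoids this by reversing the roles of the source and the far set: instead of deleting the boundary clauses, it reruns the same recursive coupling $\Couple(\+E,\+F,\sigma,\tau)$ with $\+E=\+F=\+C$ and initial pinnings $\sigma(v)=x_1$, $\tau(v)=x_2$ that disagree only at the single variable $v$. The initial discrepancy constraints all contain $v$, so the witness tree is rooted at a constraint through $v$ and the analysis of \Cref{theorem:correlation-decay} goes through with only an extra factor $d=\deg(v)$, giving $\Pr{d_{\-{Ham}}(X,Y)\ge kM}\le d\cdot 2^{-M}$. Because the disagreement region is a connected structure emanating from $v$, with high probability it never reaches $\bar{B}_H(v,r)$ for large $r$, and comparing the conditional laws $\mu^{v\leftarrow x_1}_{\bar{B}_H(v,r)}$ and $\mu^{v\leftarrow x_2}_{\bar{B}_H(v,r)}$ directly yields non-reconstruction. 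In short: the correct argument keeps a single, localized discrepancy source ($v$) and lets the far set be $\bar{B}_H(v,r)$; your construction spreads exponentially many discrepancy sources along the boundary of $\bar{B}_H(v,r)$, where neither the locality of the coupling nor a union bound can save the estimate.
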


We further establish \emph{looseness} of variables under the considered densities, an intuitive way of characterizing the connectivity of the solution space \cite{achlioptas2008algorithmic,chen2024fast}; see \Cref{definition:looseness} and \Cref{theorem:looseness} for details.

\subsection{Technique Overview}
\label{subsec:overview}

We establish all of the results mentioned above in the context of atomic CSPs. A CSP is considered atomic if each constraint is violated by exactly one assignment in its domain. Any constraint with a constant number $N$ of violating configurations can be decomposed into $N$ atomic constraints. Both SAT and hypergraph coloring instances fall under this category of atomic CSPs.

Our algorithm for counting and sampling atomic CSP solutions is inspired by the recently developed recursive coupling procedure for bounded-degree atomic CSPs in ~\cite{WY24}. Through the novel recursive coupling procedure, ~\cite{WY24} successfully dispenses with the freezing paradigm equipped by the previous approaches~\cite{beck1991algorithmic,Moi19,feng2021sampling,vishesh21towards}, therefore bypassing the technical barrier and achieving a $ q^{k}\gtrsim d^{2+o_q(1)}$ bound for counting/sampling bounded-degree CSPs, where $d$ is the maximum degree of the instance. The freezing paradigm and its variant are also applied by all recent counting/sampling algorithms for random CSPs~\cite{GGGY21,HWY23,chen2024fast}. Hence, we naturally circumvent this barrier for random CSPs.

However, the bound in~\cite{WY24} has an exponent of $2 + o_q(1)$, which only approaches $2$ for large domain sizes and rises to $\approx 4.82$ for the worst-case of atomic CSPs with Boolean domains, such as for $k$-SAT. The way we improve this exponent to $1 + o(1)$ is by leveraging \emph{a new analysis of the coupling procedure based on the structural properties of random CSPs}. Random CSPs enjoy good structural properties such as large constraint expansion, as already observed and utilized in \cite{HWY23}. Our main novelty lies in designing a new analysis for the coupling in ~\cite{WY24} that takes advantage of this structural property. Technically, we replace the $2$-tree witness employed in the analysis of the coupling in ~\cite{WY24} by a denser witness structure, constructed similarly as the witness tree structure for analyzing the Moser-Tardos algorithm~\cite{moser2010constructive}, thereby achieving the improved bound. 

This refined analysis of the coupling procedure immediately establishes properties such as replica symmetry and non-reconstruction at the same density. 
We believe this new analysis is of independent interest and may have further applications in both bounded-degree and random CSPs.

\subsection{Organization}

This paper is organized as follows.

In \Cref{sec:pre}, we formally define atomic CSPs and introduce relevant preliminaries and notations.

In \Cref{sec:structure}, we define the main structural condition (\Cref{condition:main-condition}), stated generally with respect to atomic CSPs. We then claim that this condition suffices for efficient sampling and counting of solutions (\Cref{theorem:CSP-main}), and prove that both the random $k$-SAT instance in \Cref{theorem:k-SAT-main} and the random hypergraph coloring instance in \Cref{theorem:q-coloring-main} satisfy this condition. 

\Cref{sec:coupling} and \Cref{sec:lp} together proves \Cref{theorem:CSP-main}, which concludes the proof of \Cref{theorem:k-SAT-main} and \Cref{theorem:q-coloring-main}. Specifically, \Cref{sec:coupling} introduces the coupling in ~\cite{WY24} with our new analysis and establishes \Cref{thm:W1}. \Cref{sec:lp} shows how to effectively convert this coupling into an efficient counting and sampling algorithm.

In \Cref{sec:connectivity}, we prove the properties related to the correlation decay phenomenon and the geometry of the solution space, namely Theorems \ref{theorem:replica-symmetry}, \ref{theorem:non-reconstruction} and \ref{theorem:looseness}, using the refined analysis of the coupling developed in \Cref{sec:coupling}.

\section{Preliminaries}\label{sec:pre}

\subsection{Atomic CSPs and related notations}
A constraint satisfaction problem (CSP) is described by a collection of constraints defined on a set of variables. 
Formally, an instance of a constraint satisfaction problem, 
called a \emph{CSP formula}, 
is denoted by $\Phi=(V,[q],\+C)$.
Here, $V$ is a set of $n=|V|$ random variables, where each random variable $v\in V$ has
 a finite domain $[q]$. The collection of local constraints is given by $\+C$, where each constraint
 $c\in \+C$ is a function defined as $c:[q]^{\vbl(c)}\to\{\mathsf{True},\mathsf{False}\}$  over a subset of variables denoted as  $\vbl(c)\subseteq V$. For any subset of of constraints $\+E\subseteq \+C$, denote $\vbl(\+E)\defeq\bigcup_{c\in \+E}\vbl(c)$. 
An assignment $\sigma\in [q]^V$ is called \emph{satisfying} for $\Phi$ if 
\[
\Phi(\sigma)\defeq\bigwedge\limits_{c\in \+C} c\left(\sigma_{\vbl(c)}\right)=\mathsf{True}.
\]
Furthermore, we say $\Phi$ is \emph{satisfiable} if at least one satisfying assignment to $\Phi$ exists. We use $\Omega_{\Phi}$ to denote the set of satisfying assignments to $\Phi$, and use $Z(\Phi)=|\Omega_{\Phi}|$ to denote the number of satisfying assignments to $\Phi$.

We say a constraint $c\in \+C$ is defined by \emph{atomic bad events}, or simply, \emph{atomic}, if it is violated by exactly one configuration in $[q]^{\vbl(c)}$. For an atomic constraint $c$, we use $\vio{c}$ to denote its only violating configuration in $[q]^{\vbl(c)}$. Moreover, when all constraints in $\+C$ are atomic, we say $\Phi$ is atomic. It is important to note that any constraint with a constant number $N$ of violating configurations can be decomposed into $N$ atomic constraints. As a result, both SAT and hypergraph coloring instances fall within the category of atomic CSPs. 
\subsubsection{Notations for (partial) assignments}

For a partial assignment $\sigma\in [q]^\Lambda$ specified over a subset of variables $\Lambda\subseteq V$, we use $\Lambda(\sigma)=\Lambda$ to denote the set of assigned variables in $\sigma$. For any partial assignment $\sigma$ and any $S\subseteq \Lambda(\sigma)$, we use $\sigma_S$ to denote $\bigotimes_{v\in S}\sigma(v)$. We further write $\sigma_v=\sigma_{\{v\}}$ for $v\in V$.

 For any two assignments $\sigma,\tau$ such that $\Lambda(\sigma)\cap \Lambda(\tau)=\emptyset$, we define $\sigma\land \tau\in [q]^{\Lambda(\sigma)\cup \Lambda(\tau)}$ as the concatenation of $\sigma$ and $\tau$ such that for any $v\in \Lambda(\sigma)\cup \Lambda(\tau)$,
\[
(\sigma\land \tau)(v)=\begin{cases}\sigma(v) & v\in \Lambda(\sigma),\\ \tau(v)& v\in \Lambda(\tau).\end{cases}
\]

We will use $\varnothing$ to specifically denote an empty assignment,  distinguishing from the empty set $\emptyset$.

\subsubsection{Notations for events and probability measures}

We begin by specifying some notations for events and probability measures related to the CSP.

\begin{definition}[simple notations for events]\label{definition:notation-events}
For simplicity of notation, we will use:
\begin{itemize}
    \item a constraint $c\in \+C$ to denote the event that this constraint is satisfied;
    \item a subset of constraints $\+E\subseteq \+C$ to denote the event that all constraints in $\+E$ are satisfied;
    \item a partial assignment $\sigma$ to denote the event that the assignment on $\Lambda(\sigma)$ is precisely $\sigma$. 
\end{itemize}

\end{definition}

Note that under this definition, the notation $\sigma\land \tau$ as a concatenation of assignments defined previously is consistent with its interpretation as an event, where  $\sigma\land \tau$ is considered as the logical ``and'' of the two events $\sigma$ and $\tau$. 

 We use $\+P$ to denote the uniform product distribution over the space $[q]^V$.
We use $\mu=\mu_{\Phi}$ to denote the distribution over all satisfying assignments of $\Phi$ induced  by $\+P$, i.e.
\[
    \mu_{\Phi}\defeq \+P\left(\cdot \mid \+C\right).
\]
$\mu_{\Phi}$ is well-defined only when $\Phi$ is satisfiable.

When the variable set $V$ and the domain $[q]$ are clear, we define the following notations for (conditional) distributions for a given set of constraints $\+E$ defined over $V$, and some assignment $\sigma$ defined over $\Lambda(\sigma)$:
\[
\mu_{\+E}\defeq \+P(\cdot \mid \+E),\quad \mu^{\sigma}_{\+E}\defeq \+P(\cdot \mid \+E\land \sigma).
\]

For a probability distribution $\mu$ and some subset of variables $\Lambda\subseteq V$, we use $\mu_{\Lambda}$ to denote the marginal distribution induced by $\mu$ on $\Lambda$. We use commas to separate multiple subscripts; for example, we use $\mu^{\sigma}_{\+E,\Lambda}$ to denote the marginal distribution induced by $\mu^{\sigma}_{\+E}$ on $\Lambda$.

\subsubsection{Pinned formula and pinned constraints}

For a subset of variables $\Lambda\subseteq V$ and a partial assignment $\sigma\in [q]^{\Lambda}$ specified on $\Lambda$, the pinned formula $\Phi=(V,[q],\+C)$ under $\sigma$, denoted by $\Phi^\sigma=(V^\sigma,[q],\+C^\sigma)$, 
is a new CSP formula such that $V^\sigma=V\setminus \Lambda(\sigma)$, 
and the $\+C^\sigma$ is obtained from $\+C$ by: 
\begin{enumerate}
\item
replacing each original constraint $c\in \+C$ with the corresponding pinned constraint $c^\sigma$, where $\vbl(c^{\sigma})=\vbl(c)\setminus\Lambda(\sigma)$ and $c^{\sigma}(\tau)=c(\tau\land \sigma_{\Lambda(\sigma)\cap\vbl(c)})$ for any $\tau\in[q]^{\vbl(c^{\sigma})}$;
\item
removing all the resulting constraints that have already been satisfied.
\end{enumerate}

%
Whenever a pinning $\sigma$ is applied to a CSP formula $\Phi=(V,[q],\+C)$, we always assume that $\sigma$ does not violate any constraint in $\+C$.
Under this assumption,  $\Phi^\sigma$ is always well-defined,
and we have $\mu_{\Phi^\sigma}=\mu^{\sigma}_{V\setminus\Lambda(\sigma)}$. 
We use $\+C^*$ to denote the set of all possible constraints obtained from pinning some constraint in $\+C$ with a non-violating $\sigma$, including the unpinned constraints in $\+C$.
Finally, for each (possibly pinned) constraint $c\in \+C^*$, we use $\unpin{c}$ to denote its original unpinned constraint in $\+C$. 

\subsubsection{The incidence hypergraph}

We define the underlying \emph{incidence hypergraph} of CSP formulas as follows.

\begin{definition}[incidence (hyper-)graphs for variables and constraints]\label{definition:incidence-hypergraph}
    Given a CSP formula $\Phi=(V,[q],\+C)$ , we define two incidence (hyper-)graphs for variables and constraints respectively:
    \begin{itemize}
        \item We define $H_\Phi = (V, \+E)$ to be the hypergraph (with multiple edges allowed), where $V$ is the set of variables in $\Phi$, and $\+E = \{\vbl(c) \mid c \in \+C\}$.
        \item Let $G_\Phi = \-{Lin}(H_{\Phi})$ to denote the line graph of $H_\Phi$, namely, the vertices in $G_\Phi$ are clauses in $\Phi$, and two clauses $c_1, c_2$ are adjacent in $G_\Phi$ if and only if $\vbl(c_1) \cap \vbl(c_2) \neq \emptyset$.
    \end{itemize}
    In particular, when the context of $\Phi$ is clear, we say a subset of variables $V'\subseteq V$ is connected if the induced sub-graph $H_\Phi[V']$ is connected and say a subset of constraints $\+C'\subseteq \+C$ is connected if the induced sub-graph $G_\Phi[\+C']$ is connected.
    \end{definition}

    Similar to the density of random formulas (the ratio between the number of clauses and the number of variables), we define the density of hypergraphs as the ratio between the number of hyperedges and the number of vertices.

\subsection{Lov\'{a}sz local lemma}
The Lov\'{a}sz local lemma is a gem in the probabilistic method of combinatorics and has inseparable connections with the solution space of CSPs~\cite{LocalLemma}. By viewing the violation of each constraint as a bad event, the celebrated (variable framework) Lov\'{a}sz local lemma gives a sufficient criterion for a CSP solution to exist:

\begin{theorem}[Erd\H{o}s and Lov\'{a}sz~\cite{LocalLemma}]\label{locallemma}
    Given a CSP formula $\Phi=(V,[q],\+C)$, if the following holds
    \begin{align}\label{llleq}
    \exists x\in (0, 1)^{\+C}\quad \text{ s.t.}\quad \forall c \in \+C:\quad
        {\+P\left[\neg c\right]\leq x(c)\prod_{\substack{c'\in \+{C}\\ {\vbl}(c)\cap {\vbl}(c')\neq \emptyset}}(1-x(c'))},
    \end{align}
    then  
    $$
        {\+{P}\left[ \bigwedge\limits_{c\in \+C} c\right]\geq \prod\limits_{c\in \+C}(1-x(c))>0}.
    $$
\end{theorem}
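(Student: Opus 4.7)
The plan is to prove by induction on $|S|$ the standard inductive strengthening of the hypothesis: for every finite $S \subseteq \+C$ and every $c \in \+C \setminus S$,
\[
\+P\!\left[\neg c \,\Big|\, \bigwedge_{c' \in S} c'\right] \le x(c). \qquad (\star)
\]
Once $(\star)$ is established, the theorem follows immediately: fix any enumeration $\+C = \{c_1,\dots,c_m\}$, apply the chain rule together with $(\star)$ to each factor, and obtain
\[
\+P\!\left[\bigwedge_{i=1}^{m} c_i\right] = \prod_{i=1}^{m}\!\left(1 - \+P\!\left[\neg c_i \,\Big|\, \bigwedge_{j<i} c_j\right]\right) \ge \prod_{i=1}^{m}(1-x(c_i)) > 0,
\]
using $x(c_i) \in (0,1)$. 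The same induction simultaneously ensures that every conditioning event appearing above has positive probability, so the chain rule is valid.

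For $(\star)$, the base case $|S|=0$ is immediate from (\ref{llleq}) since $\prod_{c'}(1-x(c')) \le 1$. For the inductive step, I would partition $S$ according to variable overlap with $c$ by setting
\[
S_1 \defeq \{c' \in S : \vbl(c') \cap \vbl(c) \neq \emptyset\}, \qquad S_2 \defeq S \setminus S_1,
\]
and express the conditional as the ratio
\[
\+P\!\left[\neg c \,\Big|\, \bigwedge_{c' \in S} c'\right] = \frac{\+P\!\bigl[\neg c \wedge \bigwedge_{c' \in S_1} c' \,\big|\, \bigwedge_{c' \in S_2} c'\bigr]}{\+P\!\bigl[\bigwedge_{c' \in S_1} c' \,\big|\, \bigwedge_{c' \in S_2} c'\bigr]}.
\]
In the numerator, I would first drop the $\bigwedge_{c' \in S_1} c'$ factor by monotonicity of probability and then invoke $\vbl(c) \cap \vbl(S_2) = \emptyset$ (the defining property of $S_2$) together with the product structure of $\+P$ to obtain
\[
\+P\!\bigl[\neg c \wedge \bigwedge_{c' \in S_1} c' \,\big|\, \bigwedge_{c' \in S_2} c'\bigr] \le \+P\!\bigl[\neg c \,\big|\, \bigwedge_{c' \in S_2} c'\bigr] = \+P[\neg c] \le x(c) \!\!\prod_{c' : \vbl(c) \cap \vbl(c') \neq \emptyset}\!\!(1-x(c'))
\]
by (\ref{llleq}). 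For the denominator, enumerate $S_1 = \{c'_1,\ldots,c'_k\}$ and telescope
\[
\+P\!\left[\bigwedge_{i=1}^{k} c'_i \,\Big|\, \bigwedge_{c' \in S_2} c'\right] = \prod_{i=1}^{k}\!\left(1 - \+P\!\left[\neg c'_i \,\Big|\, c'_1 \wedge \cdots \wedge c'_{i-1} \wedge \bigwedge_{c' \in S_2} c'\right]\right);
\]
the inductive hypothesis applies to each factor (every conditioning set has size at most $|S|-1$) and yields the lower bound $\prod_{c' \in S_1}(1-x(c'))$ for the denominator. Dividing, the $(1-x(c'))$ factors with $c' \in S_1$ cancel, and the remaining factors from (\ref{llleq}) are each at most $1$, giving the desired bound $x(c)$.

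The proof is classical and essentially bookkeeping; the one point that deserves care is the numerator step, where one must not try to split $\+P\!\bigl[\neg c \wedge \bigwedge_{c' \in S_1} c' \mid \bigwedge_{c' \in S_2} c'\bigr]$ directly via independence, since $S_1$ may share variables with $S_2$ even though $c$ does not. Dropping $\bigwedge_{c' \in S_1} c'$ by monotonicity \emph{before} invoking independence of $\neg c$ from $\bigwedge_{c' \in S_2} c'$ sidesteps this issue cleanly, and the rest is a telescoping product handled by the inductive hypothesis.
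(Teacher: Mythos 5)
Your proof is correct: it is the classical inductive argument for the variable-framework Lov\'{a}sz Local Lemma, with the key strengthened claim $\+P[\neg c \mid \bigwedge_{c'\in S} c'] \le x(c)$, the split of $S$ into neighbours $S_1$ and non-neighbours $S_2$ of $c$, independence of $\neg c$ from the $S_2$-event under the product measure, and the telescoping lower bound on the denominator via the inductive hypothesis. The paper does not prove this statement at all --- it is quoted from Erd\H{o}s--Lov\'{a}sz as a known theorem --- so there is nothing to diverge from; your argument is exactly the standard proof that the citation refers to, and your remark that the same induction supplies positivity of all conditioning events (so the chain rule is legitimate) correctly handles the one point that is sometimes glossed over.
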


When the condition \eqref{llleq} is met, 
the probability of any event in the uniform distribution $\mu$ over all satisfying assignments can be well approximated by the probability of that event in the product distribution.
This was observed in \cite{haeupler2011new}:

\begin{theorem}[Haeupler, Saha, and Srinivasan \cite{haeupler2011new}]\label{HSS}
Given a CSP formula $\Phi=(V,[q],\+C)$, if $\eqref{llleq}$ holds, 
then for any event $A$ that is determined by the assignment on a subset of variables $\vbl(A)\subseteq V$, 
\[
   \+{P}\left[A\mid \bigwedge\limits_{c\in \+C}  c\right]\leq \+P\left(A\right)\prod_{\substack{c\in \+{C}\\ \vbl(c)\cap \vbl(A)\neq \emptyset}}(1-x(c))^{-1}.
\]
\end{theorem}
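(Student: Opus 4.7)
The plan is to decompose the constraint set $\+C$ according to which constraints share variables with the event $A$, factor out $\+P[A]$ via an independence argument on the product measure, and then invoke the standard LLL inductive inequality to handle the remaining ratio. Define
\begin{align*}
N(A) \defeq \{c \in \+C : \vbl(c) \cap \vbl(A) \neq \emptyset\} \quad \text{and} \quad \+C'\defeq \+C \setminus N(A).
\end{align*}
Since $A$ depends only on $\vbl(A)$ while every $c \in \+C'$ depends on variables disjoint from $\vbl(A)$, the events $A$ and $\bigwedge_{c \in \+C'} c$ are independent under the product measure $\+P$. Combining this independence with the chain rule, I would rewrite
\begin{align*}
\+P\left[A \,\middle|\, \bigwedge_{c \in \+C} c\right] = \+P[A] \cdot \frac{\+P\left[\bigwedge_{c \in N(A)} c \,\middle|\, A \wedge \bigwedge_{c \in \+C'} c\right]}{\+P\left[\bigwedge_{c \in N(A)} c \,\middle|\, \bigwedge_{c \in \+C'} c\right]}.
\end{align*}
The numerator of the ratio is trivially at most one, so the remaining work is to lower bound the denominator by $\prod_{c \in N(A)}(1 - x(c))$.

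For this lower bound, I will invoke the inductive inequality at the heart of the proof of Theorem~\ref{locallemma}: under hypothesis \eqref{llleq}, for every $c \in \+C$ and every $\+S \subseteq \+C \setminus \{c\}$,
\begin{align*}
\+P\left[\neg c \,\middle|\, \bigwedge_{c' \in \+S} c'\right] \leq x(c).
\end{align*}
Enumerating $N(A) = \{c_1,\ldots,c_m\}$ in any order and applying the chain rule together with the complementary form of the bound above, I obtain
\begin{align*}
\+P\left[\bigwedge_{c \in N(A)} c \,\middle|\, \bigwedge_{c \in \+C'} c\right] = \prod_{i=1}^m \+P\left[c_i \,\middle|\, \bigwedge_{j>i} c_j \wedge \bigwedge_{c \in \+C'} c\right] \geq \prod_{i=1}^m (1 - x(c_i)).
\end{align*}
Chaining the two displays then yields the claimed inequality.

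The main substantive ingredient is the inductive LLL bound $\+P[\neg c \mid \bigwedge_{c' \in \+S} c'] \leq x(c)$, which is the heart of the proof of Theorem~\ref{locallemma} itself and is established by a standard induction on $|\+S|$. Beyond that, the only step particular to this theorem is the independence argument used to factor $\+P[A]$ out of the conditional probability; it relies crucially on $\+P$ being a product measure together with the fact that the partition of $\+C$ into $N(A)$ and $\+C'$ is dictated by $\vbl(A)$. I do not anticipate a genuine obstacle beyond careful bookkeeping of the conditioning, since the potentially problematic interaction between $A$ and the constraints in $N(A)$ is absorbed harmlessly by the trivial bound $\+P[\bigwedge_{c \in N(A)} c \mid A \wedge \bigwedge_{c \in \+C'} c] \leq 1$.
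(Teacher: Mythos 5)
Your argument is correct: the decomposition into $N(A)$ and $\+C'$, the independence of $A$ from $\bigwedge_{c\in\+C'}c$ under the product measure, the trivial bound on the numerator, and the chain-rule application of the standard inductive LLL estimate $\+P[\neg c\mid\bigwedge_{c'\in\+S}c']\le x(c)$ (valid here since \eqref{llleq} is at least as strong as the usual hypothesis of Theorem~\ref{locallemma}) together give exactly the claimed bound, modulo routine checks that all conditioning events have positive probability, which follow from Theorem~\ref{locallemma}. The paper itself gives no proof and simply cites Haeupler--Saha--Srinivasan, and your argument is essentially the standard proof from that reference.
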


\section{Structural properties of random hypergraphs}\label{sec:structure}

Our strategy to prove main results in \Cref{theorem:k-SAT-main} and \Cref{theorem:q-coloring-main} is to establish that both instances possess certain \emph{structural properties} that enable efficient counting and sampling algorithms. In this section, we will develop a ``nice property'' (formally defined in \Cref{definition:nice-hypergraph}) for the underlying incidence hypergraph. We claim that once this nice property holds for the hypergraph with specific parameters, we can efficiently count and sample solutions. Finally, we demonstrate that the conditions in \Cref{theorem:k-SAT-main} and \Cref{theorem:q-coloring-main} fulfill the desired property.

Recall the definition of incidence hypergraphs in \Cref{definition:incidence-hypergraph}. We then present our claim as \Cref{theorem:CSP-main}, with the ``nice property'' formally defined later in this section in \Cref{definition:nice-hypergraph}.  We note that \Cref{condition:main-condition} only makes assumptions to the hypergraph $H_{\Phi}$, but not the atomic bad events for each constraint.

\begin{condition}[structural condition for atomic CSPs]\label{condition:main-condition}
 For the atomic CSP instance    $\Phi=(V,[q],\+C)$, its incidence hypergraph $H_{\Phi}$ is $(k, \alpha,\eps_1,\eps_2,\eta,\rho,p_1,p_2)$-nice with parameters satisfying:
     \begin{itemize}
        \item $k \ge 30$, $\alpha \le q^k$, $\eta k \ge 4$, and $\mathrm{e}(\rho k\alpha)^{\eta} = 1$;
        \item $\eps_1 = 2\eta$, $p_1 = 6k^7$, $\eps_2 = \frac{12k^5}{(1-\eta)\eta p_1} = \frac2{k^2 (1-\eta)\eta}\le \frac1k$, and $p_2 = \mathrm{e}k^2$.
     \end{itemize}
\end{condition}

\begin{theorem}[counting/sampling for atomic CSPs]\label{theorem:CSP-main}
Assume $\Phi=(V,[q],\+C)$ is a satisfiable atomic CSP instance satisfying \Cref{condition:main-condition}. 
%
Then for any $\eps>0$, there exist the following algorithms, both with running time $\left(n/\eps\right)^{\poly(\log q,k,\alpha)}$:
\begin{itemize}
\item (counting) A deterministic algorithm that outputs $\hat{Z}$ which is an $\eps$-approximation of the number of solutions $Z(\Phi)$ of $\Phi$, i.e., $(1-\eps) Z(\Phi) \le \hat{Z} \le (1+\eps) Z(\Phi)$;
\item (sampling) An algorithm that outputs a random assignment $X\in [q]^V$ that is $\eps$-close in total variation distance to $\mu_{\Phi}$, the uniform distribution over all solutions of $\Phi$.
\end{itemize}
\end{theorem}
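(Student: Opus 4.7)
The plan is to follow the two-part strategy outlined in \Cref{subsec:overview}: first establish the strong constraint-wise coupling from \Cref{condition:main-condition}, then convert that coupling into a deterministic marginal oracle via a linear-programming framework, and finally use self-reducibility to obtain counting and sampling algorithms.

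\textbf{Part 1 (coupling, to be carried out in \Cref{sec:coupling}).} First, I would establish a robust version of \Cref{thm:W1} stating that for every pinning $\tau$ of $\Phi$ and every clause $c_0$ of the pinned formula $\Phi^\tau$, there is a coupling $(X,Y)$ of $\mu_{\Phi^\tau}$ and $\mu_{\Phi^\tau\setminus c_0}$ whose disagreement set is a connected cluster of constraints containing $c_0$ of size $O(\log n)$, with failure probability $o(1/\poly(n))$. The plan is to run the recursive constraint-wise coupling of \cite{WY24}: begin with configurations that differ only inside $\vbl(c_0)$, then iteratively resample disagreements one at a time, spawning at most a controlled number of new disagreements per step. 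The novelty, drawn from the Moser--Tardos paradigm, is to record the entire trajectory as a witness tree rooted at $c_0$ and to bound the probability of a size-$t$ witness tree by the LLL product $\prod_{c\in T}\+P[\neg c]$ combined with a structural count of tree shapes. The parameters $p_1,p_2$ in \Cref{condition:main-condition} provide the LLL slack needed by \Cref{HSS}, while $\eta,\rho,\eps_1,\eps_2$ control the expansion of $H_\Phi$; a union bound then yields geometric decay in $t$ and termination within $O(\log n)$ steps.

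\textbf{Part 2 (algorithm, to be carried out in \Cref{sec:lp}).} Given the coupling, the marginal $\mu_{\Phi^\tau,v}$ at any single variable $v$ changes by at most $\eps/n$ when all clauses outside a radius-$R$ ball around $v$ in $G_\Phi$ are removed, for $R = O(\log(n/\eps))$. I would then apply the LP framework of Moitra \cite{Moi19}: write a polynomial-size LP whose variables are local marginal probabilities on all partial assignments inside this ball, subject to the LLL-type identities of \Cref{HSS} on every local scope and global compatibility constraints. The coupling guarantees that any feasible LP solution produces an $\eps$-accurate marginal oracle, and \Cref{condition:main-condition} bounds the ball size by $k^{O(R)}=(n/\eps)^{\poly(k,\alpha)}$, so the LP is of size $(n/\eps)^{\poly(\log q,k,\alpha)}$. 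With this deterministic marginal oracle, standard self-reducibility delivers both an FPTAS for $Z(\Phi)$ via the telescoping product $\prod_i \mu_{\Phi^{\sigma_{<i}},v_i}(\sigma(v_i))^{-1}$ on any fixed satisfying assignment $\sigma$, and an almost-uniform sampler by sequentially revealing variables.

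The technical heart of the proof is Part 1: producing a witness-tree bound that matches the LLL slack supplied by \Cref{condition:main-condition}. The difficulty is that the recursive coupling can propagate disagreements through high-degree variables, which would overwhelm a naive $2$-tree count as used in \cite{WY24}; recovering the improved $2^k/\poly(k)$ bound forces the analysis to exploit both the per-clause LLL probability and the expansion of the underlying random hypergraph, which is precisely what \Cref{condition:main-condition} is designed to encode. A secondary bookkeeping issue is ensuring that \Cref{condition:main-condition} is preserved under all pinnings $\Phi^\tau$ visited by the self-reduction; this follows because pinning only removes hyperedges from $H_\Phi$ and thus preserves the structural hypotheses.
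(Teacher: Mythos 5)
Your Part 1 is essentially the paper's route (the \cite{WY24} constraint-wise coupling analyzed via a Moser--Tardos-style witness tree, with \Cref{condition:main-condition} supplying the slack through \Cref{HSS} after pinning bad vertices), but Part 2 contains a genuine gap. Your algorithm rests on variable-wise self-reducibility: you need accurate marginals of the pinned formulas $\Phi^{\sigma_{<i}}$ for arbitrary partial assignments, and you justify this by asserting that \Cref{condition:main-condition} ``is preserved under all pinnings because pinning only removes hyperedges from $H_\Phi$.'' That assertion is false: pinning a variable does not merely delete hyperedges, it \emph{shrinks} every unsatisfied constraint containing it (the pinned constraint $c^\sigma$ has $\vbl(c^\sigma)=\vbl(c)\setminus\Lambda(\sigma)$). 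After pinning a constant fraction of the variables, many clauses have width far below $k$, the edge-expansion and the width-versus-density tradeoff in \Cref{condition:main-condition} are destroyed, and the LLL slack $q^{-(1-\eps_1)k}$ no longer applies. At densities $\alpha\approx q^k/\poly(k)$ conditional single-variable marginals under arbitrary pinnings are exactly the ``freezing'' obstruction that blocked prior work at $\alpha\lesssim 2^{k/3}$; a Moitra-style local-marginal LP on radius-$R$ balls needs the marking/freezing machinery to control such conditionings, which is precisely what this paper's method is built to avoid. Relatedly, your claim that removing \emph{all} clauses outside a radius-$R$ ball changes the marginal at $v$ by $\eps/n$ does not follow from the single-constraint-removal coupling of \Cref{thm:W1}; the Wasserstein bound is per removed constraint and does not directly union over linearly many removals.

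The paper's Part 2 is structured differently and this difference is load-bearing. It performs a \emph{constraint-wise} self-reduction, $Z(\Phi)=q^{n}\prod_{i}Z(\Phi_i)/Z(\Phi_{i-1})$ with $\Phi_i$ containing the first $i$ full-width constraints; each intermediate formula still satisfies \Cref{condition:main-condition} because deleting whole constraints (unlike pinning) only weakens the degree, expansion, and neighbourhood-growth requirements. Each ratio $Z(\Phi_{i})/Z(\Phi_{i-1})=|\Omega^{\+C_i}|/|\Omega^{\+C_{i-1}}|$ is estimated by binary search over the feasibility of an LP whose variables live on the $M$-truncated coupling tree of the recursive coupling itself (\Cref{definition:linear-program}), with truncation constraints coming from \Cref{lemma:assignment-probability-bound}; this is not the local-ball marginal LP of \cite{Moi19}. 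Sampling is then done by a dynamic sampler that upgrades a sample of $\mu_{\+C\setminus\{c_0\}}$ to one of $\mu_{\+C}$, again constraint by constraint. To repair your proposal you would need either to replace your Part 2 by this constraint-wise LP machinery, or to supply a genuinely new argument showing marginal stability under arbitrary pinnings at these densities --- which the current condition does not give you.
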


Also, we claim that both conditions in \Cref{theorem:k-SAT-main,theorem:q-coloring-main} satisfy the desired structural properties. 

\begin{theorem}\label{theorem:reduction}
    Both the random $k$-SAT instance under the condition in \Cref{theorem:k-SAT-main} and the random hypergraph coloring instance under the condition in \Cref{theorem:q-coloring-main} satisfy \Cref{condition:main-condition} with high probability.
\end{theorem}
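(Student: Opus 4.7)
The plan is to verify, component by component, that the incidence hypergraph $H_\Phi$ associated with a random $k$-SAT formula from \Cref{definition:k-SAT-model} or a random hypergraph from \Cref{definition:random-hypergraph} satisfies each part of the not-yet-stated \Cref{definition:nice-hypergraph} with probability $1 - o(1)$, using standard first-moment computations and union bounds over vertex subsets. The two models differ only in minor, asymptotically negligible ways: the $k$-SAT incidence hypergraph carries signed literals and may contain repeated literals within a clause, while the coloring model samples distinct hyperedges without replacement. First I would absorb these deviations into an initial exceptional event---for $k$-SAT the probability that some clause contains a repeated variable is $O(k^2 m/n) = O(k^3 \alpha / n)$, which is $o(1)$---and thereby reduce both analyses to the common setting of a random $k$-uniform hypergraph with $n$ vertices and $m = \lfloor\alpha n\rfloor$ essentially independent hyperedges.

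Next I would set up the core union bound. For any subset $S \subseteq V$ of size $s$, the probability that a single uniformly random size-$k$ hyperedge lies entirely within $S$ is at most $(s/n)^k$, so the number of hyperedges of $H_\Phi$ contained in $S$ is stochastically dominated by $\mathrm{Binomial}(m, (s/n)^k)$. Hence, for any integer threshold $t$,
\[
\Pr\left[\,|\{c\in\+C : \vbl(c)\subseteq S\}|\ge t\,\right]\;\le\;\binom{m}{t}\left(\frac{s}{n}\right)^{kt},
\]
and union bounding over $\binom{n}{s}$ choices of $S$, using $m=\alpha n$ and writing $t=\eps s$, yields
\[
\binom{n}{s}\binom{m}{\eps s}\left(\frac{s}{n}\right)^{k\eps s}\;\le\;\left[\,\mathrm{e}\cdot\left(\frac{\mathrm{e}\alpha}{\eps}\right)^{\eps}\cdot\left(\frac{s}{n}\right)^{(k-1)\eps-1}\,\right]^{s}.
\]
This is the uniform workhorse estimate; each sub-condition of \Cref{definition:nice-hypergraph} should amount to an instance of the above at an appropriate choice $(\eps, s_{\max})\in\{(\eps_1,\rho n),(\eps_2,p_2\rho n),\dots\}$, possibly together with a dual statement on ``constraint-wise'' connected subsets that follows by a similar argument on the line graph $G_\Phi$.

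Third, I would verify that the parameter settings of \Cref{condition:main-condition} drive this bound to $o(1)$ across all relevant $s$. In the small-scale regime $s \le \rho n$ with $\eps=\eps_1=2\eta$, the hypothesis $\eta k \ge 4$ gives $(k-1)\eps_1 - 1 \ge 2\eta k - 2\eta - 1 \ge 3$, so the base decays as a positive power of $s/n$. The calibration $\mathrm{e}(\rho k\alpha)^{\eta}=1$ is precisely the threshold that makes the base at $s = \rho n$ simplify (after substituting $\rho = \mathrm{e}^{-1/\eta}/(k\alpha)$) to an expression of the form $\rho^{2\eta k - 2\eta - 2}\cdot \poly(k)^{2\eta}$, which is inverse-polynomial in $n$; summing the resulting geometric series over $s$ from $1$ to $\rho n$ yields failure probability $O(1/\poly(n))$. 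For the larger-scale regime, with $\eps=\eps_2=2/(k^2(1-\eta)\eta)$ and threshold $s \le p_2\rho n = \mathrm{e}k^2 \rho n$, an analogous computation works: $(k-1)\eps_2 - 1$ is bounded away from $0$ using $k\ge 30$ and $\eta<1$, while the size multiplier $p_2 = \mathrm{e}k^2$ and the factor $p_1 = 6k^7$ appearing in the definition of $\eps_2$ are absorbed into the polynomial slack. A final union bound over the finitely many sub-properties completes the reduction.

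The main obstacle will be the parameter bookkeeping: confirming that every clause of \Cref{definition:nice-hypergraph} reduces to an inequality of the form above with the correct $(\eps, s_{\max})$, and that at the boundary $s \approx \rho n$---where the union bound is tightest and where the two regimes meet---the base of the exponential is strictly below $1$ by the calibrated choice of $\rho$. A secondary technical subtlety is that in the $k$-SAT setting the variable-degree distribution is heavy-tailed (Poisson with mean $k\alpha$, so maximum degree $\Theta(\log n/\log\log n)$); the niceness conditions must therefore be subset-based rather than vertex-degree-based, and I would verify that \Cref{definition:nice-hypergraph} is indeed of this form, matching naturally with the first-moment estimate above.
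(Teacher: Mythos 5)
There is a genuine gap, on two fronts. First, your model-reduction step is quantitatively wrong: the expected number of clauses of a random $k$-SAT formula containing a repeated variable is $O(k^2 m/n)=O(k^2\alpha)=\Theta(1)$, not $o(1)$, so you cannot absorb the non-uniformity of $H_\Phi$ into a negligible exceptional event and work in a common ``i.i.d.\ $k$-uniform'' model. This is why \Cref{definition:nice-hypergraph} is stated for $\+H_{\le k}$ and why the paper proves all properties directly for the $k$-SAT incidence hypergraph (with failure probability $o(1/n)$, see \Cref{lemma:structural-lemma-kSAT}), and then transfers to the distinct-hyperedge model $H(k,n,m)$ in the \emph{opposite} direction by conditioning (\Cref{lemma:technical-lemma-random-models}): the event that $H_\Phi$ is $k$-uniform with distinct hyperedges has probability at least $\tfrac12 \mathrm{e}^{-k^2\alpha}$, a constant, so an $o(1/n)$ failure bound survives the conditioning. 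Your proposed direction of transfer (hypergraph model $\to$ common model $\leftarrow$ $k$-SAT) leaves the constant-probability event of degenerate clauses uncontrolled.

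Second, and more seriously, your workhorse estimate (a union bound over vertex subsets $S$ against $\mathrm{Binomial}(m,(s/n)^k)$) only covers the edge-expansion-type statement (\Cref{property:edge-expansion} and its corollary); it does not ``amount to'' the remaining clauses of \Cref{definition:nice-hypergraph}. The properties \Cref{property:bounded-bad-vertices} and \Cref{property:bounded-bad-fraction} concern the output of the \emph{iterative} procedure \textsf{IdentifyBad} (\Cref{alg:identify-bad}), whose bad set is an adaptively grown closure; bounding it requires a peeling lemma (\Cref{prop:peeling}: one cannot repeatedly add hyperedges each meeting the current bad set in $\ge 6$ vertices for more than $|\+E'|$ rounds), combined with a Poissonization bound on the number of high-degree variables (\Cref{prop:high-degree}), a neighborhood-size bound for \emph{connected} vertex sets (\Cref{lemma:bounded-number-neighbors}), and a bound on the fraction of high-degree vertices inside connected sets (\Cref{prop:fraction_of_high-degrees}); these are then assembled in \Cref{lemma:bounded-bad-vertices,lemma:bounded-vbad-fraction,lemma:bounded-ebad-fraction}. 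Likewise, \Cref{property:bounded-neighbourhood-growth} is proved by a spanning-tree first-moment count over connected clause sets (\Cref{lemma:probability-of-spanning-tree}), not by a vertex-subset union bound. None of these ingredients appears in your proposal, and the fixed-subset binomial estimate cannot handle the adaptive growth of the bad set or the restriction to connected sets, so the plan as written does not establish \Cref{condition:main-condition}.
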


Note that the condition in \Cref{theorem:k-SAT-main}
 is below the explicit threshold $\alpha < 1.3836 \cdot 2^k/k$, as established in \cite[Theorem 1.3]{frieze1996analysis}. Similarly, the condition in \Cref{theorem:q-coloring-main} falls below the explicit threshold $\alpha < q^{k-1} \ln k$, as given in \cite[Theorem 1.1]{dyer2015chromatic}. As a result, the instances described in \Cref{theorem:k-SAT-main,theorem:q-coloring-main}
are satisfiable with high probability under respective conditions. Consequently, \Cref{theorem:CSP-main,theorem:reduction} together imply \Cref{theorem:k-SAT-main,theorem:q-coloring-main}.

The plan goes as follows: For the rest of this section, we formally define the structural properties and verify that \Cref{theorem:reduction} holds. \Cref{theorem:CSP-main} will be proved later in \Cref{sec:coupling,sec:lp}.

We now start to introduce the desired structural properties of the CSP formula. Most of the properties and proofs of random CSP formulas were presented in \cite{GGGY21,HWY23} using different parameters. Here, we will adapt the proofs to our parameters and show that all properties hold in the uniform hypergraph model as well. 

Let $\+H_{k}$ be the set of all $k$-uniform hypergraphs, and $\+H_{\le k}$ be the set of hypergraphs where each hyperedge contains at most $k$ vertices. Now we can describe the following properties for hypergraphs.

\begin{property}[bounded maximum degree]\label{property:maximum-degree}
Given a hypergraph $H=(V,\+E)$. The maximum degree $\Delta=\Delta(H)$ is at most $4k\alpha+6\log n$.
\end{property}

\begin{property}[edge expansion]
\label{property:edge-expansion}
    For $\eta, \rho \in (0,1)$, we say a hypergraph $H=(V,\+E) \in \+H_{\le k}$ satisfies $(\eta,\rho)$-edge expansion if 
    for any $\ell \le \rho |\+E|$ and any $\ell$ hyperedges $e_1,\dots, e_\ell \in \+E$, it holds
    \begin{align*}
        \left| \bigcup_{i=1}^\ell e_i \right| \ge (1-\eta) k \ell.
    \end{align*}
\end{property}

\begin{property}[bounded neighbourhood growth]
\label{property:bounded-neighbourhood-growth}
    For any $e \in \+E$ and any $\ell \ge 1$, the number of connected subsets of hyperedges containing $e$ of size $\ell$ is at most $n^3 (p_2 \alpha)^\ell$.
\end{property}

Similar to the approaches in \cite{GGGY21, HWY23}, a crucial component of our algorithm for random instances involves identifying high-degree vertices, as well as those whose marginal distributions may be significantly affected by high-degree ones. We begin with the identifying subroutine.

\begin{definition}
    [high-degree vertices]\label{definition:high-degree}
    Given a hypergraph $H = (V, \+E)$ with average degree $d$, and a subset of vertices $V' \subseteq V$, let $\-{HD}(V')\defeq \{v\in V'\mid \-{deg}(v)\geq p_1\alpha\}$ denote the set of \emph{high-degree} vertices in $V'$. 
\end{definition}

\begin{algorithm}[htbp]
\caption{\textsf{IdentifyBad}$(V_0)$ \cite{HWY23}}\label{alg:identify-bad}
\SetKwInOut{Instance}{Instance}
\SetKwInOut{Input}{Input}
\SetKwInOut{Output}{Output}
\SetKwIF{WP}{ElseIf}{Else}{with probability}{do}{else if}{else}{endif}
\Instance{a hypergraph $\+H=(V, \+E)$ with average degree $d$;}
\Input{a set of vertices $V_0\subseteq V$;}
\Output{a set of bad vertices $\vbad(V_0)$ starting from $V_0$ and a set of bad hyperedges $\ebad(V_0)$;}
 Initialize $\vbad(V_0) \gets \mathrm{HD}(V_0)$ and $\ebad(V_0) = \emptyset$\;
\While{$\exists e\in\+E \setminus\ebad(V_0)$ such that $\abs{ e \cap\vbad(V_0)} > \eps_1 k$}{
 Update $\vbad(V_0) \gets\vbad(V_0) \cup e$ and $\ebad(V_0)\gets\ebad(V_0)\cup\{e\}$
}
 \Return{$\vbad(V_0)$ and $\ebad(V_0)$}
\end{algorithm}

Throughout, we will use the notation $\vbad = \vbad(V)$ and $\ebad = \ebad(V)$ if the hypergraph $H = (V, \+E)$ is clear from the context. The set of \emph{good vertices} $\vgood\defeq V\setminus \vbad$ and the set of \emph{good constraints} $\egood\defeq \+E\setminus \ebad$ is defined to be the set of remaining variables/hyperedges.

\begin{fact}[bounded-degree for good vertices]\label{fact:good-vertex-degree-bound}
For every good vertex $v \in V_\good$, it holds $\deg(v) \le p_1 \alpha$.
\end{fact}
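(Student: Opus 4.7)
The plan is to observe that the fact is essentially a direct consequence of how $V_{\bad}$ is constructed by Algorithm \textsf{IdentifyBad}. Recall the convention introduced immediately before the fact: $V_{\bad} = V_{\bad}(V)$ and $V_{\good} = V \setminus V_{\bad}$, i.e., we run \textsf{IdentifyBad} with $V_0 = V$.

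Examining the algorithm, its first line initializes
\[
V_{\bad} \gets \mathrm{HD}(V) = \{v \in V : \deg(v) \ge p_1 \alpha\},
\]
where the equality is the content of \Cref{definition:high-degree}. In particular, every vertex of the hypergraph $H$ with degree at least $p_1 \alpha$ is placed into $V_{\bad}$ at initialization. The subsequent while-loop only augments $V_{\bad}$ by absorbing additional hyperedges $e$ with $|e \cap V_{\bad}| > \eps_1 k$; no vertex is ever removed from $V_{\bad}$. Consequently, $\mathrm{HD}(V) \subseteq V_{\bad}$ holds at termination.

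Taking contrapositives, any $v \in V_{\good} = V \setminus V_{\bad}$ must in particular avoid the initial set $\mathrm{HD}(V)$, so $\deg(v) < p_1 \alpha$, which is strictly stronger than the stated bound $\deg(v) \le p_1 \alpha$. There is no genuine obstacle here; the statement is purely bookkeeping that records an invariant built into the definition of \textsf{IdentifyBad}, and is included so that later sections can invoke the uniform degree bound on good vertices without reopening the algorithm.
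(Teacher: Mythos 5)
Your proposal is correct and matches the paper's (implicit) justification: the paper states this Fact without proof precisely because it is immediate from the initialization $\vbad \supseteq \mathrm{HD}(V)$ in \textsf{IdentifyBad} and the fact that the while-loop never removes vertices, exactly as you argue. Your observation that good vertices in fact satisfy the strict inequality $\deg(v) < p_1\alpha$ is also consistent with \Cref{definition:high-degree}.
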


\begin{fact}[bounded fraction of good vertices]\label{fact:good-vertex-fraction}
 For every good hyperedge $e \in \+E_\good$, it holds $(1-\varepsilon_1) k \le |e \cap V_\good| \le k$.
\end{fact}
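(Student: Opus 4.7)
The plan is to derive Fact~\ref{fact:good-vertex-fraction} directly from the termination condition of the while loop in Algorithm~\ref{alg:identify-bad} when invoked as \textsf{IdentifyBad}$(V)$. The upper bound $|e \cap V_\good| \le k$ is immediate, since every hyperedge satisfies $|e| \le k$ under the assumption $H \in \+H_{\le k}$ (and in fact $|e|=k$ for the $k$-uniform hypergraphs arising in \Cref{theorem:k-SAT-main,theorem:q-coloring-main}).

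For the lower bound, the key observation is that any good hyperedge $e \in \egood = \+E \setminus \ebad$ was, by definition, never added to $\ebad$ during the execution of \textsf{IdentifyBad}$(V)$. By the exit condition of the while loop, this is possible only if $|e \cap \vbad| \le \eps_1 k$ at termination; otherwise the loop would have identified $e$ and added it to $\ebad$ before terminating. Combining this with the identity $|e \cap \vgood| + |e \cap \vbad| = |e| = k$ yields $|e \cap \vgood| \ge (1-\eps_1)k$, as desired.

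I anticipate no serious obstacle here: both inequalities follow mechanically from the stopping criterion of \textsf{IdentifyBad}, and no probabilistic or structural input on the random instance is needed. The real content of \Cref{sec:structure} lies elsewhere, namely in quantitatively controlling the set $\vbad$ (its total size, the sizes and diameters of its connected components, etc.) so that the resulting pinning of bad variables can be absorbed by the coupling analysis in \Cref{sec:coupling}; Fact~\ref{fact:good-vertex-fraction} itself is a direct syntactic consequence of how the algorithm is designed and should be dispatched in a couple of lines.
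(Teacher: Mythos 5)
Your proof is correct and matches the paper's (implicit) justification: the paper states this as an unproved Fact precisely because both bounds follow mechanically from the stopping rule of \textsf{IdentifyBad} — at termination every $e \notin \ebad$ has $|e \cap \vbad| \le \eps_1 k$ — together with $|e| \le k$. Your parenthetical caveat that the lower bound uses $|e| = k$ (i.e.\ $k$-uniformity, which holds for the instances where the fact is applied) is the same implicit assumption the paper makes, so there is nothing further to add.
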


The following structural properties are useful in our counting and sampling algorithms.

\begin{property}
    [bounded number of bad vertices]\label{property:bounded-bad-vertices}
    Given a hypergraph $H = (V, \+E)$, for any $V_0 \subseteq V$, $\abs{\vbad(V_0)} \le 4 
    \eps_1^{-1}\abs{\-{HD}(V_0)}$, where $\vbad(V_0)$ is obtained from \Cref{alg:identify-bad}.
\end{property}

\begin{property}[bounded fraction of bad hyperedges]
\label{property:bounded-bad-fraction}
    Given a hypergraph $H = (V, \+E)$, let $\ebad = \ebad(V)$ be the set of bad hyperedges obtained from \Cref{alg:identify-bad}. For any $\ell \ge \log n$ and any connected subset of hyperedges in $\-{Lin}(H)$ of size $\ell$, the number of bad hyperedges among them is at most $\eps_2 \ell$.
\end{property}

\begin{definition}[nice hypergraph]\label{definition:nice-hypergraph}
    We say a hypergraph $H=(V,\+E)$ is $(k, \alpha,\eps_1,\eps_2,\eta,\rho,p_1,p_2)$-\emph{nice} if with the choice of $p_1$ at \Cref{definition:high-degree} and $\eps_1$ at \Cref{alg:identify-bad}, the hypergraph $H$:
    \begin{itemize}
    \item is in $\+H_{\le k}$ and has density $\alpha$ where $\alpha \le q^k$;
    \item satisfies bounded maximum degree defined at \Cref{property:maximum-degree};
    \item satisfies $(\eta,\rho)$-constraint expansion defined at \Cref{property:edge-expansion};
    \item satisfies bounded neighbourhood growth with parameter $p_2$ at \Cref{property:bounded-neighbourhood-growth};
    \item satisfies bounded number of bad vertices with parameter $\eps_1$ at \Cref{property:bounded-bad-vertices};
    \item satisfies bounded fraction of bad hyperedges with parameter $\eps_2$ at \Cref{property:bounded-bad-fraction};
    \item has no connected components of size $\ell \ge \log n$ in the line graph induced by bad hyperedges.
    \end{itemize}
\end{definition}





The following key lemma shows that if all parameters $(k, \alpha, \eps_1, \eps_2, \eta, \rho, p_1, p_2)$ satisfy \Cref{condition:main-condition}, then the underlying incidence hypergraphs of random $k$-SAT formulas is nice with probability $1 - o(1/n)$.

\begin{lemma}\label{lemma:structural-lemma-kSAT}
    For any fixed parameters $k, \alpha, \eta, \rho, \eps_1, p_1, \eps_2, p_2$ where $\eta k \ge 4$, $\-e(\rho k \alpha)^\eta \le 1$, $\eps_1 = 2\eta$, $6k^5\le p_1 \le \-e^{k-2}\alpha$, $\eps_2 = 12k^5/((1-\eta)\eta p_1)$ and $p_2 = \-ek^2$, with probability $1 - o(1/n)$ over the choice of random $k$-SAT formulas $\Phi = \Phi(k, n, m)$ with density $\alpha$, $H_\Phi$ is $(k, \alpha, \eps_1, \eps_2, \eta, \rho, p_1, p_2)$-nice.
\end{lemma}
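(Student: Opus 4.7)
The plan is to verify each of the seven items in \Cref{definition:nice-hypergraph} independently, showing each holds with probability at least $1 - o(1/n^2)$, and to conclude by a union bound. That $H_\Phi \in \+H_{\leq k}$ with density $\alpha$ is immediate from \Cref{definition:k-SAT-model}.

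The two purely probabilistic items are handled by standard first-moment estimates. For the maximum degree (\Cref{property:maximum-degree}), each $\deg(v)$ is stochastically dominated by $\mathrm{Bin}(mk, 1/n)$ of mean $\leq k\alpha$, and a Chernoff tail followed by a union bound over $v\in V$ yields $\Delta \leq 4k\alpha + 6\log n$ with probability $1 - o(1/n)$. For the edge expansion (\Cref{property:edge-expansion}), the probability that some $\ell \leq \rho m$ hyperedges jointly span at most $s = (1-\eta)k\ell$ variables is at most $\binom{m}{\ell}\binom{n}{s}(s/n)^{k\ell}$; applying $\binom{m}{\ell}\leq(\mathrm{e}\alpha n/\ell)^\ell$ and $\binom{n}{s}\leq(\mathrm{e}n/s)^s$ and gathering terms reduces the bound to $(K\cdot(\ell/n)^{\eta k - 1})^\ell$ with $K = \mathrm{e}\alpha\cdot\mathrm{e}^{(1-\eta)k}\cdot((1-\eta)k)^{\eta k}$. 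Since $\eta k \geq 4 > 1$ and $\ell/n \leq \rho\alpha$, the hypothesis $\mathrm{e}(\rho k\alpha)^\eta \leq 1$ drives the base below a constant less than one, and summing over $\ell \geq 1$ gives $o(1/n)$ failure. For neighbourhood growth (\Cref{property:bounded-neighbourhood-growth}), a Galton--Watson-type tree encoding bounds the number of connected $\ell$-subsets of $G_\Phi$ containing a fixed edge by $(\mathrm{e}k\Delta)^\ell$; substituting the $\Delta$ bound and absorbing the $\log n$ term into the $n^3$ prefactor produces $n^3(p_2\alpha)^\ell$ with $p_2 = \mathrm{e}k^2$.

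The remaining three items concern \Cref{alg:identify-bad}. \Cref{property:bounded-bad-vertices} is deterministic given edge expansion: writing $L = |\ebad|$ and $s = |\mathrm{HD}|$, every newly added bad edge contributes at most $(1-\eps_1)k$ new vertices to $\vbad$, so $|\vbad|\leq s+(1-\eps_1)kL$; provided $L\leq \rho m$, edge expansion gives $(1-\eta)kL\leq|\bigcup_i e_i|\leq|\vbad|$, and combining these with $\eps_1 = 2\eta$ yields $L\leq s/(\eta k)$ and $|\vbad|\leq s/\eta\leq 4\eps_1^{-1}s$. The precondition $L\leq\rho m$ follows from a separate Chernoff bound on the high-degree events: since $p_1 = 6k^7 \gg k\alpha$, we have $\Pr[\deg(v)\geq p_1\alpha]\leq \mathrm{e}^{-\Omega(p_1)}$, and a union bound shows $|\mathrm{HD}|$ is comfortably below $\eta k\rho\alpha n$ with the required probability.

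The main obstacle is verifying \Cref{property:bounded-bad-fraction} and the absence of large connected bad components simultaneously. For this one combines the neighbourhood growth bound with the rarity of high-degree vertices: a connected edge-set $S$ of size $\ell\geq\log n$ containing more than $\eps_2\ell$ bad edges must, by the argument for \Cref{property:bounded-bad-vertices} applied to the local region supporting $S$, already contain at least $\Omega(\eta\,\eps_2 k\ell)$ high-degree seeds among its $O(k\ell)$ vertices. Each candidate vertex is high-degree only with probability $\mathrm{e}^{-\Omega(p_1)}$, and in random $k$-SAT these events across distinct vertices are near-independent, so a Chernoff-type estimate controls the probability that enough vertices in $S$ are simultaneously high-degree. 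A union bound over the $n^3(p_2\alpha)^\ell$ candidate $S$'s, times the entropy $\binom{k\ell}{\Theta(\eta\eps_2 k\ell)}$ of which vertices are the seeds, times the Chernoff factor, must then collapse to $o(1/n)$. The parameter choices $p_1 = 6k^7$, $\eps_2 = 12k^5/((1-\eta)\eta p_1)$, and $p_2 = \mathrm{e}k^2$ are exactly calibrated so that the exponential $\mathrm{e}^{-\Omega(p_1)\cdot\eta\eps_2 k\ell}$ dominates both the growth $(p_2\alpha)^\ell$ and the entropy, closing the union bound. The same computation with the trivial lower bound ``$\ell$ bad edges in a purely bad connected set of size $\ell$'' delivers the no-large-bad-component condition.
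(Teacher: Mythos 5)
Your treatment of the first few items (maximum degree, edge expansion, and the deterministic reduction of \Cref{property:bounded-bad-vertices} to edge expansion plus a bound on $|\-{HD}(V)|$) is essentially sound; in fact your direct argument for \Cref{property:bounded-bad-vertices} — each bad edge adds at most $(1-\eps_1)k$ new vertices while expansion forces $(1-\eta)k|\ebad|\le|\vbad|$ — is a legitimately simpler route than the paper's, which goes through a separate peeling lemma (\Cref{prop:peeling}) and \Cref{cor:bkvars}. However, your derivation of \Cref{property:bounded-neighbourhood-growth} does not work as stated: the deterministic ``Galton--Watson'' count of connected $\ell$-sets in $G_\Phi$ is governed by the maximum line-graph degree, which with high probability is only bounded by $k(4k\alpha+6\log n)$; for $\ell\gg\log n$ the resulting $(\Theta(k\log n))^{\ell}$ factor cannot be absorbed into the $n^3$ prefactor, and even the constant is off the required $p_2=\mathrm{e}k^2$. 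The paper instead bounds the \emph{expected} number of connected clause-sets containing a fixed clause via spanning-tree counting (\Cref{lemma:probability-of-spanning-tree}, giving $(k^2/n)^{\ell-1}$ per tree) and applies Markov plus a union bound (\Cref{lemma:bounded-neighbourhood-growth}); some argument over the randomness of $\Phi$, not over a realized degree bound, is needed here.

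The more serious gap is in \Cref{property:bounded-bad-fraction} and the no-large-bad-component condition. Your key claim — that a connected edge-set $S$ with more than $\eps_2|S|$ bad edges must contain $\Omega(\eta\eps_2 k|S|)$ high-degree vertices \emph{among its own vertices} — is false: badness propagates through \Cref{alg:identify-bad}, so the high-degree seeds responsible for $S\cap\ebad$ may lie entirely outside $\vbl(S)$, reached through chains of bad edges not in $S$. This is exactly why the paper enlarges $S$ by the full bad components meeting it (\Cref{lemma:bounded-vbad-fraction}) and then needs a statement controlling the high-degree fraction of \emph{every} connected vertex set simultaneously (\Cref{prop:fraction_of_high-degrees}), which in turn rests on the nontrivial neighbourhood bound \Cref{lemma:bounded-number-neighbors}; none of these ingredients appears in your plan. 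Moreover, the union bound you sketch multiplies a whp \emph{structural} count of connected sets in the realized graph by per-vertex high-degree probabilities and invokes ``near-independence''; conditioning on $S$ being connected biases the degrees of its vertices upward, so a correct argument must union over deterministic index sets and estimate joint probabilities (this is what the proofs of \Cref{lemma:bounded-number-neighbors} and \Cref{prop:peeling} actually do). Finally, note the lemma is stated for all $p_1\in[6k^5,\mathrm{e}^{k-2}\alpha]$, not only $p_1=6k^7$, so calibrations relying on $p_1=\Theta(k^7)$ do not suffice.
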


The proof of \Cref{lemma:structural-lemma-kSAT} is deferred to \Cref{sec:structural-property-proof}. In fact, the same result also holds for random $k$-uniform hypergraphs. The following lemma states that if the incidence hypergraph of a random $k$-SAT formula satisfies some structural property with high probability, then a random hypergraph also satisfies the same property with high probability.

\begin{lemma}
    \label{lemma:technical-lemma-random-models}
    Suppose $k \ge 3$ and $\alpha$ are constants. Let $\mathbb{P}$ be a property for hypergraphs (i.e., $\+P$ is a subset of hypergraphs). If the probability of $H_\Phi$ belonging to $\mathbb{P}$ is $1-o(1/n)$, over the choice of random $k$-SAT formulas $\Phi \sim \Phi(k, n, \lfloor\alpha n \rfloor)$ with density $\alpha$, then a random $k$-uniform hypergraph $H \sim H(k, n, \lfloor \alpha n\rfloor)$ with density $\alpha$ belongs to $\mathbb{P}$ with probability $1 - o(1/n)$ as well.
\end{lemma}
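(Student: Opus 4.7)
The strategy is to realize $H \sim H(k,n,m)$ as the conditional distribution of $H_\Phi$ under a natural event $E$ that holds with constant probability, and then conclude via Bayes' rule. Define $E$ as the event that every clause of $\Phi \sim \Phi(k,n,m)$ is supported on $k$ pairwise distinct variables and that the $m$ resulting variable sets $\vbl(c_1), \dots, \vbl(c_m)$ are themselves pairwise distinct. I will show that (i) $H_\Phi \mid E$ has exactly the law $H(k,n,m)$, and (ii) $\Pr{E}=\Omega(1)$; together these give the claim.

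For (i), observe that the literal choices in \Cref{definition:k-SAT-model} induce, after projection to the underlying variable, $km$ independent uniform samples from $V$. Conditioning on $E$, the ordered tuple $(\vbl(c_1),\dots,\vbl(c_m))$ becomes uniform over ordered $m$-tuples of pairwise distinct $k$-subsets of $V$; forgetting the clause labels yields the uniform distribution over $m$-element subsets of $\binom{V}{k}$, which is exactly $H(k,n,m)$ by \Cref{definition:random-hypergraph}.

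For (ii), I union-bound the complement into two pieces. The probability that a fixed clause has a repeated variable is $1-\prod_{i=0}^{k-1}(1-i/n)\le \binom{k}{2}/n$, so the probability that every clause is simple is at least $\left(1 - \binom{k}{2}/n\right)^m \ge \exp(-\alpha k(k-1)) = \Omega(1)$ for sufficiently large $n$, using $(1-x)^m \ge \exp(-2mx)$ for small $x$. Conditioned on every clause being simple, each $\vbl(c_i)$ is uniform on $\binom{V}{k}$ independently across $i$, so the probability of any pair colliding is at most $\binom{m}{2}/\binom{n}{k} = O(n^{2-k}) = o(1)$ for $k \ge 3$. Hence $\Pr{E} = \Omega(1)$.

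Combining (i) and (ii), Bayes' rule gives
\[
\Pr{H\notin\mathbb{P}} = \Pr{H_\Phi\notin\mathbb{P} \mid E} \le \frac{\Pr{H_\Phi\notin\mathbb{P}}}{\Pr{E}} = \frac{o(1/n)}{\Omega(1)} = o(1/n),
\]
as desired. The argument is largely routine, and there is no serious obstacle; the only point requiring care is the bookkeeping of ordered-versus-unordered representations when identifying $H_\Phi\mid E$ with $H(k,n,m)$, after which the estimates on $\Pr{E}$ reduce to elementary union bounds.
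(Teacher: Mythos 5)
Your proposal is correct and follows essentially the same route as the paper: both condition $H_\Phi$ on the event that all clauses are simple and the $m$ hyperedges are distinct, identify this conditional law with $H(k,n,m)$, lower-bound the probability of that event by a constant, and conclude by Bayes' rule. The only differences are cosmetic (your constant-probability estimates are split slightly differently from the paper's $\frac{1}{2}\mathrm{e}^{-k^2\alpha}$ bound), so nothing further is needed.
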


\begin{proof}
    Let $m = \lfloor\alpha n \rfloor$, and $\+E_{k,n,m}$ be the event that the incidence hypergraphs $H_\Phi$ is a $k$-uniform hypergraph with $n$ vertices and $m$ distinct hyperedges. It is easy to see that
    \[
        \mathbf{Pr}_{\Phi \sim \Phi(k, n, m)}[H_\Phi = H \mid \+E_{k,n,m}] = \mathbf{Pr}_{H(k, n, m)}[H]\,,
    \]
    namely, the distribution of the incidence hypergraph $H_\Phi$ conditional on it being $k$-uniform and having $m$ distinct hyperedges is the uniform distribution $H(k,n,m)$.

    Note that if $n$ is sufficiently large, we obtain that
    \begin{align*}
        \mathbf{Pr}_{\Phi \sim \Phi(k, n, m)}[H_\Phi \text{ is $k$-uniform}] &= \Bigl(\frac{n-1}{n}\cdot \frac{n-2}{n}\cdots \frac{n-k+1}{n}\Bigr)^{m}\\
        &\ge \Bigl(1 - \frac{k-1}{n}\Bigr)^{(k-1)m} \quad \ge \mathrm{e}^{-k^2\alpha}\,,
    \end{align*}
    and
    \begin{align*}
        \mathbf{Pr}_{\Phi \sim \Phi(k, n, m)}[H_\Phi \text{ has two identical hyperedges}] &\le \binom{m}{2} \frac{k!}{n^k} \le \frac{k!\alpha^2}{2n^k}\,,
    \end{align*}
    which further implies that
    \[
        \mathbf{Pr}_{\Phi \sim \Phi(k, n, m)}[\+E_{k,n,m}] \ge \-e^{-k^2\alpha} -  \frac{k!\alpha^2}{2n^k} \ge \frac{1}{2}\-e^{-k^2\alpha}
    \]
    for sufficiently large $n$. Thus, it follows that
    \begin{align*}
        &\phantom{{}= {}} \mathbf{Pr}_{\Phi \sim \Phi(k, n, m)}[H_\Phi \not\in \mathbb{P}]\\
        &\ge \mathbf{Pr}_{\Phi \sim \Phi(k, n, m)}[H_\Phi \not\in \mathbb{P} \mid \+E_{k,n,m}] \cdot \mathbf{Pr}_{\Phi \sim \Phi(k, n, m)}[\+E_{k,n,m}]   \\
        &\ge \frac{1}{2}\mathrm{e}^{-k^2\alpha}\cdot \mathbf{Pr}_{H\sim H(k, n, m)}[H \not\in \=P]\,.
    \end{align*}
    Since $\mathbf{Pr}_{\Phi \sim \Phi(k, n, m)}[H_\Phi \not\in \mathbb{P}] = o(1/n)$, we conclude that $\mathbf{Pr}_{H\sim H(k, n, m)}[H \not\in \=P] = o(1/n)$.
\end{proof}

As a corollary, we have the following lemma immediately.

\begin{lemma}
    \label{lemma:structural-lemma-uniform}
    For any fixed parameters $k, \alpha, \eta, \rho, \eps_1, p_1, \eps_2, p_2$ where $\eta k \ge 4$, $\-e(\rho k \alpha)^\eta \le 1$, $\eps_1 = 2\eta$, $6k^5\le p_1 \le \-e^{k-2}\alpha$, $\eps_2 = 12k^5/((1-\eta)\eta p_1)$ and $p_2 = \-ek^2$, with probability $1 - o(1/n)$ over the choice of all $k$-uniform hypergraphs with $n$ vertices and density $\alpha$, a random $k$-uniform hypergraph $H$ is $(k, \alpha, \eps_1, \eps_2, \eta, \rho, p_1, p_2)$-nice.
\end{lemma}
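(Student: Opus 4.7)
The plan is to derive this lemma as an immediate corollary of the two preceding results, namely Lemma~\ref{lemma:structural-lemma-kSAT} (which establishes niceness with probability $1-o(1/n)$ for the incidence hypergraph of a random $k$-SAT formula) and Lemma~\ref{lemma:technical-lemma-random-models} (which transfers any property that holds with probability $1-o(1/n)$ for the incidence hypergraph of a random $k$-SAT formula to the uniform random $k$-uniform hypergraph model). Concretely, let $\mathbb{P} \subseteq \+H_{\le k}$ be the set of all hypergraphs that are $(k,\alpha,\eps_1,\eps_2,\eta,\rho,p_1,p_2)$-nice in the sense of Definition~\ref{definition:nice-hypergraph}. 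The goal is to apply Lemma~\ref{lemma:technical-lemma-random-models} with this $\mathbb{P}$.

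The only thing to check before invoking Lemma~\ref{lemma:technical-lemma-random-models} is that niceness is genuinely a property of the hypergraph, not of any additional data such as the polarities of literals. This is visible directly from Definition~\ref{definition:nice-hypergraph}: each of the listed conditions (uniformity and density, bounded maximum degree, $(\eta,\rho)$-edge expansion, bounded neighbourhood growth, bounded number of bad vertices, bounded fraction of bad hyperedges, and the absence of large connected components in the sub-line-graph of bad hyperedges) is a deterministic function of the hypergraph alone. In particular, the subroutine \textsf{IdentifyBad} in Algorithm~\ref{alg:identify-bad} takes only a hypergraph as input and depends only on vertex degrees and hyperedge overlaps, so the induced sets $\vbad$ and $\ebad$ are hypergraph-theoretic objects. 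Hence $\mathbb{P}$ is a legitimate property in the sense of Lemma~\ref{lemma:technical-lemma-random-models}.

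Given this, the proof is a one-line composition: by Lemma~\ref{lemma:structural-lemma-kSAT}, under the stated conditions on $k,\alpha,\eta,\rho,\eps_1,p_1,\eps_2,p_2$, the incidence hypergraph $H_\Phi$ of a random $k$-SAT formula $\Phi\sim\Phi(k,n,\lfloor\alpha n\rfloor)$ lies in $\mathbb{P}$ with probability $1-o(1/n)$; applying Lemma~\ref{lemma:technical-lemma-random-models} to $\mathbb{P}$ then yields that $H\sim H(k,n,\lfloor\alpha n\rfloor)$ lies in $\mathbb{P}$ with probability $1-o(1/n)$, which is exactly the statement. There is no genuine obstacle: the potentially delicate step of verifying niceness for random sparse hypergraphs has already been discharged by Lemma~\ref{lemma:structural-lemma-kSAT}, and the model-transfer step is already packaged in Lemma~\ref{lemma:technical-lemma-random-models}. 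Thus the proof will simply read ``immediate from Lemmas~\ref{lemma:structural-lemma-kSAT} and~\ref{lemma:technical-lemma-random-models} applied to the property $\mathbb{P}$ of being $(k,\alpha,\eps_1,\eps_2,\eta,\rho,p_1,p_2)$-nice.''
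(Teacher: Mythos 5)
Your proposal is correct and matches the paper exactly: the paper states this lemma as an immediate corollary of \Cref{lemma:structural-lemma-kSAT} combined with the model-transfer result \Cref{lemma:technical-lemma-random-models}, applied to the property of being $(k,\alpha,\eps_1,\eps_2,\eta,\rho,p_1,p_2)$-nice. Your extra observation that niceness is a purely hypergraph-theoretic property (so the transfer lemma legitimately applies) is a sensible sanity check that the paper leaves implicit.
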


We conclude this section by noting that \Cref{theorem:reduction} is simply a combination of \Cref{lemma:structural-lemma-kSAT,lemma:structural-lemma-uniform}. Here, we do not need the lower bound on $k$ in the conditions of \Cref{theorem:k-SAT-main,theorem:q-coloring-main} as the case of small $k$ can be handled by taking a sufficiently large $c$.

\section{Recursive coupling of random CSPs}\label{sec:coupling}
In this section, we establish the following theorem regarding the decay of correlations for CSP instances that satisfy the structural properties introduced in the previous section.
We note that \Cref{thm:W1} from introduction is an immediate consequence of \Cref{theorem:reduction,theorem:correlation-decay}.

\begin{theorem}\label{theorem:correlation-decay}
    Let $\Phi=(V,[q],\+C)$ be an satisfiable atomic CSP formula satisfying \Cref{condition:main-condition}.
    Let $c_0\in \+C$ be an arbitrary constraint. 
    There exists a coupling $(X,Y)$ of $\mu_{\+C}$ and $\mu_{\+C\setminus \{c_0\}}$ such that for any integer $\log n \leq M \leq \rho m$, it holds that
    \[
    \Pr{d_{\-{Ham}}(X,Y)\geq k\cdot M}\leq 2^{-M}.
    \]
\end{theorem}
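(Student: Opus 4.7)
The plan is to construct the coupling by following the recursive coupling procedure of \cite{WY24}, which processes variables one at a time and, at each step, applies a maximal coupling to the relevant marginals under $\mu_{\+C}$ and $\mu_{\+C \setminus \{c_0\}}$. The only initial seed of disagreement is the removal of $c_0$, and any subsequent discrepancy propagates along shared variables to neighboring constraints. Let $\+W \subseteq \+C$ denote the (random) set of constraints that become ``discrepant'' during this procedure; then $\+W$ is connected in the line graph $G_{\Phi}$, contains $c_{0}$, and satisfies $d_{\-{Ham}}(X,Y) \leq k |\+W|$. It therefore suffices to show $\Pr{|\+W|\geq M} \leq 2^{-M}$ for every integer $\log n \leq M \leq \rho m$.

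The core of the argument is a per-witness probability bound for a fixed connected $\+W$ of size $\ell$. Rather than reduce to a sparse $2$-tree sub-skeleton as in \cite{WY24} (which gives only $q^{-k\ell/2}$ and hence the weaker density $2^{k/2}$ for $k$-SAT), I will bound the probability that the full $\+W$ arises, in the spirit of the Moser-Tardos witness-tree analysis \cite{moser2010constructive}. Each atomic constraint is violated with probability exactly $q^{-k}$ under the product measure $\+P$, and \Cref{HSS} allows one to compare probabilities under $\mu_{\+C \setminus \{c_0\}}$ to those under $\+P$ at the cost of a bounded multiplicative factor per constraint. By the $(\eta,\rho)$-edge expansion in \Cref{property:edge-expansion}, the $\ell$ constraints in $\+W$ collectively involve at least $(1-\eta)k\ell$ distinct variables (provided $\ell \le \rho m$), so the joint probability under $\+P$ that every atomic event in $\+W$ is simultaneously violated is at most $q^{-(1-\eta)k\ell}$. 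This yields a per-witness bound of the form $\tuple{C\, q^{-(1-\eta)k}}^{\ell}$.

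Witnesses are then enumerated via \Cref{property:bounded-neighbourhood-growth}: the number of connected subsets of constraints of size $\ell$ containing $c_{0}$ is at most $n^{3}(p_{2}\alpha)^{\ell}$. A union bound gives
\begin{align*}
\Pr{|\+W|\geq M} \;\leq\; \sum_{\ell \geq M} n^{3}(p_{2}\alpha)^{\ell}\cdot \tuple{C\,q^{-(1-\eta)k}}^{\ell} \;=\; n^{3}\sum_{\ell \geq M}\tuple{C\,p_{2}\,\alpha\, q^{-(1-\eta)k}}^{\ell}.
\end{align*}
Under \Cref{condition:main-condition}, one has $\eta k = O(1)$ and $\alpha \leq q^{k}/\poly(k)$, so the base $C\,p_{2}\,\alpha\, q^{-(1-\eta)k} = O\bigl(k^{2}\alpha/q^{k}\bigr)$ can be made arbitrarily small by enlarging the universal constant $c$ in the density bound. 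Choosing the base small enough that the geometric tail absorbs the $n^{3}$ factor for $M \geq \log n$ yields the desired bound $2^{-M}$.

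The main obstacle will be handling bad hyperedges and high-degree vertices, where the clean LLL-plus-expansion estimate does not directly apply. I would partition each witness $\+W$ into $\+W \cap \egood$ and $\+W \cap \ebad$; by \Cref{property:bounded-bad-fraction} at most an $\eps_{2}$-fraction of any $\+W$ of size $\ell \geq \log n$ is bad, so a trivial per-constraint factor of $1$ suffices on the bad part while the atomic-LLL bound is applied to the good part. Since $\eps_{2}$ is chosen small enough (and $p_1$ large enough to limit the bad structure via \Cref{property:bounded-bad-vertices}), the resulting multiplicative loss is absorbed into the constant $C$ above, so the good-part analysis dominates. A secondary technical step is to verify that the recursive coupling of \cite{WY24} indeed produces a discrepancy set that is connected in $G_{\Phi}$ and rooted at $c_{0}$, so that the enumeration by connected subsets is justified; this follows from a careful but essentially mechanical bookkeeping of the coupling steps.
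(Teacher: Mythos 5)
Your high-level architecture (connected discrepancy set rooted at $c_0$, enumeration of witnesses via \Cref{property:bounded-neighbourhood-growth}, expansion via \Cref{property:edge-expansion} to force $\ge (1-\eta)k\ell$ distinct variables, bad edges absorbed via \Cref{property:bounded-bad-fraction}) matches the paper, but the core step — the per-witness probability bound — is not justified as stated, and this is exactly where the paper's main technical work lies. Your certificate event is ``every atomic bad event in $\+W$ is simultaneously violated,'' bounded under $\+P$ by $q^{-(1-\eta)k\ell}$ and transferred via \Cref{HSS}. But no single assignment in the coupling violates the constraints of $\+W$: the coupling is driven (after deferring decisions) by two independent samples $\*X\sim\mu_{\+C\setminus\{c_0\}}$ and $\*Y\sim\mu_{\+C}$, and $\*Y$ satisfies \emph{every} constraint while $\*X$ satisfies every constraint except possibly $c_0$. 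What is violated at each step is a \emph{pinned} constraint, by one of the two samples (alternating with the direction of the discrepancy), where the pinned values come from earlier steps and include the \emph{other} sample's non-violating values. Consequently the conditional violation probability of an individual constraint in the witness can be as large as $q^{-1}$ (when most of its variables are already pinned), and a naive per-constraint product does not give $q^{-(1-\eta)k\ell}$. The paper recovers the right exponent only through a different accounting: it shows that the witness tree $T$ together with a \emph{witness assignment} $\varsigma\in[q]^{\vbl(T)}$ deterministically reconstructs the run (\Cref{lemma:conditional-deterministic}), hence fixes both $\*X_{\vbl(T)}$ and $\*Y_{\vbl(T)}$; independence of the two Gibbs samples plus the pin-the-bad-vertices/HSS marginal bound (\Cref{lemma:assignment-probability-bound}) gives a factor $q^{-2|\vbl(T)\cap\vgood|}$ per $(T,\varsigma)$, and the $q^{|\vbl(T)|}$ cost of enumerating $\varsigma$ is then paid back, leaving the net $q^{-(1-O(1/k))k\ell}$. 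Your sketch contains neither the witness assignment nor the two-sample determinism argument, so the claimed bound $(Cq^{-(1-\eta)k})^{\ell}$ currently has no proof; this is the missing idea, not routine bookkeeping.

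Two secondary issues. First, your union bound sums over all $\ell\ge M$, but the expansion property \Cref{property:edge-expansion} (and hence your variable-count lower bound) only holds for $\ell\le\rho m$ — this is precisely why the theorem restricts $M\le\rho m$; the paper avoids the problem by truncating the process when the witness tree reaches size exactly $M$ and bounding $\Pr{|V(T^{\cp})|=M}$ rather than a tail over all sizes. Second, the coupling of \cite{WY24} is constraint-wise (it pins all variables of one discrepant constraint at a time, with a two-branch ``satisfied / pin-to-violation'' step), not a variable-by-variable maximal coupling of marginals; the Hamming bound $d_{\-{Ham}}(X,Y)\le k|V(T)|$ and connectivity of the witness do hold, but they come from that constraint-wise structure.
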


\Cref{theorem:correlation-decay} asserts that for CSP instances satisfying the specified structural properties, 
the discrepancy between uniform satisfying assignments induced by any particular constraint decays exponentially.


\subsection{The coupling procedure}
The coupling in \Cref{theorem:correlation-decay} is the constraint-wise recursive coupling introduced in~\cite{WY24}. 
This coupling procedure, denoted as $\Couple(\+E,\+F,\sigma,\tau)$, is formally described in \Cref{Alg:couple}.
The procedure takes as inputs:
\begin{itemize}
    \item a pair of collections of pinned atomic constraints $\+E,\+F\subseteq \+C^*$, corresponding to two formulas;
    \item a pair of partial assignments $\sigma,\tau\in\+[q]^\Lambda$, specified on the same subset $\Lambda\subseteq V$ of variables.
\end{itemize}
It is assumed that both pinned formulas $\+E^\sigma$ and $\+F^\tau$ are satisfiable.
The objective of the procedure is to generate a pair of random assignments $(X,Y)\in [q]^V\times [q]^V$, such that marginally $X\sim \mu_{\+E}^{\sigma}$ and $Y\sim\mu_{\+F}^{\tau}$, while minimizing the discrepancy between $X$ and $Y$.

\begin{algorithm}[t]
\caption{$\Couple(\+E,\+F,\sigma,\tau)$~\cite{WY24}} \label{Alg:couple}
\SetKwInOut{Instance}{Instance}
\SetKwInOut{Input}{Input}
\SetKwInOut{Output}{Output}
\SetKwIF{WP}{ElseIf}{Else}{with probability}{do}{else if}{else}{endif}
\Instance{an atomic CSP formula $\Phi=(V,[q],\+C)$;}
\Input{two subsets of pinned formulas $\+E,\+F\subseteq \+C^*$, and two partial assignments $\sigma,\tau\in [q]^{\Lambda}$ specified on the same subset $\Lambda\subseteq V$ of variables;} 
\Output{a pair of assignments $(X,Y)\in [q]^V\times [q]^V$;}
\If{$\+E^\sigma=\+F^\tau$\label{Line:couple-return-cond}}{
let $(X,Y)$ be drawn according to the coupling of $\mu_{\+E}^{\sigma}$ and $\mu_{\+F}^{\tau}$ that always satisfies $X_{V\setminus \Lambda}=Y_{V\setminus \Lambda}$\;\label{Line:couple-perfect}
    \Return $(X,Y)$\;\label{Line:couple-return}
}
\eIf{$\+F^\tau\not\subseteq \+E^\sigma$ \label{Line:couple-swap-cond}}{
   choose the smallest $c\in \+F^\tau\setminus \+E^\sigma$\; \label{Line:couple-choose}
\eWP{$\mu_{\+E}^{\sigma}(c)$ \label{Line:couple-psat-cond}}{
    \Return $\Couple\left(\+E\cup\{c\},\+F,\sigma,\tau\right)$\; \label{Line:couple-psat-return}
}
{\label{Line:couple-else}
  
   let $\pi=\vio{c}$ and draw a random $\rho\sim \mu_{\+F,\vbl(c)}^\tau$\;\label{Line:couple-sample}
    \Return $\Couple\left(\+E,\+F, \sigma\land\pi, \tau\land\rho\right)$\; \label{Line:couple-assign-return}
}}
{\label{Line:couple-else-1}
choose the smallest $c\in \+E^\sigma\setminus \+F^\tau$\; \label{Line:couple-choose-1}
\eWP{$\mu_{\+F}^{\tau}(c)$ \label{Line:couple-psat-cond-1}}{
    \Return $\Couple\left(\+E,\+F\cup \{c\},\sigma,\tau\right)$\; \label{Line:couple-psat-return-1}
}
{
    draw a random $\pi\sim \mu_{\+E,\vbl(c)}^\sigma$ and let $\rho=\vio{c}$ \; \label{Line:couple-sample-1}
    \Return $\Couple\left(\+E,\+F, \sigma\land\pi, \tau\land\rho\right)$\; \label{Line:couple-assign-return-1}
}
}
\end{algorithm}

The validity of this coupling is ensured by the following proposition. 
A similar correctness result was proven under a stronger local lemma condition in~\cite[Lemma 3.3]{WY24}.


\begin{proposition}\label{lemma:coupling-correctness}
Assume that the atomic CSP formula $\Phi=(V,[q],\+C)$ is satisfiable.
For any constraint $c_0\in \+C$, the procedure $\Couple(\+C\setminus \{c_0\},\+C,\varnothing,\varnothing)$ terminates with probability 1 and returns a pair of random assignments $(X,Y)\in [q]^V\times [q]^V$ such that marginally $X\sim \mu_{\+C\setminus\{c_0\}}$ and $Y\sim\mu_{\+C}$.
\end{proposition}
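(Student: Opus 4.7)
The plan is to prove the proposition by induction on the recursion depth, combined with a separate argument establishing that this depth is almost surely finite.

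For termination, observe that each non-returning recursive call changes the 4-tuple $(\+E,\+F,\sigma,\tau)$ in one of two ways: either (i) the common pinning domain $\Lambda(\sigma)=\Lambda(\tau)$ strictly grows, as in lines 13 and 18, or (ii) one of the constraint sets $\+E,\+F$ strictly grows as a subset of the finite universe $\+C^*$ while $\sigma,\tau$ stay fixed, as in lines 10 and 15. In case (i) the chosen pinned constraint $c$ has $\vbl(c)\cap\Lambda(\sigma)=\emptyset$ since $c\in\+C^*$ has already had its pinned variables stripped, so $\Lambda$ enlarges by at least one. Both $|\Lambda|\le n$ and $|\+E|,|\+F|\le|\+C^*|$ are finite, hence the recursion depth is deterministically bounded and the procedure terminates with probability $1$.

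For correctness, I prove by induction on recursion depth that the returned pair $(X,Y)$ satisfies $X\sim\mu_{\+E}^{\sigma}$ and $Y\sim\mu_{\+F}^{\tau}$, given the invariant that both $\+E^\sigma$ and $\+F^\tau$ are satisfiable. In the base case $\+E^\sigma=\+F^\tau$, the two distributions agree on $V\setminus\Lambda$ (both equal $\mu_{\+E^\sigma}$), so the coupling at lines 3--4 yields the required marginals. For the inductive step in the case $\+F^\tau\not\subseteq\+E^\sigma$ (lines 9--14), decompose $\mu_{\+E}^{\sigma}$ according to whether the chosen pinned constraint $c$ is satisfied: with probability $\mu_{\+E}^{\sigma}(c)$ it is, and the conditional distribution is $\mu_{\+E\cup\{c\}}^{\sigma}$; with the complementary probability $c$ is violated, which since $c$ is atomic forces $X_{\vbl(c)}=\vio{c}$, and the conditional distribution becomes $\mu_{\+E}^{\sigma\land\vio{c}}$. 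The algorithm exactly realizes this decomposition on the $X$-side, so by induction $X\sim\mu_{\+E}^{\sigma}$. On the $Y$-side, line 10 preserves $(\+F,\tau)$, hence induction gives $Y\sim\mu_{\+F}^{\tau}$; at line 13 we sample $\rho\sim\mu_{\+F,\vbl(c)}^{\tau}$ and then obtain $Y\sim\mu_{\+F}^{\tau\land\rho}$ by induction, whose average over $\rho$ is $\mu_{\+F}^{\tau}$. Since $Y\sim\mu_{\+F}^{\tau}$ under either branch, its unconditional marginal is $\mu_{\+F}^{\tau}$. The case $\+E^\sigma\not\subseteq\+F^\tau$ (lines 15--19) is symmetric, and applying the result to $\Couple(\+C\setminus\{c_0\},\+C,\varnothing,\varnothing)$ gives $X\sim\mu_{\+C\setminus\{c_0\}}$ and $Y\sim\mu_{\+C}$.

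The main obstacle will be verifying that the satisfiability invariant is preserved across recursive calls so that every conditional distribution invoked by the algorithm is well-defined. This amounts to checking that each coin-flip branch is taken with probability exactly equal to the marginal of the event that preserves this invariant: for instance, the branch at line 13 fires with probability $1-\mu_{\+E}^{\sigma}(c)$, which is positive precisely when some satisfying assignment of $\+E^\sigma$ violates $c$, i.e., when $\+E^{\sigma\land\vio{c}}$ is satisfiable; analogous statements hold for the other three branches. Combined with the hypothesis that $\Phi$ is satisfiable to seed the induction, this maintains the invariant throughout the recursion and makes every appeal to the induction hypothesis legitimate.
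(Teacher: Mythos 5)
Your proposal is correct and follows essentially the same route as the paper's proof (which is itself a sketch deferring to \cite[Lemma 3.3]{WY24}): a top-down induction maintaining the satisfiability/well-definedness invariant for both pinned formulas, deterministic termination via a monotone progress measure, and a bottom-up induction showing the returned marginals are $\mu_{\+E}^{\sigma}$ and $\mu_{\+F}^{\tau}$ by decomposing according to whether the chosen atomic constraint is satisfied. Your termination bookkeeping (growth of $|\+E|+|\+F|+|\Lambda(\sigma)|$ inside the finite universe $\+C^*$ and $V$) is a harmless variant of the paper's tracking of the symmetric difference $\+E^{\sigma}\triangle\+F^{\tau}$ versus the number of unassigned variables.
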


\begin{proof}[Proof sketch]
The proof follows the same inductive framework as the proof of \cite[Lemma 3.3]{WY24}. 
We provide a brief outline here.

First, by applying structural induction in the top-down order of recursion, one can verify the following induction hypothesis for each recursive call $\Couple(\+E,\+F,\sigma,\tau)$:
\[
\Lambda(\sigma)=\Lambda(\tau),\quad \+P[\+E\land \sigma]>0, \quad\text{ and }\quad \+P[\+F\land \tau]>0.
\]
This induction holds as long as the initial formula with constraints $\+C$ is satisfiable,
and it ensures that the coupling procedure remains well-defined throughout the recursion.

Next, observe that  in each recursive step, either the size of the symmetric difference $\+E^{\sigma}\triangle \+F^{\tau}$ decreases by one, or the number of unassigned variables in $\sigma$ and $\tau$ is reduced by at least one. 
Since \Cref{Alg:couple} terminates when $\+E^{\sigma}=\+F^{\tau}$, the procedure $\Couple(\+E,\+F,\sigma,\tau)$ eventually terminates  due to the finiteness of both the number of constraints and the number of variables.


Finally, by applying structural induction in the bottom-up order of recursion, we can verify the following induction hypothesis to ensure the correctness of the coupling:
\[
\Couple(\+E,\+F,\sigma,\tau) \text{ returns an $(X,Y)$ such that marginally $X\sim\mu_{\+E}^{\sigma}$ and $Y\sim\mu_{\+F}^{\tau}$}.
\]
This induction follows the same steps as the one given in the proof of \cite[Lemma 3.3]{WY24}. 
The correctness holds as long as the coupling procedure is well-defined and terminates, which we have already established.
\end{proof}

\subsection{The witness tree and witness assignment}
The novelty of \Cref{theorem:correlation-decay} lies in an improved analysis of the coupling procedure (\Cref{Alg:couple}).
This new analysis of the coupling establishes an exponential decay of correlation by leveraging new structural properties of the underlying hypergraph for the CSP formula.
The main slackness in previous analysis of the coupling~\cite{WY24} arises from the use of a combinatorial structure called a $2$-tree, originally introduced in~\cite{Alon91}, serving as a witness for a large discrepancy between satisfying assignments. 
To achieve an improved analysis, we employ a witness tree structure, 
constructed similarly to the witness tree used in the analysis of the Moser-Tardos algorithm~\cite{moser2010constructive}.
This  replaces the 2-tree with a new certificate for the discrepancy in the coupling. 
This use of the Moser-Tardos witness tree is crucial for approaching the satisfiability threshold.


We begin with a formal definition of the execution log for \Cref{Alg:couple}.
\begin{definition}[execution log]\label{definition:execution-log}
    Given a run of \Cref{Alg:couple} from 
$\Couple(\+C\setminus \{c_0\},\+C,\varnothing,\varnothing)$,
the \emph{execution log} $L=L(\+C,c_0)=(c_1,c_2,\dots,c_{\ell})$ is a random sequence of (unpinned) constraints from $\+C$, constructed as follows:
\begin{itemize}
    \item initialize $L$ as the empty list;
    \item whenever $\Couple(\+E,\+F, \sigma\land \pi, \tau\land \rho)$ is recursively called at \Cref{Line:couple-assign-return} or \Cref{Line:couple-assign-return-1},
    append the original unpinned constraint $\unpin{c}\in \+C$ to the end of $L$.
\end{itemize}
For any sequence $(c_1,c_2,\dots,c_{\ell})$ of (unpinned) constraints from $\+C$, we say that $(c_1,c_2,\dots,c_{\ell})$ is a \emph{proper execution log} if it appears in the support of $L=L(\+C,c_0)$, i.e.,
\[
\Pr{L=(c_1,c_2,\dots,c_{\ell})}>0.
\]
\end{definition}



Recall the definition of the incidence hypergraph in \Cref{definition:incidence-hypergraph}. The following notion of a witness tree is inspired by the witness tree introduced for analyzing the Moser-Tardos algorithm~\cite{moser2010constructive}.
A key distinction in our context is that in the execution log constructed as in \Cref{definition:execution-log}, a constraint can never appear twice.
Thus, the witness tree here is a subgraph of $G_{\Phi}$.

\begin{definition}[witness tree]\label{definition:witness-tree}
A \emph{witness tree} $T$ is a finite rooted tree with vertex set $V(T)\subseteq \+C$ such that $T$ is a subgraph of $G_{\Phi}$.

Given a witness tree $T$ and a constraint $c\in \+C$, define $T\oplus c$ as the witness tree $T'$ obtained as follows:
\begin{itemize}
\item if $T=\emptyset$, then $T'$ is the rooted tree containing a single vertex $c$;
\item otherwise, if $\exists c'\in V(T) $ such that $\vbl(c)\cap \vbl(c')\neq \emptyset$, then $T'$ is obtained by adding $c$ as a child of the deepest such $c'$ in $T$, breaking ties by choosing the lexicographically smallest $c'$ in case multiple $c'$s with the same largest depth exist;
\item otherwise, $T'=T$.
\end{itemize}
\label{definition:witness-tree-by-execution-log}
Given a proper execution log $L=(c_1,c_2,\dots,c_{\ell})$, the \emph{witness tree of the execution log $L$}, denoted by $T=T(L)$, 
is defined as follows: Let $T_i=T_{i-1}\oplus c_i$ for $1\le i\le \ell$ and initialize $T_0=\emptyset$.
Thus, $T(L)=T_{\ell}$.



\end{definition}



The following proposition states several basic properties of the witness tree. The proof is straightforward and is therefore omitted.

\begin{proposition}\label{lemma:witness-tree-size}\label{lemma:witness-local-total-ordering}
Let  $L$ be a proper execution log, and let $T=T(L)$ be the rooted tree constructed by \Cref{definition:witness-tree}.
For each $c\in T$, denote the depth of $c$ in $T$ as $\-{dep}(c)$. The following properties hold:
\begin{enumerate}
    \item $T$ is a witness tree,  i.e.,  $T$ is a rooted tree and is a subgraph of $G_{\Phi}$.
    \item $|V(T)|=|L|$, meaning that every constraint in $L$ is included as a vertex in $T$.
    \item For any distinct $c, c'\in V(T)$ with $\-{dep}(c)=\-{dep}(c')$, it holds  that $\vbl(c)\cap \vbl(c')=\emptyset$.\label{item:witness-local-total-ordering-1}
    \item For any $c,c'\in V(T)$, if $\vbl(c)\cap \vbl(c')\neq \emptyset$ and $c'$ appears later than $c$ in the execution log $L$, then $\-{dep}(c')>\-{dep}(c)$.  \label{item:witness-local-total-ordering-2}
\end{enumerate}
\end{proposition}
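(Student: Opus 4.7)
The plan is to establish all four properties simultaneously by induction on the length of the execution log, tracking $T_i = T_{i-1} \oplus c_i$. The base cases $i=0$ (empty tree) and $i=1$ (single root $c_1$) are immediate. In the inductive step, I would maintain four invariants mirroring claims (1)--(4) of the proposition, restricted to $T_{i-1}$ and the prefix $(c_1,\dots,c_{i-1})$.

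The heart of the argument is a structural claim about the coupling: at every recursive call $\Couple(\+E,\+F,\sigma,\tau)$ reached after step $i-1$, any pinned constraint in the symmetric difference $\+E^\sigma \triangle \+F^\tau$ other than the starting $c_0$ must share variables with some $c_j$ with $j < i$. This follows by tracing the two ways the symmetric difference can change along the recursion: at a \Cref{Line:couple-psat-return}/\Cref{Line:couple-psat-return-1} branch a constraint is \emph{removed} from the difference (and nothing is appended to $L$); at a \Cref{Line:couple-assign-return}/\Cref{Line:couple-assign-return-1} branch the pinning $\pi,\rho$ on $\vbl(\tilde c_j)$ can only alter the pinned form of a constraint $c'$ whose variables intersect $\vbl(\tilde c_j)\subseteq \vbl(c_j)$. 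Since the initial symmetric difference is exactly $\{c_0\}=\{c_1\}$, the chosen $c_i$ for $i\ge 2$ must satisfy $\vbl(c_i)\cap \vbl(c_j)\neq \emptyset$ for some $j<i$. The same tracking implies no $\unpin{c_j}$ can be re-chosen: once $\vbl(\tilde c_j)$ is pinned, any subsequent pinned version of $\unpin{c_j}$ has empty variable set and is either dropped as satisfied or would violate the satisfiability invariant established in the proof of \Cref{lemma:coupling-correctness}. Together, these give Property (2) and the ``subgraph of $G_\Phi$'' part of Property (1); that each $\oplus$ returns a rooted tree is immediate from its definition.

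For Properties (3) and (4), I would invoke the deepest-child rule in \Cref{definition:witness-tree}. When $c_i$ is attached in $T_i = T_{i-1} \oplus c_i$, the structural claim above guarantees at least one neighbor of $c_i$ inside $V(T_{i-1})$, and by construction $c_i$ is placed as a child of the \emph{deepest} such neighbor $c^*$. Hence for any $c_a \in V(T_{i-1})$ with $\vbl(c_a) \cap \vbl(c_i) \neq \emptyset$ we have $\-{dep}(c_a) \le \-{dep}(c^*) = \-{dep}(c_i) - 1 < \-{dep}(c_i)$, yielding Property (4). Property (3) is then the special case of (4) applied to two vertices at the same depth, taken in the order they appear in $L$.

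I expect the main obstacle to be the non-repetition of constraints in $L$, which is what converts Property (2) into the equality $|V(T)|=|L|$. This is not purely syntactic: it requires the satisfiability invariant from the proof of \Cref{lemma:coupling-correctness} to rule out a fully-pinned $\unpin{c_j}$ reappearing as a violated pinned constraint in any later $\+E^\sigma$ or $\+F^\tau$. Once this is in place, the remainder reduces to carefully unwinding the definition of $\oplus$.
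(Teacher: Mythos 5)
Your proposal is correct, and it supplies exactly the verification that the paper treats as routine (the paper omits the proof as straightforward). The two substantive points you identify are indeed the only ones that need care: (i) by tracking how the symmetric difference $\+E^{\sigma}\triangle\+F^{\tau}$ can change (psat branches only remove elements; pinning on $\vbl(c)\subseteq\vbl(\unpin{c})$ only disturbs constraints meeting those variables), every logged constraint after the first intersects an earlier logged one, so each $\oplus$ genuinely attaches a vertex; and (ii) non-repetition follows because after a pinning step all of $\vbl(\unpin{c})$ is assigned on both sides (using $\Lambda(\sigma)=\Lambda(\tau)$), and the satisfiability invariant $\+P[\+E\land\sigma]>0$, $\+P[\+F\land\tau]>0$ from the proof of \Cref{lemma:coupling-correctness} forces any fully-pinned copy to be satisfied and hence dropped, so it can never be chosen again. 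Properties (3) and (4) then follow from the deepest-neighbor attachment rule as you state (with (3) being the contrapositive of (4) rather than literally a special case, a purely cosmetic point).
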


Besides the witness tree, another main source of randomness in the coupling procedure of \Cref{Alg:couple} comes from the non-violating partial assignments drawn in \Cref{Line:couple-sample,Line:couple-sample-1} of \Cref{Alg:couple}. 
This is formalized by the following notion of witness assignment.

\begin{definition}[witness assignment]\label{definition:cumulative-non-violating-assignment}
Given a run of \Cref{Alg:couple} from $\Couple(\+C\setminus \{c_0\},\+C,\varnothing,\varnothing)$,
the \emph{witness assignment}
is a random partial assignment  $\varsigma$ constructed as follows:
\begin{itemize}
    \item initially, $\varsigma=\varnothing$, the empty assignment;
    \item whenever  a recursive call $\Couple(\+E,\+F,\sigma\land \pi,\tau\land \rho)$ is made in \Cref{Line:couple-assign-return}  of \Cref{Alg:couple}, with $\pi=\vio{c}$ being the unique violating assignment of constraint $c$, update $\varsigma\gets \varsigma\land \rho$;
    \item whenever a recursive call $\Couple(\+E,\+F,\sigma\land \pi,\tau\land \rho)$ is made in \Cref{Line:couple-assign-return-1}  of \Cref{Alg:couple}, with $\rho=\vio{c}$ being the unique violating assignment of constraint $c$, update $\varsigma\gets \varsigma\land \pi$. 
\end{itemize}
\end{definition}

For any witness tree $T$, we define
\[
\vbl(T)\defeq\bigcup\limits_{c\in V(T)}\vbl(c),
\]
which represents the set of all variables involved in the constraints that appear as vertices in $T$.
Based on  \Cref{lemma:witness-tree-size,definition:cumulative-non-violating-assignment}, 
it follows that the witness assignment $\varsigma$ is a partial assignment specified over the variables in $\vbl(T)$, i.e.~$\varsigma\in [q]^{\vbl(T)}$, where $T=T(L)$ is the witness tree of the execution log $L$ that belongs to a run of \Cref{Alg:couple} from $\Couple(\+C\setminus \{c_0\},\+C,\varnothing,\varnothing)$.

The witness tree, together with the witness assignment, fully determines the random choices made by \Cref{Alg:couple}, except for the optimal coupling step at the end of the recursion.

\begin{lemma}\label{lemma:conditional-deterministic}
Let $T$ be any tree in $G_{\Phi}$ rooted at $c_0$, and let $\varsigma\in [q]^{\vbl(T)}$ be a partial assignment. 
Given the witness tree of the execution log being $T$ and the witness assignment being $\varsigma$, 
a run of \Cref{Alg:couple} from $\Couple(\+C\setminus \{c_0\},\+C,\varnothing,\varnothing)$ is fully determined, 
except for the random choices made in \Cref{Line:couple-perfect}.
\end{lemma}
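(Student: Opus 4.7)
My plan is to prove the statement by induction on the depth of recursion in \Cref{Alg:couple}, maintaining the invariant that at each recursive call $\Couple(\+E,\+F,\sigma,\tau)$, the full state $(\+E,\+F,\sigma,\tau)$ is a deterministic function of $(T,\varsigma)$ and the history of prior calls. The base case is trivial since the top-level call always starts at the fixed state $(\+C\setminus\{c_0\},\+C,\varnothing,\varnothing)$.

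At any call with state $(\+E,\+F,\sigma,\tau)$, the tests at \Cref{Line:couple-return-cond} and \Cref{Line:couple-swap-cond} and the lexicographic selection of the smallest $c$ at \Cref{Line:couple-choose} (or \Cref{Line:couple-choose-1}) depend only on the current state and are therefore already determined. When the two pinned sets coincide, the call returns via the optimal coupling at \Cref{Line:couple-perfect}, which is precisely the randomness the statement allows. Otherwise there are two random elements left: the Bernoulli coin flip at \Cref{Line:couple-psat-cond} (or \Cref{Line:couple-psat-cond-1}), and, on the else branch, the non-violating sample $\rho$ at \Cref{Line:couple-sample} (or $\pi$ at \Cref{Line:couple-sample-1}). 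The sampled value is immediate: by \Cref{definition:cumulative-non-violating-assignment}, $\varsigma$ records exactly the non-violating sample drawn at every else branch, so the required draw equals $\varsigma$ restricted to $\vbl(c)$.

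The substantive work is in reading off the coin-flip outcome from $T$. The guiding intuition is that the else branch is precisely the event that appends $\unpin{c}$ to the execution log $L$, and hence as a new vertex of $T$, whereas the psat branch leaves $L$ and $T$ untouched. I will therefore declare the coin flip to be else at the current step exactly when $\unpin{c}\in V(T)$ is consistent with being appended at this position of $L$; concretely, when $\unpin{c}$ has not yet been ``consumed'' by the inductive simulation and the extension of $(\sigma,\tau)$ using $\vio{c}$ on one side and $\varsigma|_{\vbl(c)}$ on the other is consistent with $T$ and $\varsigma$ on all subsequent recursive calls. Otherwise the call is classified as psat.

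The main obstacle, and the step I expect will take the bulk of the argument, is ruling out ambiguity in this classification: since $\sigma$ and $\tau$ can disagree on $\vbl(\unpin{c})$, a different pinned version $\unpin{c}^{\tau'}$ may reappear in $\+F^{\tau'}\setminus\+E^{\sigma'}$ at a later call, so a priori an $\unpin{c}\in V(T)$ might be appended either now or later. To close this gap I would combine two ingredients. First, once $\unpin{c}$ has been appended to $L$, one has $\vbl(\unpin{c})\subseteq\Lambda(\sigma)=\Lambda(\tau)$ thereafter; by the satisfiability invariant from the proof sketch of \Cref{lemma:coupling-correctness}, any subsequent pinned version of $\unpin{c}$ must be satisfied and hence removed from both $\+E^\sigma$ and $\+F^\tau$, so $\unpin{c}$ never reappears in $\+E^\sigma\triangle\+F^\tau$. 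Second, the deterministic minimum-$c$ rule together with the depth ordering in \Cref{lemma:witness-local-total-ordering}\,(\ref{item:witness-local-total-ordering-2}) forces a canonical order on the constraints of $V(T)$; any alternative trajectory in which the current coin flip goes psat but $\unpin{c}$ is later appended would, by the time of that later step, have accumulated a different partial pinning on $\vbl(\unpin{c})$, which is incompatible with the given $\varsigma$. Combining these two points, the coin-flip outcome is uniquely pinned down at every step, closing the induction and yielding the lemma.
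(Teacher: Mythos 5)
Your skeleton (induction on the recursion; the branch tests and the choice of the smallest $c$ are determined by the current state; the non-violating draws are recovered from $\varsigma$) matches the paper, and your first ingredient is correct: once $\unpin{c}$ is logged, all of $\vbl(\unpin{c})$ is pinned, so by the satisfiability invariant no pinned version of it can re-enter the symmetric difference. The gap is in how you decide the coin flip. Your rule is a look-ahead condition (``consistent with $T$ and $\varsigma$ on all subsequent recursive calls''), so the real content becomes a uniqueness claim, and your argument for uniqueness rests on the assertion that an alternative trajectory which takes the satisfied branch now and logs $\unpin{c}$ later would have ``accumulated a different partial pinning on $\vbl(\unpin{c})$, which is incompatible with the given $\varsigma$.'' That step does not hold: $\varsigma$ is a single static assignment on $\vbl(T)$; it records neither which variables were pinned at which moment nor the violating-side values, so two runs that log two intersecting constraints in opposite orders can end with identical $\varsigma$. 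What actually rules out the alternative trajectory is the tree, not the assignment: if in one run $\unpin{c}$ is logged before a constraint $c'$ sharing a then-unpinned variable with it, and in the other run after $c'$, then item (4) of \Cref{lemma:witness-local-total-ordering} forces opposite depth relations between $\unpin{c}$ and $c'$, so the two runs cannot yield the same witness tree $T$.

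The paper avoids the look-ahead entirely by making this tree information explicit: for each $c\in V(T)$ it defines $\mathrm{var}(c)$ as $\vbl(c)$ minus the variables of all strictly shallower intersecting vertices of $T$ (computable from $T$ alone), and declares the else branch exactly when $\unpin{c}\in V(T)$ and the current unpinned variable set of the picked pinned constraint equals $\mathrm{var}(\unpin{c})$. The missing argument you would need in place of the $\varsigma$-incompatibility is: at the true logging step the unpinned set equals $\mathrm{var}(\unpin{c})$ (by the depth properties of \Cref{lemma:witness-local-total-ordering}), whereas if the satisfied branch were taken now, then for $\unpin{c}$ to re-enter the symmetric difference later some variable of the current unpinned set must be pinned in between, so the unpinned set at the later logging step --- and hence $\mathrm{var}(\unpin{c})$ --- is a strict subset of the current one and the test fails. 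With that substitution your induction closes; as written, the decisive disambiguation step is unsupported.
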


\begin{proof}
For each $c\in V(T)$, define $\mathrm{var}(c)$ as the set of variables within $\vbl(c)$,  excluding all  variables within $\vbl(c')$ for any $c'\in V(T)$ such that $\-{dep}(c')<\-{dep}(c)$ in $T$ and $\vbl(c')\cap \vbl(c)\neq \emptyset$.
Note that  $\mathrm{var}(c)$ is determined by the witness tree $T$.

Then, the random choices in \Cref{Line:couple-psat-cond} (and \Cref{Line:couple-psat-cond-1}) can be determined by checking whether $\unpin{c}\in V(T)$ and $\vbl(c)=\mathrm{var}(\unpin{c})$ for the pinned constraint $c\in\+C^*$ picked in \Cref{Line:couple-choose} (and \Cref{Line:couple-choose-1}). Recall that $\unpin{c}$ denotes the original unpinned constraint from $\+C$ corresponding to $c$.

The random assignments generated in \Cref{Line:couple-assign-return,Line:couple-assign-return-1} are recorded by the witness assignment $\varsigma$, and thus can be fully recovered from $\varsigma$.

Therefore, a run of \Cref{Alg:couple} from $\Couple(\+C\setminus \{c_0\},\+C,\varnothing,\varnothing)$ can be deterministically simulated, except for the random assignments generated in \Cref{Line:couple-perfect}.
%
\end{proof}

\subsection{Analysis of the coupling}

In this subsection, we utilize the witness tree and witness assignment defined previously to prove \Cref{theorem:correlation-decay}.


The following definition describes a random process to simulate \Cref{Alg:couple} using both the witness tree and witness assignment,
employing the principle of deferred decisions.

\begin{definition}[$M$-truncated process for simulating \Cref{Alg:couple} with explicitly identified randomness]\label{definition:coupling-equivalent-process}
Let  $\*X\sim \mu_{\+C\setminus \{c_0\}}$ and $\*Y\sim \mu_{\+C}$ be drawn independently beforehand. 
Define the random process 
$$P^{\cp}=\left\{(\+E_t,\+F_t,\sigma_t,\tau_t,T_t,\varsigma_t)\right\}_{t\ge 0}$$ 
starting from the initial state $(\+E_0,\+F_0,\sigma_0,\tau_0,T_t,\varsigma_t)=(\+C\setminus \{c_0\},\+C,\varnothing,\varnothing,\emptyset,\varnothing)$ as follows:

    \begin{enumerate}
        \item If $\+{E}_t^{\sigma_t}=\+F_t^{\tau_t}$ or $|V(T_t)|=M$, the process stops, and $(\+E_t,\+F_t,\sigma_t,\tau_t,T_t,\varsigma_t)$ is the outcome.
        \item Otherwise, if \(\+F_t^{\tau_t} \not\subseteq \+E_t^{\sigma_t}\),
        let \(c\) be the smallest pinned constraint in \(\+F_t^{\tau_t} \setminus \+E_t^{\sigma_t}\). 
        We then set \((\+E_{t+1}, \+F_{t+1}, \sigma_{t+1}, \tau_{t+1}, T_{t+1}, \varsigma_{t+1})\) as
       \[
        \begin{cases}
        \left(\+E_{t}\cup \{c\},\+F_{t},\sigma_{t},\tau_{t},T_{t},\varsigma_t\right)   & 
         \text{if $c$ is satisfied by $\*X$};\\
             \left(\+E_{t},\+F_{t},\sigma_{t}\land \*X_{\vbl(c)},\tau_{t}\land \*Y_{\vbl(c)},T_{t}\oplus \unpin{c},\varsigma_t\land \*Y_{\vbl(c)}\right)  & \text{otherwise}.
        \end{cases}
        \]\label{item:coupling-equivalent-process-2}
        Here, $\unpin{c}\in\+C$ denotes the original unpinned version of $c$, and $T_{t}\oplus \unpin{c}$ is given in \Cref{definition:witness-tree}.
        \item Otherwise, if $\+F_{t}^{\tau_t}\subseteq \+E_{t}^{\sigma_t}$, 
        let $c$ be the smallest pinned constraint in $\+E_{t}^{\sigma_t}\setminus \+F_{t}^{\tau_t}$.
        We then set $(\+E_{t+1},\+F_{t+1},\sigma_{t+1},\tau_{t+1},T_{t+1},\varsigma_{t+1})$ as
        \[
        \begin{cases}
         \left(\+E_{t},\+F_{t}\cup \{c\},\sigma_{t},\tau_{t},T_{t},\varsigma_t\right)   & 
         \text{if $c$ is satisfied by $\*Y$};\\
         \left(\+E_{t},\+F_{t},\sigma_{t}\land \*X_{\vbl(c)},\tau_{t}\land \*Y_{\vbl(c)},T_{t}\oplus \unpin{c},\varsigma_t\land \*X_{\vbl(c)}\right)   & \text{otherwise}.
        \end{cases}
        \]\label{item:coupling-equivalent-process-3}
    \end{enumerate}
Let $\mu^{\cp}$ denote the distribution of the outcome $(\+E^{\cp},\+F^{\cp},\sigma^{\cp},\tau^{\cp},T^{\cp},\varsigma^{\cp})$ of this process, and let $\+L^{\cp}=\supp(\mu^{\cp})$ be its support. Let $\+V^{\cp}$ denote the set of all possible $(\+E,\+F,\sigma,\tau,T,\varsigma)$ such that 
$$\Pr{(\+E,\+F,\sigma,\tau,T,\varsigma)\in P^{\cp}}>0.$$

\end{definition}

\begin{remark}[distinctions between \Cref{Alg:couple} and \Cref{definition:coupling-equivalent-process}]
We note two main distinctions between \Cref{Alg:couple} and \Cref{definition:coupling-equivalent-process}, 
specifically regarding truncation and the explicitly identified randomness using the principle of deferred decisions. 
Both aspects are beneficial for our analyses of the coupling and algorithmic applications.

\begin{itemize}
    \item In \Cref{definition:coupling-equivalent-process}, 
    the process is truncated once the size of the witness tree reaches the threshold $M$, 
    whereas \Cref{Alg:couple} does not incorporate such truncation.
    \item In \Cref{definition:coupling-equivalent-process}, 
    all randomness utilized by the process $P^\cp$ is identified with the two pre-generated random assignments $\*X\sim \mu_{\+C\setminus {c_0}}$ and $\*Y\sim \mu_{\+C}$. 
    In contrast, \Cref{Alg:couple} generates random choices at the moment they are needed. 
    However, by the principle of deferred decisions, \Cref{definition:coupling-equivalent-process} still faithfully simulates \Cref{Alg:couple} (truncated when the size of the witness tree reaches $M$). 
    Thus, properties we proved for \Cref{Alg:couple} (specifically \Cref{lemma:witness-tree-size} and \Cref{lemma:conditional-deterministic}) also hold for the random process constructed in \Cref{definition:coupling-equivalent-process}.
\end{itemize}
\end{remark}

The above observation is formalized by the following lemma that upper bounds the correlation decay of the coupling by the probability of the truncation of the random process $P^{\cp}$. 
The proof of this lemma is similar to that of \cite[Lemma 3.9]{WY24}, and is included for completeness.

\begin{lemma}\label{lemma:hamming-transform}
Assume $\Phi$ is satisfiable.
Let $(X,Y)$ be the output of $\Couple(\+C\setminus \{c_0\},\+C,\varnothing,\varnothing)$, and 
let $(\+E^{\cp},\+F^{\cp},\sigma^{\cp},\tau^{\cp},T^{\cp},\varsigma^{\cp})\sim\mu^{\cp}$ be the outcome of the process $P^{\cp}$. 
Then, we have
\[
\Pr{d_{\mathrm{Ham}}(X,Y)\geq k\cdot M}\leq \Pr{|V(T^{\cp})|= M}.
\]
\end{lemma}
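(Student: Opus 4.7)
\medskip

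\noindent\textbf{Proof plan.} The plan is to couple the (untruncated) execution of $\Couple(\+C\setminus\{c_0\},\+C,\varnothing,\varnothing)$ with the process $P^{\cp}$ via the principle of deferred decisions, and then bound the Hamming distance by the number of vertices added to the witness tree.

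\emph{Step 1: simulate the coupling by $P^{\cp}$.} Draw $\*X\sim\mu_{\+C\setminus\{c_0\}}$ and $\*Y\sim\mu_{\+C}$ in advance as in \Cref{definition:coupling-equivalent-process}. I would run \Cref{Alg:couple} using $\*X,\*Y$ to realize all internal random choices: at \Cref{Line:couple-psat-cond} (resp.\ \Cref{Line:couple-psat-cond-1}) the Bernoulli trial with bias $\mu_{\+E}^{\sigma}(c)$ (resp.\ $\mu_{\+F}^{\tau}(c)$) is realized by testing whether $c$ is satisfied by $\*X$ (resp.\ $\*Y$) conditioned on $\sigma$ (resp.\ $\tau$); at \Cref{Line:couple-sample} and \Cref{Line:couple-sample-1} the samples $\rho\sim\mu_{\+F,\vbl(c)}^{\tau}$ and $\pi\sim\mu_{\+E,\vbl(c)}^{\sigma}$ are realized as $\*Y_{\vbl(c)}$ and $\*X_{\vbl(c)}$, respectively. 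Because the execution only ever conditions on $\*X$ (resp.\ $\*Y$) through satisfied/violated constraints it has already handled, the principle of deferred decisions ensures this is a valid simulation of \Cref{Alg:couple}. Without the truncation condition $|V(T_t)|=M$ in \Cref{definition:coupling-equivalent-process}, the process $P^{\cp}$ would produce exactly the same trajectory as this simulation.

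\emph{Step 2: bound $d_{\-{Ham}}(X,Y)$ by $k\cdot|V(T)|$.} The recursive calls of \Cref{Alg:couple} fall into three types: (a) the base case at \Cref{Line:couple-perfect}, which couples $\*X$ and $\*Y$ identically on $V\setminus\Lambda$; (b) the ``agreeing'' calls at \Cref{Line:couple-psat-return} and \Cref{Line:couple-psat-return-1}, which do not touch $\sigma,\tau$; and (c) the ``disagreeing'' calls at \Cref{Line:couple-assign-return} and \Cref{Line:couple-assign-return-1}, which extend $\sigma$ and $\tau$ on $\vbl(c)$ and append $\unpin{c}$ to the witness tree by \Cref{definition:witness-tree-by-execution-log}. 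Only type (c) can create disagreements between $X$ and $Y$, and each such step touches at most $|\vbl(c)|\le k$ variables. Since the number of type (c) calls equals $|V(T)|$ by \Cref{lemma:witness-tree-size}, we obtain $d_{\-{Ham}}(X,Y)\le k\cdot|V(T)|$ whenever the coupling terminates (which happens almost surely by \Cref{lemma:coupling-correctness}).

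\emph{Step 3: conclude.} On the event $\{d_{\-{Ham}}(X,Y)\ge kM\}$, Step 2 forces $|V(T)|\ge M$ in the untruncated simulation. Under the coupling with $P^{\cp}$, the truncated process stops exactly when the witness tree first reaches size $M$ (if at all), so $|V(T)|\ge M$ in the simulation is equivalent to $|V(T^{\cp})|=M$ in $P^{\cp}$. Taking probabilities yields
\[
\Pr{d_{\-{Ham}}(X,Y)\ge k\cdot M}\le \Pr{|V(T^{\cp})|=M},
\]
as claimed.

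\emph{Expected obstacle.} The only nontrivial point is verifying that the deferred-decisions simulation of Step 1 is actually faithful, i.e.\ that at each query the conditional distribution of $\*X$ (resp.\ $\*Y$) on the relevant variables given the history is exactly $\mu_{\+E}^{\sigma}$ (resp.\ $\mu_{\+F}^{\tau}$). This requires tracking that the history conditioned on so far is precisely the event $\+E\wedge\sigma$ (resp.\ $\+F\wedge\tau$) appearing in the recursive state, which follows by induction on the recursion depth using the well-definedness established in \Cref{lemma:coupling-correctness}. Once this is in place, Steps 2 and 3 are immediate.
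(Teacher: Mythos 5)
Your proposal is correct and follows essentially the same route as the paper: identify the randomness of \Cref{Alg:couple} with pre-drawn $\*X\sim\mu_{\+C\setminus\{c_0\}}$, $\*Y\sim\mu_{\+C}$ via deferred decisions (verified by a top-down induction that the conditional laws are $\mu_{\+E}^{\sigma}$ and $\mu_{\+F}^{\tau}$), bound $d_{\-{Ham}}(X,Y)$ by $k$ times the number of tree-growing recursive calls, and translate $|V(T)|\ge M$ in the untruncated run into $|V(T^{\cp})|=M$ for the truncated process. No gaps.
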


\begin{proof}
By the construction of witness tree in \Cref{definition:witness-tree} and \Cref{lemma:witness-tree-size}, each time a recursive call is made to $\Couple(\+E,\+F,\sigma\land \pi, \tau\land \rho)$ at \Cref{Line:couple-assign-return} or \Cref{Line:couple-assign-return-1} of \Cref{Alg:couple}, at most $k$ variables are assigned into $\sigma$ and $\tau$, and the size of the current witness tree $T=T(L)$ increases by one. 
Furthermore, by \Cref{Line:couple-return} of \Cref{Alg:couple}, the Hamming distance between $X$ and $Y$ is upper bounded by the total number of variables assigned in $\sigma$ and $\tau$ during the recursion $\Couple(\+C\setminus \{c_0\},\+C,\varnothing,\varnothing)$. 
Hence, we have
\[
d_{\mathrm{Ham}}(X,Y)\geq k\cdot M\implies |V(T)|\geq M.
\]
We now argue that the witness tree $T=T(L)$, constructed from the execution log $L$ of a run of \Cref{Alg:couple} from $\Couple(\+C\setminus \{c_0\},\+C,\varnothing,\varnothing)$, can be coupled with the witness tree $T^{\cp}$ constructed during the process $P^{\cp}$ described in \Cref{definition:coupling-equivalent-process}.
Specifically, whenever a recursive call to $\Couple(\+E,\+F,\sigma,\tau)$ is made in \Cref{Alg:couple}, with the current witness tree being $T$ and witness assignment  $\varsigma$, 
the process $P^{\cp}$ moves to a state $(\+E,\+F,\sigma,\tau,T,\varsigma)$. 
Under this coupling, the lemma follows immediately.

It is important to note that this coupling between the two processes, \Cref{Alg:couple} and $P^{\cp}$, does not hold trivially, 
because the randomness used in the construction of $P^{\cp}$ is explicitly identified to the random satisfying assignments:
\[
\*X\sim \mu_{\+C\setminus \{c_0\}}\quad \text{ and }\quad\*Y\sim \mu_{\+C}.
\]
Nevertheless, this perfect coupling between \Cref{Alg:couple} and $P^{\cp}$ can be verified by structural induction, proceeding in the top-down order of recursion, with a strengthened hypothesis:
conditioning on the process $P^{\cp}$ being  at the state $(\+E,\+F,\sigma,\tau,T,\varsigma)$, \Cref{Alg:couple} is at the same state, and it further holds:
\begin{equation}\label{eq:initial-assignment-distribution}
\*X\sim \mu^{\sigma}_{\+E}\quad \text{ and }\quad\*Y\sim \mu^{\tau}_{\+F}.
\end{equation}
With \eqref{eq:initial-assignment-distribution}, the transition probabilities to the next state are identical in both processes. Therefore, the two processes are perfectly coupled. Additionally, it can be verified that \eqref{eq:initial-assignment-distribution} continues to hold for each possible branch of the process $P^{\cp}$. This concludes the proof of the lemma.
\end{proof}


The proof of \Cref{theorem:correlation-decay} now reduces to establishing an upper bound of $2^{-M}$ on the truncation probability $\Pr{|V(T^{\cp})|= M}$.
A crucial step in this process is the following technical lemma, 
which assumes \Cref{condition:main-condition} 
and will also play a key role in the algorithmic implication of the coupling later.


\begin{lemma}\label{lemma:assignment-probability-bound}
Assume $\Phi$ is satisfiable.
Let $\*X\sim \mu_{\+C\setminus \{c_0\}}$ and $\*Y\sim \mu_{\+C}$.
For any subset of variables $S\subseteq V$, 
\begin{align*}
    \forall X\in [q]^{S},
    &&\mu_{\+C\setminus \{c_0\}}(X)
    &=\Pr{\*X_S=X}\leq q^{-|S\cap \vgood|}\cdot (1-\mathrm{e}q^{-(1-\epsilon_1)k})^{-|S|p_1\alpha};\\
    \forall Y\in [q]^{S},
    &&\mu_{\+C}(Y)
    &=\Pr{\*Y_S=Y}\leq q^{-|S\cap \vgood|}\cdot (1-\mathrm{e}q^{-(1-\epsilon_1)k})^{-|S|p_1\alpha}.
\end{align*}
\end{lemma}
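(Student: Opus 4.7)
The plan is to apply the Haeupler--Saha--Srinivasan marginal inequality (\Cref{HSS}) inside a pinned formula in which all bad variables have been fixed. Two observations motivate this reduction: (i) the event $\{\*X_S=X\}$ is contained in $\{\*X_{S\cap\vgood}=X_{S\cap\vgood}\}$, so it suffices to bound the latter marginal; and (ii) after pinning the bad variables, every surviving constraint is a restriction of a good hyperedge and therefore has large support together with a small dependency neighborhood, which makes the Lov\'asz-local-lemma condition tractable. I present the argument for $\mu_\+C$; the identical argument with $\+C$ replaced by $\+C\setminus\{c_0\}$ handles the other bound, since removing $c_0$ only shrinks the dependency graph.

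First, apply the law of total probability to write
\[
\Pr[\*X\sim\mu_\+C]{\*X_{S\cap \vgood}=X_{S\cap \vgood}}\;=\;\mathbb{E}_{\tau}\bigl[\mu_{\Phi^{\tau}}(\*X_{S\cap \vgood}=X_{S\cap \vgood})\bigr],
\]
where $\tau$ is drawn from the marginal of $\mu_\+C$ on $\vbad$. For any non-violating $\tau$, analyze the pinned formula $\Phi^\tau$: by the construction in \Cref{alg:identify-bad} every bad hyperedge $e\in\ebad$ satisfies $e\subseteq \vbad$, so $e$ is automatically satisfied by $\tau$ and disappears; every good hyperedge satisfies $|e\cap \vgood|\ge (1-\epsilon_1)k$ by \Cref{fact:good-vertex-fraction}, so each surviving constraint $c^\tau\in \+C^\tau$ has at least $(1-\epsilon_1)k$ free variables, all lying in $\vgood$ and each of degree at most $p_1\alpha$ in $\Phi$ by \Cref{fact:good-vertex-degree-bound}. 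Consequently the dependency neighborhood of every $c^\tau$ in $\Phi^\tau$ has size at most $k p_1\alpha$. Then verify the LLL condition \eqref{llleq} for $\Phi^\tau$ with the uniform choice $x(c)\defeq \mathrm{e}q^{-(1-\epsilon_1)k}$; using $\+P[\neg c^\tau]\le q^{-(1-\epsilon_1)k}$ together with the neighborhood bound, this reduces to checking $\bigl(1-\mathrm{e}q^{-(1-\epsilon_1)k}\bigr)^{kp_1\alpha}\ge 1/\mathrm{e}$, which follows from the parameter choices in \Cref{condition:main-condition} (with $p_1=6k^7$ and $\epsilon_1=2\eta$, $\eta k\ge 4$).

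Once LLL holds, invoke \Cref{HSS} on the event $A=\{\*X_{S\cap \vgood}=X_{S\cap \vgood}\}$ inside $\Phi^\tau$ to obtain
\[
\mu_{\Phi^\tau}(A)\;\le\; q^{-|S\cap \vgood|}\prod_{c^\tau:\,\vbl(c^\tau)\cap (S\cap \vgood)\neq\emptyset}\bigl(1-x(c^\tau)\bigr)^{-1}\;\le\; q^{-|S\cap \vgood|}\bigl(1-\mathrm{e}q^{-(1-\epsilon_1)k}\bigr)^{-|S|p_1\alpha},
\]
where the number of adjacent constraints is at most $|S\cap\vgood|\cdot p_1\alpha\le |S|\cdot p_1\alpha$ by the good-degree bound. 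Since this inequality is uniform in $\tau$, taking expectation yields $\mu_\+C(X)\le q^{-|S\cap\vgood|}\bigl(1-\mathrm{e}q^{-(1-\epsilon_1)k}\bigr)^{-|S|p_1\alpha}$. The main obstacle is the LLL verification in the second paragraph: one has to exploit the careful interplay between the good-degree bound $p_1\alpha=6k^7\alpha$ and the exponent $(1-\epsilon_1)k$ arising from \Cref{fact:good-vertex-fraction}; the remaining steps are a clean reduction to the good-variable sub-structure followed by the standard local-lemma marginal bound.
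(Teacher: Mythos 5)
Your proof is correct and takes essentially the same route as the paper's: reduce to the event on $S\cap\vgood$, condition on (pin) the assignment of $\vbad$, and apply \Cref{HSS} with $x(c)=\mathrm{e}q^{-(1-\eps_1)k}$ using \Cref{fact:good-vertex-degree-bound,fact:good-vertex-fraction}. The only remark is that your explicit verification of the local lemma condition, $(1-\mathrm{e}q^{-(1-\eps_1)k})^{kp_1\alpha}\ge \mathrm{e}^{-1}$, really relies on the density bound $\alpha\le q^k/(qk)^c$ used in the main theorems rather than the bare $\alpha\le q^k$ of \Cref{condition:main-condition}; the paper leaves this check implicit as well, so this is a shared (and harmless, in the intended regime) omission rather than a gap specific to your argument.
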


\begin{proof}
    We only prove the first inequality, and the second one follows similarly. For any $\tau\in [q]^{\vbad}$ with $\Pr{\*X_{\vbad}=\tau}>0$, the following bound holds:
\begin{equation}\label{eq:good-vertices-assignment-probability-bound}
\Pr{\*X_{\vbl(T)\cap \vgood}=X_{\vbl(T)\cap \vgood}\mid \*X_{\vbad}=\tau}\leq q^{-|\vbl(T)\cap \vgood|}\cdot (1-\mathrm{e}q^{-(1-\epsilon_1)k})^{-|\vbl(T)|p_1\alpha}.
\end{equation}
This is because, given the pinning $\*X_{\vbad}=\tau$, the resulting pinned formula has each variable with a degree of at most $p_1 d$ (by \Cref{fact:good-vertex-degree-bound}) and each constraint containing at least $(1-\epsilon_1)k$ variables (by \Cref{fact:good-vertex-fraction}).
Applying \Cref{HSS} by setting the parameter $x(c)=\mathrm{e}q^{-(1-\epsilon_1)k}$ for each constraint $c$ in the pinned formula, we derive the bound in~\eqref{eq:good-vertices-assignment-probability-bound}.

Now,  we can bound the probability of $\*X_{\vbl(T)}= X$ as follows:
\begin{align*}
\Pr{\*X_{\vbl(T)}= X}= &\sum\limits_{\tau\in [q]^{\vbad}}\Pr{\*X_{\vbad}=\tau}\cdot \Pr{\*X_{\vbl(T)\cap \vgood}=X_{\vbl(T)\cap \vgood}\mid \*X_{\vbad}=\tau}\\
(\text{by \eqref{eq:good-vertices-assignment-probability-bound}})\quad &\leq q^{-|\vbl(T)\cap \vgood|}\cdot (1-\mathrm{e}q^{-(1-\epsilon_1)k})^{-|\vbl(T)|p_1\alpha}.\qedhere
\end{align*}
\end{proof}




We are now ready to prove \Cref{theorem:correlation-decay}, the main theorem of this section.

\begin{proof}[Proof of \Cref{theorem:correlation-decay}]
Consider the outcome $(\+E^{\cp},\+F^{\cp},\sigma^{\cp},\tau^{\cp},T^{\cp},\varsigma^{\cp})\sim\mu^{\cp}$ of the random process $P^{\cp}$ in \Cref{definition:coupling-equivalent-process}.
We first show that 
for any rooted tree $T$ in $G_{\Phi}$ and any $\varsigma\in [q]^{\vbl(T)}$, we have
\begin{align}\label{eq:witness-tree-conditioned-probability-bound}
\Pr{T^{\cp}=T\land \varsigma^{\cp}=\varsigma}\leq q^{-2|\vbl(T)\cap \vgood|}\cdot \left(1-\mathrm{e}q^{-(1-\eps_1)k}\right)^{-2|\vbl(T)|p_1\alpha}.
\end{align}
Recall the pre-generated $\*X\sim  \mu_{\+C\setminus \{c_0\}}$ and $\*Y\sim \mu_{\+C}$, used in the construction of the process $P^{\cp}$ and $\varsigma^{\cp}$.
By \Cref{lemma:conditional-deterministic},  there are two partial assignments $X,Y\in [q]^{\vbl(T)}$ determined by $T$ and $\varsigma$ such that 
\begin{equation}\label{eq:assignment-implication}
T^{\cp}=T\land \varsigma^{\cp}=\varsigma\implies \*X_{\vbl(T)}= X\land \*Y_{\vbl(T)}= Y.
\end{equation}
Since $\*X$ and $\*Y$ are independent, the inequality in \eqref{eq:witness-tree-conditioned-probability-bound} follows from \eqref{eq:assignment-implication} and \Cref{lemma:assignment-probability-bound}.

We proceed to prove \Cref{theorem:correlation-decay}.
According to \Cref{definition:coupling-equivalent-process},
all nonempty witness trees $T^{\cp}$ must be rooted at $c_0$. 
Let $\=T^{c_0}_M$ denote the set of witness trees of size $M$ rooted at $c_0$. 
Then, we have
\begin{align}
&\Pr{d_{\-{Ham}}(X,Y)\geq k\cdot M} \nonumber\\
(\text{by \Cref{lemma:hamming-transform}})\quad \leq &  \Pr{|V(T^{\cp})|= M} \nonumber\\
 \leq & \sum\limits_{T\in \=T^{c_0}_M}\sum\limits_{\varsigma\in [q]^{\vbl(T)}}  \Pr{T^{\cp}=T\land \varsigma^{\cp}=\varsigma} \nonumber\\
(\text{by \eqref{eq:witness-tree-conditioned-probability-bound}})\quad \leq & \sum\limits_{T\in \=T^{c_0}_M}\sum\limits_{\varsigma\in [q]^{\vbl(T)}}  q^{-2|\vbl(T)\cap \vgood|}\cdot \left( 1-\mathrm{e}q^{-(1-\eps_1)k} \right)^{-2|\vbl(T)|p_1\alpha} \nonumber\\
\leq &  \sum\limits_{T\in \=T^{c_0}_M}  q^{|\vbl(T)|}\cdot q^{-2|\vbl(T)\cap \vgood|}\cdot \left( 1-\mathrm{e}q^{-(1-\eps_1)k} \right)^{-2|\vbl(T)|p_1\alpha}. \label{eq:used-by-LP}
\end{align}
Note that for $T\in \=T^{c_0}_M$, 
the size of ${\vbl(T)}$ can be easily upper bounded by $kM$. 
We then lower bound $\abs{\vbl(T)\cap \vgood}$ as follows:
\begin{align*}
\abs{\vbl(T)\cap \vgood}
&=\abs{\vbl(T\cap \egood)\cap \vgood}\\
(\text{by \Cref{fact:good-vertex-fraction}})\quad  &\geq \abs{\vbl(T\cap \egood)}-\eps_1kM\\
(\text{by Properties~\ref{property:edge-expansion} and \ref{property:bounded-bad-fraction}})\quad &\geq (1-\eta)(1-\eps_2)kM-\eps_1kM.
\end{align*}
Therefore, we have
\begin{align*}
&\Pr{d_{\-{Ham}}(X,Y)\geq k\cdot M}\\
 \leq{} & \sum\limits_{T\in \=T^{c_0}_M} q^{-(2(1-\eta)(1-\eps_2)-(1+2\eps_1))kM}\cdot \left( 1-\mathrm{e}q^{-(1-\eps_1)k} \right)^{-2kMp_1\alpha}\\
(\text{by \Cref{property:bounded-neighbourhood-growth}})\quad \leq{} &  n^3\cdot (p_2\alpha)^{M} \cdot q^{-(2(1-\eta)(1-\eps_2)-(1+2\eps_1))kM} \cdot ( 1-\mathrm{e}q^{-(1-\eps_1)k} )^{-2kMp_1\alpha}\\
\leq{} & \left( 
        2^{3 \log n / M} 
    \cdot p_2\alpha 
    \cdot q^{-(2(1-\eta)(1-\eps_2)-(1+2\eps_1))k} 
    \cdot \left( 1-\mathrm{e} q^{-(1-\eps_1)k} \right)^{-2k p_1\alpha} 
    \right)^M.
\end{align*}
We upper bound each term in the bracket separately.
\begin{itemize}
    \item Since $M \ge \log n$, it holds $2^{3 \log n / M} \le 8$;
    \item Since $p_2 = \mathrm{e} k^2$, it holds $p_2\alpha = \mathrm{e} k^2 \alpha$;
    \item Since $\eta = \Theta(1/k)$, $\eps_1 = \Theta(1/k)$ and $\eps_2 = \Theta(1/k)$, we deduce that
    \begin{align*}
        2(1-\eta)(1-\eps_2)-(1+2\eps_1)
        \ge 1 - 2(\eta + \eps_1 + \eps_2) = 1 - \Theta\left( \frac{1}{k} \right),
    \end{align*}
    and hence it holds 
    \begin{align*}
        q^{-(2(1-\eta)(1-\eps_2)-(1+2\eps_1))k}
        \le q^{-k + O(1)};
    \end{align*}
    \item Since $\eps_1 = \Theta(1/k)$ and $p_1 = \Theta(k^7)$, for $k$ sufficiently large it holds
    \begin{align*}
        \left( 1-\mathrm{e} q^{-(1-\eps_1)k} \right)^{-2k p_1\alpha}
        &= \exp\left( \Theta\left( \mathrm{e} q^{-(1-\eps_1)k} \cdot 2k p_1\alpha \right) \right)\\
        &= \exp\left( \Theta\left( q^{-k + O(1)} k^8 \alpha \right) \right)\\
        &= O(1),
    \end{align*}
    assuming $\alpha \le \frac{q^k}{(q k)^c}$ for a sufficiently large universal constant $c \ge 8$.
\end{itemize}
Combining everything above, we conclude with
\begin{align*}
    \Pr{d_{\-{Ham}}(X,Y)\geq k\cdot M}
    \le \left( 8 \mathrm{e} k^2 \alpha \cdot q^{-k + O(1)} \cdot O(1) \right)^M
    \le  2^{-M},
\end{align*}
where we assume $\alpha \le \frac{q^k}{(q k)^c}$ for some sufficiently large universal constant $c>0$.
\end{proof}

\subsection{Proofs of replica symmetry and non-reconstruction}\label{sec:connectivity}

In this subsection, we leverage the coupling analysis from previous sections to explore the correlation decay properties for random CSP formulas.  
Specifically, we will establish the properties of replica symmetry (\Cref{theorem:replica-symmetry}) and non-reconstruction (\Cref{theorem:non-reconstruction}) for random CSPs at the considered densities.
Additionally, we will examine a connectivity property of the solution space, known as the \emph{looseness} property (which will be formally defined later in \Cref{definition:looseness}).

It is important to note that while these theorems are stated  in the context of random $k$-SAT, 
our proofs are applicable to a broader class of random CSPs, including random hypergraph colorings.



\begin{proof}[Proof of Theorems \ref{theorem:replica-symmetry} and \ref{theorem:non-reconstruction}]
Recall the decay of correlation property established in \Cref{theorem:correlation-decay}, which implies both replica symmetry and non-reconstruction.

We apply the procedure $\Couple(\+E,\+F,\sigma,\tau)$ described in \Cref{Alg:couple} to construct a coupling between two instances obtained from the same set of constraints $\+C$,  differing only in the pinning on one variable.
Specifically, we consider the coupling procedure $\Couple(\+E,\+F,\sigma,\tau)$ with the following initial states: 
    \begin{itemize}
        \item The initial set of constraints are $\+E=\+F=\+C$;
        \item The initial assignments $\sigma,\tau$ are both specified on just $v$, with $\sigma(v)=x_1$ and $\tau(v)= x_2$.
    \end{itemize}
The key proofs for the coupling procedure in \Cref{sec:coupling} apply to this setting.
In particular, the correctness follows directly from the proof of \Cref{lemma:coupling-correctness}. 
Thus, $\Couple(\+E,\+F,\sigma,\tau)$ returns $(X,Y)$ such that $X\sim \mu^{\sigma}_{\+C}$ and $Y\sim \mu^{\tau}_{\+C}$.
Since all pinned constraints in the initial discrepancy set $\+E^{\sigma}\triangle \+F^{\tau}$ must include $v$, they are connected in $G_{\Phi}$. Hence, the proof for \Cref{lemma:witness-tree-size} holds in this case as well.
As a result, following the same reasoning as in \Cref{theorem:correlation-decay}, 
we can establish that for the pair $(X,Y)$ returned by this $\Couple(\+E,\+F,\sigma,\tau)$, we have
\begin{equation}\label{eq:modified-correlation-decay}
    \Pr{d_{\-{Ham}}(X,Y)\geq k\cdot M}\leq d\cdot 2^{-M}.
\end{equation}
The extra factor of $d$ in the probability bound comes from the need to bound the number of witness trees rooted at a constraint containing $v$, rather than at a single constraint $c_0$. 
While \Cref{condition:main-condition} technically holds with width $k-1$ after pinning one variable,  
the proof of \Cref{theorem:correlation-decay} is robust enough to proceed with this slightly weakened \Cref{condition:main-condition}, thereby implying \eqref{eq:modified-correlation-decay}.

Finally, we conclude the proof:

\begin{itemize}
    \item \Cref{theorem:replica-symmetry} follows from the exponential decay of correlation established in \eqref{eq:modified-correlation-decay};
    \item \Cref{theorem:non-reconstruction} also follows from \eqref{eq:modified-correlation-decay}, as with high probability, the discrepancy at $v$ does not affect the assignment on $\bar{B}_H(v,r)$ when $r$ is large enough.
    \qedhere
\end{itemize}
\end{proof}



Recent work has focused on exploring the solution space of random $k$-SAT through various notions of connectivity. 
One way to characterize connectivity of the solution space is looseness considered in ~\cite{achlioptas2008algorithmic,chen2024fast}, basically saying one can obtain another solution with some variable flipped without overall flipping too many variables.

\begin{definition}[looseness]\label{definition:looseness}
Let $\Phi=(V,C)$ be a SAT formula with $|V|=n$. 
A variable $v\in V$ is said to be $M$-\emph{loose} with respect to a satisfying assignment $\sigma\in \Omega_{\Phi}$ if
there exists another satisfying assignment $\tau\in \Omega_{\Phi}$ such that $\tau(v)\neq \sigma(v)$ and $d_{\mathrm{Ham}}(\sigma,\tau)\leq M$.

For a random $k$-SAT formula $\Phi\sim \Phi(k,n,\lfloor\alpha n \rfloor)$, we say that $\Phi$ is $M$-\emph{loose} if, 
with high probability over the pair $(\Phi,\sigma)$, where $\sigma\sim \mu_{\Phi}$, 
all variables $v\in V$ are $M$-loose with respect to $\sigma$.
\end{definition}


Looseness is conjectured to hold for random $k$-SAT up to the clustering threshold $\alpha_{\mathrm{clust}} \approx 2^k (\ln k) / k$ \cite[Conjecture 1]{achlioptas2008algorithmic}.
Here, we prove looseness at the considered densities.
\begin{theorem}[looseness of random $k$-SAT]\label{theorem:looseness}
Under the condition of \Cref{theorem:k-SAT-main}, the random $k$-SAT formula $\Phi(k,n,\lfloor\alpha n \rfloor)$ is $(\poly(k)\log n)$-loose.
\end{theorem}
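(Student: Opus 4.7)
The plan is to derive looseness directly from the correlation-decay coupling in \eqref{eq:modified-correlation-decay}, which was already developed in the proofs of replica symmetry and non-reconstruction. Fix $\Phi$ satisfying \Cref{condition:main-condition}. For each variable $v \in V$ and any pair of values $x_1 \neq x_2 \in [q]$ with $\mu_{\Phi,v}(x_1), \mu_{\Phi,v}(x_2) > 0$, let $\sigma_i$ denote the partial assignment $v \mapsto x_i$ for $i = 1, 2$. The procedure $\Couple(\+C, \+C, \sigma_1, \sigma_2)$ then returns $(X, Y)$ with $X \sim \mu_{\+C}^{\sigma_1}$, $Y \sim \mu_{\+C}^{\sigma_2}$, and
\[
\Pr{d_{\-{Ham}}(X, Y) \geq kM} \leq \Delta \cdot 2^{-M}
\]
for every $\log n \leq M \leq \rho m$, where $\Delta \leq 4k\alpha + 6\log n$ is the maximum degree of $H_\Phi$ from \Cref{property:maximum-degree}.

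Given this, the argument reduces to a union bound. Marginalizing over the above coupling, the event ``no satisfying $\tau$ with $\tau(v) = x_2$ is within Hamming distance $kM$ of $X$'' occurs only when $\Pr{d_{\-{Ham}}(X, Y) > kM \mid X} = 1$, whose $X$-probability is at most $\Delta \cdot 2^{-M}$. Consequently, for $\mu_{\+C}^{\sigma_1}$-almost every $\sigma$ one can take $\tau \defeq Y$ as a satisfying assignment with $\tau(v) = x_2 \neq \sigma(v)$ and $d_{\-{Ham}}(\sigma, \tau) \leq kM$, witnessing that $\sigma$ is $(kM)$-loose at $v$. Choosing $M = C(k + \log n)$ for a sufficiently large constant $C$ makes this failure probability at most $n^{-3}$. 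Averaging over $x_1$ weighted by $\mu_{\Phi,v}(x_1)$ and then union-bounding over the $n$ variables $v \in V$ yields the desired $(\poly(k)\log n)$-looseness for all variables simultaneously, with total failure probability $o(1)$.

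The main obstacle I expect is establishing a \emph{non-freezing} property: with high probability over $\Phi$, for every $v$ and every $x \in [q]$ one has $\mu_{\Phi,v}(x) > 0$, so that a valid $x_2 \neq x_1$ is always available to invoke $\Couple$. I would establish this via a Lov\'{a}sz local lemma argument applied to the pinned formula $\Phi^{v \mapsto x}$: pinning a single variable shrinks each surviving constraint's width by at most one and does not change the bounded-degree structure guaranteed by \Cref{property:maximum-degree} and \Cref{fact:good-vertex-degree-bound}. Hence the LLL condition \eqref{llleq} continues to hold on $\Phi^{v \mapsto x}$ at density $\alpha \leq q^k / \poly(k, q)$ with ample slack, certifying that $\Phi^{v \mapsto x}$ is satisfiable and therefore $\mu_{\Phi,v}(x) > 0$. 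With non-freezing in place, the theorem follows by assembling the coupling bound \eqref{eq:modified-correlation-decay} with the union bound sketched above.
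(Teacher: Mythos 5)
Your main estimate is essentially the paper's own argument: you invoke the variable-pinning version of the coupling bound \eqref{eq:modified-correlation-decay}, extract from the coupling (via the Markov-type step ``the no-nearby-solution event forces $\Pr{d_{\-{Ham}}(X,Y)>kM\mid X}=1$'') the existence of a satisfying assignment within distance $kM$ that disagrees at $v$, and union bound over the $n$ variables with $M=\Theta(k+\log n)$. The paper packages the same content as a modified run of the process of \Cref{definition:coupling-equivalent-process} with $\*X$ fixed to the given solution $\sigma^{\-{in}}$ and $\*Y\sim\mu^{\tau_0}_{\+C}$, but the probabilistic content and the bookkeeping (the extra degree factor, $kM=\poly(k)\log n$, failure probability $o(1)$) coincide with yours, and those parts of your proposal are fine.

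The genuine gap is the auxiliary ``non-freezing'' step on which you explicitly rest the conclusion. You claim the LLL condition \eqref{llleq} ``continues to hold'' on $\Phi^{v\mapsto x}$ because of \Cref{property:maximum-degree} and \Cref{fact:good-vertex-degree-bound}; this is false at the densities considered. Condition \eqref{llleq} does not hold even for $\Phi$ itself: \Cref{property:maximum-degree} only gives $\Delta\le 4k\alpha+6\log n$, and with high probability the random formula contains variables of degree $\omega(1)$ (of order $\log n/\log\log n$), so some clauses have dependency neighborhoods that are cliques of size $D\gg 2^{k}$; for $D$ equiprobable atomic bad events forming a clique, \eqref{llleq} forces $q^{-(k-1)}\le \max_{x} x(1-x)^{D-1}\approx 1/(\-e D)$, which fails for every choice of $x(\cdot)$ once $D$ exceeds $\-e\,q^{k-1}$. \Cref{fact:good-vertex-degree-bound} cannot rescue this, since it bounds only the degrees of \emph{good} vertices: the paper's own LLL applications (\Cref{lemma:assignment-probability-bound}, via \eqref{eq:good-vertices-assignment-probability-bound}) are always made \emph{after} pinning the bad vertices to values taken from a satisfying assignment, which controls marginals of good variables only and says nothing about the high-degree (bad) ones. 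For Boolean $k$-SAT, positivity of $\mu_{\Phi,v}(x)$ for the flipped value is literally part of the content of looseness (a frozen variable is not loose), so this step cannot be waved away, and your proposed direct-LLL proof of it would fail. (The paper's write-up is itself silent on this point — it simply takes $\*Y\sim\mu^{\tau_0}_{\+C}$, implicitly assuming the conditional distribution is well-defined — so your instinct to isolate the issue is sound, but the justification you give for it does not work; a correct treatment needs either the pin-bad-then-LLL argument restricted to good variables together with a separate argument for bad variables, or some other mechanism.)
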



\begin{proof}
We will prove \Cref{theorem:looseness}, 
by modifying the random process introduced in \Cref{definition:coupling-equivalent-process} to find a local neighbor of a solution. The modified process proceeds as follows.

The input consists of an atomic CSP formula $\Phi=(V,[q],\+C)$, a solution $\sigma^{\-{in}}\in \Omega_{\Phi}$, and a variable $v\in V$.  The process  produces an output assignment $\sigma^{\-{out}}\in \Omega_{\Phi}$ as described below:
\begin{enumerate}
    \item Choose an arbitrary value $x\in [q]\setminus\left\{\sigma^{\-{in}}(v)\right\}$.
    \item Run the random process in \Cref{definition:coupling-equivalent-process} with the following alterations:
\begin{itemize}
    \item The initial sets of constraints are $\+E_0=\+F_0=\+C$, and the initial partial assignments are both specified only at $v$, with $\sigma_0(v)=\sigma^{\-{in}}(v)$ and $\tau_0(v)=x$;
    \item Set $\*X$ as $\sigma^{\-{in}}$ and $\*Y\sim \mu_{\+C}^{\tau_0}$. That is, $\*Y\in \Omega^{\+C}$ is chosen uniformly at random conditioned on $\*Y(v)=x$.
    \end{itemize}
     Let $(\bm{\+E},\bm{\+F},\bm{\sigma},\bm{\tau},\bm{T},\bm{\varsigma})$ be the outcome of the random process. \label{item:random-process-looseness-2}
     \item If the outcome $(\bm{\+E},\bm{\+F},\bm{\sigma},\bm{\tau},\bm{T},\bm{\varsigma})$ satisfy that $|\bm{T}|=M$, 
          then let $\sigma^{\-{out}}\gets \sigma^{\-{in}}$. Otherwise, update  $\sigma^{\text{in}}$  by changing the values of the  variables assigned in $\bm{\tau}$ to $\bm{\tau}$, i.e.~set $\sigma^{\-{out}}\gets \bm{\tau}\land \sigma^{\-{in}}_{V\setminus\Lambda(\bm{\tau})}$.  \label{item:random-process-looseness-3}
\end{enumerate}

We can make the following observations about this process:
\begin{itemize}
\item Similar to \Cref{Alg:couple} or \Cref{definition:coupling-equivalent-process}, 
this random process  is an idealized algorithm, as it requires sampling from the non-trivial distribution $\mu_{\+C}^{\tau_0}$. The process is constructed to prove the looseness of random CSP formulas (\Cref{theorem:looseness}).
    \item In \Cref{item:random-process-looseness-2}, 
    the sets of pinned constraints $\+E^{\sigma}$ and $\+F^{\tau}$ may have discrepancies at the constraints involving $v$. 
    Since these constraints are connected in $G_{\Phi}$, the proof for \Cref{lemma:witness-tree-size} hold in this setting as well.
    \item In \Cref{item:random-process-looseness-3}, if the outcome $(\bm{\+E},\bm{\+F},\bm{\sigma},\bm{\tau},\bm{T},\bm{\varsigma})$ satisfies $|\bm{T}|<M$, then by \Cref{definition:coupling-equivalent-process}, we must have $\bm{\+E}^{\bm{\sigma}}=\bm{\+F}^{\bm{\tau}}$. Since $\*X=\sigma^{\-{in}}$, the output assignment $\sigma^{\-{out}}$ is a satisfying assignment with $\sigma^{\-{out}}(v)\neq \sigma^{\-{in}}(v)$ and $d_{\-{Ham}}(\sigma^{\-{in}},\sigma^{\-{out}})\leq kM$. 
\end{itemize}

Finally, \Cref{theorem:looseness}  follows from applying the same argument used in the proofs of  \Cref{theorem:replica-symmetry,theorem:non-reconstruction}, by invoking \Cref{theorem:correlation-decay} to the setting where the initial instances differ in one variable rather than in one constraint.
\end{proof}

\section{Linear program for counting and sampling}\label{sec:lp}
In this section, we introduce a linear programming approach that translates the coupling result from the previous section into algorithms for counting and sampling, and prove \Cref{theorem:CSP-main}. 
Similar to the coupling, this linear program is adapted from the one in \cite{WY24}, with key modifications to accommodate the criticality of random instances.


\subsection{Marginal probabilities}

We introduce the marginal probabilities associated with the coupling procedure, which correspond to the variables of the linear program.

Consider an atomic CSP formula $\Phi=(V,[q],\+C)$, and let $c_0\in \+C$ be an arbitrary constraint. Fix an integer $M\ge 1$. 
Recall the random process $P^{\cp}=P^{\cp}_M=\{(\+E_t,\+F_t,\sigma_t,\tau_t,T_t,\varsigma_t)\}_{t\ge 0}$ and its outcome distribution $\mu^{\cp}=\mu^{\cp}_M$, along with their supports $\+V^{\cp}$ and $\+L^{\cp}$, as constructed in \Cref{definition:coupling-equivalent-process}.
%
%

We define a family of marginal probabilities induced by the random process $P^{\cp}$.

\begin{definition}[marginal probabilities]\label{definition:imaginary-sampler-quantities}
Let ${X}$ and ${Y}$ be generated as follows:
\begin{itemize}
\item draw $(\bm{\+E},\bm{\+F},\bm{\sigma},\bm{\tau},\bm{T},\bm{\varsigma})\sim\mu^{\cp}$;
\item draw ${X}\sim \mu^{\bm{\sigma}}_{\bm{\+E}}$, and similarly ${Y}\sim \mu^{\bm{\tau}}_{\bm{\+F}}$.
\end{itemize}
For each  $(\+E,\+F,\sigma,\tau,T,\varsigma)\in\+V^{\cp}$, 
we define the following pair of marginal probabilities:
      \[p_{(\+E,\+F,\sigma,\tau,T,\varsigma)}^X\defeq\Pr{(\+E,\+F,\sigma,\tau,T,\varsigma)\in P^{\cp}\mid {X}=\bm{x}},
      \]
      \[p_{(\+E,\+F,\sigma,\tau,T,\varsigma)}^Y\defeq\Pr{(\+E,\+F,\sigma,\tau,T,\varsigma)\in P^{\cp}\mid {Y}=\bm{y}}.
    \]
where $\bm{x},\bm{y}\in[q]^V$ are arbitrary assignments satisfying $\+E\land\sigma$ and $\+F\land\tau$, respectively.
\end{definition}

Note that each marginal probability $p_{(\+E,\+F,\sigma,\tau,T,\varsigma)}^X$ is defined by conditioning on $X$ being an arbitrary assignment  $\bm{x}\in[q]^V$  that satisfies  $\+E\land\sigma$ (and similarly for each $p_{(\+E,\+F,\sigma,\tau,T,\varsigma)}^Y$).
The well-definedness of these probabilities is ensured by the following proposition.

\begin{proposition}\label{lemma:quantities-well-defined}
Assume $\Phi$ is satisfiable. Fix any $(\+E,\+F,\sigma,\tau,T,\varsigma)\in\+V^{\cp}$. 
The following sets are nonempty:
\begin{align*}
\Omega^{\+E\land\sigma}
&\defeq\{\pi\in [q]^V\mid \pi\text{ satisfies }\+E\land\sigma\},\\
\Omega^{\+F\land\tau}
&\defeq\{\pi\in [q]^V\mid \pi\text{ satisfies }\+F\land\tau\}.
\end{align*}
Furthermore, for any $\bm{x},\bm{x}'\in \Omega^{\+E\land\sigma}$ and $\bm{y},\bm{y}'\in \Omega^{\+F\land\tau}$, it holds that
\begin{align*}
\Pr{(\+E,\+F,\sigma,\tau,T,\varsigma)\in P^\cp\mid {X}=\bm{x}}&=\Pr{(\+E,\+F,\sigma,\tau,T,\varsigma)\in P^\cp\mid {X}=\bm{x}'},\\
\Pr{(\+E,\+F,\sigma,\tau,T,\varsigma)\in P^\cp\mid {Y}=\bm{y}}&=\Pr{(\+E,\+F,\sigma,\tau,T,\varsigma)\in P^\cp\mid {Y}=\bm{y}'}.
\end{align*}
This means that $p_{(\+E,\+F,\sigma,\tau,T,\varsigma)}^{X}$ and $p_{(\+E,\+F,\sigma,\tau,T,\varsigma)}^{Y}$ are well-defined. Moreover, it holds that
\begin{equation}\label{eq:real-quantities}
p_{(\+E,\+F,\sigma,\tau,T,\varsigma)}^{X}=\mu_{\+C}(\+F\land \tau) \quad\text{ and } \quad p_{(\+E,\+F,\sigma,\tau,T,\varsigma)}^{Y}=\mu_{\+C\setminus \{c_0\}}(\+E\land \sigma).
 \end{equation}
\end{proposition}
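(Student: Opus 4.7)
The plan is to establish a strong characterization of the event $\{(\+E,\+F,\sigma,\tau,T,\varsigma)\in P^{\cp}\}$: since the trajectory of $P^{\cp}$ is a deterministic function of the pre-sampled pair $(\*X,\*Y)$, I will show that the set of pairs $(\bm{x},\bm{y})$ whose associated run passes through $(\+E,\+F,\sigma,\tau,T,\varsigma)$ is exactly the product $\Omega^{\+E\land\sigma}\times\Omega^{\+F\land\tau}$. Given this, the proposition is immediate: nonemptiness of the two factors follows because $(\+E,\+F,\sigma,\tau,T,\varsigma)\in\+V^{\cp}$ forces some compatible $(\bm{x},\bm{y})$ in the joint support to exist; the independence of $\*X$ and $\*Y$ then gives
\[
\Pr{(\+E,\+F,\sigma,\tau,T,\varsigma)\in P^{\cp}\mid \*X=\bm{x}}=\mathbf{1}[\bm{x}\in\Omega^{\+E\land\sigma}]\cdot\Pr{\*Y\in\Omega^{\+F\land\tau}},
\]
which is constant for $\bm{x}\in\Omega^{\+E\land\sigma}$ and equals $\mu_{\+C}(\+F\land\tau)$ because $\*Y\sim\mu_{\+C}$; the identity for $p^{Y}_{(\+E,\+F,\sigma,\tau,T,\varsigma)}$ follows symmetrically.

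The forward direction of the characterization is proven by induction on the step index $t$ in the top-down order of the recursion, mirroring the induction in the proof sketch of \Cref{lemma:coupling-correctness}. The base case $t=0$ is immediate: the initial state $(\+C\setminus\{c_0\},\+C,\varnothing,\varnothing,\emptyset,\varnothing)$ is traversed by every $(\bm{x},\bm{y})$ in the joint support of $(\*X,\*Y)$, and such $\bm{x}$ lies in $\Omega^{\+C\setminus\{c_0\}}=\Omega^{\+E_0\land\sigma_0}$ by the definition of $\mu_{\+C\setminus\{c_0\}}$ (and likewise $\bm{y}\in\Omega^{\+F_0\land\tau_0}$). For the inductive step, every transition preserves the invariants: in the branch where we add $c$ to $\+E_t$, this happens exactly because $\bm{x}$ satisfies $c$; in the branch where we pin using $\bm{x}_{\vbl(c)}$ and $\bm{y}_{\vbl(c)}$, the new $\sigma_{t+1}$ (resp.\ $\tau_{t+1}$) is by construction consistent with $\bm{x}$ (resp.\ $\bm{y}$). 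The analogous argument applies to the symmetric branches of step (3).

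The converse direction is the main technical step. Given any $(\bm{x},\bm{y})\in\Omega^{\+E\land\sigma}\times\Omega^{\+F\land\tau}$, I will show by induction on $t$ that its trajectory coincides step-by-step with a canonical trajectory $(\bm{x}^{*},\bm{y}^{*})$ realizing $(\+E,\+F,\sigma,\tau,T,\varsigma)\in\+V^{\cp}$. The key observation is that $\+E, \+F, \sigma, \tau$ grow monotonically along any trajectory, and the ``smallest first'' rule in $P^{\cp}$ makes the choice of the processed constraint $c$ at each step a deterministic function of the current four-tuple. Hence each branch decision is forced by data already encoded in the terminal $(\+E,\+F,\sigma,\tau)$: if the canonical run added $c$ to $\+E$, then $c$ appears in the terminal $\+E$, so any $\bm{x}\in\Omega^{\+E\land\sigma}$ satisfies $c$ and follows the same branch; if the canonical run pinned $c$ in step (2), then $\sigma|_{\vbl(c)}=\vio{c}$ is recorded in the terminal $\sigma$, forcing $\bm{x}_{\vbl(c)}=\vio{c}$, while the value of $\bm{y}_{\vbl(c)}$ is dictated by the extension recorded in the terminal $\tau$. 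The main obstacle is ensuring that the $T$ and $\varsigma$ components of the state are also recovered: this holds because $T$ and $\varsigma$ are built deterministically from the ordered sequence of pin transitions, and this sequence is itself uniquely determined by the monotone evolution of $(\+E,\+F,\sigma,\tau)$ together with the ``smallest first'' tiebreak. With the characterization in hand, the remaining claims of the proposition follow from the independence of $\*X$ and $\*Y$ as outlined above.
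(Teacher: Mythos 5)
Your core argument is correct and is essentially the intended one: the paper omits this proof and defers to \cite[Proposition 4.2]{WY24}, and the fact underlying that proposition is exactly your product characterization --- a state $(\+E,\+F,\sigma,\tau,T,\varsigma)\in\+V^{\cp}$ is visited by the run driven by $(\*X,\*Y)$ if and only if $\*X$ satisfies $\+E\land\sigma$ and $\*Y$ satisfies $\+F\land\tau$. Your forward invariant gives nonemptiness (any realizing seed pair lies in $\Omega^{\+E\land\sigma}\times\Omega^{\+F\land\tau}$), and your converse induction is sound: monotonicity of $(\+E,\+F,\sigma,\tau)$ along the run, the deterministic ``smallest-first'' choice of $c$, and atomicity (the pinned $\sigma_{\vbl(c)}=\vio{c}$ recorded in the terminal $\sigma$ forces the same branch) make any $(\bm x,\bm y)\in\Omega^{\+E\land\sigma}\times\Omega^{\+F\land\tau}$ retrace a fixed realizing trajectory, including its $T$ and $\varsigma$ components. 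Independence of $\*X$ and $\*Y$ then yields the constancy of the conditional probabilities and \eqref{eq:real-quantities}.

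One step you should make explicit: in \Cref{definition:imaginary-sampler-quantities} the variables $X,Y$ are \emph{not} the seeds $\*X,\*Y$; they are re-drawn as $X\sim\mu^{\bm{\sigma}}_{\bm{\+E}}$ and $Y\sim\mu^{\bm{\tau}}_{\bm{\+F}}$ conditioned on the outcome of $P^{\cp}$, whereas your displayed identity conditions on $X=\bm x$ but is justified via independence of the seeds, i.e.\ it silently identifies $X$ with $\*X$. The identification is valid, but it needs two observations: (i) conditioned on the outcome being $(\bm{\+E},\bm{\+F},\bm{\sigma},\bm{\tau},\bm T,\bm{\varsigma})$, the seed $\*X$ is itself distributed as $\mu^{\bm{\sigma}}_{\bm{\+E}}$ --- this follows from your characterization applied to terminal states, since $\*X$ is uniform over $\Omega^{\+C\setminus\{c_0\}}\supseteq\Omega^{\bm{\+E}\land\bm{\sigma}}$ and independent of $\*Y$, and it is the same invariant as \eqref{eq:initial-assignment-distribution} in the proof of \Cref{lemma:hamming-transform}; and (ii) the set of states visited by the run is a deterministic function of the outcome (by \Cref{lemma:conditional-deterministic}, via the $T$ and $\varsigma$ components), so the event $\{(\+E,\+F,\sigma,\tau,T,\varsigma)\in P^{\cp}\}$ is outcome-measurable and conditioning on $X=\bm x$ is equivalent to conditioning on $\*X=\bm x$. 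With that addendum your proof is complete.
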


\Cref{lemma:quantities-well-defined} can be proved by following the same argument as in  \cite[Proposition 4.2]{WY24}.
In fact, the same proof applies as long as the coupling is well-defined.
Therefore, we omit the proof here.

The following proposition outlines families of linear constraints satisfied by the marginal probabilities.

\begin{proposition}\label{lemma:quantities-properties}
    Assume $\Phi$ is satisfiable. Fix any $(\+E,\+F,\sigma,\tau,T,\varsigma)\in\+V^{\cp}$.
    The following properties hold for $p_{(\+E,\+F,\sigma,\tau,T,\varsigma)}^X$ and $p_{(\+E,\+F,\sigma,\tau,T,\varsigma)}^Y$:
\begin{enumerate}
    \item It always holds that $p_{(\+E,\+F,\sigma,\tau,T,\varsigma)}^X, p_{(\+E,\+F,\sigma,\tau,T,\varsigma)}^Y\in [0,1]$. In particular, $$p_{(\+C\setminus \{c_0\},\+C,\varnothing,\varnothing,\emptyset,\varnothing)}^X=p_{(\+C\setminus \{c_0\},\+C,\varnothing,\varnothing,\emptyset,\varnothing)}^Y=1.$$
    \label{item:quantities-properties-1}
    
    \item If $(\+E,\+F,\sigma,\tau,T,\varsigma) \not\in \+L^{\cp}$, then the following holds:
    \label{item:quantities-properties-2}
    \begin{enumerate}
        \item If $\+F^{\tau}\not\subseteq \+E^{\sigma}$, let $c$ be the smallest constraint in $\+F^{\tau}\setminus \+E^{\sigma}$,  
        and define $\pi\defeq \vio{c}$. Then,
        \begin{align*}
            p_{(\+E,\+F,\sigma,\tau,T,\varsigma)}^X 
            =
            &p^X_{(\+E\cup \{c\},\+F,\sigma,\tau,T,\varsigma)}
            =\sum\limits_{\substack{\rho\in [q]^{\vbl(c)}\\ (\+E,\+F,\sigma\land \pi,\tau\land \rho,T\oplus \unpin{c},\varsigma\land \rho)\in \+V^{\cp}}}p^X_{(\+E,\+F,\sigma\land \pi,\tau\land \rho,T\oplus \unpin{c},\varsigma\land \rho)};\\
             p_{(\+E,\+F,\sigma,\tau,T,\varsigma)}^Y 
             =
             &p^Y_{(\+E\cup \{c\},\+F,\sigma,\tau,T,\varsigma)}+p^Y_{(\+E,\+F,\sigma\land \pi,\tau\land \rho,T\oplus \unpin{c},\varsigma\land \rho)},\\
              & \text{for any }\rho\in [q]^{\vbl(c)} \text{ such that }(\+E,\+F,\sigma\land \pi,\tau\land \rho,T\oplus \unpin{c},\varsigma\land \rho)\in \+V^{\cp}.
        \end{align*}
       \label{item:quantities-properties-3-a}
        \item If $\+F^{\tau}\subseteq \+E^{\sigma}$, let $c$ be the smallest constraint in $\+E^{\sigma}\setminus \+F^{\tau}$,
        and define $\rho\defeq\vio{c}$. Then,
        \begin{align*}
        p_{(\+E,\+F,\sigma,\tau,T,\varsigma)}^X
=
&p^X_{(\+E,\+F\cup \{c\},\sigma,\tau,T,\varsigma)}+p^X_{(\+E,\+F,\sigma\land \pi,\tau\land \rho,T\oplus \unpin{c},\varsigma\land \pi)},\\
&\text{for any }\pi\in [q]^{\vbl(c)} \text{ such that }(\+E,\+F,\sigma\land \pi,\tau\land \rho,T\oplus \unpin{c},\varsigma\land \pi)\in \+V^{\cp};\\
p_{(\+E,\+F,\sigma,\tau,T,\varsigma)}^Y
            =
            &p^Y_{(\+E,\+F\cup \{c\},\sigma,\tau,T,\varsigma)}=\sum\limits_{\substack{\pi\in [q]^{\vbl(c)}\\(\+E,\+F,\sigma\land \pi,\tau\land \rho,T\oplus \unpin{c},\varsigma\land \pi)\in \+V^{\cp}}}p^Y_{(\+E,\+F,\sigma\land \pi,\tau\land \rho,T\oplus \unpin{c},\varsigma\land \pi)}.
        \end{align*}\label{item:quantities-properties-3-b}
    \end{enumerate}
        \item Furthermore, it always holds that
        \[
        p_{(\+E,\+F,\sigma,\tau,T,\varsigma)}^X\cdot \frac{ |\Omega^{\+E\land \sigma}|}{|\Omega^{\+C\setminus \{c_0\}}|}= p_{(\+E,\+F,\sigma,\tau,T,\varsigma)}^Y\cdot \frac{|\Omega^{\+F\land \tau}|}{|\Omega^{\+C}|},
        \] \label{item:quantities-properties-4}
        and 
        \begin{align*}
        p^X_{(\+E,\+F,\sigma,\tau,T,\varsigma)},   p^Y_{(\+E,\+F,\sigma,\tau,T,\varsigma)}\leq q^{-|\vbl(T)\cap \vgood|}\cdot \left( 1-\mathrm{e}q^{-(1-\eps_1)k} \right)^{-|\vbl(T)|p_1\alpha}. 
        \end{align*}
\end{enumerate}
\end{proposition}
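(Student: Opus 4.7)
The plan is to derive every item from the two identities
\[
p^X_{(\+E,\+F,\sigma,\tau,T,\varsigma)}=\mu_{\+C}(\+F\land\tau),\qquad p^Y_{(\+E,\+F,\sigma,\tau,T,\varsigma)}=\mu_{\+C\setminus\{c_0\}}(\+E\land\sigma)
\]
already supplied by \Cref{lemma:quantities-well-defined}. As a preliminary, a short induction on the transitions of $P^\cp$ yields the invariants $\+E\supseteq\+C\setminus\{c_0\}$, $\+F\supseteq\+C$, and $\Lambda(\sigma)=\Lambda(\tau)=\vbl(T)$, which I will invoke freely.

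Items \ref{item:quantities-properties-1} and \ref{item:quantities-properties-4} follow quickly. For item \ref{item:quantities-properties-1}, $p^X,p^Y$ are conditional probabilities and the initial state gives $\mu_{\+C}(\+C\land\varnothing)=\mu_{\+C\setminus\{c_0\}}((\+C\setminus\{c_0\})\land\varnothing)=1$. Cross-multiplying the two identities above produces the detailed-balance equation of item \ref{item:quantities-properties-4}, since both sides equal $\frac{|\Omega^{\+F\land\tau}|\cdot|\Omega^{\+E\land\sigma}|}{|\Omega^{\+C}|\cdot|\Omega^{\+C\setminus\{c_0\}}|}$. For the upper bound in the same item, since the event $\+F\land\tau$ entails $\tau$,
\[
p^X_{(\+E,\+F,\sigma,\tau,T,\varsigma)}=\mu_{\+C}(\+F\land\tau)\le \mu_{\+C,\vbl(T)}(\tau),
\]
where the last step uses $\Lambda(\tau)=\vbl(T)$, and a direct application of \Cref{lemma:assignment-probability-bound} with $S=\vbl(T)$ delivers the stated bound; the argument for $p^Y$ is symmetric.

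Item \ref{item:quantities-properties-2} is where the main substance lies. I treat case (a), with $c$ the smallest element of $\+F^\tau\setminus\+E^\sigma$ and $\pi=\vio{c}$; case (b) is the mirror image obtained by swapping the roles of $(\+E,\sigma,\*X)$ and $(\+F,\tau,\*Y)$. Passing from $(\+E,\+F,\sigma,\tau,T,\varsigma)$ to $(\+E\cup\{c\},\+F,\sigma,\tau,T,\varsigma)$ leaves the pair $(\+F,\tau)$ unchanged, so the $p^X$-identity gives $p^X_{(\+E\cup\{c\},\+F,\sigma,\tau,T,\varsigma)}=p^X_{(\+E,\+F,\sigma,\tau,T,\varsigma)}$ immediately. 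For the summation, I partition the event $\+F\land\tau$ according to the value $\rho\in[q]^{\vbl(c)}$ taken by the unpinned variables of $c$:
\[
\mu_{\+C}(\+F\land\tau)=\sum_{\rho\in[q]^{\vbl(c)}}\mu_{\+C}(\+F\land\tau\land\rho)=\sum_{\rho\neq\pi}\mu_{\+C}(\+F\land\tau\land\rho),
\]
where the second equality uses that $c\in\+F^\tau$ forces $\*Y_{\vbl(c)}\neq\pi$ on any realization of $\+F\land\tau$. Applying the $p^X$-identity termwise yields exactly the claimed recursion, once one verifies that the surviving $\rho$'s coincide with those making $(\+E,\+F,\sigma\land\pi,\tau\land\rho,T\oplus\unpin{c},\varsigma\land\rho)\in\+V^\cp$. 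The $p^Y$ recursion arises from the complementary decomposition of $\+E\land\sigma$ into $\{\*X_{\vbl(c)}\neq\pi\}=(\+E\cup\{c\})\land\sigma$ and $\{\*X_{\vbl(c)}=\pi\}=\+E\land\sigma\land\pi$, giving $\mu_{\+C\setminus\{c_0\}}(\+E\land\sigma)=\mu_{\+C\setminus\{c_0\}}((\+E\cup\{c\})\land\sigma)+\mu_{\+C\setminus\{c_0\}}(\+E\land\sigma\land\pi)$, i.e., the claimed recursion. The only genuine obstacle is matching the summation index set with the set of reachable successor states, which amounts to unpacking \Cref{definition:coupling-equivalent-process} and observing that reachability of a state is equivalent to the joint satisfiability (equivalently, positivity of the relevant marginal) of its constraint–pinning pair.
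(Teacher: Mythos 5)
Your proposal is correct and follows essentially the same route as the paper: the paper also proves this proposition by reading all items off the closed-form identities $p^X=\mu_{\+C}(\+F\land\tau)$, $p^Y=\mu_{\+C\setminus\{c_0\}}(\+E\land\sigma)$ from \Cref{lemma:quantities-well-defined} (items 1--2 are dismissed as ``straightforward to verify'') and by combining these identities with \Cref{lemma:assignment-probability-bound} for item 3. You merely spell out the details the paper leaves implicit (the invariants $\+E\supseteq\+C\setminus\{c_0\}$, $\+F\supseteq\+C$, $\Lambda(\sigma)=\Lambda(\tau)=\vbl(T)$, the case decomposition over $\rho$, and the matching of the summation index with reachable states), which is consistent with the paper's intended argument.
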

\begin{proof}
\Cref{item:quantities-properties-1,item:quantities-properties-2} are  derived directly from \Cref{definition:imaginary-sampler-quantities} and the well-definedness ensured in  \Cref{lemma:quantities-well-defined}, both of which are straightforward to verify.

The equation in \Cref{item:quantities-properties-4} follows from \eqref{eq:real-quantities}.
From the same equation \eqref{eq:real-quantities}, we can further derive that $p_{(\+E,\+F,\sigma,\tau,T,\varsigma)}^{X}=\mu_{\+C}(\+F\land \tau)\le \mu_{\+C}(\tau)$ and $p_{(\+E,\+F,\sigma,\tau,T,\varsigma)}^{Y}=\mu_{\+C\setminus \{c_0\}}(\+E\land \sigma)\le \mu_{\+C\setminus \{c_0\}}(\sigma)$.
Together with \Cref{lemma:assignment-probability-bound},  these imply the final inequalities in \Cref{item:quantities-properties-4}.
\end{proof}




\subsection{Setting up the linear program}
Next, we construct a linear program that that mimics the marginal probabilities $p_{(\+E,\+F,\sigma,\tau,T,\varsigma)}^X$ and $p_{(\+E,\+F,\sigma,\tau,T,\varsigma)}^Y$.

\subsubsection{The coupling tree}
To set up the linear program, 
we first construct a recursion tree for the coupling procedure $\Couple(\+C\setminus \{c_0\},\+C,\varnothing,\varnothing)$, 
truncating it when the size of the witness tree $T$ exceeds $M$.


\begin{definition}[$M$-truncated coupling tree]\label{definition:coupling-tree}
The \emph{$M$-truncated coupling tree}, denoted $\+T=\+T_M(\Phi,c_0)$, is a finite rooted tree, 
where each node in $\+T$ corresponds to a tuple $(\+E,\+F,\sigma,\tau,T,\varsigma)\in\+V^{\cp}$.
The tree $\+T$ is constructed inductively as follows: \begin{enumerate}
    \item The root of $\+T$ corresponds to $(\+C\setminus \{c_0\},\+C,\varnothing,\varnothing,\emptyset,\varnothing)$, and has depth $0$.
    \label{item:coupling-tree-1}
    \item 
    For $i=0,1,\ldots$, consider all nodes $(\+E,\+F,\sigma,\tau,T,\varsigma)\in V(\+T)$ of depth $i$ in the current tree $\+T$.\label{item:coupling-tree-2}
    \begin{enumerate}
        \item If $\sigma$ violates $\+E$ or $\tau$ violates $\+F$ or $\+E^{\sigma}=\+F^{\tau}$ or $|V(T)|=M$, then $(\+E,\+F,\sigma,\tau,T,\varsigma)$ is left as a leaf node in $\+T$. \label{item:coupling-tree-2-a}
        \item Otherwise, if $\+F^{\tau}\not\subseteq \+E^{\sigma}$, then pick the smallest $c\in \+F^{\tau}\setminus \+E^{\sigma}$ and add $(\+E\cup \{c\},\+F,\sigma,\tau,T,\varsigma)$ as a child of $(\+E,\+F,\sigma,\tau,T,\varsigma)$ in $\+T$.
        Furthermore, for each $\rho\in [q]^{\vbl(c)}$ and $\pi=\vio{c}$, add $(\+E,\+F,\sigma\land \pi,\tau\land \rho,T\oplus \unpin{c},\varsigma\land \rho)$ as a child of $(\+E,\+F,\sigma,\tau,T,\varsigma)$ in $\+T$.\label{item:coupling-tree-2-b}
        \item Otherwise, it holds that $\+F^{\tau}\subseteq \+E^{\sigma}$. Then, pick the smallest $c\in \+E^{\sigma}\setminus \+F^{\tau}$ and add $(\+E,\+F\cup \{c\},\sigma,\tau,T,\varsigma)$ as a child of $(\+E,\+F,\sigma,\tau,T,\varsigma)$ in $\+T$.
        Furthermore, for each $\pi\in [q]^{\vbl(c)}$ and $\rho=\vio{c}$, add $(\+E,\+F,\sigma\land \pi,\tau\land \rho,T\oplus \unpin{c},\varsigma\land \pi)$ as a child of $(\+E,\+F,\sigma,\tau,T,\varsigma)$ in $\+T$.\label{item:coupling-tree-2-c}
    \end{enumerate}
\end{enumerate}
Let $\+L$ denote the set of leaf nodes in $\+T$. 
We further define the following sets of leaf nodes:
\begin{itemize}
    \item $\+L_{\coup}\defeq \{(\+E,\+F,\sigma,\tau,T,\varsigma)\in \+L\mid \+E^{\sigma}=\+F^{\tau}\}$ as the set of ``coupled'' leaf nodes in $\+T$;
    \item $\+L_{\trun}\defeq \{(\+E,\+F,\sigma,\tau,T,\varsigma)\in \+L\mid |V(T)|=M\}$ as the set of ``truncated'' leaf nodes in $\+T$;
    \item $\+{L}_{\text{valid}}\defeq\+L_{\trun}\cup \+L_{\coup}$ as the set of ``valid'' leaf nodes in $\+T$;
    \item $\+L_{\text{invld}}\defeq \+L\setminus \+{L}_{\text{valid}}$ as the set of ``invalid'' leaf nodes in $\+T$.
\end{itemize}
\end{definition}

\begin{proposition}\label{prop:coupling-tree-size}
For any satisfiable $\Phi=(V,[q],\+C)$, any $c_0\in \+C$ and $M\ge 1$, the $M$-truncated coupling tree $\+T=\+T_M(\Phi,c_0)$ has a depth of at most $M\Delta k+1$ and a branching number of at most $q^{2k}$, where $\Delta=\Delta(\Phi)$ is the maximum degree of $G_{\Phi}$.
\end{proposition}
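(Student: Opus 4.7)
The plan is to verify the two bounds separately: the branching number is essentially read off from the construction, while the depth is controlled by a potential-style argument tracking the symmetric difference $\+E^\sigma \triangle \+F^\tau$ along a root-to-leaf path.

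For the branching number, I will inspect \Cref{definition:coupling-tree}: every internal node is handled by either case (b) or case (c), and in each case it has exactly one ``constraint-moving'' child (of the form $\+E \cup \{c\}$ or $\+F \cup \{c\}$) together with one child per assignment in $[q]^{\vbl(c)}$. Since $c$ is a pinned constraint derived from an original constraint of width at most $k$, we have $|\vbl(c)| \leq k$, so every internal node has at most $1 + q^k \leq q^{2k}$ children.

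For the depth bound, I will fix an arbitrary root-to-leaf path in $\+T$ and classify each step as \emph{type (i)}---a constraint-moving step that adds some $c$ to $\+E$ or $\+F$ while leaving $\sigma, \tau, T$ alone---or \emph{type (ii)}---a step that extends $\sigma, \tau$ on $\vbl(c)$ and appends $\unpin{c}$ to $T$. Every type (ii) step increments $|V(T)|$ by exactly one, and the path must terminate by the time $|V(T)| = M$, so there are at most $M$ type (ii) steps along the path. To bound the type (i) steps I set $\Psi \defeq |\+E^\sigma \triangle \+F^\tau|$: at the root $\Psi = 1$ since only $c_0$ lies in $\+E \triangle \+F$, and each type (i) step decreases $\Psi$ by exactly one while also being forbidden once $\Psi = 0$. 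A type (ii) step changes only the pinning on variables in $\vbl(c)$, so the only constraints whose status in $\+E^\sigma \triangle \+F^\tau$ can change are $\unpin{c}$ itself together with its at most $\Delta$ neighbors in $G_\Phi$; hence $\Psi$ can increase by at most $\Delta$ per type (ii) step. Summing over the path yields at most $1 + M\Delta$ type (i) steps, so the total depth is at most $M + (1 + M\Delta) = M(\Delta + 1) + 1 \leq M\Delta k + 1$.

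The main care required is in the ``localization'' claim underlying the type (ii) analysis: I need to verify that any constraint whose original variable set is disjoint from $\vbl(c)$ has the same pinned form under $\sigma$ as under $\sigma \wedge \pi$ (and similarly for $\tau$), so that its membership in $\+E^\sigma \triangle \+F^\tau$ is preserved across a type (ii) step. This follows directly from the definition of pinning, but is the one spot where the structure of the coupling procedure enters the argument nontrivially; everything else is bookkeeping on the potential $\Psi$ and the counter $|V(T)|$.
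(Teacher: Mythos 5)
Your proposal is correct and takes essentially the same route as the paper: a potential argument on $|\+E^{\sigma}\triangle\+F^{\tau}|$ along a root-to-leaf path, with the truncation at witness-tree size $M$ bounding the number of pinning steps, together with the immediate $1+q^{k}\le q^{2k}$ branching count. The differences are cosmetic---you sum directly where the paper argues by contradiction, and you claim a per-pinning-step increase of $\Delta$ where the paper uses the looser $k\Delta-1$; your sharper constant implicitly needs the observation that, since constraints are atomic and $\Lambda(\sigma)=\Lambda(\tau)$, each affected original constraint accounts for at most one element of the symmetric difference (otherwise one only gets $2\Delta$ per step, which still yields depth at most $M\Delta k+1$ for $k\ge 2$).
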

\begin{proof}
By contradiction, assume there exists a node $(\+E', \+F',\sigma',\tau',T',\varsigma')\in V(\+T)$ with depth $M\Delta k+2$. 
We track the size of $\+E^{\sigma}\triangle\+F^{\tau}$ along the path from the root to $(\+E', \+F',\sigma',\tau',T',\varsigma')$, denoting this size by $t$. 
Initially, by \Cref{definition:linear-program}, we have $t=1$. 
At each intermediate node $(\+E, \+F,\sigma,\tau,T,\varsigma)\in V(\+T)$ along the path: we either add some constraint $c$ into $\+E$ or $\+F$, reducing $t$ by 1,
or we assign values to $\sigma$ and $\tau$ on $\vbl(c)$, which increases $t$ by at most $k\Delta-1$ 
(since at most $k\Delta$ new elements can be added into $\+E^{\sigma}\triangle\+F^{\tau}$, and $c$ is removed from $\+E^{\sigma}\triangle\+F^{\tau})$. 
In the latter case, the size of $T$ grows by one according to \Cref{lemma:witness-tree-size}. 
Let $i$ denote the number of times this latter operation is executed.

Since the depth of $(\+E', \+F',\sigma',\tau',T',\varsigma')$ is $M\Delta k+2$, the above step is repeated for $M\Delta k+2$ times. 
Finally, at the node $(\+E', \+F',\sigma',\tau',T',\varsigma')$, we still have $t=|\+E'^{\sigma'}\triangle\+F'^{\tau'}|\geq 0$. 
Therefore,
\[
(k\Delta -1)\cdot i+1-( M\Delta k+2-i) \geq 0,
\]
This implies that $|T'|=i>M$, which contradicts the truncation condition in \Cref{item:coupling-tree-2-a} of \Cref{definition:coupling-tree}.

Finally, it is easy to observe that each node in $\+T$ has at most $q^{2k}$ children.
\end{proof}


\subsubsection{The linear program}
We now present the linear program,
constructed on the $M$-truncated coupling tree from \Cref{definition:coupling-tree}.
Each node of the tree, denoted as $(\+E,\+F,\sigma,\tau,T,\varsigma)$, is associated with two variables mimicking $p_{(\+E,\+F,\sigma,\tau,T,\varsigma)}^X$ and $p_{(\+E,\+F,\sigma,\tau,T,\varsigma)}^Y$. 
The linear constraints of this LP are derived from the properties listed in \Cref{lemma:quantities-properties}. 

\begin{definition}[linear program induced by the coupling] \label{definition:linear-program}
Let $\+T=\+T_M(\Phi,c_0)$ denote the $M$-truncated coupling tree, constructed according to \Cref{definition:coupling-tree}. 
Given two parameters $0\leq r_- \leq r_+$,
we define a linear program using variables $\hat{p}_{(\+E,\+F,\sigma,\tau,T,\varsigma)}^X$ and $\hat{p}_{(\+E,\+F,\sigma,\tau,T,\varsigma)}^Y$ for all $(\+E,\+F,\sigma,\tau,T,\varsigma)\in V(\+T)$:
\begin{enumerate}[label=\Roman*.]
    \item \emph{Range constraints}: 
\begin{align*}
&\hat{p}_{(\+C\setminus \{c_0\},\+C,\varnothing,\varnothing,\emptyset)}^X
=\hat{p}_{(\+C\setminus \{c_0\},\+C,\varnothing,\varnothing,\emptyset)}^Y=1;\\
&\hat{p}_{(\+E,\+F,\sigma,\tau,T,\varsigma)}^X,\hat{p}_{(\+E,\+F,\sigma,\tau,T,\varsigma)}^Y\in [0,1], \qquad \forall(\+E,\+F,\sigma,\tau,T,\varsigma)\in V(\+T).
\end{align*}
\label{item:linear-program-range-constraints}
\item \emph{Non-leaf constraints}: 
For each non-leaf node $(\+E,\+F,\sigma,\tau,T,\varsigma) \in V(\+T)\setminus\+L$: \label{item:linear-program-non-leaf-constraints}
    \begin{enumerate}
        \item If $\+F^{\tau}\not\subseteq \+E^{\sigma}$, let $c$ be the smallest constraint in $\+F^{\tau}\setminus \+E^{\sigma}$ and $\pi=\vio{c}$:
\begin{align*}
\hat{p}_{(\+E,\+F,\sigma,\tau,T,\varsigma)}^X
&=\hat{p}^X_{(\+E\cup \{c\},\+F,\sigma,\tau,T,\varsigma)}=\sum\limits_{\rho\in [q]^{\vbl(c)}}\hat{p}^X_{(\+E,\+F,\sigma\land \pi,\tau\land \rho,T\oplus \unpin{c},\varsigma\land \rho)};\\  
\hat{p}_{(\+E,\+F,\sigma,\tau,T,\varsigma)}^Y
&=\hat{p}^Y_{(\+E\cup \{c\},\+F,\sigma,\tau,T,\varsigma)}+\hat{p}^Y_{(\+E,\+F,\sigma\land \pi,\tau\land \rho,T\oplus \unpin{c},\varsigma\land \rho)}, \qquad \forall\rho\in [q]^{\vbl(c)}.
\end{align*}
\label{item:linear-program-2-a}
        \item Otherwise, if $\+F^{\tau}\subseteq \+E^{\sigma}$, let $c$ be the smallest constraint in $\+E^{\sigma}\setminus \+F^{\tau}$ and $\rho=\vio{c}$:
\begin{align*}
            \hat{p}_{(\+E,\+F,\sigma,\tau,T,\varsigma)}^X
            &=\hat{p}^X_{(\+E,\+F\cup \{c\},\sigma,\tau,T,\varsigma)}+\hat{p}^X_{(\+E,\+F,\sigma\land \pi,\tau\land \rho,T\oplus \unpin{c},\varsigma\land \pi)}, \qquad \forall\pi\in [q]^{\vbl(c)};\\
            \hat{p}_{(\+E,\+F,\sigma,\tau,T,\varsigma)}^Y
            &=\hat{p}^Y_{(\+E,\+F\cup \{c\},\sigma,\tau,T,\varsigma)}=\sum\limits_{\pi\in [q]^{\vbl(c)}}\hat{p}^Y_{(\+E,\+F,\sigma\land \pi,\tau\land \rho,T\oplus \unpin{c},\varsigma\land \pi)}.
\end{align*}
        \label{item:linear-program-2-b}
    \end{enumerate}
\item \emph{Leaf constraints}: \label{item:linear-program-leaf-constraints}
For each leaf node $(\+E,\+F,\sigma,\tau,T,\varsigma)\in \+L$:
    \begin{enumerate}
    \item If it is a coupled leaf $(\+E,\+F,\sigma,\tau,T,\varsigma)\in\+L_{\coup}$, 
        \[
       r_-\cdot \hat{p}_{(\+E,\+F,\sigma,\tau,T,\varsigma)}^Y\leq \hat{p}_{(\+E,\+F,\sigma,\tau,T,\varsigma)}^X\leq r_+\cdot \hat{p}_{(\+E,\+F,\sigma,\tau,T,\varsigma)}^Y.
        \] \label{item:linear-program-coupled}
  \item If it is an invalid leaf $(\+E,\+F,\sigma,\tau,T,\varsigma)\in \+L_{\text{invld}}$,
  \begin{align*}
      \hat{p}_{(\+E,\+F,\sigma,\tau,T,\varsigma)}^Y=0, \text{ if $\sigma$ violates $\+E$}; \\
      \hat{p}_{(\+E,\+F,\sigma,\tau,T,\varsigma)}^X=0, \text{ if $\tau$ violates $\+F$}.
  \end{align*}
  \label{item:linear-program-invalid}
    \end{enumerate}
        \item \emph{Truncation constraints}: For each truncated leaf node $(\+E,\+F,\sigma,\tau,T,\varsigma)\in \+L^{\trun}$:
          \begin{align*}
         p^X_{(\+E,\+F,\sigma,\tau,T,\varsigma)},   p^Y_{(\+E,\+F,\sigma,\tau,T,\varsigma)}\leq q^{-|\vbl(T)\cap \vgood|}\cdot \left( 1-\mathrm{e}q^{-(1-\eps_1)k} \right)^{-|\vbl(T)|p_1\alpha}  .
\end{align*}
\label{item:linear-program-overflow-constraints}
\end{enumerate}
\end{definition}

\begin{remark}
This linear program closely resembles the one presented in~\cite{WY24},
with the primary distinction being the last class of linear constraints: the truncation constraints.
These constraints replace the ``overflow constraints'' used in the LP from~\cite{WY24}. 
Notably, in their design of the LP, this class of overflow constraints also distinguishes their approach from other LP-based algorithms, 
such as those in \cite{Moi19,guo2019counting,vishesh21towards,GGGY21}, 
and is considered a key step in approaching the critical threshold in their context.
In contrast, the use of truncation constraints here effectively captures the critical behavior of random instances 
and is essential for approaching the critical threshold in this new context.
\end{remark}

\subsection{Analysis of the linear program}
We now establish both the correctness and efficiency of the linear program constructed earlier. 

\subsubsection{Performance of the LP}
First, we show that the feasibility of the LP can be checked efficiently.

\begin{proposition}\label{lemma:lp-properties}
Assume \Cref{condition:main-condition}.
For any $0\leq r_{-}\leq r_{+}$, the feasibility of the linear program in \Cref{definition:linear-program} can be determined within $\exp(M\cdot \poly(k,\log q,\alpha))$ time. 
\end{proposition}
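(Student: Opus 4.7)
The plan is to reduce the feasibility test to running a polynomial-time LP solver (such as the interior-point or ellipsoid method) on an explicit encoding of the linear program in Definition 5.3, so the main task is to bound the size of this encoding. The LP assigns two variables $\hat{p}^X, \hat{p}^Y$ to every node of the $M$-truncated coupling tree $\+T = \+T_M(\Phi, c_0)$, and every constraint in the LP relates variables inside a single parent-children chunk of $\+T$ of size at most $1 + q^{2k}$. Hence it suffices to show $|V(\+T)| \leq \exp(M \cdot \poly(k, \log q, \alpha))$.

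To bound $|V(\+T)|$, I would parametrise each node by the pair $(T, \varsigma)$ of its witness tree and witness assignment. By Lemma \ref{lemma:conditional-deterministic} together with the deterministic nature of the ``add'' steps in Definition \ref{definition:coupling-tree}, which always process the smallest pinned constraint in the symmetric difference $\+E^\sigma \triangle \+F^\tau$, the pair $(T, \varsigma)$ together with the length of the current algorithm prefix uniquely determines the node. Property \ref{property:bounded-neighbourhood-growth} then bounds the number of rooted connected subsets of hyperedges in $G_\Phi$ of size at most $M$ containing $c_0$ by $n^3 M (p_2 \alpha)^M$, so the same estimate applies to the number of possible witness trees $T$; for each such $T$, since $|\vbl(T)| \leq kM$, there are at most $q^{kM}$ choices of $\varsigma \in [q]^{\vbl(T)}$. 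Multiplying by the depth bound $O(Mk\Delta)$ from Proposition \ref{prop:coupling-tree-size} (with $\Delta \leq 4k\alpha + 6\log n$ from Property \ref{property:maximum-degree}) to account for the intermediate ``add'' nodes along each root-to-leaf path, one arrives at $|V(\+T)| \leq n^{O(1)} \cdot \exp(M \cdot \poly(k, \log q, \alpha))$. In the regime $M = \Omega(\log n)$ used throughout the main theorems, the $n^{O(1)}$ factor is absorbed into $\exp(M \cdot \poly(k, \log q, \alpha))$.

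With a polynomial-size explicit description of the LP in hand, any standard polynomial-time LP feasibility algorithm settles the question in time polynomial in the encoding length, yielding the claimed $\exp(M \cdot \poly(k, \log q, \alpha))$ bound. The main obstacle I foresee is tracking the $n$-dependence carefully: Property \ref{property:bounded-neighbourhood-growth} carries an $n^3$ prefactor and Property \ref{property:maximum-degree} contributes a $\log n$ to $\Delta$, and both must be absorbed via $M \geq \log n$, which is the regime in which the LP is actually deployed. Once this absorption is done cleanly, a single polynomial-time LP solver call finishes the argument.
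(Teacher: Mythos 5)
Your proposal is correct in its overall shape and reaches the paper's conclusion, but it takes a noticeably more elaborate route for the one step that matters, namely bounding the size of the LP. The paper's proof is a one-liner: the number of LP variables and constraints is at most the number of nodes of the $M$-truncated coupling tree, which by \Cref{prop:coupling-tree-size} has depth at most $M\Delta k+1$ and branching number at most $q^{2k}$, so $|V(\+T)|\le q^{2k(M\Delta k+1)}$, and then one runs any polynomial-time LP solver on the explicit encoding; the bound on $\Delta$ comes from \Cref{property:maximum-degree}. You instead count nodes by the triple (witness tree, witness assignment, prefix length), invoking \Cref{lemma:conditional-deterministic} and the witness-tree count via \Cref{property:bounded-neighbourhood-growth} as in the proof of \Cref{theorem:correlation-decay}. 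This works, and it has the virtue of being explicit about the $n$-dependence (the $n^3$ in \Cref{property:bounded-neighbourhood-growth} and the $\log n$ in $\Delta$, absorbed via $M\ge\log n$ — a point the paper's own proof silently shares), but it imports two subtleties that the depth-times-branching bound avoids: (i) the injectivity of the map from a tree node to $(T,\varsigma,\text{prefix length})$ is only a variant of \Cref{lemma:conditional-deterministic} and needs a short argument (in particular that the ``add'' versus ``assign'' branch at each prefix step can be read off from the node's \emph{partial} witness tree, which relies on the depth-monotonicity in \Cref{lemma:witness-local-total-ordering}); and (ii) your claim that the number of witness trees of size at most $M$ is bounded by the number of connected subsets of size at most $M$ is not immediate, since a fixed connected vertex set of $G_\Phi$ supports many rooted tree structures — this is the same counting step used (without further comment) in the proof of \Cref{theorem:correlation-decay}, so it is consistent with the paper, but for the present proposition it is unnecessary machinery. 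In short: same reduction (explicit LP plus a polynomial-time feasibility check), different and heavier argument for the size bound; the paper's route via \Cref{prop:coupling-tree-size} is both simpler and free of the two gaps noted above.
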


This result follows directly from \Cref{prop:coupling-tree-size}, as the size of the linear program is bounded by the size of the $M$-truncated coupling tree $\+T=\+T_M(\Phi,c_0)$, which is at most $\exp(M\cdot \poly(k,\log q,\alpha))$ according to the bound on maximum degree (\Cref{property:maximum-degree}) in \Cref{condition:main-condition}.

Next, we prove the soundness of this linear program 
by showing that the true values of the marginal probabilities satisfy all the linear constraints.

\begin{lemma}\label{lemma:lp-feasibility}
Assume $\Phi$ is satisfiable.
The linear program in \Cref{definition:linear-program} is feasible for 
$$r_{-}=r_{+}=\frac{|\Omega^{\+C\setminus \{c_0\}}|}{|\Omega^{\+C}|}.$$
\end{lemma}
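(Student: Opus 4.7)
The plan is to exhibit a feasible solution explicitly, namely the true marginal probabilities themselves. I would set
\[
\hat{p}^X_{(\+E,\+F,\sigma,\tau,T,\varsigma)} \defeq \mu_{\+C}(\+F \land \tau),\qquad \hat{p}^Y_{(\+E,\+F,\sigma,\tau,T,\varsigma)} \defeq \mu_{\+C \setminus \{c_0\}}(\+E \land \sigma)
\]
for every node of the coupling tree $\+T$. By \Cref{lemma:quantities-well-defined}, these expressions agree with $p^X$ and $p^Y$ from \Cref{definition:imaginary-sampler-quantities} on the support $\+V^{\cp}$, and they vanish naturally on invalid leaves where the partial assignment is not extendable, so the invalid-leaf constraints (III.b) hold for free. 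The range constraints (I) are immediate, since each is a probability and evaluates to $1$ at the root $(\+C\setminus\{c_0\},\+C,\varnothing,\varnothing,\emptyset,\varnothing)$; and the truncation constraints (IV) are verbatim the upper bounds established in \Cref{lemma:quantities-properties}(3).

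For the non-leaf constraints (II.a), I would verify two kinds of identities. The first equality $\hat{p}^X_{(\+E,\+F,\sigma,\tau,T,\varsigma)}=\hat{p}^X_{(\+E\cup\{c\},\+F,\sigma,\tau,T,\varsigma)}$ is immediate because our formula for $\hat{p}^X$ depends only on $(\+F,\tau)$. The summation equality is a partition of the event $\+F\land\tau$ according to the configuration $\rho\in[q]^{\vbl(c)}$, valid since $\vbl(c)\cap\Lambda(\tau)=\emptyset$ for any $c\in\+F^\tau$. The $\hat{p}^Y$ recursion rests instead on the atomic event decomposition
\[
\+E\land\sigma \;=\; \bigl(\+E\cup\{c\}\land\sigma\bigr) \;\sqcup\; \bigl(\+E\land\sigma\land\vio{c}\bigr),
\]
which is independent of the auxiliary side-assignment $\rho$, so the equation holds for every choice of $\rho$. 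Case (II.b) is entirely symmetric.

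The crux is the coupled-leaf constraint (III.a). At any $(\+E,\+F,\sigma,\tau,T,\varsigma)\in\+L_{\coup}$ one has $\+E^\sigma=\+F^\tau$; since the coupling procedure maintains the invariant $\Lambda(\sigma)=\Lambda(\tau)$, this yields
\[
|\Omega^{\+E\land\sigma}| \;=\; |\Omega_{\+E^\sigma}| \;=\; |\Omega_{\+F^\tau}| \;=\; |\Omega^{\+F\land\tau}|.
\]
Plugging this into the identity \eqref{eq:real-quantities}/\Cref{lemma:quantities-properties}(3) gives
\[
\frac{\hat{p}^X}{\hat{p}^Y} \;=\; \frac{|\Omega^{\+C\setminus\{c_0\}}|}{|\Omega^{\+C}|} \;=\; r_-=r_+
\]
exactly, so the two-sided bound is tight.

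I do not foresee a deep obstacle; the argument is a direct verification guided by \Cref{lemma:quantities-well-defined,lemma:quantities-properties}. The only step that requires mild care is the degenerate coupled-leaf case $|\Omega^{\+E\land\sigma}|=0$: there both $\hat{p}^X$ and $\hat{p}^Y$ vanish and the two-sided bound is trivially satisfied, avoiding any division by zero. A secondary point worth spelling out is why our closed-form definition of $\hat{p}^Y$ (via $\mu_{\+C\setminus\{c_0\}}(\+E\land\sigma)$) reproduces the process-defined $p^Y$ on $\+V^{\cp}$ and zero outside; this is exactly what \eqref{eq:real-quantities} delivers, once the $\+V^{\cp}$-conditioned summations in \Cref{lemma:quantities-properties}(2) are extended to unrestricted sums using that the missing summands all vanish.
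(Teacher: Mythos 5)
Your proposal is correct and follows essentially the same route as the paper: you exhibit the same witness solution $\hat{p}^X=\mu_{\+C}(\+F\land\tau)$, $\hat{p}^Y=\mu_{\+C\setminus\{c_0\}}(\+E\land\sigma)$, verify the range, non-leaf, leaf, and truncation constraints (the paper delegates the non-leaf identities to \Cref{lemma:quantities-properties}, which you instead check directly by the same event decompositions), and handle the coupled-leaf case via \eqref{eq:real-quantities} together with $|\Omega^{\+E\land\sigma}|=|\Omega^{\+F\land\tau}|$, including the degenerate zero case. The only nit is your parenthetical that the closed form is ``zero outside $\+V^{\cp}$,'' which is not needed and not literally true for all nodes of $V(\+T)$; the direct verification you carry out makes it irrelevant.
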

\begin{proof}

For each $(\+E,\+F,\sigma,\tau,T,\varsigma)\in V(\+T)$, we define the following quantities:
\begin{equation}\label{eq:linear-program-solutions}
\hat{p}_{(\+E,\+F,\sigma,\tau,T,\varsigma)}^X=\mu_{\+C}(\+F\land \tau), \quad \hat{p}_{(\+E,\+F,\sigma,\tau,T,\varsigma)}^Y=\mu_{\+C\setminus \{c_0\}}(\+E\land \sigma).
\end{equation}
We will show that they form a feasible solution to the LP described in \Cref{definition:linear-program} with the parameters $r_{-}=r_{+}=\frac{|\Omega^{\+C\setminus \{c_0\}}|}{|\Omega^{\+C}|}$. 
By \Cref{lemma:quantities-well-defined}, these quantities in \eqref{eq:linear-program-solutions} are consistent with the actual values of $p_{(\+E,\+F,\sigma,\tau,T,\varsigma)}^{X},p_{(\+E,\+F,\sigma,\tau,T,\varsigma)}^{Y}$ as defined in \Cref{definition:imaginary-sampler-quantities}, also extending these marginal probabilities to all nodes in $V(\+T)$ rather than just $\+V^{\cp}$.
Note that compared to $\+V^{\cp}$, which contains only those $(\+E,\+F,\sigma,\tau,T,\varsigma)$ corresponding to the final outcomes of the $M$-truncated coupling procedure, the set $V(\+T)$ also contains all $(\+E,\+F,\sigma,\tau,T,\varsigma)$ corresponding to the intermediate steps of the  coupling.

We then verify that \eqref{eq:linear-program-solutions} satisfies all linear constraints of the LP in \Cref{definition:imaginary-sampler-quantities}:
\begin{itemize}
    \item \emph{Range and Non-leaf constraints}: These constraints hold by \Cref{item:quantities-properties-1,item:quantities-properties-2} of \Cref{lemma:quantities-properties}. 
\item \emph{Leaf constraints:} 
\Cref{item:linear-program-invalid} is a direct consequence of \eqref{eq:linear-program-solutions}. 
To verify \Cref{item:linear-program-coupled}, note that for $(\+E,\+F,\sigma,\tau,T,\varsigma)\in \+L_{\coup}$, we have $\+E^\sigma=\+F^{\tau}$, which implies that $|\Omega^{\+E\land \sigma}|=|\Omega^{\+F\land \tau}|$. Thus:
\begin{itemize}
    \item If $|\Omega^{\+E\land \sigma}|=|\Omega^{\+F\land \tau}|>0$, then \Cref{item:linear-program-coupled}  follows directly from \Cref{item:quantities-properties-4} of \Cref{lemma:quantities-properties}.
    \item If $|\Omega^{\+E\land \sigma}|=|\Omega^{\+F\land \tau}|=0$, then $\mu_{\+C}(\+F\land \tau)=\mu_{\+C\setminus \{c_0\}}(\+E\land \sigma)=0$ and \Cref{item:linear-program-coupled} holds.
\end{itemize}
\item \emph{Truncation constraints:} 
The constraints in \Cref{item:linear-program-overflow-constraints} hold by \Cref{item:quantities-properties-4} of \Cref{lemma:quantities-properties}. \qedhere
\end{itemize}
\end{proof}

At last, we show that the feasibility of the linear program implies that $r_{-}$ and $r_{+}$ provide respective 
lower and upper bounds for $\frac{|\Omega^{\+C\setminus \{c_0\}}|}{|\Omega^{\+C}|}$ with bounded multiplicative error. 
With this, we can apply a binary search to approximate the true value of $\frac{|\Omega^{\+C\setminus \{c_0\}}|}{|\Omega^{\+C}|}$.

\begin{lemma}\label{lemma:linear-program-error-bound}
Assume \Cref{condition:main-condition}. 
If the LP in \Cref{definition:linear-program} is feasible for parameters $0\leq r_-\leq r_+$, then it holds that
\[
\left(1-2\cdot 2^{-M}\right)r_-\leq \frac{|\Omega^{\+C\setminus \{c_0\}}|}{|\Omega^{\+C}|}\leq \left(1+2\cdot 2^{-M}\right)r_+.
\]
\end{lemma}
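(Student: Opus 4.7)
The plan is to combine the LP feasibility with \Cref{lemma:lp-feasibility} to pin $r^{*}\defeq |\Omega^{\+C\setminus\{c_0\}}|/|\Omega^{\+C}|$ down to the interval $[(1-2\cdot 2^{-M})r_{-},\, (1+2\cdot 2^{-M})r_{+}]$. The key idea is to set up telescoping identities for $\hat{p}^{X}$ and $\hat{p}^{Y}$ over the leaves of the coupling tree $\+T$, so that the coupled-leaf constraint $r_{-}\hat{p}^{Y}_{\ell}\le\hat{p}^{X}_{\ell}\le r_{+}\hat{p}^{Y}_{\ell}$ can be aggregated and compared against the true values $p^{X}_{\ell}=\mu_{\+C}(\+F_{\ell}\land\tau_{\ell})$ and $p^{Y}_{\ell}=\mu_{\+C\setminus\{c_{0}\}}(\+E_{\ell}\land\sigma_{\ell})$ from \Cref{lemma:lp-feasibility}, which satisfy $p^{X}_{\ell}/p^{Y}_{\ell}=r^{*}$ at every coupled leaf.

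First, I would iteratively apply the non-leaf constraints in \Cref{item:linear-program-2-a,item:linear-program-2-b} to derive identities
\[
1 = \hat{p}^{X}_{\text{root}} = \sum_{\ell \in \+L^{*}}\hat{p}^{X}_{\ell},
\qquad
1 = \hat{p}^{Y}_{\text{root}} = \sum_{\ell \in \+L^{*}}\hat{p}^{Y}_{\ell},
\]
for an appropriately chosen subset $\+L^{*}\subseteq\+L$ of leaves. The subset $\+L^{*}$ is obtained by picking, at each non-leaf, a descent direction consistent with the LP equations: in Case~(a), this amounts to descending through all assign-children for $X$ (using $\hat{p}^{X}_{\text{parent}}=\sum_{\rho}\hat{p}^{X}_{\text{assign-child}}$) and through the sat-child plus one representative assign-child for $Y$ (using $\hat{p}^{Y}_{\text{parent}}=\hat{p}^{Y}_{\text{sat-child}}+\hat{p}^{Y}_{\text{assign-child}}$); Case~(b) is symmetric. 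Crucially, the same identities also hold when $\hat{p}^{X},\hat{p}^{Y}$ are replaced by the true values $p^{X},p^{Y}$.

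Next, using the truncation constraints of \Cref{item:linear-program-overflow-constraints} together with the bounded-neighborhood-growth property (\Cref{property:bounded-neighbourhood-growth}), I would show
\[
\sum_{\ell\in\+L^{*}\cap\+L_{\trun}}\hat{p}^{X}_{\ell},\quad \sum_{\ell\in\+L^{*}\cap\+L_{\trun}}\hat{p}^{Y}_{\ell}\ \le\ 2^{-M},
\]
via essentially the same computation as \eqref{eq:used-by-LP} in the proof of \Cref{theorem:correlation-decay}. Combined with the invalid-leaf constraints (\Cref{item:linear-program-invalid}), this forces both $\sum_{\ell\in\+L^{*}\cap\+L_{\coup}}\hat{p}^{X}_{\ell}$ and $\sum_{\ell\in\+L^{*}\cap\+L_{\coup}}\hat{p}^{Y}_{\ell}$ to lie in $[1-2^{-M},\,1]$. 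The coupled-leaf constraint then bounds the aggregate ratio of these two sums between $r_{-}$ and $r_{+}$; the same telescoping applied to the true values pins the analogous ratio at exactly $r^{*}$ (up to the $2^{-M}$ truncation slack), and comparing the two yields $(1-2\cdot 2^{-M})r_{-}\le r^{*}\le(1+2\cdot 2^{-M})r_{+}$.

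The main obstacle lies in setting up the telescoping in Step~1: the non-leaf constraints in Cases~(a) and (b) have an asymmetric ``parent equals one child'' versus ``parent equals sum of children'' structure, with the roles of $X$ and $Y$ swapped between the two cases. A careful choice of descent rule is required so that the selected $\+L^{*}$ contains sufficiently many coupled leaves for the coupled-leaf constraints to effectively control the aggregate ratio, and so that the truncation error remains uniformly bounded for both $\hat{p}^{X}$ and $\hat{p}^{Y}$.
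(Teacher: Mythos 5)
There is a genuine gap, and it sits exactly at the step you flag as ``the main obstacle'': no single common leaf set $\+L^{*}$ with \emph{unweighted} identities $\sum_{\ell\in\+L^{*}}\hat{p}^{X}_{\ell}=1=\sum_{\ell\in\+L^{*}}\hat{p}^{Y}_{\ell}$ can be extracted from the LP. At a node of type (a), the constraints decompose $\hat{p}^{X}$ either as the sat-child alone or as the sum over \emph{all} assign-children, while $\hat{p}^{Y}$ decomposes as the sat-child plus \emph{one} assign-child (and the roles swap in type (b)); these child sets never coincide, so a common descent rule does not exist, and your fallback of using different descent rules for $X$ and $Y$ produces identities over two different leaf sets, after which the coupled-leaf constraints $r_{-}\hat{p}^{Y}_{\ell}\le\hat{p}^{X}_{\ell}\le r_{+}\hat{p}^{Y}_{\ell}$ cannot be aggregated leaf-by-leaf. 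Worse, if both unweighted identities did hold over one common $\+L^{*}$ whose coupled leaves carry mass $\ge 1-2^{-M}$, your final comparison would force $\sum_{\coup}\hat{p}^{X}\approx\sum_{\coup}\hat{p}^{Y}\approx 1$ and hence $r_{-}\lesssim 1\lesssim r_{+}$, i.e.\ it would pin the ratio near $1$ rather than near $r^{*}=|\Omega^{\+C\setminus\{c_0\}}|/|\Omega^{\+C}|$; since \Cref{lemma:lp-feasibility} shows feasibility at $r_{-}=r_{+}=r^{*}$, which can differ from $1$ by much more than $2^{-M}$, such identities are in fact impossible. The paper's resolution is to telescope \emph{per satisfying assignment}: for each fixed $\bm{x}\in\Omega^{\+C\setminus\{c_0\}}$ the children compatible with $\bm{x}$ at every node are exactly those over which the LP decomposes $\hat{p}^{X}$, giving \eqref{eq:lp-sum-to-1}, and summing over $\bm{x}$ (resp.\ $\bm{y}$) yields the \emph{weighted} identities \eqref{eq:ss-sum}, $|\Omega^{\+C\setminus\{c_0\}}|=\sum_{\text{valid }\ell}|\Omega^{\+E\land\sigma}|\hat{p}^{X}_{\ell}$ and $|\Omega^{\+C}|=\sum_{\text{valid }\ell}|\Omega^{\+F\land\tau}|\hat{p}^{Y}_{\ell}$. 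The ratio $r^{*}$ enters precisely through the two different totals, and the coupled-leaf comparison still works because at coupled leaves $\+E^{\sigma}=\+F^{\tau}$ forces $|\Omega^{\+E\land\sigma}|=|\Omega^{\+F\land\tau}|$.

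A second, related gap is your truncation bound. The truncation constraints give only a single factor $q^{-|\vbl(T)\cap\vgood|}(1-\mathrm{e}q^{-(1-\eps_1)k})^{-|\vbl(T)|p_1\alpha}$ per leaf, and summing this over all witness trees and all $\varsigma\in[q]^{\vbl(T)}$ does not telescope to $2^{-M}$: the computation in \eqref{eq:used-by-LP} needs the \emph{square} of that factor, the second copy coming from the probability that an independent sample agrees with $(T,\varsigma)$ on $\vbl(T)$ (\Cref{lemma:assignment-probability-bound}). This is why the paper proves only the solution-averaged bound (\Cref{lemma:linear-program-bad-leaf-loss}) via the auxiliary random path process of \Cref{definition:random-process-lp}, using \Cref{lemma:random-process-lp-probability-bound,corollary:lp-procedure-1-probability-bound}: averaging over $\*X^{\lp}\sim\mu_{\+C\setminus\{c_0\}}$ supplies the missing factor, and the averaged bound is exactly the form that combines with the weighted identities above. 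A per-descent (or per-$\bm{x}$) bound of $2^{-M}$ on the truncated mass, as your sketch asserts, is not available from the constraints alone.
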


The proof of \Cref{lemma:linear-program-error-bound} relies on the following claim, which provides an upper bound on the estimated average marginal probabilities for truncated nodes of the coupling tree.
\begin{claim}\label{lemma:linear-program-bad-leaf-loss}
Assume \Cref{condition:main-condition}. The following hold for the solution of the LP in \Cref{definition:linear-program}: 
\begin{align*}
    \frac{1}{|\Omega^{\+C\setminus \{c_0\}}|}\sum\limits_{\bm{x}\in \Omega^{\+C\setminus \{c_0\}}}\sum\limits_{\substack{(\+E,\+F,\sigma,\tau,T,\varsigma)\in \+L_{\trun}\\\text{with $\+E\land \sigma$ satisfied by }\bm{x}}}\hat{p}_{(\+E,\+F,\sigma,\tau,T,\varsigma)}^X 
    &\leq 2^{-M},\\
    \frac{1}{|\Omega^{\+C}|}\sum\limits_{\bm{y}\in \Omega^{\+C}}\sum\limits_{\substack{(\+E,\+F,\sigma,\tau,T,\varsigma)\in \+L_{\trun}\\\text{with $\+F\land \tau$ satisfied by }\bm{y}}}\hat{p}_{(\+E,\+F,\sigma,\tau,T,\varsigma)}^Y
    &\leq 2^{-M}.
\end{align*}
\end{claim}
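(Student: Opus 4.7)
The plan is to combine the truncation constraint on $\hat p^X_u$ with the marginal bound from Lemma~\ref{lemma:assignment-probability-bound} and then reduce the resulting sum to exactly the one estimated in the proof of Theorem~\ref{theorem:correlation-decay}. I address the first inequality; the second is symmetric under the exchange $(\sigma,\+E,\mu_{\+C\setminus\{c_0\}})\leftrightarrow(\tau,\+F,\mu_{\+C})$.

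First, I interchange the order of summation. Because $\+E\supseteq\+C\setminus\{c_0\}$ along every path in $\+T$, the set of $\bm{x}\in\Omega^{\+C\setminus\{c_0\}}$ satisfying $\+E\land\sigma$ has size $|\Omega^{\+C\setminus\{c_0\}}|\cdot\mu_{\+C\setminus\{c_0\}}(\+E\land\sigma)$, so the left-hand side equals $\sum_{u\in\+L_{\trun}}\hat p^X_u\cdot\mu_{\+C\setminus\{c_0\}}(\+E_u\land\sigma_u)$, which is at most $\sum_{u}\hat p^X_u\cdot\mu_{\+C\setminus\{c_0\}}(\sigma_u)$ by monotonicity. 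A straightforward induction on depth shows $\Lambda(\sigma_u)=\vbl(T_u)$ throughout $V(\+T)$, because each assignment step enlarges $\Lambda(\sigma)$ by exactly the new variables it contributes to $\vbl(T)$ through $\unpin{c}$. Lemma~\ref{lemma:assignment-probability-bound} then yields
\[
\mu_{\+C\setminus\{c_0\}}(\sigma_u)\;\le\; q^{-|\vbl(T_u)\cap\vgood|}\bigl(1-\mathrm{e}q^{-(1-\eps_1)k}\bigr)^{-|\vbl(T_u)|p_1\alpha},
\]
which matches the truncation bound on $\hat p^X_u$ from Item~\ref{item:linear-program-overflow-constraints}, so the product yields the per-summand bound $q^{-2|\vbl(T_u)\cap\vgood|}\bigl(1-\mathrm{e}q^{-(1-\eps_1)k}\bigr)^{-2|\vbl(T_u)|p_1\alpha}$.

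Next, I regroup the sum by $(T_u,\varsigma_u)$. Every truncated leaf satisfies $|V(T_u)|=M$ and has $T_u$ rooted at $c_0$, since the only initial discrepancy is $c_0$ itself. The proof of Lemma~\ref{lemma:conditional-deterministic} lifts verbatim from a random execution of $\Couple$ to an arbitrary root-to-leaf path in $\+T$: at every node the algorithm selects a unique smallest pinned discrepancy constraint $c$, and $(T,\varsigma)$ prescribes—through whether $\unpin{c}\in V(T)$ with $\vbl(c)=\mathrm{var}(\unpin{c})$—whether the path descends into the add-constraint child or into the uniquely specified assignment child reading its values from $\varsigma$. Hence at most one truncated leaf in $\+L_{\trun}$ is associated with each $(T,\varsigma)$, and the left-hand side is at most
\[
\sum_{T\in\=T^{c_0}_M}\sum_{\varsigma\in[q]^{\vbl(T)}} q^{-2|\vbl(T)\cap\vgood|}\bigl(1-\mathrm{e}q^{-(1-\eps_1)k}\bigr)^{-2|\vbl(T)|p_1\alpha}.
\]
This is identical to the expression bounded in~\eqref{eq:used-by-LP}, and the same calculation carried out in the proof of Theorem~\ref{theorem:correlation-decay}—using Properties~\ref{property:edge-expansion}, \ref{property:bounded-neighbourhood-growth}, and \ref{property:bounded-bad-fraction} under Condition~\ref{condition:main-condition}, with $M\ge\log n$—bounds this by $2^{-M}$.

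The delicate point is the injectivity $(T,\varsigma)\mapsto u$: one must verify that although the add-constraint children along the path are not recorded directly in $(T,\varsigma)$, they are forced deterministically at every node by the algorithm's rule of picking the smallest pinned discrepancy constraint. Once this bookkeeping is in place, the remainder of the argument is a clean reuse of the correlation-decay counting estimate.
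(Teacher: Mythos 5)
Your proof is correct and ends at exactly the paper's final estimate, but it gets there by a somewhat leaner route. The paper first builds the auxiliary random path $P^{\lp}$ driven by $\*X^{\lp}\sim\mu_{\+C\setminus\{c_0\}}$ (\Cref{definition:random-process-lp}, \Cref{lemma:random-process-lp-probability-bound}, \Cref{corollary:lp-procedure-1-probability-bound}) and rewrites the left-hand side as $\sum_{u\in\+L_{\trun}}\Pr{u\in P^{\lp}}$ before conditioning on $\*X^{\lp}$; you obtain the identical quantity $\sum_{u\in\+L_{\trun}}\hat{p}^X_u\cdot\mu_{\+C\setminus\{c_0\}}(\+E_u\land\sigma_u)$ by a direct interchange of summation, using that $\+E_u\supseteq\+C\setminus\{c_0\}$ along every branch. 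From that point the two arguments coincide: the per-leaf bound comes from pairing the truncation constraint with \Cref{lemma:assignment-probability-bound} (your induction $\Lambda(\sigma_u)=\vbl(T_u)$ is correct and makes explicit something the paper uses only implicitly when it writes $\Pr{\*X^{\lp}_{\vbl(T')}=\sigma(T',\varsigma')}$), the regrouping uses the same injectivity of leaf $\mapsto(T,\varsigma)$ supplied by \Cref{lemma:conditional-deterministic}, and the closing sum is exactly \eqref{eq:used-by-LP}, bounded by $2^{-M}$ as in the proof of \Cref{theorem:correlation-decay}. What your shortcut buys is that this claim does not need the $P^{\lp}$ machinery at all (the paper still needs that machinery for the dynamic sampler, so nothing is saved globally); nothing is lost for this claim. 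Your remark that the last step uses $M\ge\log n$ is also accurate — the paper's own proof inherits the same requirement from \Cref{theorem:correlation-decay}.
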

Assuming this claim holds, \Cref{lemma:linear-program-error-bound} can be proved using the same argument as in the proof of \cite[\text{Lemma 4.11}]{WY24}.
For completeness, we provide the proof here.
\begin{proof}[Proof of \Cref{lemma:linear-program-error-bound}]
Let $\hat{p}_{(\+E,\+F,\sigma,\tau,T,\varsigma)}^X$ and $\hat{p}_{(\+E,\+F,\sigma,\tau,T,\varsigma)}^Y$ denote a feasible solution of the linear program in \Cref{definition:linear-program}.
First, we claim the following equations hold for this feasible solution:
\begin{equation}\label{eq:lp-sum-to-1}
\begin{aligned}
    \sum\limits_{\substack{(\+E,\+F,\sigma,\tau,T,\varsigma)\in \+L_{\textnormal{valid}}\\\text{with $\+E\land \sigma$ satisfied by }\bm{x}}}\hat{p}_{(\+E,\+F,\sigma,\tau,T,\varsigma)}^X
    &=1, \quad\textnormal{for all }\bm{x}\in \Omega^{\+C\setminus \{c_0\}},\\
     \sum\limits_{\substack{(\+E,\+F,\sigma,\tau,T,\varsigma)\in \+L_{\textnormal{valid}}\\\text{with $\+F\land \tau$ satisfied by }\bm{y}}}\hat{p}_{(\+E,\+F,\sigma,\tau,T,\varsigma)}^Y
     &=1, \quad\textnormal{for all }\bm{y}\in \Omega^{\+C}.
\end{aligned}   
\end{equation}
These equations follow directly from verifying \Cref{definition:linear-program}. 
    By summing these equations over all $\bm{x}\in\Omega^{\+C\setminus \{c_0\}}$ and all $\bm{y}\in \Omega^{\+C}$, we obtain:
    \begin{equation}\label{eq:ss-sum}
    \begin{aligned}
                  |\Omega^{\+C\setminus \{c_0\}}|
                  &=\sum\limits_{\bm{x}\in \Omega^{\+C\setminus \{c_0\}}}\sum\limits_{\substack{(\+E,\+F,\sigma,\tau,T,\varsigma)\in \+L_{\text{valid}}\\\text{with $\+E\land \sigma$ satisfied by }\bm{x}}}\hat{p}_{(\+E,\+F,\sigma,\tau,T,\varsigma)}^X,\\
                      |\Omega^{\+C}|
                      &=\sum\limits_{\bm{y}\in \Omega^{\+C}}\sum\limits_{\substack{(\+E,\+F,\sigma,\tau,T,\varsigma)\in \+L_{\text{valid}}\\\text{with $\+F\land \tau$ satisfied by }\bm{y}}}\hat{p}_{(\+E,\+F,\sigma,\tau,T,\varsigma)}^Y.
    \end{aligned}
    \end{equation}
Thus, we can express $|\Omega^{\+C\setminus \{c_0\}}|$ as follows:
\begin{align*}
(\text{by \eqref{eq:ss-sum}})&&
|\Omega^{\+C\setminus \{c_0\}}|
 =&\sum\limits_{\bm{x}\in \Omega^{\+C\setminus \{c_0\}}}\sum\limits_{\substack{(\+E,\+F,\sigma,\tau,T,\varsigma)\in \+L_{\text{valid}}\\\text{with $\+E\land \sigma$ satisfied by }\bm{x}}}\hat{p}_{(\+E,\+F,\sigma,\tau,T,\varsigma)}^X\\
&&=&\sum\limits_{\bm{x}\in \Omega^{\+C\setminus \{c_0\}}}\sum\limits_{\substack{(\+E,\+F,\sigma,\tau,T,\varsigma)\in \+L_{\coup}\\\text{with $\+E\land \sigma$ satisfied by }\bm{x}}}\hat{p}_{(\+E,\+F,\sigma,\tau,T,\varsigma)}^X\\
&&&+\sum\limits_{\bm{x}\in \Omega^{\+C\setminus \{c_0\}}}\sum\limits_{\substack{(\+E,\+F,\sigma,\tau,T,\varsigma)\in \+L_{\trun}\\\text{with $\+E\land \sigma$ satisfied by }\bm{x}}}\hat{p}_{(\+E,\+F,\sigma,\tau,T,\varsigma)}^X\\
(\text{by \Cref{lemma:linear-program-bad-leaf-loss}})
&&\leq & \sum\limits_{(\+E,\+F,\sigma,\tau,T,\varsigma)\in \+L_{\coup}}|\Omega^{\+E\land \sigma}|\cdot \hat{p}_{(\+E,\+F,\sigma,\tau,T,\varsigma)}^X+2^{-M}\cdot |\Omega^{\+C\setminus \{c_0\}}|. 
\end{align*}
Thus,  $|\Omega^{\+C\setminus \{c_0\}}|$ can be bounded as:
\begin{align*}
    \left|\Omega^{\+C\setminus \{c_0\}}\right|
    \in &\left[\hat{z}^X,\,\,\left(1+2\cdot 2^{-M}\right)\hat{z}^X \right],\\
    &\text{where }\hat{z}^X\defeq \sum\limits_{(\+E,\+F,\sigma,\tau,T,\varsigma)\in \+L_{\coup}}|\Omega^{\+E\land \sigma}|\cdot \hat{p}_{(\+E,\+F,\sigma,\tau,T,\varsigma)}^X.
\end{align*}
Similarly, $|\Omega^{\+C}|$ can also be bounded as:
\begin{align*}
    \left|\Omega^{\+C}\right|
    \in &\left[\hat{z}^Y,\,\,\left(1+2\cdot 2^{-M}\right)\hat{z}^Y \right],\\
    &\text{where }\hat{z}^Y\defeq \sum\limits_{(\+E,\+F,\sigma,\tau,T,\varsigma)\in \+L_{\coup}}|\Omega^{\+F\land \tau}|\cdot \hat{p}_{(\+E,\+F,\sigma,\tau,T,\varsigma)}^Y.
\end{align*}
Consequently, the ratio ${|\Omega^{\+C\setminus \{c_0\}}|}/{|\Omega^{\+C}|}$ is bounded as:
\begin{align*}
\frac{|\Omega^{\+C\setminus \{c_0\}}|}{|\Omega^{\+C}|}
&\leq 
\left(1+2\cdot 2^{-M}\right)\frac{\sum\limits_{(\+E,\+F,\sigma,\tau,T,\varsigma)\in \+L_{\coup}}|\Omega^{\+E\land \sigma}|\cdot \hat{p}_{(\+E,\+F,\sigma,\tau,T,\varsigma)}^X}{\sum\limits_{(\+E,\+F,\sigma,\tau,T,\varsigma)\in \+L_{\coup}}|\Omega^{\+F\land \tau}|\cdot \hat{p}_{(\+E,\+F,\sigma,\tau,T,\varsigma)}^Y}\\
&\leq \left(1+2\cdot 2^{-M}\right)r_{+},
\end{align*}
where the last inequality follows from the leaf constraints of the LP (\Cref{item:linear-program-leaf-constraints} in \Cref{definition:linear-program}).

By symmetry, we also have:
\begin{align*}
\frac{|\Omega^{\+C\setminus \{c_0\}}|}{|\Omega^{\+C}|}
&\geq \left(1-2\cdot 2^{-M}\right)\frac{\sum\limits_{(\+E,\+F,\sigma,\tau,T,\varsigma)\in \+L_{\coup}}|\Omega^{\+E\land \sigma}|\cdot \hat{p}_{(\+E,\+F,\sigma,\tau,T,\varsigma)}^X}{\sum\limits_{(\+E,\+F,\sigma,\tau,T,\varsigma)\in \+L_{\coup}}|\Omega^{\+F\land \tau}|\cdot \hat{p}_{(\+E,\+F,\sigma,\tau,T,\varsigma)}^Y}\\
&\geq \left(1-2\cdot 2^{-M}\right)r_{-}.
\end{align*}
Combining these results, we conclude:
\[
\left(1-2\cdot 2^{-M}\right)r_-\leq \frac{|\Omega^{\+C\setminus \{c_0\}}|}{|\Omega^{\+C}|}\leq \left(1+2\cdot 2^{-M}\right)r_+.\qedhere
\]
\end{proof}

\subsubsection{Proof of the average marginal bound (\Cref{lemma:linear-program-bad-leaf-loss})}
To prove \Cref{lemma:linear-program-bad-leaf-loss},
we introduce an auxiliary random process induced by the LP feasible solution. 
This random process plays a key role in the proof of \Cref{lemma:linear-program-bad-leaf-loss},
and also forms the foundation for the sampling algorithm derived from the LP.

\begin{definition}[random path induced by the LP solution]\label{definition:random-process-lp}
We define the following random process for generating a random path from the root to a leaf in the $M$-truncated coupling tree $\+T=\+T_M(\Phi,c_0)$,
based on a feasible solution $\hat{p}_{(\+E,\+F,\sigma,\tau,T,\varsigma)}^X$, $\hat{p}_{(\+E,\+F,\sigma,\tau,T,\varsigma)}^Y$  to the linear program in \Cref{definition:linear-program}.

Let $\*X^{\lp}\sim \mu_{\+C\setminus \{c_0\}}$ be a random satisfying assignment. 
The random process starts at the root $(\+C\setminus \{c_0\},\+C,\varnothing,\varnothing,\emptyset,\varnothing)$
and proceeds as follows at each non-leaf node $(\+E,\+F,\sigma,\tau,T,\varsigma)$ in $\+T$:
\begin{enumerate}
        \item If $\+F^{\tau}\not\subseteq \+E^{\sigma}$, let $c$ be the smallest constraint in $\+F^{\tau}\setminus \+E^{\sigma}$:
        \begin{itemize}
            \item If $c$ is satisfied by $\*X^{\lp}$, move to the child node $(\+E\cup \{c\},\+F,\sigma,\tau,T,\varsigma)$.
            \item Otherwise, for each $\rho\in [q]^{\vbl(c)}$, move to the child node $(\+E,\+F,\sigma\land \*X^{\lp}_{\vbl(c)},\tau\land \rho,T\oplus \unpin{c},\varsigma\land \rho)$  with probability:
            \[
            \frac{\hat{p}^X_{(\+E,\+F,\sigma\land \*X^{\lp}_{\vbl(c)},\tau\land \rho,T\oplus \unpin{c},\varsigma\land \rho)}}{\hat{p}_{(\+E,\+F,\sigma,\tau,T,\varsigma)}^X}.
            \]
        \end{itemize}
        \item Otherwise, if $\+F^{\tau}\subseteq \+E^{\sigma}$, let $c$ be the smallest constraint in $\+E^{\sigma}\setminus \+F^{\tau}$:
        \begin{itemize}
            \item Move to the child node $(\+E,\+F\cup \{c\},\sigma,\tau,T,\varsigma)$ with probability:
            \[
            \frac{\hat{p}^X_{(\+E,\+F\cup \{c\},\sigma,\tau,T,\varsigma)}}{\hat{p}_{(\+E,\+F,\sigma,\tau,T,\varsigma)}^X};
            \]
            \item Alternatively, move to the child node $(\+E,\+F,\sigma\land \*X^{\lp}_{\vbl(c)},\tau\land \rho,T\oplus \unpin{c},\varsigma\land \*X^{\lp}_{\vbl(c)})$, where $\rho=\vio{c}$,  with probability:  
            \[
            \frac{\hat{p}^X_{(\+E,\+F,\sigma\land \*X^{\lp}_{\vbl(c)},\tau\land \rho,T\oplus \unpin{c},\varsigma\land \*X^{\lp}_{\vbl(c)})}}{\hat{p}_{(\+E,\+F,\sigma,\tau,T,\varsigma)}^X}.
            \]
        \end{itemize}
\end{enumerate}
Let $P^{\lp}$ denote the random  path generated by this process, and let $\mu^{\lp}$ denote its distribution.
\end{definition}

By \Cref{item:linear-program-non-leaf-constraints} of \Cref{definition:linear-program}, 
it is straightforward to verify that the random process in  \Cref{definition:random-process-lp} generates  a root-to-leaf path $P^{\lp}$ in $\+T$.
We also have the following probability bounds.

\begin{lemma}\label{lemma:random-process-lp-probability-bound}
Assume $\Phi$ is satisfiable. For each $(\+E,\+F,\sigma,\tau,T,\varsigma)\in V(\+T)$, it holds that
\[
\Pr{(\+E,\+F,\sigma,\tau,T,\varsigma)\in P^{\lp}}=\mu_{\+C\setminus \{c_0\}}(\+E\land \sigma)\cdot \hat{p}_{(\+E,\+F,\sigma,\tau,T,\varsigma)}^X.
\]
Moreover, conditioned on $(\+E,\+F,\sigma,\tau,T,\varsigma)\in P^\lp$, it follows that
\[
\*X^{\lp}\sim \mu_{\+E}^{\sigma},
\]
for each $(\+E,\+F,\sigma,\tau,T,\varsigma)$ such that 
$\Pr{(\+E,\+F,\sigma,\tau,T,\varsigma)\in P^{\lp}}>0$.
\end{lemma}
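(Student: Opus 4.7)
The plan is to prove both claims jointly by structural induction on the depth of the node $u = (\+E,\+F,\sigma,\tau,T,\varsigma)$ in the coupling tree $\+T$, carrying the combined hypothesis that $\Pr{u \in P^\lp} = \mu_{\+C\setminus\{c_0\}}(\+E \land \sigma) \cdot \hat{p}^X_u$ and that, conditionally on arrival at $u$, $\*X^\lp \sim \mu_\+E^\sigma$. The base case at the root is immediate: $\hat{p}^X_{\text{root}}=1$ by the range constraint in \Cref{item:linear-program-range-constraints}, $\mu_{\+C\setminus\{c_0\}}(\+C\setminus\{c_0\}) = 1$, and the initial draw $\*X^\lp\sim\mu_{\+C\setminus\{c_0\}}$ coincides with $\mu_\+E^\sigma$ for this choice of $\+E$ and $\sigma$.

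For the inductive step I would mirror the case analysis in \Cref{definition:random-process-lp}. In Case 1 with $c$ the smallest constraint in $\+F^\tau\setminus\+E^\sigma$ and $\pi=\vio{c}$, the branch where $c$ is satisfied by $\*X^\lp$ leads to the single child $u^+ = (\+E\cup\{c\},\dots)$ with conditional probability $\mu_\+E^\sigma(c)$ by the inductive hypothesis; combined with the LP identity $\hat{p}^X_u = \hat{p}^X_{u^+}$ in \Cref{item:linear-program-2-a}, this yields $\Pr{u^+\in P^\lp} = \mu_{\+C\setminus\{c_0\}}(\+E\land\sigma\land c)\cdot\hat{p}^X_{u^+}$, and the further conditioning refines $\*X^\lp$ to $\mu_{\+E\cup\{c\}}^\sigma$. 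In the opposite branch, the event ``$c$ not satisfied'' is equivalent to $\*X^\lp_{\vbl(c)} = \pi$ by atomicity, with conditional probability $\mu_{\+E,\vbl(c)}^\sigma(\pi)$; then the independently sampled $\rho$ selects the child $u^\rho$ with probability $\hat{p}^X_{u^\rho}/\hat{p}^X_u$, and these fractions sum to $1$ by the other half of \Cref{item:linear-program-2-a}. Using the chain-rule identity $\mu_{\+C\setminus\{c_0\}}(\+E\land\sigma)\cdot\mu_{\+E,\vbl(c)}^\sigma(\pi) = \mu_{\+C\setminus\{c_0\}}(\+E\land\sigma\land\pi)$ (which holds because $\+E\supseteq\+C\setminus\{c_0\}$ is preserved along the tree) closes the probability calculation, and independence of the $\rho$-draw from $\*X^\lp$ delivers the refined conditional $\*X^\lp \sim \mu_\+E^{\sigma\land\pi}$.

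Case 2 ($\+F^\tau\subseteq\+E^\sigma$) is symmetric, with the minor distinction that both outgoing transitions have probabilities depending explicitly on LP variables rather than on $\*X^\lp$. Correctness here relies on \Cref{item:linear-program-2-b}, which asserts $\hat{p}^X_u = \hat{p}^X_{u^+} + \hat{p}^X_{u^\pi}$ \emph{for every} $\pi\in[q]^{\vbl(c)}$, ensuring that the transition probabilities sum to $1$ conditionally on any realization of $\*X^\lp_{\vbl(c)}$. I expect the main obstacle to be a bookkeeping one: the ``$c$ not satisfied'' subcase forces the extension of $\sigma$ to equal $\*X^\lp_{\vbl(c)}$, so one must cleanly separate the independent random $\rho$-choice from the refinement of $\*X^\lp$ caused by conditioning on $\*X^\lp_{\vbl(c)}=\pi$; once this independence is made explicit, the chain-rule identity assembles each step into the claimed form and the induction closes uniformly in both cases.
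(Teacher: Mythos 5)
Your proposal is correct and follows essentially the same route as the paper: a top-down structural induction carrying both the probability identity and the conditional law $\*X^{\lp}\sim\mu_{\+E}^{\sigma}$, with the inductive step splitting into the two cases of \Cref{definition:random-process-lp}, invoking the non-leaf LP constraints and the chain rule via $\+E\land\sigma\implies\+C\setminus\{c_0\}$. The only difference is cosmetic: you spell out the case $\+F^{\tau}\subseteq\+E^{\sigma}$ (and the independence of the $\rho$-selection from $\*X^{\lp}$), which the paper treats as straightforward and omits.
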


\begin{proof}
We prove the lemma by structural induction in the top-down manner. 
The induction basis is the root node $(\+C\setminus \{c_0\},\+C,\varnothing,\varnothing,\emptyset,\varnothing)$, 
where the lemma follows easily from \Cref{definition:random-process-lp}.

     For the induction step, we only consider the case when $\+F^{\tau}\not\subseteq \+E^{\sigma}$, 
     as the complementary case with $\+F^{\tau}\subseteq \+E^{\sigma}$ is straightforward. 
     Let $c$ be the smallest constraint in $ \+F^{\tau}\setminus \+E^{\sigma}$. 
     We have the following cases:  
     \begin{itemize}
        \item \textbf{Case 1}: $c$ is satisfied by $\*X^{\lp}$. The process transitions to $(\+E\cup \{c\},\+F,\sigma,\tau,T,\varsigma)$. By the induction hypothesis, we know that $\*X^{\lp}\sim \mu_{\+E}^{\sigma}$. 
        Therefore, the event that $c$ is satisfied by $\*X^{\lp}$ occurs with probability $\mu_{\+E}^{\sigma}(c)$. 
        Consequently, we have
        \begin{align*}
     &\Pr{(\+E\cup \{c\},\+F,\sigma,\tau,T,\varsigma)\in P^{\lp}}\\
     (\text{by \Cref{definition:random-process-lp}})\quad=&  \Pr{(\+E,\+F,\sigma,\tau,T,\varsigma)\in P^{\lp}}\cdot  \mu_{\+E}^{\sigma}(c)\\
     (\text{by I.H.})\quad=&  \mu_{\+C\setminus \{c_0\}}(\+E\land \sigma)\cdot \hat{p}_{(\+E,\+F,\sigma,\tau,T,\varsigma)}^X\cdot \mu_{\+E}^{\sigma}(c)\\
     (\text{by the chain rule})\quad=& \mu_{\+C\setminus \{c_0\}}((\+E\cup \{c\})\land \sigma)\cdot \hat{p}_{(\+E,\+F,\sigma,\tau,T,\varsigma)}^X.
         \end{align*}
   Additionally, conditioning on moving to $(\+E\cup \{c\},\+F,\sigma,\tau,T,\varsigma)$, it follows that
    \[
        \*X^{\lp}\sim \mu_{\+E\cup \{c\}}^{\sigma},        \]
thus completing the proof of this case.
         
         \item \textbf{Case 2}: $c$ is violated by $\*X^{\lp}$. Let $\pi=\*X^{\lp}_{\vbl(c)}$, then we have $\pi=\vio{c}$.  By the induction hypothesis, we know that $\*X^{\lp}\sim \mu_{\+E}^{\sigma}$. Therefore, the event that $c$ is violated by $\*X^{\lp}$ occurs with probability \[
         \mu_{\+E}^{\sigma}(\neg c) = \mu_{\+E}^{\sigma}(\pi).
         \]
         In this case, for each $\rho\in [q]^{\vbl(c)}$, the process transitions to $(\+E,\+F,\sigma\land \pi,\tau\land \rho,T\oplus \unpin{c},\varsigma\land \rho)$ with probability $\frac{\hat{p}^X_{(\+E,\+F,\sigma\land \pi,\tau\land \rho,T\oplus \unpin{c},\varsigma\land \rho)}}{\hat{p}_{(\+E,\+F,\sigma,\tau,T,\varsigma)}^X}$.
        Therefore, for each $\rho\in [q]^{\vbl(c)}$, we have
          \begin{align*}
     &\Pr{(\+E,\+F,\sigma\land \pi,\tau\land \rho,T\oplus \unpin{c},\varsigma\land \rho)\in P^{\lp}}\\
    (\text{by \Cref{definition:random-process-lp}})\quad=&  \Pr{(\+E,\+F,\sigma,\tau,T,\varsigma)\in P^{\lp}}\cdot \mu_{\+E}^{\sigma}(\pi)\cdot \frac{\hat{p}^X_{(\+E,\+F,\sigma\land \pi,\tau\land \rho,T\oplus \unpin{c},\varsigma\land \rho)}}{\hat{p}_{(\+E,\+F,\sigma,\tau,T,\varsigma)}^X}\\
     (\text{by I.H.})\quad=&  \mu_{\+C\setminus \{c_0\}}(\+E\land \sigma)\cdot \hat{p}_{(\+E,\+F,\sigma,\tau,T,\varsigma)}^X\cdot \mu_{\+E}^{\sigma}(\pi)\cdot \frac{\hat{p}^X_{(\+E,\+F,\sigma\land \pi,\tau\land \rho,T\oplus \unpin{c},\varsigma\land \rho)}}{\hat{p}_{(\+E,\+F,\sigma,\tau,T,\varsigma)}^X}\\
    (\text{by the chain rule})\quad=& \mu_{\+C\setminus \{c_0\}}(\+E\land \sigma\land \pi)\cdot \hat{p}^X_{(\+E,\+F,\sigma\land \pi,\tau\land \rho,T\oplus \unpin{c},\varsigma\land \rho)}.
         \end{align*}
     \end{itemize}
        This concludes the proof for the case when $\+F^{\tau}\not\subseteq \+E^{\sigma}$. As we discussed earlier, it proves the lemma.
\end{proof}

The following corollary follows from \Cref{lemma:random-process-lp-probability-bound} by the definition of conditional probability.

\begin{corollary}\label{corollary:lp-procedure-1-probability-bound}
Assume $\Phi$ is satisfiable. For each $\bm{x}\in \Omega^{\+C\setminus \{c_0\}}$ and each $(\+E,\+F,\sigma,\tau,T,\varsigma)\in V(\+T)$, 
\[
\Pr{(\+E,\+F,\sigma,\tau,T,\varsigma)\in P^{\lp}\mid \*X^{\lp}=\bm{x}}=\begin{cases}
0 & \text{if }\bm{x}\notin \Omega^{\+E\land \sigma},\\
\hat{p}_{(\+E,\+F,\sigma,\tau,T,\varsigma)}^X & \text{if }\bm{x}\in \Omega^{\+E\land \sigma}.
\end{cases}
\]
\end{corollary}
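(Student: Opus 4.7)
The plan is to apply Bayes' rule, splitting on whether $\bm{x}$ satisfies $\+E\land\sigma$ and invoking both conclusions of \Cref{lemma:random-process-lp-probability-bound} to compute the joint probability of $\*X^{\lp}=\bm{x}$ and the event $(\+E,\+F,\sigma,\tau,T,\varsigma)\in P^{\lp}$. Since $\*X^{\lp}\sim\mu_{\+C\setminus\{c_0\}}$ is uniform over $\Omega^{\+C\setminus\{c_0\}}$, the marginal $\Pr{\*X^{\lp}=\bm{x}}$ equals $1/|\Omega^{\+C\setminus\{c_0\}}|$ for every $\bm{x}\in \Omega^{\+C\setminus\{c_0\}}$, so everything reduces to expressing the joint probability in a form that cancels this marginal.

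For the first case, $\bm{x}\notin \Omega^{\+E\land\sigma}$, I would appeal to the second conclusion of \Cref{lemma:random-process-lp-probability-bound}: conditional on the event $(\+E,\+F,\sigma,\tau,T,\varsigma)\in P^{\lp}$, the random assignment $\*X^{\lp}$ is distributed as $\mu_{\+E}^{\sigma}$, whose support is $\Omega^{\+E\land\sigma}$. Hence $\Pr{\*X^{\lp}=\bm{x}\mid (\+E,\+F,\sigma,\tau,T,\varsigma)\in P^{\lp}} = 0$, the joint probability vanishes, and the conditional probability in the corollary is $0$ as claimed.

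For the second case, $\bm{x}\in \Omega^{\+E\land\sigma}$, I would first note the structural fact that along the coupling tree $\+E$ only grows from its initial value $\+C\setminus\{c_0\}$, so $\Omega^{\+E\land\sigma}\subseteq \Omega^{\+C\setminus\{c_0\}}$, and both $\mu_{\+C\setminus\{c_0\}}$ and $\mu_{\+E}^{\sigma}$ are uniform distributions on their respective supports. Combining the two conclusions of \Cref{lemma:random-process-lp-probability-bound} then gives
\[
\Pr{(\+E,\+F,\sigma,\tau,T,\varsigma)\in P^{\lp}\land \*X^{\lp}=\bm{x}} = \mu_{\+C\setminus\{c_0\}}(\+E\land\sigma)\cdot \hat{p}^{X}_{(\+E,\+F,\sigma,\tau,T,\varsigma)}\cdot \mu_{\+E}^{\sigma}(\bm{x}).
\]
A direct count using the containment above shows that $\mu_{\+C\setminus\{c_0\}}(\+E\land\sigma)\cdot \mu_{\+E}^{\sigma}(\bm{x}) = \tfrac{|\Omega^{\+E\land\sigma}|}{|\Omega^{\+C\setminus\{c_0\}}|}\cdot \tfrac{1}{|\Omega^{\+E\land\sigma}|} = \mu_{\+C\setminus\{c_0\}}(\bm{x})$, which cancels against $\Pr{\*X^{\lp}=\bm{x}}$ in the denominator of Bayes' rule, leaving exactly $\hat{p}^{X}_{(\+E,\+F,\sigma,\tau,T,\varsigma)}$. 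I do not anticipate any real obstacle: the corollary is a straightforward consequence of \Cref{lemma:random-process-lp-probability-bound} via the definition of conditional probability, as the excerpt itself indicates; the only subtlety worth stating explicitly is the monotone containment $\+E\supseteq \+C\setminus\{c_0\}$, which is immediate from the inductive construction of the coupling tree in \Cref{definition:coupling-tree}.
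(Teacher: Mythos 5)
Your proposal is correct and follows essentially the same route as the paper's proof: both apply the definition of conditional probability, use the uniformity of $\*X^{\lp}$ over $\Omega^{\+C\setminus\{c_0\}}$, kill the joint probability via the conditional law $\mu_{\+E}^{\sigma}$ when $\bm{x}\notin\Omega^{\+E\land\sigma}$, and otherwise combine both conclusions of \Cref{lemma:random-process-lp-probability-bound} so that $\mu_{\+C\setminus\{c_0\}}(\+E\land\sigma)\cdot\frac{1}{|\Omega^{\+E\land\sigma}|}$ cancels the marginal, leaving $\hat{p}^{X}_{(\+E,\+F,\sigma,\tau,T,\varsigma)}$. Your explicit remark about the containment $\+E\supseteq\+C\setminus\{c_0\}$ is a small clarification the paper leaves implicit, but it is the same argument.
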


\begin{proof}
For each $\bm{x}\in \Omega$ and each $(\+E,\+F,\sigma,\tau,T,\varsigma)\in V(\+T)$, we have
\begin{equation}\label{eq:lp-procedure-transform}
\begin{aligned}
    \Pr{(\+E,\+F,\sigma,\tau,T,\varsigma)\in P^\lp\mid \*X^{\lp}=\bm{x}}=&\frac{\Pr{(\+E,\+F,\sigma,\tau,T,\varsigma)\in P^\lp\land \*X^{\lp}=\bm{x}}}{\Pr{\*X^{\lp}=\bm{x}}}\\
    (\text{by \Cref{definition:random-process-lp}})\quad =&|\Omega^{\+C\setminus \{c_0\}}|\cdot \Pr{(\+E,\+F,\sigma,\tau,T,\varsigma)\in P^\lp\land \*X^{\lp}=\bm{x}}.
\end{aligned}
\end{equation}
If $\bm{x}\notin  \Omega^{\+E\land \sigma}$,
then by \Cref{lemma:random-process-lp-probability-bound},  the above expression equals $0$. 
Therefore, we assume $\bm{x}\in  \Omega^{\+E\land \sigma}$. 
By \Cref{lemma:random-process-lp-probability-bound}, conditioning on the event $ (\+E,\+F,\sigma,\tau,T,\varsigma)$, the distribution of $\sigma$ is uniform over all assignments in $\Omega^{\+E\land \sigma}$. 
Thus, we have
\begin{align*}
\Pr{(\+E,\+F,\sigma,\tau,T,\varsigma)\in P^{\lp}\land \*X^{\lp}=\bm{x}}=& \Pr{(\+E,\+F,\sigma,\tau,T,\varsigma)\in P^{\lp}}\cdot \frac{1}{|\Omega^{\+E\land \sigma}|}\\
(\text{by \Cref{lemma:random-process-lp-probability-bound}})\quad=& \frac{|\Omega^{\+E\land \sigma}|}{|\Omega^{\+C\setminus \{c_0\}}|}\cdot \hat{p}_{(\+E,\+F,\sigma,\tau,T,\varsigma)}^X\cdot \frac{1}{|\Omega^{\+E\land \sigma}|}\\
=& \frac{1}{|\Omega^{\+C\setminus \{c_0\}}|}\cdot \hat{p}_{(\+E,\+F,\sigma,\tau,T,\varsigma)}^X,
\end{align*}
combining this with \eqref{eq:lp-procedure-transform} completes the proof of the corollary.
\end{proof}

We are now ready to prove \Cref{lemma:linear-program-bad-leaf-loss}.
\begin{proof}[Proof of \Cref{lemma:linear-program-bad-leaf-loss}]
We will prove the first inequality; the second inequality can be proved by following a similar approach.
  
First, we verify the following identity:
\begin{equation}\label{eq:linear-program-leaf-loss-1}
\begin{aligned}
&\frac{1}{|\Omega^{\+C\setminus \{c_0\}}|}\sum\limits_{\bm{x}\in \Omega^{\+C\setminus \{c_0\}}}\sum\limits_{\substack{(\+E,\+F,\sigma,\tau,T,\varsigma)\in \+L_{\trun}\\\text{with $\+E\land \sigma$ satisfied by }\bm{x}}}\hat{p}_{(\+E,\+F,\sigma,\tau,T,\varsigma)}^X\\
=&\sum\limits_{\bm{x}\in \Omega^{\+C\setminus \{c_0\}}}\sum\limits_{\substack{(\+E,\+F,\sigma,\tau,T,\varsigma)\in \+L_{\trun}\\\text{with $\+E\land \sigma$ satisfied by }\bm{x}}}\frac{\Pr{(\+E,\+F,\sigma,\tau,T,\varsigma)\in P^\lp}}{|\Omega^{\+E\land \sigma}|}\\
=&\sum\limits_{(\+E,\+F,\sigma,\tau,T,\varsigma)\in \+L_{\trun}}\Pr{(\+E,\+F,\sigma,\tau,T,\varsigma)\in P^\lp},
\end{aligned}
\end{equation}
where the first equality follows from \Cref{lemma:random-process-lp-probability-bound} and the fact that for each $(\+E,\+F,\sigma,\tau,T,\varsigma)\in V(\+T)$, any assignment $\bm{x}$ that satisfies $\+E\land \sigma$ must also satisfy $\+C\setminus \{c_0\}$. 
The second equality follows from this same fact and the exchange of the order of summation.

Combining these results, we have the following: 
\begin{align*}
&\sum\limits_{(\+E,\+F,\sigma,\tau,T,\varsigma)\in \+L_{\trun}}\Pr{(\+E,\+F,\sigma,\tau,T,\varsigma)\in P^\lp}\\
\leq & \sum\limits_{T'\in \=T^{c_0}_M}\sum\limits_{\varsigma'\in [q]^{\vbl(T')}}\sum\limits_{\substack{(\+E,\+F,\sigma,\tau,T,\varsigma)\in P^\lp\\ T=T'\land \varsigma=\varsigma'}} \Pr{(\+E,\+F,\sigma,\tau,T,\varsigma)\in P^\lp}\\
\leq & \sum\limits_{T'\in \=T^{c_0}_M}\sum\limits_{\varsigma'\in [q]^{\vbl(T')}}\sum\limits_{\substack{(\+E,\+F,\sigma,\tau,T,\varsigma)\in P^\lp\\ T=T'\land \varsigma=\varsigma'}} \sum\limits_{\bm{x}\in \Omega^{\+E\land \sigma}}\Pr{ \*X^{\lp}=\bm{x}}\cdot \Pr{(\+E,\+F,\sigma,\tau,T,\varsigma)\in P^\lp \mid \*X^{\lp}=\bm{x}}\\
 =&\sum\limits_{T'\in \=T^{c_0}_M}\sum\limits_{\varsigma'\in [q]^{\vbl(T')}}\sum\limits_{\substack{(\+E,\+F,\sigma,\tau,T,\varsigma)\in P^\lp\\ T=T'\land \varsigma=\varsigma'}} \sum\limits_{\bm{x}\in \Omega^{\+E\land \sigma}} \Pr{ \*X^{\lp}=\bm{x}}\cdot\hat{p}_{(\+E,\+F,\sigma,\tau,T,\varsigma)}^X.
\end{align*}
Here, the last equality follows directly from \Cref{corollary:lp-procedure-1-probability-bound}. 

Note that by \Cref{lemma:conditional-deterministic}, fixing the witness tree $T$ and the witness assignment $\varsigma$ uniquely identifies a node $(\+E,\+F,\sigma,\tau,T,\varsigma)\in P^\lp$ satisfying $T=T'\land \varsigma=\varsigma'$. We then denote this unique node as $N(T',\sigma')$, and the partial assignment $\sigma$ in this unique node as $\sigma(T',\varsigma')$. Therefore, we further have

\begin{align}
&\sum\limits_{T'\in \=T^{c_0}_M}\sum\limits_{\varsigma'\in [q]^{\vbl(T')}}\sum\limits_{\substack{(\+E,\+F,\sigma,\tau,T,\varsigma)\in P^\lp\\ T=T'\land \varsigma=\varsigma'}} \sum\limits_{\bm{x}\in \Omega^{\+E\land \sigma}} \Pr{ \*X^{\lp}=\bm{x}}\cdot\hat{p}_{(\+E,\+F,\sigma,\tau,T,\varsigma)}^X \nonumber\\
(\star)\quad \leq & \sum\limits_{T'\in \=T^{c_0}_M}\sum\limits_{\varsigma'\in [q]^{\vbl(T')}} \Pr{ \*X^{\lp}_{\vbl(T')}=\sigma(T',\varsigma')}\cdot \hat{p}_{N(T',\sigma')}^X \nonumber\\ 
(\blacktriangle)\quad \leq &\sum\limits_{T'\in \=T^{c_0}_M}\sum\limits_{\varsigma'\in [q]^{\vbl(T')}}  \Pr{ \*X^{\lp}_{\vbl(T')}=\sigma(T',\varsigma')}\cdot q^{-|\vbl(T')\cap \vgood|}\cdot \left( 1-\mathrm{e}q^{-(1-\eps_1)k} \right)^{-|\vbl(T')|p_1\alpha} \nonumber\\
(\blacksquare)\quad \leq & \sum\limits_{T'\in \=T^{c_0}_M}\sum\limits_{\varsigma'\in [q]^{\vbl(T')}} q^{-2|\vbl(T')\cap \vgood|}\cdot \left( 1-\mathrm{e}q^{-(1-\eps_1)k} \right)^{-2|\vbl(T')|p_1\alpha} \nonumber\\
\leq & \sum\limits_{T'\in \=T^{c_0}_M} q^{|\vbl(T')|}\cdot q^{-2|\vbl(T')\cap \vgood|}\cdot \left( 1-\mathrm{e}q^{-(1-\eps_1)k} \right)^{-2|\vbl(T')|p_1\alpha} \label{eq:same-as-coupling}\\
(\triangle)\quad\leq &2^{-M}. \nonumber
\end{align}
The $(\star)$ inequality is derived using the argument above. The $(\blacktriangle)$ inequality follows from \Cref{item:linear-program-overflow-constraints} of \Cref{definition:linear-program}. 
The $(\blacksquare)$ inequality follows from \Cref{lemma:assignment-probability-bound}. 
Finally, the $(\triangle)$ inequality holds for the chosen parameters in \Cref{condition:main-condition} and was already established in the proof of \Cref{theorem:correlation-decay}; in particular, \eqref{eq:same-as-coupling} is exactly \eqref{eq:used-by-LP} and hence the same proof applies.
\end{proof}

\subsection{Sampling and counting via linear programming}
In this subsection, we apply the linear program in \Cref{definition:linear-program} to derive sampling and counting algorithms and prove \Cref{theorem:CSP-main}. The proofs essentially follow those in ~\cite[Section 5]{WY24}, and we include for completeness.

We begin with the counting algorithm, which directly follows from the analysis of the linear program presented in the previous subsection.

\begin{proof}[Proof of the counting part of \Cref{theorem:CSP-main}]
Let $\+C=\{c_1,c_2,\dots,c_m\}$. Note that $km=n\alpha$.  
For each $0\leq i\leq m$, define $\+C_i=\{c_1,c_2,\dots,c_i\}$, $\Phi_i=(V,[q],\+C_i)$. 
In particular, $\Phi_m=\Phi$. 
By applying a \emph{constraint-wise self-reduction}, we decompose $Z(\Phi_m)$ into the following telescopic product:
\begin{equation}
    Z(\Phi_m)=Z(\Phi_0)\cdot\frac{Z(\Phi_{m})}{Z(\Phi_0)}=Z(\Phi_0)\cdot \prod\limits_{i=1}^{m}\frac{Z(\Phi_i)}{Z(\Phi_{i-1})}.
\end{equation}
Note that $\Phi_0$ is a trivial CSP formula, so $Z_{\Phi_0}=q^{|V|}$.
Moreover, and crucially, observe that each $\Phi_i$ satisfies \Cref{condition:main-condition} for $0\leq i\leq m$, assuming the original CSP formula $\Phi$ satisfies \Cref{condition:main-condition}. Setting:
\[
M=1+\log \frac{4m}{\eps},
\]
we use the LP in \Cref{definition:linear-program} with binary search to approximate each ratio $\frac{Z(\Phi_i)}{Z(\Phi_{i-1})}=\frac{|\Omega^{\+C_i}|}{|\Omega^{\+C_{i-1}}|}$ within a multiplicative error of $\frac{\varepsilon}{4m}$. 
According to \Cref{lemma:lp-properties,lemma:lp-feasibility,lemma:linear-program-error-bound}, 
this can be done in time $O\left(\left(\frac{n}{\varepsilon}\right)^{\poly(k,\log q.\alpha)}\right)$. 
Therefore, we obtain an estimate of $Z(\Phi)=Z(\Phi_m)$ within a total multiplicative error of $\varepsilon$ in time $O\left(\left(\frac{n}{\varepsilon}\right)^{\poly(k,\log q,\alpha)}\right)$, as required.
\end{proof}

For the sampling part, we establish a constraint-wise self-reduction for sampling by constructing a \emph{dynamic sampler}.
Given an atomic CSP formula $\Phi=(V,[q],\+C)$ and a constraint $c\in \+C\setminus \{c_0\}$, it dynamically updates a current sample $\sigma^{\-{in}}\sim \mu_{\+C\setminus \{c_0\}}$ to a new sample $\sigma^{\-{out}}\sim \mu_{\+C}$. 
This dynamic sampler is derived from the random process in \Cref{definition:random-process-lp}, which was initially introduced to analyze the linear program.

\begin{definition}[dynamic sampler]\label{definition:marginal-sampling-algorithm}
The algorithm takes as input a CSP $\Phi=(V,[q],\+C)$, a constraint $c_0\in \+C$, and  an error bound $\varepsilon\in (0,1)$.
Additionally, the algorithm is also given access to an assignment $\sigma^{\text{in}}\in [q]^V$ that satisfies $\Phi^{c_0}=(V,[q],\+C\setminus \{c_0\})$. 

    The algorithm proceeds as follows to update  $\sigma^{\text{in}}$ to a new assignment $\sigma^{\text{out}}\in [q]^V$:
    \begin{itemize}
    \item   Set the truncation threshold:
        \[
            M=1+\log \frac{4}{\varepsilon}.
        \]
        Construct the LP on the $M$-truncated coupling tree $\+T=\+T_M(\Phi,c_0)$, as described in \Cref{definition:linear-program}.
    \item Use binary search to find an interval $[r_{-},r_{+}]$ such that $r_{-}\geq \frac{4+\varepsilon}{4+2\varepsilon}r_{+}$ and the LP remains feasible for parameters $r_{-}$ and $r_{+}$. 
         Let $\left\{\hat{p}_{(\+E,\+F,\sigma,\tau,T,\varsigma)}^X,\hat{p}_{(\+E,\+F,\sigma,\tau,T,\varsigma)}^Y\right\}_{(\+E,\+F,\sigma,\tau,T,\varsigma)\in V(\+T)}$ be the corresponding LP feasible solution. 
     \item Simulate the random process from \Cref{definition:random-process-lp} using the feasible solution obtained above, 
     replacing the random assignment $\*X^{\lp}$ (used as the random seed for the process) with $\sigma^{\text{in}}$.
     Let $(\bm{\+E},\bm{\+F},\bm{\sigma},\bm{\tau},\bm{T},\bm{\varsigma})\in \+L$ be the random leaf reached by this process.
          \item  If the leaf node $(\bm{\+E},\bm{\+F},\bm{\sigma},\bm{\tau},\bm{T},\bm{\varsigma})\notin \+L_{\coup}$, 
          then set $\sigma^{\text{out}}\in [q]^V$ to be an arbitrary satisfying assignment of $\Phi$. 
          Otherwise, update  $\sigma^{\text{in}}$  by modifying the assigned variables according to $\bm{\tau}$, i.e., set $\sigma^{\text{out}}\gets \bm{\tau}\land \sigma^{\text{in}}_{V\setminus\Lambda(\bm{\tau})}$.
    \end{itemize}
\end{definition}

We will prove the following lemma, which shows that the new sample produced by the dynamic sampler approximates the target distribution $\mu_{\+C}$.

\begin{lemma}\label{lemma:marginal-sampler-tvd-bound}
Assume \Cref{condition:main-condition} and that $\sigma^{\textnormal{in}}\sim \mu_{\+C\setminus \{c_0\}}$.
Then, the assignment $\sigma^{\textnormal{out}}$ produced by the dynamic sampler in \Cref{definition:marginal-sampling-algorithm} follows a distribution $\mu^{\textnormal{out}}$ that satisfies
\[
\dtv(\mu^{\textnormal{out}},\mu_{\+C})\leq \varepsilon.
\]
\end{lemma}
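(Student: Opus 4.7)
The plan is to compute the output distribution $\mu^{\textnormal{out}}$ pointwise for each $\bm{y}\in\Omega^{\+C}$, derive a uniform lower bound against $\mu_{\+C}(\bm{y})=1/|\Omega^{\+C}|$, and read off the total variation distance from the resulting deficit. The crucial starting point is \Cref{lemma:random-process-lp-probability-bound}, applied with the random seed $\*X^{\lp}$ of the process in \Cref{definition:random-process-lp} replaced by $\sigma^{\textnormal{in}}\sim\mu_{\+C\setminus\{c_0\}}$: for every coupled leaf $N=(\+E,\+F,\sigma,\tau,T,\varsigma)\in\+L_{\coup}$, conditional on $N\in P^\lp$, $\sigma^{\textnormal{in}}$ is uniform on $\Omega^{\+E\land\sigma}$. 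Because $\+E^\sigma=\+F^\tau$ at coupled leaves and $\Lambda(\sigma)=\Lambda(\tau)$, the map that replaces the pinned part of $\sigma^{\textnormal{in}}$ by $\tau$ is a bijection from $\Omega^{\+E\land\sigma}$ onto $\Omega^{\+F\land\tau}\subseteq\Omega^{\+C}$, so conditional on reaching $N$, the output $\sigma^{\textnormal{out}}$ is uniform on $\Omega^{\+F\land\tau}$. Combining this with the probability formula of \Cref{lemma:random-process-lp-probability-bound} and $|\Omega^{\+E\land\sigma}|=|\Omega^{\+F\land\tau}|$ yields the key pointwise inequality
\[
\mu^{\textnormal{out}}(\bm{y})\ \geq\ \frac{1}{|\Omega^{\+C\setminus\{c_0\}}|}\sum_{\substack{N\in\+L_{\coup}\\ \bm{y}\text{ satisfies }\+F\land\tau}}\hat{p}^X_N
\qquad\textnormal{for every }\bm{y}\in\Omega^{\+C}.
\]

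Next I would convert this $\hat{p}^X$-sum into a $\hat{p}^Y$-sum. The coupled-leaf constraint in \Cref{item:linear-program-coupled} of \Cref{definition:linear-program} gives $\hat{p}^X_N\geq r_-\hat{p}^Y_N$, while the $Y$-analog of identity \eqref{eq:lp-sum-to-1} (provable by exactly the same root-to-leaf telescoping, using the $Y$-versions of the non-leaf constraints in \Cref{item:linear-program-non-leaf-constraints}) gives $\sum_{N\in\+L_{\textnormal{valid}}:\bm{y}\models\+F\land\tau}\hat{p}^Y_N=1$. Writing $A(\bm{y})\defeq\sum_{N\in\+L_{\trun}:\bm{y}\models\+F\land\tau}\hat{p}^Y_N$ for the truncated slack, the second inequality of \Cref{lemma:linear-program-bad-leaf-loss} shows $\frac{1}{|\Omega^{\+C}|}\sum_{\bm{y}\in\Omega^{\+C}}A(\bm{y})\leq 2^{-M}$, and hence $\mu^{\textnormal{out}}(\bm{y})\geq r_-(1-A(\bm{y}))/|\Omega^{\+C\setminus\{c_0\}}|$.

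The final step is to relate $r_-$ to the true ratio. The binary-search condition $r_-\geq\frac{4+\varepsilon}{4+2\varepsilon}r_+$ combined with \Cref{lemma:linear-program-error-bound} applied to $r_+$ and the choice $M=1+\log(4/\varepsilon)$ (so that $2\cdot 2^{-M}=\varepsilon/4$) yields, after simplification, $r_-\geq\frac{2}{2+\varepsilon}\cdot\frac{|\Omega^{\+C\setminus\{c_0\}}|}{|\Omega^{\+C}|}$. Substituting back produces the pointwise lower bound $\mu^{\textnormal{out}}(\bm{y})\geq\frac{2(1-A(\bm{y}))}{(2+\varepsilon)|\Omega^{\+C}|}$; averaging the resulting deficits against $1/|\Omega^{\+C}|$ gives
\[
\dtv(\mu^{\textnormal{out}},\mu_{\+C})\ =\ \sum_{\bm{y}\in\Omega^{\+C}}\bigl(\mu_{\+C}(\bm{y})-\mu^{\textnormal{out}}(\bm{y})\bigr)_+\ \leq\ \frac{\varepsilon+2\cdot 2^{-M}}{2+\varepsilon}\ \leq\ \varepsilon.
\]

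The main obstacle is the careful bookkeeping of two independent error sources: the LP relaxation slack, controlled by the binary-search tolerance $r_-/r_+\geq\frac{4+\varepsilon}{4+2\varepsilon}$, and the truncation mass at $\+L_{\trun}$, controlled by $M$. Verifying that these two multiplicative slacks combine correctly through \Cref{lemma:linear-program-error-bound} to produce an additive $\varepsilon$ in total variation is precisely where the parameter choices in \Cref{definition:marginal-sampling-algorithm} matter; all the structural work has already been isolated in \Cref{lemma:random-process-lp-probability-bound}, \Cref{corollary:lp-procedure-1-probability-bound}, and \Cref{lemma:linear-program-bad-leaf-loss}, so no further ideas about $\Phi$ are needed here.
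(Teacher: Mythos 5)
Your proposal is correct and follows essentially the same route as the paper's own proof: the pointwise identity for $\Pr{\sigma^{\textnormal{out}}=\bm{y}}$ at coupled leaves via \Cref{lemma:random-process-lp-probability-bound}, the coupled-leaf constraint $\hat{p}^X\ge r_-\hat{p}^Y$, the identity \eqref{eq:lp-sum-to-1}, \Cref{lemma:linear-program-bad-leaf-loss}, and \Cref{lemma:linear-program-error-bound} combined with the binary-search ratio and the choice of $M$. The only difference is presentational: the paper passes through an auxiliary distribution $\nu$ and the triangle inequality (two $\varepsilon/2$ pieces), whereas you sum the pointwise deficits directly, which also lets you use only the one-sided bound $r_-\ge\frac{2}{2+\varepsilon}\cdot\frac{|\Omega^{\+C\setminus\{c_0\}}|}{|\Omega^{\+C}|}$.
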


With \Cref{lemma:marginal-sampler-tvd-bound}, we can complete the proof of \Cref{theorem:CSP-main}.
\begin{proof}[Proof of the sampling part of \Cref{theorem:CSP-main}]
Let $\+C=\{c_1,c_2,\dots,c_m\}$. Note that $km=n\alpha$.  
For each $0\leq i\leq m$, define $\+C_i=\{c_1,c_2,\dots,c_i\}$, $\Phi_i=(V,[q],\+C_i)$ and let $\mu_{\Phi_i}$ denote the uniform distribution over satisfying assignments of $\Phi_i$. 
Let $\sigma_0,\sigma_1,\dots,\sigma_m\in [q]^V$ be constricted as follows: 
\begin{itemize}
    \item Sample $\sigma_0\sim \+P$, the uniform product distribution.
    \item For each $1\leq i\leq m$, use the dynamic sampler from \Cref{definition:marginal-sampling-algorithm} to generate $\sigma_i$ from $\sigma_{i-1}$, with the input formula $\Phi_i=(V,[q],\+C_{i})$, input constraint $c_{i}$, error bound $\frac{\varepsilon}{m}$, and current sample $\sigma_{i-1}$. 
    If $\sigma_i$ is not a solution to $\Phi_i$ at any step $0\leq i\leq m$, output an arbitrary final assignment instead. 
\end{itemize}
By applying induction  along with \Cref{lemma:marginal-sampler-tvd-bound}, we can verify that for each $0\leq i\leq m$, the  distribution $\mu_i$ of $\sigma_i$ satisfies
\[
\dtv(\mu_i,\mu_{\+C_{i}})\leq \frac{i}{m}\cdot \varepsilon.
\]
The total running time is bounded by $O\left(\left(\frac{n}{\varepsilon}\right)^{\poly(k,\log q.\alpha)}\right)$  according to \Cref{lemma:lp-properties}, which accounts for the overhead of building and solving the linear program.  
Thus, the theorem is proved.
\end{proof}

We then finish this section by proving \Cref{lemma:marginal-sampler-tvd-bound}. 
\begin{proof}[Proof of \Cref{lemma:marginal-sampler-tvd-bound}]
 According to \Cref{definition:marginal-sampling-algorithm} and \Cref{lemma:random-process-lp-probability-bound}, for each $\bm{y}\in \Omega^{\+C}$, we have
  \begin{align*}
\Pr{\sigma^{\text{out}}=\bm{y}}= &\sum\limits_{\substack{(\+E,\+F,\sigma,\tau,T,\varsigma)\in \+L_{\coup}\\\text{with $\+F\land \tau$ satisfied by }\bm{y}}}\mu_{\+C\setminus \{c_0\}}(\+E\land \sigma)\cdot \hat{p}_{(\+E,\+F,\sigma,\tau,T,\varsigma)}^X\cdot \frac{1}{|\Omega^{\+E\land \sigma}|}\\
(\text{by \Cref{item:linear-program-coupled} of \Cref{definition:linear-program}})\quad\geq & r_{-}\cdot \sum\limits_{\substack{(\+E,\+F,\sigma,\tau,T,\varsigma)\in \+L_{\coup}\\\text{with $\+F\land \tau$ satisfied by }\bm{y}}}\mu_{\+C\setminus \{c_0\}}(\+E\land \sigma)\cdot \hat{p}_{(\+E,\+F,\sigma,\tau,T,\varsigma)}^Y\cdot \frac{1}{|\Omega^{\+E\land \sigma}|}\\
 = &r_{-}\cdot \frac{1}{|\Omega^{\+C\setminus \{c_0\}}|}\cdot\sum\limits_{\substack{(\+E,\+F,\sigma,\tau,T,\varsigma)\in \+L_{\coup}\\\text{with $\+F\land \tau$ satisfied by }\bm{y}}} \hat{p}_{(\+E,\+F,\sigma,\tau,T,\varsigma)}^Y.
\end{align*}
 Here, the first equality holds because $\+E^{\sigma}=\+F^{\tau}$ for each $(\+E,\+F,\sigma,\tau,T,\varsigma)\in \+L_{\coup}$, meaning each solution in $\Omega^{\+F\land \tau}$ is generated with equal probability $\frac{1}{\abs{\Omega^{\+F\land \tau}}}=\frac{1}{\abs{\Omega^{\+E\land \sigma}}}$ as established in \Cref{lemma:random-process-lp-probability-bound}. 
 The last equality follows from the identity $\mu_{\+C\setminus \{c_0\}}(\+E\land \sigma)=\frac{|\Omega^{\+E\land \sigma}|}{|\Omega^{\+C\setminus \{c_0\}}|}$, because $\+E\land \sigma\implies \+C\setminus \{c_0\}$ for each $(\+E,\+F,\sigma,\tau,T,\varsigma)\in V(\+T)$, which holds  by induction.

Therefore, we can construct a distribution $\nu$ over $[q]^V$ such that 
the measure of each  $\bm{y}\in \Omega^{\+C}$ is
\begin{align}
\nu(\bm{y})
&=\frac{1}{|\Omega^{\+C}|}\cdot\sum\limits_{\substack{(\+E,\+F,\sigma,\tau,T,\varsigma)\in \+L_{\coup}\\\text{with $\+F\land \tau$ satisfied by }\bm{y}}}\hat{p}_{(\+E,\+F,\sigma,\tau,T,\varsigma)}^Y,\label{eq:definition-nu}
\end{align}
and meanwhile the total variation distance between $\nu$ and the distribution $\mu^{\text{out}}$ of $\sigma^{\text{out}}$ is bounded as:
\begin{align}
\dtv\left(\nu,\mu^{\text{out}}\right)
&\le\abs{\frac{1}{|\Omega^{\+C}|}-r_{-}\cdot \frac{1}{|\Omega^{\+C\setminus \{c_0\}}|}}\sum\limits_{\bm{y}\in \Omega^{\+C}}\sum\limits_{\substack{(\+E,\+F,\sigma,\tau,T,\varsigma)\in \+L_{\coup}\\\text{with $\+F\land \tau$ satisfied by }\bm{y}}}\hat{p}_{(\+E,\+F,\sigma,\tau,T,\varsigma)}^Y.\label{eq:nu-design}
\end{align}

Given that  $\left(1-\frac{\varepsilon}{4}\right)r_{-}\cdot \frac{|\Omega^{\+C\setminus \{c_0\}}|}{|\Omega^{\+C}|}\leq \left(1+\frac{\varepsilon}{4}\right)r_+$ and $r_{-}\geq \frac{4+\varepsilon}{4+2\varepsilon}r_{+}$ are ensured by the process, 
we obtain
\begin{equation}\label{eq:dtv-transform}
\left(1-\frac{\varepsilon}{2}\right)\cdot \frac{|\Omega^{\+C\setminus \{c_0\}}|}{|\Omega^{\+C}|}\leq r_{-}\leq \left(1+\frac{\varepsilon}{2}\right)\cdot \frac{|\Omega^{\+C\setminus \{c_0\}}|}{|\Omega^{\+C}|}.
\end{equation}
Combining \eqref{eq:nu-design} and \eqref{eq:dtv-transform}, we have
\begin{equation}\label{eq:dtv-1}
 \dtv(\nu,\mu^{\text{out}})\leq \frac{\varepsilon}{2}\cdot \frac{1}{|\Omega^{\+C}|}\sum\limits_{\bm{y}\in \Omega^{\+C}}\sum\limits_{\substack{(\+E,\+F,\sigma,\tau,T,\varsigma)\in \+L_{\coup}\\\text{with $\+F\land \tau$ satisfied by }\bm{y}}}\hat{p}_{(\+E,\+F,\sigma,\tau,T,\varsigma)}^Y\leq \frac{\varepsilon}{2}.
 \end{equation}

Now, consider the total variation distance between $\nu$ and $\mu_{\+C}$, the uniform distribution over $\Omega^{\+C}$. 
By \eqref{eq:definition-nu} and \eqref{eq:lp-sum-to-1}, for each $\bm{y} \in \Omega^{\+C}$ we have 
  $\nu(\bm{y})\leq \frac{1}{|\Omega^{\+C}|}=\mu_{\+C}(\bm{y})$.
Thus, we have
\begin{align*}
\dtv(\mu_{\+C},\nu)=&\sum\limits_{\bm{y}\in \Omega^{\+C}}\left(\mu_{\+C}(\bm{y}-\nu(\bm{y})\right)\\
=&\sum\limits_{\bm{y}\in \Omega^{\+C}}\left(\frac{1}{|\Omega^{\+C}|}-\frac{1}{|\Omega^{\+C}|}\cdot\sum\limits_{\substack{(\+E,\+F,\sigma,\tau,T,\varsigma)\in \+L_{\coup}\\\text{with $\+F\land \tau$ satisfied by }\bm{y}}} \hat{p}_{(\+E,\+F,\sigma,\tau,T,\varsigma)}^Y\right)\\
(\text{by \eqref{eq:lp-sum-to-1}})\quad= &\frac{1}{|\Omega^{\+C}|}\sum\limits_{\bm{y}\in \Omega^{\+C}}  \sum\limits_{\substack{(\+E,\+F,\sigma,\tau,T,\varsigma)\in \+L_{\coup}\\\text{with $\+F\land \tau$ satisfied by }\bm{y}}} \hat{p}_{(\+E,\+F,\sigma,\tau,T,\varsigma)}^Y\\
(\text{by \Cref{lemma:linear-program-bad-leaf-loss}})\quad\leq & \frac{\varepsilon}{2}.
\end{align*}
Combining with \eqref{eq:dtv-1} and by triangle inequality, we conclude
\[
\dtv(\mu_{\+C},\mu^{\text{out}})\leq \dtv(\mu_{\+C},\nu)+\dtv(\nu,\mu^{\text{out}})=\varepsilon. \qedhere
\]
\end{proof}

\appendix

\bibliographystyle{alpha}
\bibliography{references.bib}

\section{Proofs of structural properties}\label{sec:structural-property-proof}

In this section, we prove \Cref{lemma:structural-lemma-kSAT}.

The first condition of nice hypergraphs, i.e., $H_\Phi$ is in $\+H_{\le k}$ and has density $\alpha$, is trivial. Then \Cref{property:maximum-degree} follows from a classical result~\cite[Theorem 1]{rabb1998balls}.
\begin{lemma}
    With probability $1 - o(1/n)$ over the random formula $\Phi = \Phi(k, n, m)$ with density $\alpha$, $H_\Phi$ satisfies \Cref{property:maximum-degree}, namely, the maximum degree of variables is at most $4k\alpha + 6 \log n$.
\end{lemma}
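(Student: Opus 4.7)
The plan is to view the edge-variable incidences of $H_\Phi$ as a classical balls-into-bins experiment, after which the bound on the maximum degree becomes a standard concentration estimate followed by a union bound over the $n$ variables.

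More concretely, first I would observe that under \Cref{definition:k-SAT-model}, each of the $m = \lfloor \alpha n\rfloor$ clauses is specified by $k$ literals drawn independently and uniformly from the $2n$ literals $\{v_1,\dots,v_n,\neg v_1,\dots,\neg v_n\}$. Thus the total number of literal occurrences is $km$, and each occurrence lands on a specific variable $v\in V$ with probability exactly $1/n$, independently of all others. The degree $\deg_{H_\Phi}(v)$ equals the number of clauses containing $v$ as a variable, which is at most the total number of literal positions occupied by either $v$ or $\neg v$. Consequently, if $Y_v$ denotes this latter count, then $Y_v \sim \mathrm{Bin}(km,\,1/n)$ with mean $\mathbb{E}[Y_v]=k\alpha$, and it suffices to show that $\Pr[Y_v \ge 4k\alpha + 6\log n] = o(1/n^2)$ for every $v$.

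Next I would apply the elementary upper-tail Chernoff bound $\Pr[Y_v \ge t] \le \binom{km}{t}(1/n)^t \le (ek\alpha/t)^t$ with $t = 4k\alpha + 6\log n \ge 4k\alpha$. This yields $\Pr[Y_v \ge t] \le (e/4)^t \le (e/4)^{6\log n} = n^{-6\log(4/e)} = n^{-\Omega(1)}$, with the exponent comfortably exceeding $2$. A union bound over the $n$ variables then gives $\Pr[\Delta(H_\Phi) \ge 4k\alpha + 6\log n] = o(1/n)$, as desired. Alternatively, since the $km$ literal positions are i.i.d.\ uniform over $[2n]$ and projecting to variables preserves uniformity over $[n]$, one may directly invoke the tight balls-into-bins maximum-load theorem of Raab--Steger to deduce the same bound in one step.

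There is no real obstacle here beyond bookkeeping: the only subtlety is the distinction between the degree (number of distinct clauses containing $v$) and the literal-occurrence count $Y_v$, which is harmless because the former is dominated by the latter. All constants work out with the stated slack ($4k\alpha$ for the mean and $6\log n$ for the deviation), independently of whether $k\alpha$ is small or comparable to $\log n$.
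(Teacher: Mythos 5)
Your proposal is correct and follows essentially the same route as the paper: the paper proves this lemma by observing that the variable degrees are exactly a balls-into-bins experiment with $km$ balls and $n$ bins and citing the classical maximum-load result of Raab--Steger, which is precisely the reduction you make. Your explicit Chernoff-plus-union-bound computation (via $\Pr[Y_v \ge t] \le \binom{km}{t} n^{-t} \le (\mathrm{e}k\alpha/t)^t \le (\mathrm{e}/4)^t$ with $t = 4k\alpha + 6\log n$, giving $o(1/n^2)$ per variable) is a valid self-contained substitute for that citation, and your handling of the degree-versus-literal-occurrence distinction is fine.
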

Similar to this lemma, the following proposition, which bounds the number of high-degree vertices in $H_\Phi$, also follows from the classical result of the balls-and-bins model. 

\begin{proposition}\label{prop:high-degree}
Assume $k\ge2$, $q \ge 2$ and $\alpha\le q^k$ are constants. Let $p_1$ be a parameter satisfying $p_1 \ge 4k$. Then with probability $1-o(1/n)$ over the random formula $\Phi$, for $H_\Phi = (V, \+E)$, we have
\[
\abs{\{v\in V\mid \deg(v) > p_1\alpha\}}\le \mathrm{e}^{-k}\alpha^{-2}n\,,
\]
i.e., $\abs{\-{HD}(V)}\le \mathrm{e}^{-k}\alpha^{-2}n$.
\end{proposition}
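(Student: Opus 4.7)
The plan is to reduce bounding $\deg_{H_\Phi}(v)$ to a tail estimate on the number of literal slots occupied by $v$, and then to combine a Chernoff bound with a union bound over subsets of variables.

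First, I would define $X_v$ to be the total number of literal slots across the $m$ clauses of $\Phi$ whose underlying variable is $v$. Since each of the $km$ literals in the formula is drawn independently and uniformly from $\{v_1,\dots,v_n,\neg v_1,\dots,\neg v_n\}$, the indicator that a given literal slot has underlying variable $v$ is Bernoulli with parameter $1/n$, so $X_v\sim\mathrm{Binomial}(km,1/n)$ with mean $\mu=km/n\le k\alpha$. Moreover $\deg_{H_\Phi}(v)\le X_v$ because every clause containing $v$ contributes at least one literal slot of $v$, so it suffices to bound $N:=\abs{\{v:X_v>p_1\alpha\}}$.

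Second, I would take a union bound over subsets. For any fixed $S\subseteq V$ with $\abs{S}=T$, the aggregate $Y_S:=\sum_{v\in S}X_v$ is distributed as $\mathrm{Binomial}(km,T/n)$ with mean at most $k\alpha T$, and the event that every $v\in S$ satisfies $X_v\ge p_1\alpha$ implies $Y_S\ge p_1\alpha T\ge 4\,\mathbb{E}[Y_S]$ by the hypothesis $p_1\ge 4k$. The standard Chernoff bound in the form $\Pr[Y\ge a]\le (e\mathbb{E}[Y]/a)^{a}$ then yields
\[
\Pr[Y_S\ge p_1\alpha T]\le \left(\frac{ek}{p_1}\right)^{p_1\alpha T},
\]
so by a union bound over all $\binom{n}{T}$ subsets,
\[
\Pr[N\ge T]\le \binom{n}{T}\left(\frac{ek}{p_1}\right)^{p_1\alpha T}\le\left(\frac{en}{T}\left(\frac{ek}{p_1}\right)^{p_1\alpha}\right)^{T}.
\]

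Third, I would set $T=\lceil e^{-k}\alpha^{-2}n\rceil$. Taking logarithms, the exponent on the right-hand side becomes
\[
T\bigl[\,k+1+2\ln\alpha-p_1\alpha\ln(p_1/(ek))\,\bigr],
\]
and once the bracketed quantity is bounded above by a negative absolute constant the overall probability is at most $\exp(-\Omega(T))=\exp(-\Omega(n))=o(1/n)$, which is what is needed.

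The main obstacle is verifying that the bracketed exponent is negative uniformly across admissible $\alpha\le q^k$. When $\alpha$ is bounded away from $0$, the Chernoff inequality is comfortably strong under $p_1\ge 4k$; when $\alpha$ is very small, $e^{-k}\alpha^{-2}n$ may exceed $n$ in which case the claim is trivial, and in the intermediate regime one may need the sharper exponent $\mu((1+\delta)\ln(1+\delta)-\delta)$ or the deterministic pigeonhole bound $N\le km/(p_1\alpha)=kn/p_1$ to close the gap. Splitting into these regimes and invoking the appropriate estimate should cover the full range $\alpha\le q^k$.
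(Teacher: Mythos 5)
Your reduction to literal-occurrence counts is sound ($\deg(v)\le X_v$, and $Y_S\sim\mathrm{Binomial}(km,T/n)$ exactly), but note that this is a genuinely different route from the paper's: the paper Poissonizes the degree sequence (balls-and-bins with $km$ balls and $n$ bins), bounds the per-variable tail $\mathbf{Pr}[\mathrm{Pois}(k\alpha)\ge p_1\alpha]$, and then applies a Hoeffding/Chernoff bound to the \emph{number} of exceedances, paying an $O(\sqrt{k\alpha n})$ factor to remove the conditioning on the total; you instead take a union bound over all $\binom{n}{T}$ subsets combined with an aggregate Chernoff bound.

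The gap is that your argument stops exactly where the real work is: you never verify that the exponent $k+1+2\ln\alpha-p_1\alpha\ln\bigl(p_1/(\mathrm{e}k)\bigr)$ is negative over the whole admissible range, and it is not, nor do your proposed patches close the hole. The hypotheses allow $p_1=4k$ and any constant $\alpha\le q^k$, including small constants. Take $k=10$, $p_1=40$, $\alpha=0.2$: the subset union bound costs $\exp\bigl(T(k+1+2\ln\alpha)\bigr)\approx \mathrm{e}^{7.8T}$, while even the sharp aggregate Chernoff exponent is only $k\alpha T\bigl[(p_1/k)\ln(p_1/k)-p_1/k+1\bigr]\approx 5.1T$ (your weaker form gives $\approx 3.1T$), so the bound is vacuous. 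Your fallbacks do not apply there either: the claim is non-trivial since $\mathrm{e}^{-k}\alpha^{-2}\approx 10^{-3}<1$ (triviality needs $\alpha\le \mathrm{e}^{-k/2}$), and the pigeonhole bound gives only $N\le kn/p_1=n/4\gg \mathrm{e}^{-k}\alpha^{-2}n$ (pigeonhole suffices only when $p_1\ge k\mathrm{e}^k\alpha^2$, i.e.\ $\alpha\le 2\mathrm{e}^{-k/2}$ under $p_1\ge 4k$). The statement is still true at this parameter point (the per-variable tail at the integer threshold is about $2.4\times 10^{-4}$, below $\mathrm{e}^{-k}\alpha^{-2}\approx 1.1\times 10^{-3}$), so this is a genuine deficiency of the method: for small constant $\alpha$ the entropy term $\ln(\mathrm{e}n/T)$ exceeds the per-variable large-deviation rate, and the subset union bound is intrinsically too lossy. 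What is needed in that regime is a first-moment estimate using the actual (integer-threshold) tail probability together with a concentration bound on the count of high-degree variables — which is essentially the structure of the paper's proof; if you want to salvage your write-up you should either restrict to the parameter regime in which the proposition is actually invoked (where $p_1$ is polynomially large in $k$ and $\alpha$ is large), or switch to that per-variable-tail-plus-concentration argument.
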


\begin{proof}
    The degrees of the variables in $\Phi$ distribute as a balls-and-bins experiment with $km$ balls and $n$ bins. Let $D_1,\ldots,D_n\sim \mathrm{Pois}(k\alpha)$ be $n$ independent Poisson random variables with parameter $k\alpha$. Then the degrees of the variables in $\Phi$ has the same distribution as $\{D_1,\ldots,D_n\}$ conditioned on the event $\+E_{n,m}$ that $\sum_{i=1}^nD_i=km$ \cite[Chapter 5.4]{mitzenmacher2017probability}. Note that $\sum_{i=1}^nD_i$ is a Poisson random variable with parameter $k\alpha n=km$. Thus 
    \[
    \Pr{\+E_{n,m}}=\mathrm{e}^{-km}\cdot\frac{(km)^{km}}{(km)!}\ge\frac1{\sqrt{2\pi km}}=\frac1{\sqrt{2\pi k\alpha n}}.
    \]
    For any fixed $i \in [n]$, we have
    \begin{align*}
        \Pr{D_i\ge p_1\alpha} &=\Pr{\mathrm{Pois}(k\alpha)\ge p_1\alpha}\le\frac{\mathrm{e}^{-k\alpha}(\mathrm{e} k\alpha)^{p_1\alpha}}{(p_1\alpha)^{p_1\alpha}} \le \mathrm{e}^{-k\alpha} (\mathrm{e}/4)^{4k\alpha} \le \-e^{-k\alpha} 2^{-2k\alpha} \le 2^{-3k\alpha}\,.
    \end{align*}
    Define $U=\{i\in[n]\mid D_i\ge p_1\alpha\}$. Then by Chernoff-Hoeffding bound, we obtain that 
    \[
        \Pr{\abs{U} > \mathrm{e}^{-k}\alpha^{-2}n} \le \Pr{\abs{U} - \mathbf{E}[\abs{U}] > \mathrm{e}^{-k+1}\alpha^{-2}n} < \mathrm{e}^{-2\mathrm{e}^{-2k+2}\alpha^{-4}n} = o(1/n^2)\,,
    \]
    which further gives that
    \begin{align*}
        &\phantom{= {}}\Pr{\abs{\{v\in V(H_\Phi)\mid \deg(v) > p_1\alpha\}}\ge \mathrm{e}^{-k}\alpha^{-2}n}\\ & = \Pr{|U|\ge \mathrm{e}^{-k}\alpha^{-2}n\mid\+E_{n,m}}\\ & \le \sqrt{2\pi k \alpha n} \cdot o(1/n^2) = o(1/n)\,. \qedhere
    \end{align*}
\end{proof}

Now we show that $H_\Phi$ satisfies \Cref{property:edge-expansion} and \Cref{property:bounded-neighbourhood-growth} with probability $1 - o(1/n)$. For \Cref{property:edge-expansion}, we present \Cref{lemma:kSAT-expansion} and its corollary, which are adapted from \cite[Proposition 3.3 \& 3.4]{HWY23}. For \Cref{property:bounded-neighbourhood-growth}, we have \Cref{lemma:bounded-neighbourhood-growth}, which is adapted from \cite[Lemma 8.6]{GGGY21} and \cite[Proposition 3.5]{HWY23}, but gives a better bound.

\begin{lemma}
    \label{lemma:kSAT-expansion}
    For any fixed $k$ and $\alpha$, if $\eta k \ge 4$, $\rho < 1$ and $\mathrm{e}(\rho k\alpha)^\eta \le 1$, then with probability $1 - o(1/n)$ over the choice of random $k$-SAT formula $\Phi = \Phi(k, n, m)$ with density $\alpha$, $H_\Phi$ satisfies \Cref{property:edge-expansion}.
\end{lemma}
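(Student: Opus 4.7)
The plan is a standard first-moment (union bound) argument. For each integer $\ell$ in the range $1 \le \ell \le \rho m$, I will bound the probability $p_\ell$ that some set of $\ell$ clauses $c_1,\dots,c_\ell$ of $\Phi$ satisfies $\bigl|\bigcup_{i=1}^\ell \vbl(c_i)\bigr| \le (1-\eta)k\ell$, and then show that $\sum_{\ell=1}^{\lfloor \rho m\rfloor} p_\ell = o(1/n)$.

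For a fixed $\ell$-tuple of clauses, since each of the $k\ell$ literal positions is an independent uniformly random literal (equivalently, a uniform random variable from $\{v_1,\dots,v_n\}$ together with a random sign), the probability that all $k\ell$ chosen variables lie inside a prescribed set of $t$ variables is exactly $(t/n)^{k\ell}$. Union-bounding over the $\binom{n}{(1-\eta)k\ell}$ candidate vertex sets yields
\[
\Pr\!\bigl[\,|\!\bigcup_i \vbl(c_i)|\le (1-\eta)k\ell\,\bigr] \;\le\; \binom{n}{(1-\eta)k\ell}\bigl((1-\eta)k\ell/n\bigr)^{k\ell}.
\]
A further union bound over the $\binom{m}{\ell}$ choices of clause-subsets then gives
\[
p_\ell \;\le\; \binom{m}{\ell}\binom{n}{(1-\eta)k\ell}\bigl((1-\eta)k\ell/n\bigr)^{k\ell}.
\]
Applying the standard Stirling estimate $\binom{N}{K}\le (eN/K)^K$ to both binomials, using $m=\alpha n$, and collecting terms, I will rewrite this as
\[
p_\ell \;\le\; \bigl[A\cdot (\ell/n)^{\eta k-1}\bigr]^{\ell}, \qquad A \;=\; e^{1+(1-\eta)k}\bigl((1-\eta)k\bigr)^{\eta k}\alpha.
\]

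The next step is to control the sum $\sum_\ell[A(\ell/n)^{\eta k-1}]^\ell$. The function $g(\ell)=[A(\ell/n)^{\eta k-1}]^\ell$ is U-shaped (convex in the log-scale) in $\ell$, so its maximum on $[1,\rho m]$ is attained at one of the endpoints. At $\ell=1$ we get $g(1)\le A/n^{\eta k-1}\le A/n^{3}$ using $\eta k\ge 4$, which is $o(1/n^{2})$ for fixed $k,\alpha$. At the other endpoint $\ell=\rho m$, I will invoke the hypothesis $\mathrm{e}(\rho k\alpha)^{\eta}\le 1$; raising it to the $k$-th power gives $(\rho k\alpha)^{\eta k}\le e^{-k}$, from which a direct manipulation yields
\[
A\cdot (\rho\alpha)^{\eta k-1} \;\le\; e^{1-\eta k}(1-\eta)^{\eta k}/\rho,
\]
which together with $\eta k\ge 4$ ensures the bracket is strictly less than $1$ in the relevant regime, making $g(\rho m)$ super-polynomially (in fact exponentially in $n$) small. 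Since $g$ is bounded on $[1,\rho m]$ by the maximum of its endpoint values and the range has length $O(n)$, we conclude
\[
\sum_{\ell=1}^{\lfloor\rho m\rfloor} p_\ell \;\le\; \rho m\cdot \max\bigl(g(1),\,g(\rho m)\bigr) \;=\; O(1/n^{2}) \;=\; o(1/n),
\]
completing the proof.

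The main obstacle will be the endpoint $\ell\approx \rho m$: the naive Stirling-based bound on the bracket $A(\rho\alpha)^{\eta k-1}$ is delicate, and the hypothesis $\mathrm{e}(\rho k\alpha)^{\eta}\le 1$ must be used in exactly the sharp form $(\rho k\alpha)^{\eta k}\le e^{-k}$ together with $\eta k\ge 4$ (which supplies the slack $e^{1-\eta k}\le e^{-3}$) to ensure strict inequality. Once this is in hand, the contribution from $\ell$ near the upper boundary is exponentially small, and the total probability is dominated by the $\ell=1$ contribution of order $1/n^{\eta k-1}$, giving the claimed $o(1/n)$ bound.
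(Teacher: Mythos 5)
Your first-moment setup coincides with the paper's: the same union bound over the $\binom{m}{\ell}$ clause subsets and the $\binom{n}{\lfloor(1-\eta)k\ell\rfloor}$ candidate vertex sets, the same Stirling estimates, and essentially the same resulting bound $p_\ell \le [A(\ell/n)^{\eta k-1}]^\ell$; the convexity of $\log g$ and the treatment of the endpoint $\ell=1$ are also fine. The gap is at the other endpoint $\ell=\rho m$. Your manipulation gives $A(\rho\alpha)^{\eta k-1}\le e^{1-\eta k}(1-\eta)^{\eta k}/\rho$, which is an upper bound involving $1/\rho$, and the hypotheses provide \emph{no lower bound} on $\rho$; on the contrary, $\mathrm{e}(\rho k\alpha)^{\eta}\le 1$ forces $\rho\le e^{-1/\eta}/(k\alpha)$, so $1/\rho\ge k\alpha\, e^{1/\eta}$. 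In the regime where the lemma is actually applied (\Cref{condition:main-condition}: equality $\mathrm{e}(\rho k\alpha)^{\eta}=1$, $\eta k=\Theta(1)$ with $\eta k\ge 4$, and $\alpha$ as large as $2^k/\poly(k)$), the bracket at $\ell=\rho m$ evaluates to $e^{1-\eta k+1/\eta}(1-\eta)^{\eta k}\,k\alpha$, which is exponentially large in $k$, certainly not ``strictly less than $1$''. Consequently $g(\rho m)$ is enormous (exponentially large in $n$), the bound $\sum_\ell p_\ell\le \rho m\cdot\max\bigl(g(1),g(\rho m)\bigr)$ is vacuous, and your final claim that the sum is dominated by the $\ell=1$ term is unsupported.

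The missing ingredient is a genuine argument for the large-$\ell$ range: to make $[A(\ell/n)^{\eta k-1}]^{\ell}$ superpolynomially small there, you need the base to be bounded by a constant strictly below $1$ uniformly up to $\ell=\rho m$, and this simply does not follow from $\mathrm{e}(\rho k\alpha)^{\eta}\le 1$ via your computation (which only yields base $\le$ something of order $1/\rho$). This is precisely the delicate point of the lemma; the paper's own proof diverges from yours here by splitting the sum at $\ell=n^{1/3}$ and invoking the hypothesis in the form $(\rho\alpha)^{\eta k-1}\le 1/(\mathrm{e}^k k^{\eta k}\rho\alpha)$ to control the large-$\ell$ terms, rather than evaluating the bound at the endpoint $\ell=\rho m$ as you do. As it stands, your endpoint-convexity shortcut does not close this range: you would need either a sharper per-set failure estimate for large $\ell$ (e.g., bounding the probability of $\ge\eta k\ell$ variable collisions directly, instead of prescribing a union set of size $(1-\eta)k\ell$ and paying the factor $\binom{n}{(1-\eta)k\ell}$), or an additional quantitative relation forcing the base below $1$ on all of $[n^{1/3},\rho m]$; without one of these, the step ``the bracket is strictly less than $1$ in the relevant regime'' is false and the proof does not go through.
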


\begin{proof}
Let $\ell \le \rho m$ and fix $r = \lfloor (1 - \eta)k\ell \rfloor$. For any $U \subseteq V$ of size $r$ and any $\ell$ edges $e_1, \ldots, e_\ell$, the probability that $e_i \subseteq U$ for every $i$ is
\[
    \biggl(\frac{r}{n}\biggr)^{k\ell} \le \biggl(\frac{(1-\eta)k\ell}{n}\biggr)^{k\ell}\,.
\]
Thus, let $\mathcal{E}_\ell$ be the event that there exists $U \subseteq V$ of size $r$ and $\ell$ edges $e_1, \ldots, e_\ell$ such that $e_i \subseteq U$ for every $i$. We obtain that
\begin{align*}
    \Pr{\mathcal{E}_\ell} &\le \binom{n}{r}\binom{m}{\ell}\biggl(\frac{(1-\eta)k\ell}{n}\biggr)^{k\ell} \le \biggl(\frac{\mathrm{e}n}{r}\biggr)^r \biggl(\frac{\mathrm{e}m}{\ell}\biggr)^\ell\biggl(\frac{(1-\eta)k\ell}{n}\biggr)^{k\ell}\\
    &\le \biggl(\frac{\mathrm{e}n}{(1-\eta)k\ell}\biggr)^{(1-\eta)k\ell} \biggl(\frac{\mathrm{e}m}{\ell}\biggr)^\ell\biggl(\frac{(1-\eta)k\ell}{n}\biggr)^{k\ell} \tag{\text{assuming $n \ge 4r$}} \\
    &\le \Biggl(\frac{\mathrm{e}^{(1-\eta)k+1}((1-\eta)k\ell)^{\eta k}m}{n^{\eta k}\ell}\Biggr)^\ell \\
    &\le \Biggl(\frac{\mathrm{e}^k k^{\eta k}\alpha n \ell^{\eta k}}{n^{\eta k}\ell}\Biggr)^\ell 
    \le \biggl((\mathrm{e}k^\eta)^k\alpha \Bigl(\frac{\ell}{n}\Bigr)^{\eta k-1}\biggr)^\ell.
\end{align*}
If $\ell \le n^{1/3}$, we have
\[
\Pr{\mathcal{E}_\ell} \le (\mathrm{e}k^\eta)^k \alpha n^{-2(\eta k - 1)/3} \le (\mathrm{e}k^\eta)^k \alpha n^{-2}
\]
as long as $\eta k \ge 4$. If $n^{1/3} \le \ell \le \rho m$, noting that $(\rho\alpha)^{\eta k - 1}  \le 1/(\mathrm{e}^k k^{\eta k}\rho \alpha)$, we have
\[
\Pr{\mathcal{E}_\ell} \le \biggl((\mathrm{e}k^\eta)^k\alpha\bigl(\rho\alpha\bigr)^{\eta k - 1}\biggr)^{n^{1/3}} \le \rho^{n^{1/3}} \le n^{-3}\,.
\]
Therefore, by the union bound, the probability that there exists $\ell \le \rho m$ edges $e_1, \ldots, e_\ell$ where
    \begin{align*}
        \left| \bigcup_{i=1}^\ell e_i \right| \le (1-\eta) k \ell
    \end{align*}
is at most
\begin{align*}
    \sum_{\ell = 1}^{\rho m} \Pr{\mathcal{E}_\ell} &\le \frac{(\mathrm{e}k^\eta)^k\alpha n^{1/3}}{n^2} + n^{-2} = o(1/n)\,.\qedhere
\end{align*}
\end{proof}

\begin{corollary}
    \label{cor:bkvars}
    Let $H = (V, \+E) \in \+H_{\le k}$ be a hypergraph satisfying $(\eta, \rho)$-edge expansion. Then for any $\eta < b \le 1$ and $V'\subseteq V$ of size less than $(b-\eta)k\rho \abs{\+E}$, it holds that
    \[
    \abs{V'} \ge (b - \eta)k \abs{\{e \in \+E \mid \abs{e \cap V'} \ge bk\}}\,,
    \]
    namely, the number of hyperedges $e$ such that $\abs{e \cap V'} \ge bk$ is at most $\lfloor\abs{V'}/((b-\eta)k)\rfloor$.
\end{corollary}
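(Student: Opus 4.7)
The plan is to apply the edge expansion property (\Cref{property:edge-expansion}) directly by a simple double-counting argument. Let $S = \{e \in \+E : |e \cap V'| \ge bk\}$ and set $\ell = |S|$; the goal is to show $|V'| \ge (b-\eta)k\ell$, which gives $\ell \le \lfloor |V'|/((b-\eta)k)\rfloor$.

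First I would handle the principal case $\ell \le \rho |\+E|$. In this case the $(\eta,\rho)$-edge expansion applies to the whole collection $S$, yielding
\[
\left|\bigcup_{e \in S} e\right| \ge (1-\eta)k\ell.
\]
Now split this union into a part inside $V'$ and a part outside $V'$. The ``inside'' part is a subset of $V'$, hence has size at most $|V'|$. For each $e \in S$, the definition of $S$ gives $|e \setminus V'| \le k - bk = (1-b)k$, so by a union bound the ``outside'' part has size at most $(1-b)k\ell$. Combining these estimates:
\[
(1-\eta)k\ell \;\le\; |V'| + (1-b)k\ell,
\]
which rearranges precisely to $|V'| \ge (b-\eta)k\ell$.

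Next I would rule out the complementary case $\ell > \rho|\+E|$ using the hypothesis $|V'| < (b-\eta)k\rho|\+E|$. If it did occur, pick any sub-collection $S' \subseteq S$ with $|S'| = \lfloor\rho|\+E|\rfloor \le \rho|\+E|$, and note that every edge in $S'$ still satisfies $|e \cap V'| \ge bk$. Applying the same argument to $S'$ gives $|V'| \ge (b-\eta)k\lfloor\rho|\+E|\rfloor$, which (after accounting for the integrality of $|\+E|$ and $|V'|$) contradicts the assumption $|V'| < (b-\eta)k\rho|\+E|$. Hence $\ell \le \rho|\+E|$, and the principal-case bound applies.

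The computation itself is routine; the only small subtlety is the integrality at the boundary, which I expect to absorb into the strict inequality $|V'| < (b-\eta)k\rho|\+E|$ without difficulty. The conceptual content lies entirely in the decomposition $|\bigcup_{e \in S} e| = |V' \cap \bigcup_{e \in S} e| + |\bigcup_{e \in S} e \setminus V'|$, which converts the edge-expansion lower bound on the union into the desired lower bound on $|V'|$.
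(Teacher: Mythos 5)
Your proposal is correct and takes essentially the same approach as the paper: the paper's proof runs the identical inside/outside decomposition $\abs{\bigcup_i e_i} \le (1-b)k\ell + \abs{V'}$ against the expansion lower bound $(1-\eta)k\ell$, merely packaged as a contradiction with $\ell = \lfloor\abs{V'}/((b-\eta)k)\rfloor + 1$ edges and the assertion that this $\ell$ is at most $\rho\abs{\+E}$. The boundary integrality you flag (when $\rho\abs{\+E}$ is not an integer) is exactly the slack the paper's ``since $\ell \le \rho m$'' step absorbs silently, so your treatment matches the paper's level of rigor.
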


\begin{proof}
    For the sake of contradiction, assume that there is a set $V' \subseteq V$ of size $\abs{V'}<(b- \eta)k\rho m$ and $\ell = \lfloor\abs{V'}/((b - \eta)k)\rfloor + 1$ hyperedges $e_1, \ldots, e_\ell$ such that each hyperedge contains at least $bk$ vertices in $V'$. Then it is clear that $\abs{V'} < (b-\eta)k\ell$ and thus
    \[
    \abs{\bigcup_{i = 1}^\ell e_i} \le (1-b)k\ell + \abs{V'} < (1-\eta)k\ell\,.
    \]
    Since $\ell\le \rho m$, it contradicts with \Cref{lemma:kSAT-expansion}.
\end{proof}

\begin{proposition}
    [\text{\cite[Lemma 8.5]{GGGY21}}]
    \label{lemma:probability-of-spanning-tree}
    Let $U$ be any subset of indices of clauses in $\Phi$ and $T$ be any tree on the vertex set $U$. Then the probability that $T$ is a subgraph of $G_\Phi$ is at most $(k^2/n)^{\abs{U} - 1}$.    
\end{proposition}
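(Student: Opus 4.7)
The plan is to reveal the variables of the clauses in $\Phi$ in a carefully chosen order and apply a chain-rule argument, at each step bounding the probability that the newly revealed clause completes an edge of $T$ to an already-revealed clause.

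First, I would root the tree $T$ at an arbitrary vertex $c^{\star} \in U$ and order the vertices of $U$ as $c^{\star} = u_1, u_2, \ldots, u_{|U|}$ by a BFS (or any parent-before-child) traversal, so that for every $i \ge 2$ the parent $p(u_i)$ in $T$ appears strictly earlier in the order. Since $T$ has exactly $|U|-1$ edges, the event ``$T \subseteq G_{\Phi}$'' is the intersection, over $i=2,\ldots,|U|$, of the events
\[
A_i \defeq \bigl\{\vbl(u_i) \cap \vbl(p(u_i)) \neq \emptyset\bigr\}.
\]

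Next, I would reveal the $k$ literals of $u_1, u_2, \ldots, u_{|U|}$ in this order (the literals of the remaining clauses are irrelevant). By \Cref{definition:k-SAT-model}, the $k$ literals of each $u_i$ are drawn independently and uniformly from the $2n$ literals, independently of all other clauses; consequently the underlying variables of $u_i$ are $k$ i.i.d.\ uniform draws from $V$, independent of $\vbl(u_1),\ldots,\vbl(u_{i-1})$. Conditioned on $\vbl(u_1),\ldots,\vbl(u_{i-1})$, the set $\vbl(p(u_i))$ is fixed and has size at most $k$, so by a union bound over the $k$ literals of $u_i$ and the at most $k$ variables in $\vbl(p(u_i))$,
\[
\Pr\!\bigl[A_i \,\big|\, \vbl(u_1),\ldots,\vbl(u_{i-1})\bigr] \le \frac{k^2}{n}.
\]

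Finally, applying the chain rule yields
\[
\Pr[T \subseteq G_{\Phi}] \le \prod_{i=2}^{|U|} \Pr\!\bigl[A_i \,\big|\, A_2, \ldots, A_{i-1}\bigr] \le \Bigl(\frac{k^2}{n}\Bigr)^{|U|-1},
\]
which is the desired bound. There is no real obstacle here; the only subtlety is choosing a parent-before-child ordering so that, at each conditioning step, $\vbl(p(u_i))$ is already determined and $\vbl(u_i)$ remains a fresh independent sample, which is what makes the clean product bound go through (as opposed to a naive union bound over all pairs in $U$, which would be far too weak).
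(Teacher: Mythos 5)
Your proof is correct, and the paper itself does not reprove this fact—it imports it by citation from \cite[Lemma~8.5]{GGGY21}, whose argument is essentially the same parent-before-child ordering with a chain rule and a $k\cdot k/n$ union bound per tree edge. The one point worth being careful about, which you handle correctly, is that the events $A_2,\dots,A_{i-1}$ are measurable with respect to $\vbl(u_1),\dots,\vbl(u_{i-1})$, so your uniform conditional bound legitimately transfers to the conditioning on $A_2,\dots,A_{i-1}$ in the chain rule.
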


\begin{lemma}\label{lemma:bounded-neighbourhood-growth}
    Suppose $\alpha \le 2^k$. With probability $1 - o(1/n)$ over the choice of random $k$-SAT formula $\Phi = \Phi(k, n, m)$ with fixed density $\alpha$, $H_\Phi$ satisfies \Cref{property:bounded-neighbourhood-growth}, namely, for every clause $c$ in $\Phi$ and $\ell \ge 1$, there are at most $n^3(\mathrm{e}k^2\alpha)^\ell$ many connected sets of clauses in $G_\Phi$ that contains $c$ and has size $\ell$. 
\end{lemma}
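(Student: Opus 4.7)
\medskip

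\noindent\textbf{Proof proposal.}
The plan is to bound, for each fixed clause $c$ and integer $\ell\ge 1$, the \emph{expected} number of connected $\ell$-subsets of $G_\Phi$ containing $c$, and then apply Markov's inequality followed by a union bound over $c$ and $\ell$. Let $N_c^\ell$ denote this random count. Since every connected set $S$ of $\ell$ clauses admits a spanning tree on $|S|=\ell$ vertices and there are, by Cayley's formula, at most $\ell^{\ell-2}$ labeled trees on a fixed vertex set of size $\ell$, a union bound over spanning trees combined with \Cref{lemma:probability-of-spanning-tree} gives
\[
\Pr[S\text{ is connected in } G_\Phi] \le \ell^{\ell-2}\cdot \left(\tfrac{k^2}{n}\right)^{\ell-1}
\]
for any fixed set $S$ of $\ell$ clause indices.

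Summing over the $\binom{m-1}{\ell-1}$ choices of $S\ni c$ and using $\binom{m-1}{\ell-1}\le (em/\ell)^{\ell-1}$ together with $m=\lfloor\alpha n\rfloor$, this yields
\[
\mathbb{E}[N_c^\ell] \le \binom{m-1}{\ell-1}\cdot \ell^{\ell-2}\cdot \left(\tfrac{k^2}{n}\right)^{\ell-1}
\le \frac{(\mathrm{e}k^2\alpha)^{\ell-1}}{\ell} \le \frac{(\mathrm{e}k^2\alpha)^\ell}{\ell\,\mathrm{e}k^2\alpha}.
\]
Markov's inequality then gives $\Pr[N_c^\ell > n^3(\mathrm{e}k^2\alpha)^\ell] \le 1/(n^3\,\ell\,\mathrm{e}k^2\alpha)$.

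Finally, taking a union bound over all $m=O(n)$ choices of $c$ and all $1\le\ell\le m$, the failure probability is at most
\[
\sum_{c}\sum_{\ell=1}^{m}\frac{1}{n^3\,\ell\,\mathrm{e}k^2\alpha}
= O\!\left(\frac{\log n}{n^2\, k^2}\right) = o(1/n),
\]
which is the desired bound. I expect no significant obstacle: the only slightly delicate step is the Cayley-plus-\Cref{lemma:probability-of-spanning-tree} accounting, which is a standard first-moment calculation; the $n^3$ buffer in the statement is clearly designed exactly to absorb the $O(n)$ factors coming from the union bound over $c$ and $\ell$.
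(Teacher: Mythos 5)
Your proposal is correct and follows essentially the same route as the paper's proof: enumerate spanning trees via Cayley's formula, apply \Cref{lemma:probability-of-spanning-tree} to get the first-moment bound $\mathbf{E}[Z_{\ell,c}] \lesssim (\mathrm{e}k^2\alpha)^{\ell-1}/\ell$, then Markov plus a union bound over clauses and sizes, with the $n^3$ factor absorbing the polynomial losses. The only cosmetic differences are bookkeeping choices (the paper bounds $\binom{m-1}{\ell-1}$ by $(\mathrm{e}m/(\ell-1))^{\ell-1}$ and picks up a harmless extra factor $\mathrm{e}$, and treats $\ell=1$ separately as trivial), which do not affect correctness.
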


\begin{proof}
If $\ell = 1$, this lemma is trivial. Now we assume $\ell \ge 2$. Let $c$ be an arbitrary clause in $\Phi$ and $U$ be a set of clauses of size $\ell$ where $c\in U$.
Let $T_U$ be the set of all trees with vertex set $U$.
By standard results, we have $\abs{T_U} = \ell^{\ell-2}$.
In addition, any fixed tree $T \in T_U$ is a subgraph of $G_\Phi$ with probability at most $(k^2/n)^{\ell-1}$, by \Cref{lemma:probability-of-spanning-tree}.
Thus, the union bound gives that
$$
\Pr{G_\Phi[U]\text{ is connected}}\le
\ell^{\ell-2}(k^2/n)^{\ell-1}.
$$
Let $Z_{\ell,c}$ be the number of connected sets of clauses with size $\ell$ containing $c$. Then, we have
\begin{align*}
\mathbf{E}[Z_{\ell,c}]
&=\sum_{U:c\in U,|U|=\ell}\Pr{G_\Phi[U]\text{ is connected}}\\
&\le\binom{m-1}{\ell-1}\cdot\ell^{\ell-2}\cdot\biggl(\frac{k^2}n\biggr)^{\ell-1}\\
&\le \frac{(\mathrm{e}m)^{\ell - 1}}{(\ell - 1)^{\ell - 1}}\ell^{\ell - 2}\biggl(\frac{k^2}n\biggr)^{\ell-1}\\
&= \frac{(\mathrm{e}k^2\alpha)^{\ell-1}}{\ell} \Bigl(\frac{\ell}{\ell-1}\Bigr)^{\ell-1}\\
&\le \frac{\mathrm{e}}{\ell}\cdot (\mathrm{e}k^2\alpha)^{\ell-1}\,.
\end{align*}
By Markov's inequality, it further implies that
\begin{align*}
    \Pr{Z_{\ell,c}\ge n^3 (\mathrm{e}k^2\alpha)^\ell}&\le \frac{\mathrm{e}(\mathrm{e}k^2\alpha)^{\ell - 1}}{\ell n^3 (\mathrm{e}k^2\alpha)^\ell} = \frac{1}{n^3 k^2\alpha \ell}\,.
\end{align*}
Finally, using a union bound again, we obtain that
\begin{align*}
    &\phantom{={}} \Pr{\exists\, 2\le \ell\le m \text{ and clause } c \text{ such that } Z_{\ell,c}\ge n^3(\mathrm{e}k^2\alpha)^\ell}\\
    &\le \sum_{\ell = 2}^m \frac{m}{n^3k^2\alpha \ell} \  = \frac{1}{k^2n^2} \sum_{\ell = 2}^m \frac{1}{\ell} \  \le  \frac{\log m}{k^2n^2} \  = o(1/n)\,.\qedhere
\end{align*}
\end{proof}

Next, we show that $H_\Phi$ satisfies \Cref{property:bounded-bad-vertices} and \Cref{property:bounded-bad-fraction} with probability $1 - o(1/n)$ in \Cref{lemma:bounded-bad-vertices} and \Cref{lemma:bounded-ebad-fraction}. To prove these two lemmas, we also need the following properties for random formulas and for \Cref{alg:identify-bad}.

In a hypergraph $H = (V, \+E)$, let $\Gamma_H(V')$ denote the set of neighbors of vertices in $V'$, and let $\Gamma_H^+(V') = V' \cup \Gamma_H(V')$. Then we have the following technical propositions.

\begin{proposition}[\text{\cite[Proposition 3.6]{HWY23}}]
    \label{lemma:bounded-number-neighbors}
    Let $\Phi = (k, n, m)$ be a random $k$-SAT formula with $k \ge 30$ and density $\alpha$. Then with probability $1 - o(1/n)$, we have
    \[
        \abs{\Gamma^+_{H_\Phi}(V')} \le 3k^4\alpha \max\{\abs{V'}, k\log n\}
    \]
    for every connected subset $V'$ of vertices in $H_\Phi$.
\end{proposition}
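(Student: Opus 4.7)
The plan is to reduce the target bound to controlling the number of hyperedges incident to $V'$, and then to union-bound over connected subsets via a spanning-tree witness. Let $H_\Phi = (V, \+E)$ and $E_{V'} \defeq \{e \in \+E : e \cap V' \ne \emptyset\}$. Since each hyperedge has size $k$, we have $|\Gamma^+_{H_\Phi}(V')| \le k|E_{V'}|$ whenever $|V'| \ge 2$ and $V'$ is connected (the case $|V'|=1$ reduces to \Cref{property:maximum-degree}). So it suffices to prove that $|E_{V'}| \le 3k^3\alpha\max\{|V'|, k\log n\}$ for every connected $V'$ of size $s \ge 2$, with probability $1 - o(1/n)$.

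Every such $V'$ admits a spanning tree in the derived graph on $V$ (where two variables are adjacent iff they share a hyperedge) with $s-1$ edges, each realized by some hyperedge of $\Phi$. Hence a violation of the target bound is witnessed by a tuple $(V', T, f, \+E'')$, where $T$ is a labeled spanning tree on $V'$, $f \colon E(T) \to \+E$ assigns each tree edge to a realizing hyperedge, and $\+E''$ is a set of $t_s \defeq 3k^3\alpha\max\{s, k\log n\}$ additional hyperedges, disjoint from $f(E(T))$, each intersecting $V'$. Using independence of the $m$ random clauses, the probability that a fixed witness is realized is at most $(k^2/n^2)^{s-1} \cdot (ks/n)^{t_s}$, while by Cayley's formula the number of witnesses is at most $\binom{n}{s}\, s^{s-2}\, m^{s-1}\, \binom{m}{t_s}$.

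After simplification using $m = \alpha n$, $\binom{n}{s} \le (en/s)^s$, and $\binom{m}{t_s} \le (em/t_s)^{t_s}$, the resulting union bound takes the form $n \cdot (ek^2\alpha)^{s-1} \cdot (eks\alpha/t_s)^{t_s}$. The key savings come from the last factor: for $s \ge k\log n$ it is $(e/(3k^2))^{3k^3\alpha s}$, and for $s < k\log n$ it is at most $(e/(3k^2))^{3k^4\alpha\log n}$ (using $eks\alpha/t_s \le e/(3k^2)$). For $k \ge 30$ these exponential savings (of order $\exp(-\Theta(k^3\alpha s\log k))$ or $\exp(-\Theta(k^4\alpha\log n\log k))$ respectively) easily dominate both the counting factor $(ek^2\alpha)^{s-1}$ and the leading $n$, yielding $o(1/n^2)$ per size $s$ and hence $o(1/n)$ after summing over $2 \le s \le n$.

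The main obstacle is the small-$s$ regime, where the expected hyperedge count $\mathbf{E}[|E_{V'}|] \approx k\alpha s$ is far below the target $3k^4\alpha\log n$. A direct Chernoff bound on $|E_{V'}|$ applied to every subset of size $s$ would fail here, because the factor $\binom{n}{s}$ swamps the concentration. The spanning-tree witness is essential: restricting to connected $V'$ contributes a factor $s^{s-2}(k^2\alpha/n)^{s-1}$ whose $(1/n)^{s-1}$ precisely cancels $\binom{n}{s}$, leaving the tight balance that drives the argument. Minor technicalities (handling non-injective $f$, or cases where only a few ``extra'' hyperedges exist beyond $f(E(T))$) can be absorbed by a slight adjustment of $t_s$ and do not affect the asymptotic conclusion.
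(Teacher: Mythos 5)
Your overall skeleton — reduce to bounding the number of incident hyperedges via $\abs{\Gamma^+_{H_\Phi}(V')}\le k\abs{E_{V'}}$, and control the ``extra'' hyperedges by noting each hits $V'$ with probability at most $ks/n$, so $\binom{m}{t_s}(ks/n)^{t_s}\le(\mathrm{e}k\alpha s/t_s)^{t_s}$ — is the same as the paper's treatment of its set $\tilde{\+E}$. The genuine gap is in the connectivity witness. Your per-witness probability $(k^2/n^2)^{s-1}$ uses independence of the clauses and is therefore only valid when the $s-1$ realizing clauses $f(E(T))$ are distinct. But in a $k$-uniform hypergraph a connected vertex set of size $s$ is typically connected by only about $(s-1)/(k-1)$ hyperedges, so in essentially every relevant configuration — including the extremal ones — $f$ must be heavily non-injective; if you insist on injective $f$, then even a violation in which $V'$ is the vertex set of a single hyperedge admits no witness at all, so the union bound does not cover all bad events. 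This is not a corner case absorbable by ``a slight adjustment of $t_s$''; it is the main case.

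The natural repair inside your framework also fails quantitatively. If you allow non-injective $f$ and charge each image clause the probability of containing the (at least $j+1$) endpoints of the $j$ tree edges it realizes, a fixed witness with image size $r$ has probability only about $(k/n)^{s-1+r}$, while the number of such witnesses is about $\binom{n}{s}\,s^{s-2}\,\binom{m}{r}\,r^{s-1}$ (the factor $r^{s-1}$ coming from assigning tree edges to clauses). The product then carries an uncancelled factor of order $r^{\,s-1-r}\approx s^{\Theta(s)}$ when $r\approx s/k$, so the bound explodes for large $s$: the cancellation of $\binom{n}{s}$ that you correctly identify as essential is exactly what is lost once clauses are reused. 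The cure is to put the tree structure on clauses rather than on vertices: pick a connecting family $\+E''$ of at most $\abs{V'}$ hyperedges, bound the probability that a fixed tree on these clauses lies in $G_\Phi$ by $(k^2/n)^{\abs{\+E''}-1}$ via \Cref{lemma:probability-of-spanning-tree}, and then count $V'$ as a subset of $\vbl(\+E'')$, a factor $\binom{k\abs{\+E''}}{\abs{V'}}$ instead of $\binom{n}{s}$ — which is precisely how the paper's proof is organized (with $\abs{V'}<\lfloor k\log n\rfloor$ handled by enlarging $V'$ to a connected set of size $\lfloor k\log n\rfloor$). A secondary point: since $f(E(T))$ may exhaust up to $s-1$ of the incident hyperedges, you can only guarantee $t_s-(s-1)$ extra ones; this is harmless only because $k^3\alpha\ge 1$ in the relevant regime (an assumption the paper's proof also invokes), so it should be stated rather than waved at.
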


\begin{proof}
Define $\+E'=\{e\in \+E\;|\; \vbl(e)\cap V'\neq \emptyset\}$. It suffices to bound $|\+E'|\leq 3k^3\alpha\max\{|V'|,k\log n\}$ since $|\Gamma^+_{H_\Phi}(V')|\leq k |\+E'|$. 

We turn our focus to the case when $|V'|\geq \lfloor k\log n\rfloor $. Since $H_{\Phi}(V')$ is connected, there exists a $\+E''\subset \+E'$ such that $V'$ is connected using hyperedges in  $\+E''$ and $|V'|/k\leq \+E''\leq |V'|$. We get that
\[|\+E''|\geq \log n -1 \,.\]
Now, define $\tilde{\+E}=\+E'\setminus \+E''$. Since $k^3\alpha \geq 1$, we will focus on bounding $|{\+E'}|\leq 2k^3\alpha |V'|+|V'|$. Using the fact that $|\+E'|\leq |\tilde{\+E}|+|\+E''|$ and $|\+E''|\leq |V'|$, we focus our attention on bounding $|\tilde{\+E}|\leq 2k^3\alpha |V'|$. Consider any fixed $\+E'',V',\tilde{\+E}$ which satisfy:
\begin{itemize}
    \item $|\+E''|\geq \log n -1$, $|V'|\geq |\+E''|$, $|\tilde{\+E}|\geq 2k^3\alpha |V'|$, and $\tilde{\+E}\cap \+E''=\emptyset$;
    \item Further, $G_{\Phi}(\+E'')$ is connected, $V'\subset \bigcup_{e\in \+E''}\vbl(\+E'')$ and $\vbl(e)\cap V'\neq \emptyset $ for all $e\in\+E''$.
\end{itemize}

Define $r_1=|\+E''|$, $r_2=|V'|$, and $r_3=|\tilde{\+E}|$. Based on these observations, we define the events: 
\begin{itemize}
\item $\mathcal{A}(\+E'',V',\tilde{\+E})$ to be the event when the above conditions are satisfied with $(\+E'', V', \tilde{\+E})$;
    \item $\mathcal{A}(\+E'')$ to be event that $G_{\Phi}[\+E'']$ is connected;
    \item $\mathcal{A}(V',\tilde{\+E})$ to be the event that $\vbl(e)\cap V'\neq \emptyset$ holds for all $e\in\tilde{\+E}$.
\end{itemize}
 Now, using \Cref{lemma:probability-of-spanning-tree}, we get that
\[\Pr{\mathcal{A}(\+E''})\leq r_1^{r_1-2}\cdot \left(\frac{k^2}{n}\right)^{r_1-1}\,.\]
Observe that since $\+E''\cap \tilde{\+E}=\emptyset$, $\mathcal{A}(\+E'')$ and $\mathcal{A}(V',\tilde{\+E})$ are independent. Then,
\[\Pr{\mathcal{A}(V',\tilde{\+E})|\mathcal{A}(\+E'')}=\Pr{\mathcal{A}(V',\tilde{\+E})}\leq \left(k\cdot \frac{r_2}{n}\right)^{r_3}\,.\]
Therefore,
\[\Pr{\mathcal{A}(\+E'',V',\tilde{\+E})}\leq \Pr{\mathcal{A}(V',\tilde{\+E})}\cdot \Pr{\mathcal{A}(\+E''})\leq r_1^{r_1-2}\cdot \left(\frac{k^2}{n}\right)^{r_1-1}\left(k\cdot \frac{r_2}{n}\right)^{r_3}\,.\]
We now use union bound over all possible (valid) sizes of $\+E'',V',\tilde{\+E}$ to upper bound the probability that under the conditions mentioned above, $r_3\geq 2k^3\alpha r_2$. We will show this is upper bounded by $o(1/n)$, which then implies our result. By the union bound,
\[\Pr{\exists \text{ such }\mathcal{A}(\+E'',V',\tilde{\+E})}\leq \sum_{r_1\geq \log n -1} \sum_{r_2\geq r_1} \sum_{r_3\geq 2k^3\alpha r_2}\binom{m}{r_1}\binom{kr_1}{r_2}\binom{m}{r_3} r_1^{r_1-2}\cdot \left(\frac{k^2}{n}\right)^{r_1-1}\left(k\cdot \frac{r_2}{n}\right)^{r_3}\,.\]
To simplify the above expression, we make following assumptions:
$\frac{\-ek^2\alpha}{(2k^2/\-e)^{k^3\alpha}}\leq \frac{1}{8}$ and $\frac{\-ek}{(2k^2/\-e)^{k^3\alpha}}\leq \frac{1}{8}$. These assumptions are satisfied under $k^3\alpha\geq 1$ and $k\geq 30$. Then, we get that
\[\Pr{\exists \text{ such }\mathcal{A}(\+E'',V',\tilde{\+E})}\leq \frac{4n}{n^2(\log n-1)^2}=o(1/n)\,.\]
For the case when $|V'|<\lfloor k\log n \rfloor$, we can consider another connected set of vertices $V''\supset V'$ such that $|V''|=\lfloor k\log n \rfloor$. Applying the previous argument on $V''$, the claim follows.
\end{proof}

Using \Cref{lemma:bounded-number-neighbors}, We can also bound the fraction of high-degree variables in any connected set.

\begin{proposition}\label{prop:fraction_of_high-degrees}
Assume $k \ge 2$, $q \ge 2$, and $\alpha \le q^k$ are constants.
Let $p_1$ be a parameter satisfying $6k^5 \le p_1 \le \-e^{k-2}\alpha$.
Then with probability $1-o(1/n)$ over the random formula $\Phi$, the following holds for $H_\Phi = (V, \+E)$:
Let $V'\subseteq V$ be connected in $H_\Phi$ and $|V'|\ge\log n$. Then 
\[
\abs{\{v\in V'\mid\deg(v)\ge p_1 \alpha\}}\le 6k^5 \abs{V'}/p_1\,.
\]
\end{proposition}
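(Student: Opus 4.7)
\textbf{Proof plan for \Cref{prop:fraction_of_high-degrees}.} The plan is a simple averaging argument: bound the total degree mass inside a connected set $V'$ using the neighborhood estimate already established in \Cref{lemma:bounded-number-neighbors}, then observe that too many high-degree vertices would overload this budget.

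First, I would invoke \Cref{lemma:bounded-number-neighbors}, which holds with probability $1 - o(1/n)$ uniformly over all connected subsets $V' \subseteq V$ of $H_\Phi$. Inspecting its proof, the core estimate is actually on the set of incident hyperedges $\+E' = \{e \in \+E : e \cap V' \neq \emptyset\}$, for which one gets
\[
|\+E'| \le 3k^3\alpha \max\{|V'|, k\log n\},
\]
and then $|\Gamma^+_{H_\Phi}(V')| \le k|\+E'|$ yields the stated bound. I will use this intermediate inequality directly.

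Second, by double counting edge--vertex incidences,
\[
\sum_{v \in V'} \deg(v) \;=\; \sum_{e \in \+E'} |e \cap V'| \;\le\; k\,|\+E'| \;\le\; 3k^4\alpha \max\{|V'|, k\log n\}.
\]
Since $|V'| \ge \log n$, we have $\max\{|V'|, k\log n\} \le k|V'|$, so
\[
\sum_{v \in V'} \deg(v) \;\le\; 3k^5 \alpha\, |V'| \;\le\; 6k^5 \alpha\, |V'|.
\]

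Third, letting $H \defeq |\{v \in V' : \deg(v) \ge p_1\alpha\}|$, the contribution of these vertices alone forces
\[
H \cdot p_1\alpha \;\le\; \sum_{v \in V'} \deg(v) \;\le\; 6k^5\alpha\,|V'|,
\]
so $H \le 6k^5 |V'| / p_1$, which is exactly the claimed bound. The conclusion holds on the intersection of the $1 - o(1/n)$-probability event given by \Cref{lemma:bounded-number-neighbors}, so there is no separate union bound to worry about.

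There is essentially no obstacle beyond invoking the correct form of \Cref{lemma:bounded-number-neighbors}; the only subtlety is being careful to use the edge-count version $|\+E'| \le 3k^3\alpha \max\{|V'|,k\log n\}$ (which follows from the same argument as for $|\Gamma^+(V')|$) rather than the weaker vertex-count version, since the sum of degrees is most naturally bounded in terms of incident edges.
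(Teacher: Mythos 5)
Your proposal is correct, and it takes a genuinely different and in fact more economical route than the paper. The paper's proof looks only at the hyperedges meeting $\-{HD}(V')$, lower-bounds the vertices they span via the edge-expansion property (\Cref{lemma:kSAT-expansion} with $\eta=1/2$, $\rho=\mathrm{e}^{-2}/(k\alpha)$), compares this with the neighborhood bound of \Cref{lemma:bounded-number-neighbors}, and needs a case split on whether $\abs{\+E'}\le \rho m$, where the large-$\+E'$ case is closed using the global bound $\abs{\-{HD}(V)}\le \mathrm{e}^{-k}\alpha^{-2}n$ from \Cref{prop:high-degree} together with the hypothesis $p_1\le \mathrm{e}^{-2+k}\alpha$. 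Your averaging argument bypasses all of this: bounding $\sum_{v\in V'}\deg(v)\le k\abs{\{e: e\cap V'\neq\emptyset\}}\le 3k^5\alpha\abs{V'}$ and applying Markov to the vertices of degree at least $p_1\alpha$ immediately gives $\abs{\-{HD}(V')}\le 3k^5\abs{V'}/p_1$, with no case analysis, no use of \Cref{prop:high-degree}, and no need for the upper bound on $p_1$ (so your argument even proves a slightly stronger statement, with constant $3$ instead of $6$). The one point to be transparent about, which you already flag, is that you invoke the intermediate edge-count estimate $\abs{\+E'}\le 3k^3\alpha\max\{\abs{V'},k\log n\}$ from inside the proof of \Cref{lemma:bounded-number-neighbors} rather than its stated vertex form; that estimate is indeed explicitly established there (including for $\abs{V'}<k\log n$ via the enlargement to a connected superset), so the reliance is legitimate, but if one insisted on using only the lemma's statement one would have to reintroduce edge expansion to convert vertex counts back into edge counts, which is essentially what the paper does.
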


\begin{proof}
    We prove this proposition by showing that if $\-{HD}(V')$ is too large then $V'$ has too many neighbors, which contradicts to \Cref{lemma:bounded-number-neighbors}.
    
    We consider the number of edges that contain high-degree vertices in $V'$. Let $$\+E' = \{e \in \+E \mid e \cap \-{HD}(V') \neq \emptyset\}\,, \  \text{ and } \  \  U = \bigcup_{e \in \+E'} e\,.$$
    By setting $\eta = 1/2$ and $\rho = \-e^{-2}/(k\alpha)$ in \Cref{lemma:kSAT-expansion}, we obtain that any $\ell \le \rho m$ edges contain at least $k\ell/2$ distinct vertices with probability $1 - o(1/n)$. Thus, it gives that $\abs{U} \ge k \min\{\abs{\+E'}, \rho m\}/2$. Note that $U = \Gamma^+_{H_\Phi}(\-{HD}(V')) \subseteq \Gamma^+_{H_\Phi}(V')$. So we have $\abs{\Gamma^+_{H_\Phi}(V')} \ge k \min\{\abs{\+E'}, \rho m\}/2$.

    Now we bound the size of $\+E'$. By a double counting on the size of $\{(v, e) \in \-{HD}(V') \times \+E'\mid v \in e\}$, we have $k\abs{\+E'} \ge p_1\alpha \abs{\-{HD}(V')}$, and thus
    \begin{equation}\label{eq:upper-bound-HD}
    \abs{\-{HD}(V')} \le \frac{k}{p_1\alpha} \abs{\+E'}\,.        
    \end{equation}
    If $\abs{\+E'} \le \rho m$, then by \Cref{lemma:bounded-number-neighbors}, it follows that
    \[
    \frac{k\abs{\+E'}}{2} \le  \abs{\Gamma^+_{H_\Phi}(V')} \le  3k^4\alpha \max\{\abs{V'}, k\log n\} \le 3k^5 \alpha \abs{V'}\,.
    \]
    Combining with \eqref{eq:upper-bound-HD}, it yields that $\abs{\+E'} \le 6k^4\alpha\abs{V'}$, and thus
    \[
        \abs{\-{HD}(V')} \le \frac{6k^5}{p_1} \abs{V'}\,.
    \]
    If $\abs{\+E'} > \rho m$, using  \Cref{lemma:bounded-number-neighbors} again, we obtain
    \[
        \frac{k\rho m}{2} \le \abs{\Gamma^+_{H_\Phi}(V')} \le  3k^4\alpha \max\{\abs{V'}, k\log n\} \le 3k^5 \alpha \abs{V'}\,,
    \]
    which gives \[\abs{V'}\ge\frac{k\rho m}{6k^5\alpha} = \frac{n}{6\mathrm{e}^2k^5\alpha}\,.\]
    On the other hand, it is clear that $\abs{\-{HD}(V')} \le \abs{\-{HD}(V)} \le \-e^{-k}\alpha^{-2}n$ by \Cref{prop:high-degree}. So we obtain that
    \[
    \abs{\-{HD}(V')} \le \frac{6\-e^2k^5\alpha}{\-e^k\alpha^2} \abs{V'} \le \frac{6k^5}{p_1}\abs{V'}\,. \qedhere
    \]
\end{proof}

The following result is adapted from \cite[Lemma 2.4]{CF14Walksat} and \cite[Lemma A.2]{HWY23}, but uses tighter parameters.
\begin{proposition}\label{prop:peeling}
Assume $k\ge12$, $q\ge 2$ and $\alpha\le q^k$ are constants.
Then with probability $1-o(1/n)$ over the random formula $\Phi$, the following holds for $H_\Phi = (V, \+E)$:
Fix an arbitrary $\+E' \subseteq \+E$ of size $\abs{\+E'}\le \-4^{-k}\alpha^{-1/2}n$. Let $e_{i_1},\ldots,e_{i_\ell}\in\+E\setminus\+E'$ be hyperedges of distinct indices. For each $s\in[\ell]$, define $V_s=(\bigcup_{e\in\+E'} e)\cup(\bigcup_{j=1}^{s-1}e_{i_j})$.
If $|e_{i_s}\cap V_s|\ge6$ holds for all $s\in[\ell]$, then $\ell\le \abs{\+E'}$.
\end{proposition}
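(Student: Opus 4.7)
The plan is to bound the failure probability by a direct first-moment union bound, after an initial reduction. Suppose, for contradiction, that for some $\+E'\subseteq \+E$ with $t\defeq |\+E'|\ge 1$ and $t\le t_{\max}\defeq 4^{-k}\alpha^{-1/2}n$, there exists a peeling sequence $e_{i_1}, \ldots, e_{i_\ell}$ with $\ell > t$; by truncating to the first $t+1$ peeled edges I may assume $\ell = t+1$. Let $T = \+E'\cup\{e_{i_1},\ldots,e_{i_{t+1}}\}$, so that $|T| = 2t+1$. The peeling hypothesis $|e_{i_s}\cap V_s|\ge 6$ forces each peeled clause to contribute at most $k-6$ new variables, hence
\[
    |\vbl(T)| \le N \defeq kt + (k-6)(t+1) = (2k-6)t + k-6.
\]

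The main step is a first-moment bound on the existence of such a ``dense'' set. There are at most $\binom{m}{2t+1}$ choices of $T$, and for each fixed $T$ the probability that $\vbl(T)\subseteq S$ for some $|S|=N$ is at most $\binom{n}{N}(N/n)^{k(2t+1)}$ by union bound over $S$. Via the identity $(en/N)^N(N/n)^{k(2t+1)} = e^N(N/n)^{k(2t+1)-N}$ together with $k(2t+1) - N = 6(t+1)$, $N\le 2k(t+1)$, and $(2t+1)^{2t+1}\ge t^{2t+1}$, the product simplifies to
\[
    p(t) \;\le\; (e\alpha)^{2t+1}(4ke^{k/3})^{6(t+1)}(t/n)^{4t+5},
\]
where $p(t)$ denotes the probability that a bad $T$ of size $2t+1$ exists for at least one such seed $\+E'$.

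It remains to sum over $t\ge 1$. At $t=1$, direct computation gives $p(1) = O_{k,\alpha}(n^{-9}) = o(1/n)$. For the tail I examine the ratio $p(t+1)/p(t)$: from $((t+1)/t)^{4t+5}\le e^9$ for $t\ge 1$ and $(t+1)^4/n^4 \le 16\cdot 4^{-4k}\alpha^{-2}$ (which follows from $t+1\le 2 t_{\max}$), the ratio is at most
\[
\frac{p(t+1)}{p(t)} \;\le\; 16\, e^{11}(4k)^6(e/16)^{2k},
\]
a constant depending only on $k$. For $k\ge 12$ an explicit evaluation (e.g.\ $(e/16)^{24}\approx e^{-42.5}$ against $48^6\approx e^{23.2}$) shows this constant is strictly less than $1$, matching the hypothesis. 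Hence the $p(t)$ form a geometric series dominated by $p(1)$, and $\sum_{t\ge 1}p(t) = O_{k,\alpha}(n^{-9}) = o(1/n)$ as required.

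The main obstacle is verifying this decay ratio lies below $1$; this is precisely what pins the threshold at $k \ge 12$, since the constant $16e^{11}(4k)^6(e/16)^{2k}$ crosses $1$ somewhere around $k = 11$. A secondary subtlety worth noting is that ``a peeling of length $t+1$ exists'' is strictly stronger than the event ``$|\vbl(T)| \le N$''; nevertheless, the latter suffices as an upper-bounding event for the first-moment argument. Finally, the $(2t+1)^{2t+1}$ factor coming from $\binom{m}{2t+1}$ is essential for absorbing the $(t+1)^{6(t+1)}$ factor produced by expanding $(N/n)^{6(t+1)}$; without this cancellation, the per-term bound would blow up once $t$ grows polynomially in $n$, and the size restriction $t \le 4^{-k}\alpha^{-1/2}n$ would not be enough on its own.
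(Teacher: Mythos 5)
Your proof is correct. Both you and the paper run a first-moment union bound over the same witness structure (the seed $\mathcal{E}'$ plus the peeling truncated to length $t+1$, with the overlap condition used to cap each peeled clause at $k-6$ new variables), but the accounting is genuinely different. The paper fixes the seed, enumerates the set $Y$ of at most $(k-6)\ell$ new variables and the peeled set $\tilde{\mathcal{E}}$ separately, and charges only the $k\ell$ variable choices of the peeled clauses against the target $Y\cup\vbl(\mathcal{E}')$, obtaining the factor $(4k\varepsilon)^{k\ell}$; it then closes the sum by a case split on whether $\ell$ is at least $5\log n$. You instead collapse everything into the single event that some $2t+1$ clauses span at most $N=(2k-6)t+k-6$ variables, pay $\binom{n}{N}$ for an ambient set $S$ while charging all $k(2t+1)$ variable choices, and the net exponent $k(2t+1)-N=6(t+1)$ supplies the decay; the tail is then handled by a geometric-ratio argument rather than a case analysis. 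Your decomposition is arguably cleaner (it treats seed and peeled clauses symmetrically as one ``dense subgraph'' event) and yields the stronger failure probability $O_{k,\alpha}(n^{-9})$, at the cost of a slightly more delicate constant computation; the paper's seed-conditioned bound wastes nothing on the seed clauses and so needs a less sharp ratio check. One small point you should make explicit: you verify $16e^{11}(4k)^6(e/16)^{2k}<1$ only at $k=12$, so add the one-line observation that the expression is decreasing in $k$ for $k\ge 12$ (each increment of $k$ multiplies it by $((k+1)/k)^6(e/16)^2<1$), so the single evaluation indeed covers all $k\ge 12$.
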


\begin{proof}
We prove the statement by contradiction. Hence, assume that $\+E'$ and $e_{i_1},...,e_{i_\ell}$ violate the statement for $\ell>|\+E'|$. We can discard additional clauses from $\E'$ and in fact, assume that $\ell=|\+E'|+1$ for which the statement is violated. Further, observe that we can assume that $|\+E'|\leq m-\ell$. Define $\varepsilon>0$ such that $\ell=\lfloor\varepsilon n\rfloor +1$. Since $|\+E'|\geq 1$, we have that $\varepsilon\geq \frac{1}{n}$. From the bound on $|\+E'|$, we get that $\varepsilon\leq 4^{-k}\alpha^{-1/2}+\frac{1}{n}$.

Define $Y:=\bigcup_{j=1}^{j=\ell}\vbl(e_{i_j})\setminus \bigcup_{e\in \+E'}\vbl(e)$. Then, the following are true:
\begin{itemize}
    \item $|Y|=\sum_{r=1}^{r=\ell}|\vbl(e_{i_r})|-|\vbl(e_{i_r})\cap V_r|\leq (k-6)\ell$;
    \item There is a $\tilde{\+E}\subset \+E\setminus \+E'$ such that $|\tilde{\+E}|=\ell$ and $\vbl(\tilde{e})\subseteq Y\cup \bigcup_{e\in \+E'}\vbl(e)$ for all $e\in \tilde{\+E}$. For this, we can just take $\tilde{\+E}=\{e_{i_1},...,e_{i_\ell}\}$.
\end{itemize}
Now, consider fixed $\+E', Y, \tilde{\+E}$ such that $|Y|=t\leq (k-6)\ell$, $|\tilde{\+E}|=\ell$ and $|\+E'|=\min\{m-\ell,\ell-1\}$. Define $\mathcal{A}(\+E',Y,\tilde{\+E})$ to be the event that $\vbl(\tilde{\+E})\subset Y\cup \vbl(\+E')$. We have that
\[
\Pr{\mathcal{A}(\+E',Y,\tilde{\+E})}\leq \left(\frac{k|\+E'|+|Y|}{n}\right)^{k\tilde{\+E}}\leq \left(\frac{k(\ell-1)+(k-6)\ell}{n}\right)^{k\ell}\leq (4k\varepsilon)^{k\ell}.
\]
The last inequality follows from the fact that $\ell\leq \varepsilon n+1\leq 2\varepsilon n$.
Now, by union bound over all choices of $(\+E',Y,\tilde{\+E})$, we get that
\[
\Pr{\exists \text{ such an } (\+E',Y,\tilde{\+E})}\leq \sum_{t=1}^{t=k-6}\binom{m}{\ell-1}\cdot \binom{m}{\ell}\cdot \binom{n}{t}(4k\varepsilon)^{k\ell}.
\]
We can upper bound both $\binom{m}{\ell}$ and $\binom{m}{\ell-1}$ by $(\frac{
\-em}{\ell-1})^\ell$. Further, $(k-6)\ell\leq (k-6)(\varepsilon n+1)\leq 2k\varepsilon n\leq n/2$ if we assume $\varepsilon\leq 1/(4k)$. But we have that $\varepsilon\leq 4^{-k}\alpha^{-1/2}+\frac{1}{n}$ and we can use that $k\geq 12$. Then, $\binom{n}{t}\leq 
\left(\frac{\-en}{(k-6)\ell}\right)^{(k-6)\ell}$. Now,
\begin{align*}
 \Pr{\exists \text{ such an } (\+E',Y,\tilde{\+E})}&\leq n\cdot \left(\frac{\-e\alpha n}{\ell-1}\right)^{2\ell}\cdot \left(\frac{
 \-en}{(k-6)\ell}\right)^{(k-6)\ell}\cdot (4k\varepsilon \ell)^{k\ell}   \\
 &\leq n\cdot \left(\frac{
 \-e^{k-4}\cdot \alpha^2\cdot 2^{2k}\cdot n^{k-4}\cdot\varepsilon^k\cdot k^k}{\ell^{k-6}\cdot (\ell-1)^2 (k-6)^{k-6}}\right)^{\ell}.
\end{align*}
Now, to further evaluate the R.H.S., we use the fact that $\frac{k^k}{(k-6)^{k-6}}\leq (\-ek)^6$. Also, $\ell\leq \varepsilon n$ and $\ell-1\geq \varepsilon n/2$. Then, we get that
\begin{align*}
 \Pr{\exists \text{ such an } (\+E',Y,\tilde{\+E})}&\leq n\cdot \left(\-e^{k+2}2^{2k}\cdot \alpha^{2}\cdot \varepsilon^4\cdot k^6 \right)^{\ell} :=\tilde{p}. 
\end{align*}
Observe that since $\varepsilon\leq 4^{-k}\alpha^{-1/2}+\frac{1}{n}$, we get that $\-e^{k+2}2^{2k}\cdot \alpha^{2}\cdot \varepsilon^4\cdot k^6\leq \frac{1}{2}$ since $k\geq 12$.
Now, based on whether $\varepsilon n\geq 5\log n$ or $\varepsilon n<5\log n$, we make two cases: 
\begin{itemize}
    \item If $\varepsilon n\geq 5\log n$, we get that $\tilde{p}\leq n\cdot (\frac{1}{2})^{\varepsilon n}=o(1/n^3)$;
    \item If $\varepsilon n<5\log n$, we assume $n\geq 2^{\Omega(k)}$ (for an appropriately large constant), which implies that $\-e^{k+2}2^{2k}\cdot \alpha^{2}\cdot \varepsilon^4\cdot k^6=o(1/n^3)$ and we also have that $\ell\geq 2$, which again implies that $\tilde{p}=o(1/n^3)$. 
\end{itemize}
Observe that we have shown the above probability bound for $\+E'$ of a specific size, which, in our case was $\min\{\ell-1,m-\ell\}$. Then, by union bound over all possible sizes of $\+E'$, we get that $\Pr{\exists \text{ no } (\+E',Y,\tilde{\+E})}\geq 1-o(1/n)$.
\end{proof}
Using \Cref{prop:peeling}, we can bound the size of bad vertices and bad hyperedges in any connected sets.
Given a hypergraph $H = (V, \+E)$, we now simplify $\vbad(V)$ to $\vbad$ and $\ebad(V)$ to $\ebad$. We first present the following facts.

\begin{fact}
    For any $V' \subseteq V$, it holds that $\vbad(V') \subseteq \vbad$ and $\ebad(V') \subseteq \ebad$.
\end{fact}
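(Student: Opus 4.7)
The plan is to prove both inclusions by induction on the execution of \textsf{IdentifyBad}$(V')$, exploiting the monotonicity of the algorithm in its input. The base-case observation is immediate: since $\-{HD}(W) = \{v \in W : \deg(v) \ge p_1\alpha\}$ by \Cref{definition:high-degree}, and $V' \subseteq V$, the initial high-degree set satisfies $\-{HD}(V') \subseteq \-{HD}(V) \subseteq \vbad$, while $\emptyset \subseteq \ebad$ trivially.

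Next, I would show by induction on the iteration count $i$ of the while loop in \textsf{IdentifyBad}$(V')$ that the intermediate state $(\vbad_i(V'), \ebad_i(V'))$ satisfies $\vbad_i(V') \subseteq \vbad$ and $\ebad_i(V') \subseteq \ebad$. The inductive step is straightforward: when \textsf{IdentifyBad}$(V')$ selects an edge $e \in \+E \setminus \ebad_i(V')$ with $|e \cap \vbad_i(V')| > \eps_1 k$, the inductive hypothesis $\vbad_i(V') \subseteq \vbad$ yields $|e \cap \vbad| \ge |e \cap \vbad_i(V')| > \eps_1 k$. Since \textsf{IdentifyBad}$(V)$ terminates only when every edge outside $\ebad$ has at most $\eps_1 k$ vertices in $\vbad$, this forces $e \in \ebad$, hence $e \subseteq \vbad$ as well. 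Both inclusions are therefore preserved after the updates $\vbad_{i+1}(V') = \vbad_i(V') \cup e$ and $\ebad_{i+1}(V') = \ebad_i(V') \cup \{e\}$. Taking $i$ to be the termination iteration yields the conclusion.

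The one subtlety is that the while loop's choice of the edge $e$ is nondeterministic, so to make ``$\vbad(W)$'' and ``$\ebad(W)$'' well-defined functions of $W$ one should verify independence from the choice. I would address this by characterizing $\vbad(W)$ as the minimal set $S \supseteq \-{HD}(W)$ that is closed under the rule ``if $|e \cap S| > \eps_1 k$ then $e \subseteq S$'', from which $\ebad(W) = \{e \in \+E : e \subseteq \vbad(W)\}$; minimality and uniqueness follow because the intersection of two such closed sets is itself closed above $\-{HD}(W)$, and the greedy algorithm clearly constructs a closed set contained in every closed set containing $\-{HD}(W)$. I do not expect any real obstacle here: the entire content of the fact is the clean monotonicity argument above, and the nondeterminism is harmless.
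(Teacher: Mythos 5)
Your proof is correct; the paper states this fact without proof, and your monotonicity induction (the initialization $\-{HD}(V')\subseteq \-{HD}(V)\subseteq \vbad$ together with the observation that any edge crossing the $\eps_1 k$ threshold against a subset of $\vbad$ must already lie in $\ebad$, since \textsf{IdentifyBad}$(V)$ has terminated, and hence is contained in $\vbad$) is exactly the routine argument the paper leaves implicit. One small quibble in your well-definedness aside: the choice-independent description of the edge output is $\ebad(W)=\{e\in\+E : |e\cap \vbad(W)|>\eps_1 k\}$ rather than $\{e\in\+E : e\subseteq \vbad(W)\}$ (the two coincide unless some hyperedge has at most $\eps_1 k$ vertices), but this does not affect your main inclusion argument.
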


\begin{fact}
    [\text{\cite[Lemma 8.9]{GGGY21}}]\label{fact:connected-component-Hbad}
    For any $V' \subseteq \vbad$ such that $V'$ is a connected component in the induced subgraph $H[\vbad]$, it holds that $\vbad(V') = V'$.
\end{fact}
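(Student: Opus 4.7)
The plan is to prove $\vbad(V')=V'$ via two inclusions, leveraging a closure reformulation of \Cref{alg:identify-bad}: for any seed $V_0$, one checks by induction on the execution that $\vbad(V_0)$ is the unique minimal set $S \supseteq \-{HD}(V_0)$ satisfying the implication ``$|e \cap S| > \eps_1 k \Rightarrow e \subseteq S$'' for every $e \in \+E$. Applied to the global run, this forces $|e \cap \vbad| > \eps_1 k \Rightarrow e \subseteq \vbad$. A second useful observation is that $\-{HD}(V') = V' \cap \-{HD}(V)$, since high-degree is defined through the global degree in $H$, so this intersection is exactly the seed used when running the algorithm on $V'$.

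To show $\vbad(V') \subseteq V'$, I would induct on the steps of the run on $V'$: the seed sits inside $V'$, and if the current set $S \subseteq V'$ and a new edge $e$ is added because $|e \cap S| > \eps_1 k$, then $S \subseteq V' \subseteq \vbad$ yields $|e \cap \vbad| > \eps_1 k$, hence $e \subseteq \vbad$ by the global closure property. Since $e$ meets $V'$ through $S$ and is contained in $\vbad$, and $V'$ is a connected component of $H[\vbad]$, the remaining vertices of $e$ must also lie in $V'$. Thus each addition preserves the invariant $S \subseteq V'$.

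For the reverse inclusion $V' \subseteq \vbad(V')$, I would fix an execution $e_1, \ldots, e_r$ of the algorithm on $V$ producing $\vbad$, then extract the subsequence $e_{i_1}, \ldots, e_{i_s}$ of those $e_t$ contained entirely in $V'$ and argue that this subsequence constitutes a legal execution on $V'$. Letting $S_j = \-{HD}(V') \cup \bigcup_{t < j} e_{i_t}$ denote the set reached after $j-1$ restricted steps, the threshold condition $|e_{i_j} \cap S_j| > \eps_1 k$ will follow from the stronger claim $e_{i_j} \cap T_{i_j} \subseteq S_j$, where $T_{i_j}$ is the global state just before $e_{i_j}$ was added. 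Any $v \in e_{i_j} \cap T_{i_j}$ is either a high-degree vertex (and then $v \in V' \cap \-{HD}(V) = \-{HD}(V') \subseteq S_j$) or it belongs to some earlier global edge $e_t$; in the latter case $e_t \subseteq \vbad$ meets $V'$ at $v$, so by the connected-component property $e_t \subseteq V'$, i.e.\ $e_t = e_{i_{t'}}$ for some $t' < j$, giving $v \in S_j$. Once the subsequence is validated, every $v \in V'$—being either in $\-{HD}(V')$ or in some global edge $e_t$ which must in turn lie in $V'$—is covered by the output of the restricted run.

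The main obstacle will be the reverse inclusion's bookkeeping; everything there hinges on repeatedly invoking the principle ``$e \subseteq \vbad$ and $e \cap V' \neq \emptyset$ imply $e \subseteq V'$,'' which itself relies on the closure characterization of $\vbad$ established at the outset.
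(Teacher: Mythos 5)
Your proof is correct. Note that the paper supplies no argument for this fact — it is imported verbatim from \cite[Lemma 8.9]{GGGY21} — and your closure characterization of \Cref{alg:identify-bad} plus the two inclusions (the run seeded by $\-{HD}(V')=V'\cap\-{HD}(V)$ never leaves $V'$, since any edge it adds lies in $\vbad$ and meets the component $V'$; conversely the globally added edges contained in $V'$ certify $V'\subseteq\vbad(V')$) is precisely the standard argument behind that cited lemma. The order-independence of the algorithm's output, which your closure reformulation makes explicit, is the same property the paper itself invokes informally in the proof of \Cref{lemma:bounded-bad-vertices}, so nothing is missing.
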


If \Cref{prop:peeling} holds for $H_\Phi$, we can prove that \Cref{property:bounded-bad-vertices} is also satisfied, which states that the number of bad variables is bounded by the number of high-degree variables. 

\begin{lemma}\label{lemma:bounded-bad-vertices}
    For any fixed $k$ and $\alpha$, if $\eta k \ge 4$, $\-e(\rho k \alpha)^\eta \le 1$, $\eps_1 > \eta$, and $6k^5 \le p_1 \le \-e^{k-2}\alpha$, then with probability $1 - o(1/n)$ over the choice of random formula $\Phi$, it holds for $H_\Phi = (V, \+E)$ that:
    \begin{enumerate}
        \item for any $V' \subseteq V$, we have $\abs{\vbad(V')} \le 2\abs{\-{HD}(V')}/(\eps_1 - \eta)$, in particular, $H_\Phi$ satisfies \Cref{property:bounded-bad-vertices} if $\eps_1 = 2\eta$;
        \item for any $V' \subseteq \vbad$ such that $V'$ consists of some connected components in $H_\Phi[\vbad]$, we have $\abs{V'} \le 2\abs{\-{HD}(V')}/(\eps_1 - \eta)$.
    \end{enumerate}
\end{lemma}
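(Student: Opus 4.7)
The plan is to prove the first claim by a double-counting argument that balances an upper bound coming from the dynamics of \Cref{alg:identify-bad} against a lower bound from the $(\eta,\rho)$-edge expansion property, and then to deduce the second claim directly using \Cref{fact:connected-component-Hbad}.

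For the first claim, let $e_1, e_2, \ldots, e_\ell$ be the bad hyperedges in the order they are added by \Cref{alg:identify-bad} (so $\ell = |\ebad(V')|$), and let $B_s$ denote the state of $\vbad$ right after step $s$, with $B_0 = \mathrm{HD}(V')$ and $B_\ell = \vbad(V')$. The while-loop guarantees $|e_s \cap B_{s-1}| > \eps_1 k$, so $e_s$ adds strictly fewer than $(1-\eps_1)k$ new vertices to $\vbad$. Summing over $s$ yields
\begin{align*}
|\vbad(V')| \;<\; |\mathrm{HD}(V')| + (1-\eps_1)k\ell.
\end{align*}
Assuming $\ell \le \rho m$, \Cref{property:edge-expansion} (which holds with probability $1-o(1/n)$ by \Cref{lemma:kSAT-expansion}) gives $|\bigcup_{s=1}^\ell e_s| \ge (1-\eta)k\ell$, and since $\bigcup_{s=1}^\ell e_s \subseteq \vbad(V')$, we obtain the matching lower bound $|\vbad(V')| \ge (1-\eta)k\ell$. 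Combining the two bounds and isolating $k\ell$ yields $(\eps_1-\eta)k\ell < |\mathrm{HD}(V')|$, and substituting back gives
\begin{align*}
|\vbad(V')| \;<\; \frac{(1-\eta)|\mathrm{HD}(V')|}{\eps_1 - \eta} \;\le\; \frac{2|\mathrm{HD}(V')|}{\eps_1 - \eta},
\end{align*}
as desired. In particular, with $\eps_1 = 2\eta$ this matches the bound $|\vbad(V')| \le 4\eps_1^{-1}|\mathrm{HD}(V')|$ claimed in \Cref{property:bounded-bad-vertices}.

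For the second claim, if $V' \subseteq \vbad$ is a union of connected components of $H_\Phi[\vbad]$, then \Cref{fact:connected-component-Hbad} gives $\vbad(V') = V'$, so the first claim immediately yields $|V'| = |\vbad(V')| \le 2|\mathrm{HD}(V')|/(\eps_1 - \eta)$.

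The main obstacle is verifying the assumption $\ell \le \rho m$. If instead $\ell > \rho m$, I would apply the same expansion argument to only the first $\rho m$ bad hyperedges, which would give $(\eps_1-\eta)k\rho m \le |\mathrm{HD}(V')|$. Combining this with $|\mathrm{HD}(V')| \le |\mathrm{HD}(V)| \le \-e^{-k}\alpha^{-2}n$ from \Cref{prop:high-degree} and the identity $\rho k\alpha \le \-e^{-1/\eta}$ from $\-e(\rho k\alpha)^\eta \le 1$ yields a numerical inequality among $k,\alpha,\eta$. Under the hypotheses $\eta k \ge 4$ and $p_1 \le \-e^{k-2}\alpha$ (with $p_1 \ge 6k^5$ giving a lower bound on $\alpha$), this inequality should force a contradiction, possibly aided by \Cref{prop:peeling} applied to a small hyperedge cover of $\mathrm{HD}(V')$.
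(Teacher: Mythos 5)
Your sandwich argument for the regime $\ell=\abs{\ebad(V')}\le\rho m$ is correct, and it is a genuinely different and more elementary route than the paper's (the paper never applies expansion to $\ebad(V')$ directly). However, the case you yourself flag as the ``main obstacle'', $\ell>\rho m$, is a genuine gap, and the resolution you sketch does not work. Applying your argument to the first $\lfloor\rho m\rfloor$ bad hyperedges indeed gives $(\eps_1-\eta)k\lfloor\rho m\rfloor<\abs{\-{HD}(V')}$, but to contradict this via $\abs{\-{HD}(V')}\le \-e^{-k}\alpha^{-2}n$ (\Cref{prop:high-degree}) you would need $(\eps_1-\eta)\,k\rho\alpha\,\alpha^{2}\-e^{k}\ge 1$, and since $\-e(\rho k\alpha)^{\eta}\le 1$ only gives $k\rho\alpha\le\-e^{-1/\eta}$, this amounts to $(\eps_1-\eta)\-e^{-1/\eta}\alpha^{2}\-e^{k}\ge 1$. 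That inequality does not follow from the stated hypotheses: take $\eta=4/k$, $\eps_1=2\eta$, and $\alpha$ near the lower bound $6k^{5}\-e^{2-k}$ forced only by $6k^{5}\le p_1\le\-e^{k-2}\alpha$; then the left-hand side is of order $k^{9}\-e^{-5k/4}\to 0$. It fails even more clearly if $\rho$ is chosen much smaller than $\-e^{-1/\eta}/(k\alpha)$, which the hypothesis $\-e(\rho k\alpha)^{\eta}\le 1$ permits. So the numerics do not rule out $\ell>\rho m$, and ``possibly aided by \Cref{prop:peeling}'' is not yet an argument.

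This is exactly the difficulty the paper's proof is built around, and it is why \Cref{prop:peeling} appears. The paper first defines $\+E'=\{e\in\+E:\abs{e\cap\-{HD}(V')}\ge\eps_1 k\}$ and bounds $\abs{\+E'}\le\abs{\-{HD}(V')}/((\eps_1-\eta)k)$ via \Cref{cor:bkvars}, i.e.\ expansion is only ever invoked for hyperedges heavily meeting the small set $\-{HD}(V')$. It then reorders \Cref{alg:identify-bad} so that all of $\+E'$ is inserted first, and uses \Cref{prop:peeling} to conclude that at most $\abs{\+E'}$ further hyperedges can ever be added (each such hyperedge meets the accumulated vertex set in more than $\eps_1 k\ge 6$ vertices in the intended parameter regime), so $\abs{\ebad(V')}\le 2\abs{\+E'}$ no matter how large the process might otherwise grow. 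The peeling proposition is precisely the tool that caps the growth of $\ebad(V')$ without requiring expansion for more than $\rho m$ hyperedges; your proof needs either this (or an equivalent bootstrapping bound giving $\abs{\ebad(V')}\le\rho m$ a priori) or a restriction of the parameter range, and as written it is incomplete.
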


\begin{proof}
    We only need to prove item (1). Item (2) is a direct corollary of item (1) and \Cref{fact:connected-component-Hbad}.

    By \Cref{prop:high-degree}, we have $\abs{\-{HD}(V')} \le \abs{\-{HD}(V)} \le \-e^{-k}\alpha^{-2}n$. Let
    \[
        \+E' = \{e \in \+E \mid \abs{e \cap \-{HD}(V')} \ge \eps_1 k\}.
    \]
    By setting $b = \eps_1$ in \Cref{lemma:kSAT-expansion} and \Cref{cor:bkvars}, it follows that \[\abs{\-{HD}(V')} \ge (\eps_1 - \eta)k \abs{\+E'}\,.\]
    In particular, $\abs{\+E'}\le \abs{\-{HD}(V')} \le \-e^{-k}\alpha^{-2}n$. Observe that, in \Cref{alg:identify-bad}, if there exist more than one hyperedge that can be added into $\ebad(V')$, we can add them in an arbitrary order without changing the output of \Cref{alg:identify-bad}. So we start by adding all hyperedges in $\+E'$ into $\ebad(V')$ first. After that, each hyperedge newly added to $\ebad(V')$ intersects at least $\eps_1 k$ vertices with existing hyperedge in $\ebad(V')$. Hence, applying \Cref{prop:peeling}, we obtain that $\abs{\ebad(V')\setminus \+E'} \le \abs{\+E'}$, which implies that \[\abs{\ebad(V')} \le 2\abs{\+E'} \le \frac{2}{(\eps_1 - \eta)k} \abs{\-{HD}(V')}\]
    and thus \[\abs{\vbad(V')} \le k\abs{\ebad(V')} \le \frac{2}{\eps_1 - \eta} \abs{\-{HD}(V')}\,.\qedhere\]
\end{proof}

Now we can prove \Cref{property:bounded-bad-fraction}. The following two results are adapted from \cite[Lemma 4.4 \& Corollary 4.2]{HWY23} using our parameters.

\begin{proposition}\label{lemma:bounded-vbad-fraction}
    For any fixed $k$ and $\alpha$, assume $\eta, \rho, p_1, \eps_1$ are parameters satisfying $\eta k \ge 4$, $\-e(\rho k \alpha)^\eta \le 1$, $\eps_1 \ge \eta + 1/k$ and $6k^5 \le p_1 \le \-e^{k-2}\alpha$. Then with probability $1 - o(1/n)$ over the choice of random formula $\Phi$, for any $V' \subseteq V$ of size $\abs{V'} \ge \log n$ connected in $H_\Phi = (V, \+E)$, it holds that $\abs{V' \cap \vbad} \le \frac{12k^5}{(\eps_1-\eta)p_1}\abs{V'}$.
\end{proposition}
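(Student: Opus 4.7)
The plan is to bound $\abs{V'\cap\vbad}$ by grouping the bad vertices of $V'$ according to which connected component of the induced sub-hypergraph $H_\Phi[\vbad]$ they belong to, so that \Cref{lemma:bounded-bad-vertices}(2) can be applied locally to each component and \Cref{prop:fraction_of_high-degrees} can be applied once to a single connected enlargement of $V'$. The subtle point will be to combine these two estimates in a way that recovers the exact constant $\frac{12k^5}{(\eps_1-\eta)p_1}$ without any spurious factor.

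Concretely, I would first list the connected components $C_1,\dots,C_s$ of $H_\Phi[\vbad]$ that meet $V'$. Since they are pairwise disjoint and cover $V'\cap\vbad$,
\[
\abs{V'\cap\vbad}=\sum_{i=1}^{s}\abs{C_i\cap V'},\qquad \sum_{i=1}^{s}\abs{C_i}=\abs{V'\cap\vbad}+\abs{D},
\]
where $D\defeq \bigl(\bigcup_i C_i\bigr)\setminus V'$. Applying item (2) of \Cref{lemma:bounded-bad-vertices} to each $C_i$ (which is itself a connected component of $H_\Phi[\vbad]$), and summing using the disjointness of the $\-{HD}(C_i)$'s, gives
\[
\sum_{i=1}^{s}\abs{C_i}\le \frac{2}{\eps_1-\eta}\,\bigl|\-{HD}\bigl(\textstyle\bigcup_i C_i\bigr)\bigr|.
\]

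Next I would form the enlarged set $\tilde V'\defeq V'\cup\bigcup_i C_i$. Because $V'$ is connected in $H_\Phi$ and each $C_i$ is connected in $H_\Phi[\vbad]\subseteq H_\Phi$ and meets $V'$, the set $\tilde V'$ is connected in $H_\Phi$; it also satisfies $\abs{\tilde V'}\ge\abs{V'}\ge\log n$, so \Cref{prop:fraction_of_high-degrees} applies and yields $\abs{\-{HD}(\tilde V')}\le \frac{6k^5}{p_1}\abs{\tilde V'}$. Since $\bigcup_i \-{HD}(C_i)\subseteq \-{HD}(\tilde V')$, chaining the two inequalities with $c\defeq \frac{12k^5}{(\eps_1-\eta)p_1}$ produces
\[
\abs{V'\cap\vbad}+\abs{D}=\sum_{i=1}^{s}\abs{C_i}\le c\,\abs{\tilde V'}=c\bigl(\abs{V'}+\abs{D}\bigr),
\]
which rearranges to $\abs{V'\cap\vbad}\le c\abs{V'}-(1-c)\abs{D}$. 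Assuming $c\le 1$ (the complementary case $c>1$ is trivial from $\abs{V'\cap\vbad}\le \abs{V'}$), the second term is non-positive and the desired bound follows.

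The main obstacle is exactly this last algebraic step. A naive bound $\abs{V'\cap\vbad}\le \sum_i\abs{C_i}\le c\abs{\tilde V'}$ only gives a factor of $c/(1-c)$, losing a constant. Recovering the tight constant requires noticing that bad vertices outside $V'$ are counted on both sides of the inequality $\sum_i\abs{C_i}\le c\abs{\tilde V'}$, so moving $\abs{D}$ to the left-hand side absorbs the slack exactly. Verifying that $\tilde V'$ is genuinely connected in $H_\Phi$ and that $\abs{\tilde V'}\ge\log n$ so that \Cref{prop:fraction_of_high-degrees} can be invoked is the other place where care is needed; both follow because $V'$ itself already meets the size hypothesis.
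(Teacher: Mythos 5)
Your proposal is correct and follows essentially the same route as the paper: decompose the bad vertices of $V'$ into the connected components of $H_\Phi[\vbad]$ meeting $V'$, bound each component via item (2) of \Cref{lemma:bounded-bad-vertices}, enlarge to the connected set $\tilde V' = V'\cup\bigcup_i C_i$ and apply \Cref{prop:fraction_of_high-degrees} there. The only cosmetic difference is the last step, where the paper absorbs the outside bad vertices through the monotonicity of the ratio $\abs{\tilde V\cap\vbad}/\abs{\tilde V}$ while you move $\abs{D}$ to the left-hand side and treat $c>1$ trivially; both are equivalent.
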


\begin{proof}
    Let $V_1, V_2, \ldots, V_\ell$ be connected components in $H_\Phi[\vbad]$ that intersects $V'$, and let \[\tilde{V} = V' \cup V_1 \cup V_2 \cup \cdots \cup V_\ell\,.\]
    Note that $\tilde{V}$ is connected in $H_\Phi$, and $\-{HD}(\tilde{V}) = \-{HD}(V_1) \cup \-{HD}(V_2) \cup\cdots \cup \-{HD}(V_\ell)$.

    By \Cref{lemma:bounded-bad-vertices}, we have $\abs{V_i} \le 2\abs{\-{HD}(V_i)}/(\eps_1 - \eta)$. By \Cref{prop:fraction_of_high-degrees}, we have $\abs{\-{HD}(\tilde{V})} \le 6k^5\abs{\tilde{V}}/p_1$. Thus, it follows that
    \[
    \abs{\tilde{V} \cap \vbad} = \sum_{i = 1}^\ell \abs{V_i} \le \frac{2}{\eps_1 - \eta} \sum_{i = 1}^\ell \abs{\-{HD}(V_i)} \le \frac{2}{\eps_1 - \eta} \cdot \frac{6k^5 \abs{\tilde{V}}}{p_1}\,.
    \]
    Since $\tilde{V} \setminus V' \subseteq \vbad$, we conclude that
    \[
    \frac{\abs{V' \cap \vbad}}{\abs{V'}} \le \frac{\abs{V' \cap \vbad} + \abs{\tilde{V} \setminus V'}}{\abs{V'} + \abs{\tilde{V}}} = \frac{\abs{\tilde{V} \cap \vbad}}{\abs{\tilde{V}}} \le \frac{12k^5}{(\eps_1-\eta)p_1}\,.\qedhere
    \]
\end{proof}

As a corollary, we obtain the proof of \Cref{property:bounded-bad-fraction}.

\begin{lemma}
    \label{lemma:bounded-ebad-fraction}
    For any fixed $k$ and $\alpha$, assume $\eta, \rho, p_1, \eps_1$ are parameters satisfying $\eta k \ge 4$, $\-e(\rho k \alpha)^\eta \le 1$, $\eps_1 \ge \eta + 1/k$ and $6k^5 \le p_1 \le \-e^{k-2}\alpha$. Then with probability $1 - o(1/n)$ over the choice of random formula $\Phi$, for any $\+E' \subseteq \+E$ of size $\abs{\+E'} \ge \log n$ connected in the line graph of $H_\Phi = (V, \+E)$ (namely, connected in $G_\Phi$), it holds that $\abs{\+E' \cap \ebad} \le \frac{12k^5}{(1-\eta)(\eps_1 - \eta)p_1} \abs{\+E'}$. In particular, $H_\Phi$ satisfies \Cref{property:bounded-bad-fraction} if $\eps_1 = 2\eta$ and $\eps_2 = \frac{12k^5}{(1-\eta)\eta p_1}$.
\end{lemma}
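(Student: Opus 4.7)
} The plan is to reduce the hyperedge-level bound to the vertex-level bound already established in \Cref{lemma:bounded-vbad-fraction}, by using the key structural observation that every bad hyperedge has all its vertices in $\vbad$. Concretely, let $V' \defeq \vbl(\+E') = \bigcup_{e \in \+E'} e$. Since $\+E'$ is connected in the line graph $G_\Phi$, consecutive hyperedges in $\+E'$ share at least one variable, so $V'$ is connected in $H_\Phi$. Moreover, inspecting \Cref{alg:identify-bad}, one sees that whenever a hyperedge $e$ is added to $\ebad$, the \emph{entire} vertex set $e$ is simultaneously added to $\vbad$. Consequently $e \subseteq \vbad$ for every $e \in \ebad$, so $\vbl(\+E' \cap \ebad) \subseteq V' \cap \vbad$.

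Next I would run two applications of edge expansion together with one application of \Cref{lemma:bounded-vbad-fraction}. First, by \Cref{lemma:kSAT-expansion} applied to $\+E'$ (assuming $|\+E'| \le \rho m$), we get $|V'| \ge (1-\eta)k|\+E'| \ge (1-\eta)k \log n \ge \log n$, which in particular makes $V'$ eligible for \Cref{lemma:bounded-vbad-fraction}, yielding
\[
|V' \cap \vbad| \;\le\; \frac{12k^5}{(\eps_1 - \eta)p_1}\,|V'| \;\le\; \frac{12k^5}{(\eps_1-\eta)p_1}\cdot k|\+E'|.
\]
Second, applying \Cref{lemma:kSAT-expansion} to the sub-collection $\+E' \cap \ebad$ (again assuming its size is at most $\rho m$) gives
\[
(1-\eta)k\,|\+E' \cap \ebad| \;\le\; |\vbl(\+E' \cap \ebad)| \;\le\; |V' \cap \vbad|.
\]
Chaining these two inequalities and dividing by $(1-\eta)k$ yields
\[
|\+E' \cap \ebad| \;\le\; \frac{12k^5}{(1-\eta)(\eps_1-\eta)p_1}\,|\+E'|,
\]
which is the desired bound. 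Specializing to $\eps_1 = 2\eta$ and $\eps_2 = \frac{12k^5}{(1-\eta)\eta p_1}$ recovers \Cref{property:bounded-bad-fraction} exactly, completing the ``in particular'' clause.

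The main obstacle is justifying the two applications of edge expansion in the regime where $|\+E'|$ or $|\+E' \cap \ebad|$ may exceed $\rho m$, since \Cref{property:edge-expansion} only applies to collections of size at most $\rho m$. I would handle this by combining a global bound with a local one: the total number of bad hyperedges is controlled because $|\vbad| \le 2|\HD(V)|/(\eps_1 - \eta) \le 2\mathrm{e}^{-k}\alpha^{-2}n/(\eps_1-\eta)$ by \Cref{lemma:bounded-bad-vertices} and \Cref{prop:high-degree}, so that $|\ebad|$ is likewise $o(\rho m)$ under the parameter regime of \Cref{condition:main-condition}. This immediately settles the case $|\+E' \cap \ebad| > \rho m$ (it cannot happen). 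For $|\+E'| > \rho m$ but $|\+E' \cap \ebad| \le \rho m$, one simply replaces the bound $|V'| \le k|\+E'|$ with the trivial $|V'| \le n$ and uses the second expansion inequality together with the global bound $|V' \cap \vbad| \le |\vbad|$, which comfortably beats the target $\eps_2 |\+E'|$. With these edge cases disposed of, the three-line chain above delivers the result.
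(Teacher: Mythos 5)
Your proposal follows essentially the same route as the paper's proof: set $V'=\vbl(\+E')$, apply \Cref{lemma:kSAT-expansion} together with \Cref{lemma:bounded-vbad-fraction}, use that every bad hyperedge is contained in $\vbad$ (hence in $V'\cap\vbad$), and handle the regime $\abs{\+E'}>\rho\abs{\+E}$ via the global bound on $\abs{\vbad}$ from \Cref{prop:high-degree,lemma:bounded-bad-vertices}. The only step to make explicit is that ruling out $\abs{\+E'\cap\ebad}>\rho\abs{\+E}$ itself requires applying edge expansion to the hyperedges contained in $\vbad$ (as the paper does), not merely quoting that $\abs{\vbad}$ is small.
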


\begin{proof}
    Let $V' = \cup_{e\in \+E'} e$. Clearly, $V'$ is connected in $H_\Phi$, and $\abs{V'} \le k \abs{\+E'}$.
    
    We first assume that $\abs{\+E'} \le \rho \abs{\+E}$. Then \Cref{lemma:kSAT-expansion} applies. It implies that $\abs{V'} \ge (1 - \eta)k\abs{\+E'} \ge \log n$, and thus \Cref{lemma:bounded-vbad-fraction} applies. Therefore, we obtain
    \[
        \abs{V' \cap \vbad} \le \frac{12k^5}{(\eps_1-\eta)p_1} \abs{V'}\le \frac{12k^6}{(\eps_1-\eta)p_1} \abs{\+E'}\,.
    \]
    Note that, each hyperedge in $\+E' \cap \ebad$ is a subset of $V' \cap \vbad$. Applying \Cref{lemma:kSAT-expansion} again, we have
    \[
        \abs{\+E' \cap \ebad} \le \frac{\abs{V' \cap \vbad}}{(1-\eta) k} \le \frac{12k^5}{(1-\eta)(\eps_1 - \eta)p_1} \abs{\+E'}\,.
    \]
    
    Now we consider the case where $\abs{\+E'} > \rho \abs{\+E}$. Again, note that each hyperedge in $\+E' \cap  \ebad$ is a subset of $\vbad$, and $\abs{\vbad} \le 2\-e^{-k}\alpha^{-2}\abs{V}/(\eps_1 - \eta) < \rho \abs{\+E}/((1-\eta)k)$ by \Cref{prop:high-degree} and \Cref{lemma:bounded-bad-vertices} if we set $\eta \ge 4/k$ and $\-e(\rho k \alpha)^\eta = 1$. So \Cref{lemma:kSAT-expansion} applies to $\vbad$, and we conclude that
    \[
        \abs{\+E' \cap \ebad} \le \frac{\abs{\vbad}}{(1 - \eta)k} \le \frac{2\abs{\+E}}{\-e^kk\alpha^3(1-\eta)(\eps_1-\eta)} < \frac{12k^5}{(1-\eta)(\eps_1 - \eta)p_1} \abs{\+E'}\,.\qedhere
    \]
\end{proof}

Finally, it is a direct corollary of \Cref{property:bounded-bad-fraction} that $H_\Phi$ has no connected components of size $\ell \ge \log n$ in the line graph induced by bad hyperedges.

\end{document}